\documentclass[unsortedaddress,superscriptaddress,pra,notitlepage]{revtex4-1}

\usepackage{amsmath,amssymb}
\usepackage{braket}
\usepackage{theorem}
\usepackage{color}
\usepackage{amsfonts}
\usepackage[mathscr]{eucal}
\usepackage[hidelinks]{hyperref}
\usepackage{graphicx}
\usepackage{xcolor}
\usepackage{tikz}

\usetikzlibrary{calc,decorations.pathreplacing}
\usetikzlibrary{arrows,shapes}

\addtolength{\evensidemargin}{0.1\textwidth}
\addtolength{\oddsidemargin}{0.08\textwidth}
\addtolength{\textwidth}{-0.18\textwidth}

\newtheorem{definition}{Definition}[section]
\newtheorem{theorem}[definition]{Theorem}
\newtheorem{prop}[definition]{Proposition}
\newtheorem{lemma}[definition]{Lemma}
\newtheorem{remark}[definition]{Remark}
\newtheorem{rem}[definition]{Remark}

\newtheorem{cor}[definition]{Corollary}

\newtheorem{example}[definition]{Example}

\newenvironment{proof}[1][Proof]{\begin{trivlist}
\item[\hskip \labelsep {\bfseries #1}]}{\hfill$\Box$\end{trivlist}}



\def\theta{\vartheta}
\def\hil{{\mathcal H}}
\def\kil{{\mathcal K}}

\def\B{{\mathcal B}}

\def\C{{\mathcal C}}
\def\D{{\mathcal D}}
\def\E{{\mathcal E}}
\def\F{{\mathcal F}}
\def\G{{\mathcal G}}

\def\L{{\mathcal B}}

\def\P{{\mathcal P}}
\def\S{{\mathcal S}}

\def\X{{\mathcal X}}
\def\Y{{\mathcal Y}}

\def\half{\frac{1}{2}}
\def\iff{\Longleftrightarrow}
\def\imp{\Longrightarrow}
\def\ep{\varepsilon}
\def\bN{\mathbb{N}}
\def\bC{\mathbb{C}}

\def\bR{\mathbb{R}}

\def\bz{\left(}
\def\jz{\right)}

\def\inv{^{-1}}

\def\egy{\mathbf 1}
\def\map{\Phi}

\def\sa{\mathrm{sa}}

\def\what{\widehat}
\def\oll{\overline}

\def\rho{\varrho}

\def\old{}
\def\nw{^{*}}
\def\x{^{(t)}}
\def\xx{(t)}

\def\bog{^{\flat}}
\def\bogg{\flat}
\def\oldd{\{\s\}}
\def\neww{*}

\def\sci{\underline{sc}}
\def\scs{\overline{sc}}
\def\sc{\mathrm{sc}}
\def\di{\underline{d}}
\def\dsup{\overline{d}}

\def\sci{\underline{\mathrm{sc}}}
\def\scs{\overline{\mathrm{sc}}}
\def\minf{\chi}
\def\minfa{\chi_{\alpha}}

\def\fdd{^{[1]}}
\def\cost{\gamma}
\def\costt{c}

\newcommand{\ki}[1]{\textit{\textit{#1}}}

\newcommand{\s}{\mbox{ }}
\newcommand{\ds}{\mbox{ }\mbox{ }}

\newcommand{\norm}[1]{\left\| #1\right\|}

\newcommand{\abs}[1]{\left| #1 \right|}

\newcommand{\vecc}[1]{\underline{#1}}
\newcommand{\dvecc}[1]{\mathbf{#1}}

\newcommand{\diad}[2]{\left|#1\right\rangle\!\left\langle #2\right|}

\newcommand{\pr}[1]{\diad{#1}{#1}}

\newcommand{\new}[1]{{\color{red} #1}}

\newcommand{\derleft}[1]{\partial^{-} #1}
\newcommand{\derright}[1]{\partial^{+} #1}

\newcommand{\wtilde}[1]{\widetilde{#1}}

\newcommand{\ext}[1]{\mathbb{#1}}

\newcommand{\tp}[1]{P_{#1}}
\newcommand{\ceil}[1]{\left\lceil #1\right\rceil}
\newcommand{\vertle}{\rotatebox{90}{$\,\ge$}}

\renewcommand{\theequation}{\thesection.\arabic{equation}}

\makeatletter
\renewcommand{\p@enumii}{}
\makeatother

\DeclareMathOperator{\id}{id}
\DeclareMathOperator{\Tr}{Tr}

\DeclareMathOperator{\Exp}{\mathbb{E}}

\DeclareMathOperator{\supp}{supp}

\DeclareMathOperator{\sgn}{sgn}
\DeclareMathOperator{\ran}{ran}

\DeclareMathOperator{\spec}{spec}

\DeclareMathOperator{\dotimes}{\dot{\otimes}}

\DeclareMathOperator{\conv}{conv}

\DeclareMathOperator{\pin}{\F}
\DeclareMathOperator{\divv}{\Delta}
\DeclareMathOperator{\argmin}{argmin}


\bibliographystyle{plain}

\begin{document}

\title{Divergence radii and the strong converse exponent of classical-quantum channel coding with constant compositions}

\author{Mil\'an Mosonyi}
\email{milan.mosonyi@gmail.com}

\affiliation{
MTA-BME Lend\"ulet Quantum Information Theory Research Group
}

\affiliation{
Mathematical Institute, Budapest University of Technology and Economics, \\
Egry J\'ozsef u~1., Budapest, 1111 Hungary.
}

\author{Tomohiro Ogawa}
\email{ogawa@is.uec.ac.jp}
\affiliation{
Graduate School of Informatics and Engineering,
University of Electro-Communications,
1-5-1 Chofugaoka, Chofu-shi, Tokyo, 182-8585, Japan.
}

\begin{abstract}
\centerline{\textbf{Abstract}}
\vspace{.3cm}
\renewcommand{\theequation}{a.\arabic{equation}}

There are different inequivalent ways to define the R\'enyi 
capacity of a channel for a fixed input distribution $P$.
In [IEEE Transactions on Information Theory, 41(1):26-34, 1995],  Csisz\'ar has shown that for classical discrete memoryless channels there is a distinguished such quantity that 
has an operational 
interpretation as a generalized cutoff rate for constant composition channel coding. 
We show that the analogous notion of R\'enyi capacity,
defined in terms of the sandwiched quantum R\'enyi divergences, has the same operational interpretation in the strong converse problem of constant composition classical-quantum channel coding.
Denoting the constant composition strong converse exponent for a memoryless classical-quantum channel $W$ with composition
$P$ and rate $R$ as $\sc(W,R,P)$, our main result is that
\begin{align*}
\sc(W,R,P)=\sup_{\alpha>1}\frac{\alpha-1}{\alpha}\left[R-\chi_{\alpha}\nw(W,P)\right],
\end{align*}
where 
$\chi_{\alpha}\nw(W,P)$ is the 
$P$-weighted sandwiched R\'enyi divergence radius of the image of the channel.
\end{abstract}

\maketitle

\section{Introduction}

A classical-quantum channel
$W:\,\X\to\S(\hil)$ models a device 
with a set of possible inputs $\X$, which, on an input $x\in\X$, outputs 
a quantum system with finite-dimensional Hilbert space $\hil$
in 
state $W(x)$. The channel is called classical if all output states commute, and hence can be represented as probability distributions on some set $\Y$; the interpretation of this is that the channel outputs the symbol 
$y\in\Y$ with probability $(W(x))(y)$.
We will only deal with 
i.i.d. (independent and identically distributed) channels, meaning that
when the channel $W$ is used sequentially, it emits the state $W(\vecc{x})=W(x_1)\otimes\ldots\otimes W(x_n)$ on an input sequence
$\vecc{x}=(x_1,\ldots,x_n)$. 

A channel can be used for information transmission by suitable encoding at the sender's, and decoding at the 
receiver's side, and it is one of the central problems in information theory to devise codes that 
allow the transmission of a large amount of information with a small probability of error, and 
to quantify the ultimate efficiency achievable by any code. The latter task can be best done in an 
asymptotic setting, where the channel is allowed to be used arbitrarily many times, and the aim is to find the best achievable error asymptotics for a given rate $R$ of the amount of transmittable information per channel use. 
The fundamental results of Shannon \cite{Shannon} for classical, and of Holevo \cite{H}
and Schumacher and Westmoreland \cite{SW} for classical-quantum channels, show that there exists a critical rate
$C(W)$, called the Shannon-, resp.~Holevo capacity of the channel, below which reliable information transmission is 
possible in the sense of asymptotically vanishing error probability, and above which it is not.
Moreover, as the results of \cite{BH1998,Hayashicq,N,ON99,W} show, the error probability for an optimal sequence of codes goes to zero exponentially fast for any rate below $C(W)$, and it goes to one exponentially fast for any rate above $C(W)$. 
The optimal achievable exponents for a fixed coding rate $R$ are called the direct exponent $\mathrm{d}(W,R)$ for rates below, and the strong converse exponent $\mathrm{sc}(W,R)$ for rates above $C(W)$. The usefulness of the channel for information transmission is completely characterized once these exponents are given by some essentially computable 
expression for every possible rate $R$.

This problem is famously open for the direct exponent and low rates even for classical channels.
The strong converse exponent, on the other hand, 
has been given for every rate $R$ in \cite{DK} for classical, and in \cite{MO-cqconv} 
for classical-quantum channels (see also \cite{CsiszarKorner,Csiszar} for the classical case). 
It can be expressed as
\begin{align}\label{noconstraint sc}
\mathrm{sc}(W,R)=\sup_{\alpha>1}\frac{\alpha-1}{\alpha}\left[R-\chi_{\alpha}\nw(W)\right],
\end{align}
where $\chi_{\alpha}\nw(W)=\chi_{D_{\alpha}\nw}(W)$ is the sandwiched R\'enyi $\alpha$-capacity of $W$, defined as
\begin{align}\label{capacity equality}
\chi_{\alpha}\nw(W):=\sup_{P\in\P_f(\X)}\chi_{\alpha}\nw(W,P)
=
\sup_{P\in\P_f(\X)}I_{\alpha}\nw(W,P).
\end{align}
Here, $P_f(\X)$ is the set of finitely supported probability distributions on $\X$, and
$\chi_{\alpha}\nw(W,P)$ and $I_{\alpha}\nw(W,P)$ are the $P$-weighted sandwiched R\'enyi divergence radius
and the R\'enyi mutual information of the channel, respectively, the quantities of main interest in our paper.

These notions may be defined more generally by choosing any
quantum divergence $\divv$, i.e., some sort of generalized distance of quantum states, 
as
\begin{align}
I_{\divv}(W,P)&:=\inf_{\sigma\in\S(\hil)}\divv\bz\ext{W}(P)\|P\otimes\sigma\jz,\label{cap1}
\end{align}
($\divv$-mutual information), and
\begin{align}
\chi_{\divv}(W,P)&:=\inf_{\sigma\in\S(\hil)}\sum_{x\in\X}P(x)\divv(W(x)\|\sigma),\label{cap2}
\end{align}
($P$-weighted $\divv$-radius), where 
$P$ is a fixed, finitely supported probability distribution on the input, and 
$\ext{W}(P)=\sum_{x\in\X}P(x)\pr{x}\otimes W(x)$ is the joint state of the classical input and the quantum output.
The sandwiched R\'enyi mutual information and $P$-weighted radius are obtained by choosing $\divv$ 
to be a sandwiched R\'enyi divergence $D_{\alpha}\nw$ \cite{Renyi_new,WWY}.
In the case where $W$ is a classical channel, and $\divv$ is a R\'enyi divergence, these quantities were 
studied by Sibson \cite{Sibson} and Augustin \cite{Augustin}, respectively; see \cite{Csiszar}, 
and the recent works
\cite{NakibogluRenyi,NakibogluAugustin,Cheng-Li-Hsieh2018} for more references on their history and their applications.

The mutual information and the weighted radius provide quantifications of the information transmission capacity of a channel from different perspectives.
Indeed, the mutual information measures the $\divv$-distance of the joint input-output state of the channel from the set of uncorrelated states, with the first marginal fixed. This can be interpreted as a measure of the 
maximal amount of correlation that can be created between the input and the output of the channel with a fixed input distribution. The idea is that the more correlated the input and the output can be made, the more useful the channel is for information transmission.
The weighted radius is geometrically motivated, and the idea behind it is that the farther away some states are in $\divv$-distance (weighted by the input distribution $P$), the more distinguishable they are, and the information transmission capacity of the channel is related to the number of far away states 
among the output states of the channel.

The expression in \eqref{noconstraint sc} provides an operational interpretation for the sandwiched R\'enyi 
$\alpha$-capacities with $\alpha>1$, and singles them out as the operationally relevant quantifiers of the information transmission capacity of classical-quantum channels, from among various other quantities
(e.g., R\'enyi capacities corresponding to other notions of quantum R\'enyi divergences \cite{MO-cqconv}).
It is natural to ask whether the mutual information and the weighted divergence radius 
can be given similar operational interpretations in the context of channel coding.
Note that the standard channel coding problem does not 
yield an answer to this question, 
because after optimization over the input distribution, the 
mutual information and the weighted radius give the same capacity, 
as expressed in \eqref{capacity equality}. Therefore, to settle this problem, one needs to consider a refinement of the channel coding problem where the input distribution $P$ appears on the operational side. This can be achieved by considering constant composition coding, where the codewords are required to have the same empirical distribution for each message, and these empirical distributions are required to converge to a fixed distribution $P$ on $\X$ as the number of channel uses goes to infinity.
It was shown by Csisz\'ar in \cite{Csiszar} that in this setting
\begin{align}\label{Csiszar cc sc}
\mathrm{sc}(W,R,P)=\sup_{\alpha>1}\frac{\alpha-1}{\alpha}\left[R-\chi_{\alpha}(W,P)\right],
\end{align}
for any classical channel $W$ and input distribution $P$, 
where $\mathrm{sc}(W,R,P)$ is the strong converse exponent for coding rate $R$, and 
$\chi_{\alpha}(W,P)=\chi_{D_{\alpha}}(W,P)$. Here, $D_{\alpha}$ is the classical R\'enyi $\alpha$-divergence \cite{Renyi}.
This shows that, maybe somewhat surprisingly, it is not the perhaps more intuitive-looking concept of mutual information
\eqref{cap1} but the geometric quantity \eqref{cap2} that correctly captures the information transmission capacity
of a classical channel. 

Our main result is an exact analogue of \eqref{Csiszar cc sc} for classical-quantum channels, 
given as
\begin{align}\label{main result abstract}
\sc(W,R,P)=\sup_{\alpha>1}\frac{\alpha-1}{\alpha}\left[R-\chi_{\alpha}\nw(W,P)\right],
\end{align}
where 
$\chi_{\alpha}\nw(W,P)$ is the 
$P$-weighted sandwiched R\'enyi divergence radius of the channel.
Thus we establish that, in the strong converse domain, the operationally relevant notion of R\'enyi capacity 
for a classical-quantum channel
with fixed input distribution $P$  is the 
$P$-weighted sandwiched R\'enyi divergence radius of the channel.

The structure of the paper is as follows.
After collecting some technical preliminaries in Section \ref{sec:Preliminaries}, we study the concepts of 
divergence radius and center for general divergences in Section \ref{sec:gendivrad} and for 
the class of $\alpha$-$z$ quantum R\'enyi divergences 
in Section \ref{sec:Renyi center}. 
One of our main results is the additivity of the 
weighted  $\alpha$-$z$ R\'enyi divergence radius for classical-quantum channels and certain pairs 
$(\alpha,z)$, given in Section \ref{sec:additivity}.
We prove it using a representation of the minimizing state in \eqref{cap2} when $\divv=D_{\alpha,z}$
is a quantum $\alpha$-$z$ R\'enyi divergence \cite{AD},
as the fixed point of a certain map on 
the state space. Analogous results have been derived very recently by Nakibo\u glu in \cite{NakibogluAugustin}
for classical channels, and by Cheng, Li and Hsieh in \cite{Cheng-Li-Hsieh2018} for classical-quantum channels 
and the Petz-type R\'enyi divergences. Our results extend these with a different proof method, which in turn is 
closely related to the approach of Hayashi and Tomamichel for proving the additivity of the sandwiched 
R\'enyi mutual information \cite{HT14}.

In Section \ref{sec:sc}, we prove our main result, \eqref{main result abstract}.
The non-trivial part of this is the inequality LHS$\le$RHS, which we prove using a refinement of the arguments 
in \cite{MO-cqconv}. First, in Proposition \ref{prop:sc upper} we employ a suitable adaptation of the techniques 
of Dueck and K\"orner \cite{DK} and 
obtain the inequality in terms of the log-Euclidean R\'enyi divergence, which gives a suboptimal bound. Then in Proposition \ref{prop:upper reg} we use the 
asymptotic pinching technique from \cite{MO-cqconv} to arrive at an upper bound
in terms of the regularized sandwiched R\'enyi divergence radii, and finally we use the previously established 
additivity
property of these quantities to arrive at the desired bound. In the proof of Proposition \ref{prop:sc upper} we need a constant 
composition version of the classical-quantum channel coding theorem.
Such a result was established, for instance, by Hayashi in \cite{universalcq}, and very recently by 
Cheng, Hanson, Datta and Hsieh in \cite{ChengHansonDattaHsieh2018}, with a different exponent, by refining another random coding argument by 
Hayashi \cite{Hayashicq}. We give a slightly modified proof in Appendix \ref{sec:random coding exponent}.
Further appendices contain various technical ingredients of the proofs, and in Appendix \ref{sec:divrad further}
we give a more detailed discussion of the concepts of divergence radius and mutual information for general divergences and $\alpha$-$z$ R\'enyi divergences, which may be of independent interest.

\section{Preliminaries}
\label{sec:Preliminaries}

For a finite-dimensional Hilbert space $\hil$, let $\B(\hil)$ denote the set of all linear operators on $\hil$, 
and let $\B(\hil)_{\sa}$, $\B(\hil)_+$, and $\B(\hil)_{++}$ denote the set of 
self-adjoint, non-zero positive semi-definite (PSD), and positive definite operators, respectively. 
For an interval $J\subseteq\bR$, let 
$\B(\hil)_{\sa,J}:=\{A\in\B(\hil)_{\sa}:\spec(A)\subseteq J\}$, i.e., 
the set of self-adjoint operators on $\hil$ with all their eigenvalues in $J$.
 Let $\S(\hil):=\{\rho\in\B(\hil)_+,\,\Tr\rho=1\}$ denote the set of \ki{density operators}, or \ki{states}, on $\hil$, and $\S(\hil)_{++}$ the set of invertible density operators.

For a self-adjoint operator $A$, let $P^A_a:=\egy_{\{a\}}(A)$ denote the spectral projection of $A$
corresponding to the singleton $\{a\}$. 
The projection onto the support of $A$ is $\sum_{a\ne 0}P^A_a$; in particular, if $A$ is positive semi-definite, it is equal to $\lim_{\alpha\searrow 0}A^{\alpha}=:A^0$. In general, we follow the convention that real powers
of a positive semi-definite operator $A$ are taken only on its support, i.e., for any $x\in\bR$, 
$A^x:=\sum_{a>0}a^x P^A_a$.

For projections $P_1,\ldots,P_r$ on a Hilbert-space $\hil$, we denote by $\bigvee_{i=1}^rP_i$ the projection onto the subspace spanned by $\bigcup_{i=1}^r\ran P_i$.

Given a self-adjoint operator $A\in\B(\hil)_{\sa}$, the \ki{pinching} by $A$ is the operator
$\F_A:\,\B(\hil)\to\B(\hil)$, $\F_A(.):=\sum_a P^A_a(.)P^A_a$, i.e., the block-diagonalization 
with the eigen-projectors of $A$. By the pinching inequality \cite{H:pinching},
\begin{align*}
X\le|\spec(A)|\F_A(X),\ds\ds\ds X\in\B(\hil)_+.
\end{align*}
Since $\F_A$ can be written as a convex combination of unitary conjugations, 
\begin{align}\label{pinching mon}
f\bz\F_A(B)\jz\le \F_A (f(B)),\ds\ds\ds
\Tr g\bz\F_A(B)\jz\le \Tr g(B),
\end{align}
for any operator convex function $f$, and 
any convex function $g$ on an interval $J$, and any $B\in\B(\hil)_{\sa,J}$. The second inequality above is due to the following well-known fact:

\begin{lemma}\label{lemma:trace function properties}
Let $J\subseteq\bR$ be an interval and $f:\,J\to\bR$ be a function. 
\begin{enumerate}
\item
If $f$ is monotone increasing then $\Tr f(.)$ is monotone increasing on $\B(\hil)_{\sa,J}$. 

\item
If $f$ is convex then $\Tr f(.)$ is convex on $\B(\hil)_{\sa,J}$. 
\end{enumerate}
\end{lemma}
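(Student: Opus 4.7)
The plan is to prove both parts via variational characterizations of eigenvalues combined with the spectral-theorem identity $\Tr f(A)=\sum_k f(\lambda_k(A))$, valid for any self-adjoint $A$ with spectrum in the domain of $f$.

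For part (i), I would invoke Weyl's monotonicity theorem. If $A,B\in\B(\hil)_{\sa,J}$ with $A\le B$, the Courant--Fischer min-max formula $\lambda_k^{\downarrow}(X)=\max_{\dim V=k}\min_{0\ne\psi\in V}\langle\psi,X\psi\rangle/\langle\psi,\psi\rangle$ immediately gives $\lambda_k^{\downarrow}(A)\le\lambda_k^{\downarrow}(B)$ for every $k$, since $\langle\psi,A\psi\rangle\le\langle\psi,B\psi\rangle$ for all $\psi$. All of these eigenvalues lie in $J$ because $J$ is an interval, so monotonicity of $f$ preserves the ordering term by term, and summing over $k$ yields $\Tr f(A)\le\Tr f(B)$.

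For part (ii), fix $\lambda\in[0,1]$ and $A,B\in\B(\hil)_{\sa,J}$, and set $C:=\lambda A+(1-\lambda)B$, which again has spectrum in $J$ by convexity of $J$. Choose an orthonormal eigenbasis $\{\psi_k\}$ of $C$. The diagonal entries $\langle\psi_k,A\psi_k\rangle$ and $\langle\psi_k,B\psi_k\rangle$ both lie in $J$, and since $\langle\psi_k,C\psi_k\rangle=\lambda\langle\psi_k,A\psi_k\rangle+(1-\lambda)\langle\psi_k,B\psi_k\rangle$, pointwise convexity of $f$ gives
\[
\Tr f(C)=\sum_k f(\langle\psi_k,C\psi_k\rangle)\le \lambda\sum_k f(\langle\psi_k,A\psi_k\rangle)+(1-\lambda)\sum_k f(\langle\psi_k,B\psi_k\rangle).
\]
The proof then reduces to the inequality $\sum_k f(\langle\psi_k,X\psi_k\rangle)\le\Tr f(X)$ for any self-adjoint $X$ with spectrum in $J$ and any orthonormal basis $\{\psi_k\}$. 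This is a Peierls--Bogoliubov type inequality that follows from the Schur--Horn theorem, which asserts that the vector of diagonal entries is majorized by the eigenvalue vector, combined with the Hardy--Littlewood--P\'olya fact that $x\prec y$ implies $\sum_k g(x_k)\le\sum_k g(y_k)$ for every convex $g$ on a containing interval.

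The argument is essentially a careful invocation of standard facts, and there is no genuine obstacle. The one subtlety to keep track of is that the diagonal entries of any self-adjoint operator with spectrum in $J$ remain in $J$ (immediate since $J$ is an interval containing the numerical range), which is what legitimizes applying $f$ to them in the two steps above.
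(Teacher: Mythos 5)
Your proof is correct. The paper states this lemma without proof, citing it as a standard fact, so there is no in-paper argument to compare yours against; both of your steps are sound and complete. For (i), Weyl monotonicity via Courant--Fischer is exactly the right tool, and your remark that the eigenvalues stay in $J$ legitimizes applying $f$. For (ii), the chain ``expand in an eigenbasis of $C$, apply pointwise convexity, then bound $\sum_k f(\langle\psi_k,X\psi_k\rangle)\le\Tr f(X)$ via Schur majorization plus Hardy--Littlewood--P\'olya'' is a standard textbook route (the last inequality is Peierls' inequality). A minor stylistic alternative is to prove the Peierls step directly: writing $\langle\psi_k,X\psi_k\rangle=\sum_j|\langle\psi_k,\phi_j\rangle|^2\mu_j$ with $\{\phi_j\}$ an eigenbasis of $X$, Jensen on the probability weights $|\langle\psi_k,\phi_j\rangle|^2$ gives $f(\langle\psi_k,X\psi_k\rangle)\le\sum_j|\langle\psi_k,\phi_j\rangle|^2 f(\mu_j)$, and summing over $k$ uses only that $(|\langle\psi_k,\phi_j\rangle|^2)_{k,j}$ is doubly stochastic. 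This avoids invoking Schur--Horn and HLP as black boxes, but the content is identical to what you wrote.
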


For a differentiable function $f$ defined on an interval $J\subseteq\bR$, let 
$f\fdd:\,J\times J\to\bR$ be its \ki{first divided difference function}, defined as
\begin{align*}
f\fdd(a,b):=\begin{cases}
\frac{f(a)-f(b)}{a-b},&a\ne b,\\
f'(a),&a=b,
\end{cases}\ds\ds\ds a,b\in J.
\end{align*}
 
The proof of the following can be found, e.g., in \cite[Theorem V.3.3]{Bhatia} or \cite[Theorem 2.3.1]{Hiai_book}:
\begin{lemma}\label{lemma:op function derivative}
If $f$ is a continuously differentiable function on an open interval $J\subseteq\bR$ then for any finite-dimensional  Hilbert space $\hil$, $A\mapsto f(A)$ is Fr\'echet differentiable on $\B(\hil)_{\sa,J}$, and its 
Fr\'echet derivative $Df(A)$ at a point $A$ is given by 
\begin{align*}
Df(A)(Y)=\sum_{a,b}f\fdd(a,b)P^A_aYP^A_b,\ds\ds\ds Y\in\B(\hil)_{\sa}.
\end{align*}
\end{lemma}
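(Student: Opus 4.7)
The plan is a two-step reduction: verify the formula directly for polynomials, then approximate a general $C^1$ function $f$ and its derivative uniformly by polynomials on a compact neighbourhood of $\spec(A)$.

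For the polynomial case, I would take $f(x)=x^n$ and expand
\begin{align*}
(A+tY)^n-A^n=t\sum_{k=0}^{n-1}A^kYA^{n-1-k}+O(t^2)
\end{align*}
as $t\to 0$. Inserting $\unit=\sum_a P^A_a$ on both sides of $Y$ and using $A=\sum_a a P^A_a$, the linear term becomes
\begin{align*}
\sum_{k=0}^{n-1}A^kYA^{n-1-k}=\sum_{a,b}\Bigl(\sum_{k=0}^{n-1}a^kb^{n-1-k}\Bigr)P^A_aYP^A_b=\sum_{a,b}f\fdd(a,b)P^A_aYP^A_b,
\end{align*}
since the inner sum equals $(a^n-b^n)/(a-b)$ when $a\ne b$ and $na^{n-1}$ when $a=b$, which is precisely $f\fdd(a,b)$. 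By linearity this establishes the formula for every polynomial $p$ with $p\fdd$ in place of $f\fdd$, and in particular shows that $A\mapsto p(A)$ is Fr\'echet differentiable.

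In the general $C^1$ case, I would fix $A_0\in\B(\hil)_{\sa,J}$ and choose a compact interval $K\subseteq J$ whose interior contains $\spec(A_0)$, so that $\spec(A)\subseteq K$ for every $A$ in a small operator-norm neighbourhood $U$ of $A_0$. By the Weierstrass approximation theorem applied to $f'$ on $K$ there are polynomials $q_n\to f'$ uniformly on $K$; integrating from a fixed basepoint produces polynomials $p_n$ with $p_n\to f$ and $p_n'\to f'$ uniformly on $K$. A short computation using $p_n(a)-p_n(b)=\int_b^a p_n'$ and a separate argument on the diagonal yields $p_n\fdd\to f\fdd$ uniformly on $K\times K$. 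Defining the Schur-multiplier maps $\Phi_{A'}(Y):=\sum_{a,b}f\fdd(a,b)P^{A'}_aYP^{A'}_b$ and analogously $\Phi_{A'}^{(n)}$ with $p_n\fdd$ for $A'\in U$, the polynomial case identifies $\Phi_{A'}^{(n)}$ with the Fr\'echet derivative of $A'\mapsto p_n(A')$ at $A'$.

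The main obstacle is to interchange the limits $n\to\infty$ and $t\to 0$ cleanly. I would handle it via the integral form of the Taylor remainder
\begin{align*}
p_n(A+tY)-p_n(A)-t\,\Phi_A^{(n)}(Y)=t\int_0^1 \bz \Phi_{A+stY}^{(n)}(Y)-\Phi_A^{(n)}(Y)\jz ds,
\end{align*}
together with a uniform-in-$n$ modulus of continuity for $A'\mapsto \Phi_{A'}^{(n)}$ on $U$; this bound is available because $\sup_{K\times K}|p_n\fdd|$ is uniformly bounded for large $n$ and because the spectral projections $P^{A'}_a$ depend continuously on $A'\in U$. Combined with $p_n(A')\to f(A')$ in operator norm (continuity of the functional calculus) and $\Phi_{A'}^{(n)}\to\Phi_{A'}$ uniformly in $A'\in U$, a standard three-$\epsilon$ argument gives $f(A+tY)-f(A)=t\,\Phi_A(Y)+o(t)$ uniformly in $\|Y\|\le 1$, which is Fr\'echet differentiability together with the claimed formula.
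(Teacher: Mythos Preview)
The paper does not actually prove this lemma; it simply cites \cite[Theorem V.3.3]{Bhatia} and \cite[Theorem 2.3.1]{Hiai_book}. Your two-step outline (exact formula for polynomials, then uniform $C^1$ approximation on a compact neighbourhood of the spectrum) is one of the standard routes to this result, and is essentially what one finds in those references.

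One justification in your sketch is incorrect as written: the individual spectral projections $P^{A'}_a$ do \emph{not} depend continuously on $A'$ in general (think of $A'_t=\mathrm{diag}(t,-t)$ at $t=0$, where two rank-one projections merge into the identity). What you actually need is that the Schur-multiplier map $A'\mapsto \Phi_{A'}^{(n)}$ is equicontinuous in $n$, and this follows without invoking continuity of the projections. Since $\|M_g^{A'}\|_{\mathrm{HS}\to\mathrm{HS}}\le\|g\|_{\infty,K\times K}$ for any $A'$ with spectrum in $K$, the uniform convergence $p_n\fdd\to f\fdd$ on $K\times K$ gives $\Phi_{A'}^{(n)}\to\Phi_{A'}$ uniformly in $A'\in U$. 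Equicontinuity then follows from this uniform convergence together with the continuity of each individual $\Phi_{\cdot}^{(n)}$ (which is clear from the explicit polynomial form $\sum_k (A')^kY(A')^{m-1-k}$) via a standard $3\varepsilon$ argument: freeze a large index $N$, use continuity of $\Phi_{\cdot}^{(N)}$, and control the tail by uniform closeness to $\Phi_{\cdot}^{(N)}$. With this correction the rest of your argument goes through.
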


It is straightforward to verify that in the setting of Lemma \ref{lemma:op function derivative}, the function 
$A\mapsto \Tr f(A)$ is also Fr\'echet differentiable on $\B(\hil)_{\sa,J}$, and its 
Fr\'echet derivative $D(\Tr \circ f)(A)$ at a point $A$ is given by 
\begin{align}\label{Tr derivative}
D(\Tr \circ f)(A)(Y)=\Tr f'(A)Y,\ds\ds\ds Y\in\B(\hil)_{\sa},
\end{align}
where $f'$ is the derivative of $f$ as a real-valued function. 
\medskip

An operator $A\in\B(\hil^{\otimes n})$ is \ki{symmetric}, if $U_{\pi}AU_{\pi}^*=A$ for all permutations 
$\pi\in S_n$, where $U_{\pi}$ is defined by 
$U_{\pi}x_1\otimes\ldots\otimes x_n=x_{\pi\inv(1)}\otimes\ldots\otimes x_{\pi\inv(n)}$, $x_i\in\hil$, $i\in[n]$. As it was shown in \cite{universalcq}, for every finite-dimensional Hilbert space $\hil$ and every $n\in\bN$, there exists a 
\ki{universal symmetric state} $\sigma_{u,n}\in\S(\hil^{\otimes n})$ such that it is symmetric, it commutes with every symmetric state, and for every symmetric state $\omega\in\S(\hil^{\otimes n})$,
\begin{align*}
\omega\le v_{n,d}\sigma_{u,n},
\end{align*}
where $v_{n,d}$ only depends on $d=\dim\hil$ and $n$, and it is polynomial in $n$.
\medskip

By a \ki{generalized classical-quantum (gcq) channel} we mean a map $W:\,\X\to\B(\hil)_+$, where $\X$ is a non-empty set, and $\hil$ is a finite-dimensional Hilbert space. It is a \ki{classical-quantum (cq) channel} if 
$\ran W\subseteq\S(\hil)$, i.e., each output of the channel is a normalized quantum state. 
A (generalized) classical-quantum channel is \ki{classical}, if $W(x)W(y)=W(y)W(x)$ for all $x,y\in \X$.
We remark that we do not require any further structure of $\X$ or the map $W$, and in particular, $\X$ need not be finite.
Given a finite number of gcq channels $W_i:\,\X_i\to\B(\hil_i)_+$, their \ki{product} is the gcq channel
\begin{align*}
W_1\otimes\ldots\otimes W_n:\,&\X_1\times\ldots\times \X_n\to\B(\hil_1\otimes\ldots\otimes\hil_n)_+\\
&(x_1,\ldots,x_n)\mapsto W_1(x_1)\otimes\ldots\otimes W_n(x_n).
\end{align*}
In particular, if all $W_i$ are the same channel $W$ then we use the notaion $W^{\otimes n}=W\otimes\ldots\otimes W$.
\medskip

We say that a function $P:\,\X\to[0,1]$ is a \ki{probability density function} on a set $\X$ if
$1=\sum_{x\in\X}P(x):=\sup\left\{\sum_{x\in\X_0}P(x):\,\X_0\subseteq\X\text{ finite}\right\}$. The \ki{support} of $P$ is 
$\supp P:=\{x\in\X:\,P(x)>0\}$. We say that $P$ is finitely supported if $\supp$ is a finite set, and we denote by $\P_f(\X)$ the set of all finitely supported probability distributions. 
The \ki{Shannon entropy} of a $P\in\P_f(\X)$ is defined as
\begin{align*}
H(P):=-\sum_{x\in\X}P(x)\log P(x).
\end{align*}

For a sequence
$\vecc{x}\in\X^n$, the \ki{type} $\tp{\vecc{x}}\in\P_f(\X)$ of $\vecc{x}$ is 
the empirical distribution of $\vecc{x}$, defined as
\begin{align*}
\tp{\vecc{x}}:=\frac{1}{n}\sum_{i=1}^n\delta_{x_i}:\ds y\mapsto
\frac{1}{n}\abs{\{k:\,x_k=y\}},\ds\ds\ds y\in\X,
\end{align*}
where $\delta_x$ is the Dirac measure concentrated at $x$.
We say that a probability distribution $P$ on $\X$ is an \ki{$n$-type} if there exists an $\vecc{x}\in\X^n$ 
such that $P=\tp{\vecc{x}}$. We denote the set of $n$-types by $\P_n(\X)$. For an $n$-type $P$, let 
$\X^n_{P}:=\{\vecc{x}\in\X^n:\,\tp{\vecc{x}}=P\}$ be the set of sequences with the same type $P$.
A key property of types is that $\vecc{x},\vecc{y}\in\X^n$ have the same type if and only if they are permutations of each other, and for any $\vecc{x},\vecc{y}$ with $\tp{\vecc{x}}=\tp{\vecc{y}}$, we have
\begin{align}\label{type prob}
\tp{\vecc{x}}^{\otimes n}(\vecc{y})=e^{-nH(\tp{\vecc{x}})}.
\end{align}
By Lemma 2.3 in \cite{CsiszarKorner2}, for any $P\in\P_n(\X)$, 
\begin{align}\label{type card}
(n+1)^{-|\supp P|}e^{nH(P)}\le |\X^n_{P}|\le e^{nH(P)}.
\end{align}
For any $P\in\P_f(\X)$, any $m\in\bN$, and any $a\in\X$, 
\begin{align}
\sum_{\vecc{x}\in\X^m}P^{\otimes m}(\vecc{x})P_{\vecc{x}}(a)
&=
\sum_{\vecc{x}\in\X^m}P^{\otimes m}(\vecc{x})\frac{1}{m}\sum_{k=1}^m\egy_{\{a\}}(x_k)\nonumber\\
&=
\frac{1}{m}\sum_{k=1}^m\sum_{\vecc{x}\in\X^m}P^{\otimes m}(\vecc{x})\egy_{\{a\}}(x_k)
=
\frac{1}{m}\sum_{k=1}^mP(a)
=P(a).
\label{lemma:type lemma}
\end{align}
\smallskip

The following lemma is an extension 
of the minimax theorems due to Kneser \cite{Kneser} and Fan \cite{Fan}
to the case where $f$ can take the value $+\infty$. For a proof, see 
\cite[Theorem 5.2]{FarkasRevesz2006}.

\begin{lemma}\label{lemma:KF+ minimax}
Let $X$ be a compact convex set in a topological vector space $V$ and $Y$ be a convex
subset of a vector space $W$. Let $f:\,X\times Y\to\bR\cup\{+\infty\}$ be such that
\smallskip

\s(i) $f(x,.)$ is concave on $Y$ for each $x\in X$, and
\smallskip

(ii) $f(.,y)$ is convex and lower semi-continuous  on $X$ for each $y\in Y$.
\smallskip

\noindent Then 
\begin{align}\label{minimax statement}
\inf_{x\in X}\sup_{y\in Y}f(x,y)=
\sup_{y\in Y}\inf_{x\in X}f(x,y),
\end{align}
and the infima in \eqref{minimax statement} can be replaced by minima.
\end{lemma}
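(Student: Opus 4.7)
The inequality $\sup_{y\in Y}\inf_{x\in X}f(x,y)\le\inf_{x\in X}\sup_{y\in Y}f(x,y)$ is immediate from $\inf_x f(x,y_0)\le f(x_0,y_0)\le\sup_y f(x_0,y)$ for any $x_0\in X$, $y_0\in Y$. For the attainment statement, note that $x\mapsto\sup_{y\in Y}f(x,y)$ is lower semicontinuous as a pointwise supremum of lsc functions, and $f(\cdot,y)$ is lsc for each $y$; combined with compactness of $X$ this yields that both the outer infimum on the right-hand side and each $\inf_x f(x,y)$ on the left-hand side are attained as minima.

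For the non-trivial direction, the plan is to fix an arbitrary $c<\inf_{x\in X}\sup_{y\in Y}f(x,y)$, exhibit $y^*\in Y$ with $\inf_x f(x,y^*)\ge c$, and then let $c\nearrow\inf_X\sup_Y f$. For each $y\in Y$, the sublevel set $L_c(y):=\{x\in X:f(x,y)\le c\}$ is closed (by lsc) and convex (by convexity of $f(\cdot,y)$), and by the choice of $c$ the intersection $\bigcap_{y\in Y}L_c(y)$ is empty. Compactness of $X$ together with the finite intersection property then produce $y_1,\ldots,y_n\in Y$ such that $\bigcap_{i=1}^n L_c(y_i)=\emptyset$; equivalently, $\max_i f(x,y_i)>c$ for every $x\in X$.

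The next step is a finite-dimensional minimax on $X\times\Delta_n$, where $\Delta_n$ is the probability simplex indexed by $\{1,\ldots,n\}$. Define
\begin{align*}
\tilde f(x,\lambda):=\sum_{i=1}^n\lambda_i\,f(x,y_i),
\end{align*}
with the convention $0\cdot(+\infty)=0$; then $\tilde f$ is convex and lsc in $x$ and linear in $\lambda$. Applying the classical Kneser--Fan minimax to $\tilde f$ on $X\times\Delta_n$ gives
\begin{align*}
\sup_{\lambda\in\Delta_n}\inf_{x\in X}\tilde f(x,\lambda)=\inf_{x\in X}\sup_{\lambda\in\Delta_n}\tilde f(x,\lambda)=\inf_{x\in X}\max_{1\le i\le n}f(x,y_i)>c,
\end{align*}
where the strict inequality uses that $x\mapsto\max_i f(x,y_i)$ is lsc on compact $X$ and attains its infimum at some $x^*\notin\bigcap_i L_c(y_i)$. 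Picking $\lambda^*\in\Delta_n$ with $\inf_x\tilde f(x,\lambda^*)\ge c$ and setting $y^*:=\sum_i\lambda_i^* y_i\in Y$, the concavity of $f(x,\cdot)$ gives $f(x,y^*)\ge\tilde f(x,\lambda^*)$ for every $x$, hence $\inf_x f(x,y^*)\ge c$, completing the argument.

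The main technical obstacle is that $f$ may take the value $+\infty$, whereas the usual Kneser--Fan/Sion minimax is stated for real-valued $f$. This can be handled either by invoking an extended-real version (as in \cite{FarkasRevesz2006}), or by a truncation: replace each $f(x,y_i)$ with $f(x,y_i)\wedge N$, apply the real-valued minimax at level $N$, and let $N\to\infty$. The strict inequality $\inf_x\max_i f(x,y_i)>c$ already manifests at a finite truncation level by lsc and compactness, so no information is lost. Everything else in the proof reduces to elementary compactness, convexity, and semicontinuity considerations.
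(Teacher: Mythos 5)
The paper gives no proof of this lemma; it merely refers to \cite[Theorem 5.2]{FarkasRevesz2006}. So there is no in-text argument to compare against, and what follows assesses your proposal on its own. Your overall strategy --- the trivial inequality, attainment from lower semicontinuity plus compactness of $X$, a finite-intersection-property reduction to finitely many $y_1,\ldots,y_n$, a finite-dimensional minimax on $X\times\Delta_n$, and the final concavity step $f(x,y^*)\ge\sum_i\lambda_i^* f(x,y_i)$ --- is a standard and correct route, and your first two paragraphs go through as written.

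The gap is in the last paragraph, where you handle $f=+\infty$. Option (a), "invoke an extended-real version as in \cite{FarkasRevesz2006}," is circular: that is precisely the statement being proved. Option (b), the truncation $f(x,y_i)\wedge N$, does not work: $\min(g,N)$ is only quasi-convex, not convex, when $g$ is convex, so $x\mapsto\sum_i\lambda_i\bigl(f(x,y_i)\wedge N\bigr)$ need not be convex and the real-valued Kneser--Fan theorem cannot be applied to the truncated function. Worse, the truncated minimax equality can genuinely fail: with $X=[0,2]$, $g_1(x)=(x-2)^2$, $g_2(x)=x^2$, $N=1$, one has $\inf_{x\in X}\max_i\bigl(g_i(x)\wedge 1\bigr)=1$ (at $x=1$), while $\sup_{\lambda\in\Delta_2}\inf_{x\in X}\bigl(\lambda_1(g_1(x)\wedge 1)+\lambda_2(g_2(x)\wedge 1)\bigr)\le\tfrac12$ (evaluate at $x\in\{0,2\}$). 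So letting $N\to\infty$ does not recover the untruncated value, and "no information is lost" is false. A correct way to finish the finite-dimensional step is a separation argument in $\bR^n$: the set $A:=\{z\in\bR^n:\exists x\in X,\ f(x,y_i)\le z_i\ \forall i\}$ is convex and upward-closed and, by the choice of $c<\inf_x\max_i f(x,y_i)$, is disjoint from the box $(-\infty,c]^n$; a separating hyperplane, normalized, yields $\mu\in\Delta_n$ with $\sum_i\mu_i f(x,y_i)\ge c$ for all $x\in X$ (with the degenerate case $A=\emptyset$, where every $x$ has some infinite coordinate, handled trivially by any $\mu$ in the interior of $\Delta_n$). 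After that, your concavity step closes the argument.
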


\section{Divergence radii}

\subsection{General divergences}
\label{sec:gendivrad}

By a \ki{quantum divergence} $\divv$ we mean a function on pairs of 
non-zero positive semi-definite matrices,
\begin{align*}
\divv:\,\cup_{d\in\bN}\bz\B(\bC^d)_{+}\times \B(\bC^d)_{+}\jz\to\bR\cup\{\pm\infty\},
\end{align*}
that is \ki{invariant under isometries}, i.e., if $V:\,\bC^{d_1}\to\bC^{d_2}$ is an isometry then 
\begin{align*}
\divv(V\rho V^*\|V\sigma V^*)=\divv(\rho\|\sigma),\ds\ds\ds
\rho,\sigma\in\B(\bC^{d_1})_{+}.
\end{align*}
Due to the isometric invariance, $\divv$ may be extended to pairs of non-zero PSD operators
on any finite-dimensional Hilbert space $\hil$, by choosing any isometry $V:\,\hil\to\bC^d$ with large enough $d$, and defining
\begin{align*}
\divv(\rho\|\sigma):=\divv(V\rho V^*\|V\sigma V^*),\ds\ds\ds\rho,\sigma\in\B(\hil)_{+}.
\end{align*}
The isometric invariance property guarantees that this extension is well-defined, in the sense
that the value of $\divv(V\rho V^*\|V\sigma V^*)$
is independent of the choice of $d$ and $V$. 
Clearly, this extension is again invariant under isometries, i.e.,
for any $\rho,\sigma\in\B(\hil)_+$ and $V:\,\hil\to\kil$ isometry, 
$\divv\bz V\rho V^*\|V\sigma V^*\jz=\divv(\rho\|\sigma)$. Note that this implies that $\divv$ is invariant under extensions with pure states, i.e., $\divv(\rho\otimes\pr{\psi}\|\sigma\otimes\pr{\psi})=\divv(\rho\|\sigma)$, where $\psi$ is an arbitrary unit vector in some Hilbert space.
Further properties will often be important. In particular, we say that 
a divergence $\divv$ is 
\begin{itemize}
\item
\ki{positive} if $\divv(\rho\|\sigma)\ge 0$ for all density operators $\rho,\sigma$, and it is 
\ki{strictly positive} if $\divv(\rho\|\sigma)=0\iff\rho=\sigma$, again for density operators;
\item
\ki{monotone under CPTP maps} if for any $\rho,\sigma\in\B(\hil)_+$ and 
any CPTP (completely positive and trace-preserving) map $\map:\,\B(\hil)\to\B(\kil)$,
\begin{align*}
\divv\bz\map(\rho)\|\map(\sigma)\jz
\le
\divv\bz\rho\|\sigma\jz;
\end{align*}
\item
\ki{jointly convex} if for all $\rho_i,\sigma_i\in\B(\hil)$, $i\in[r]:=\{1,\ldots,r\}$, and probability distribution $(p_i)_{i=1}^r$,
\begin{align*}
\divv\bz\sum_{i=1}^r p_i\rho_i\Bigg\|\sum_{i=1}^r p_i\sigma_i\jz
\le
\sum_{i=1}^r p_i\divv\bz\rho_i\|\sigma_i\jz;
\end{align*}
\item
\ki{additive on tensor products} if for any $\rho_i,\sigma_i\in\B(\hil_i)_+$, $i=1,2$,
\begin{align*}
\divv(\rho_1\otimes\rho_2\|\sigma_1\otimes\sigma_2)
=
\divv(\rho_1\|\sigma_1)
+\divv(\rho_2\|\sigma_2).
\end{align*}
\item
\ki{block additive} if for any $\rho_1,\rho_2$, $\sigma_1,\sigma_2$ such that $\rho_1^0\vee\sigma_1^0\perp\rho_2^0\vee\sigma_2^0$, we have
\begin{align*}
\divv(\rho_1+\rho_2\|\sigma_1+\sigma_2)=
\divv(\rho_1\|\sigma_1)+
\divv(\rho_2\|\sigma_2);
\end{align*}
\item
\ki{homogeneous} if 
\begin{align*}
\divv(\lambda\rho\|\lambda\sigma)=\lambda\divv(\rho\|\sigma),\ds\ds\ds
\rho,\sigma\in\B(\hil)_+,\ds\lambda\in(0,+\infty).
\end{align*}
\end{itemize}
Typical examples for divergences with some or all of the above properties are the relative entropy and some
R\'enyi divergences and related quantities; 
see Section \ref{sec:Renyi divrad}.

\begin{rem}
It is well-known \cite{P86,Uhlmann1973} that a block additive and homogenous divergence is monotone under CPTP maps if and only if it is jointly convex. The ``only if'' direction follows by applying monotonicity to 
$\what\rho:=\sum_ip_i\pr{i}_E\otimes\rho_i$ and
$\what\sigma:=\sum_ip_i\pr{i}_E\otimes\sigma_i$ under the partial trace over the $E$ system, where 
$(\ket{i})_{i=1}^r$ is an ONS in $\hil_E$. The ``if'' direction follows by using a Stinespring dilation 
$\map(.)=\Tr_E V(.)V^*$ with an isometry $V:\,\hil\to\kil\otimes\hil_E$, 
and writing the partial trace as a convex combination of unitary conjugations (e.g., by the discrete Weyl unitaries).
\end{rem}
\medskip

Given a non-empty set of positive semi-definite operators $S\subseteq\B(\hil)_+$, its \ki{$\divv$-radius} $R_{\divv}(S)$ is defined as
\begin{align}\label{div radius def}
R_{\divv}(S):=\inf_{\sigma\in\S(\hil)}\sup_{\rho\in S}\divv(\rho\|\sigma).
\end{align}
If the above infimum is attained at some $\sigma\in\S(\hil)$ then $\sigma$ is called a \ki{$\divv$-center} of $S$.
A variant of this notion is when, instead of minimizing the maximal $\divv$-distance,
we minimize an averaged distance according to some
finitely supported probability distribution $P\in\P_f(S)$. This yields the notion of the
\ki{$P$-weighted $\divv$-radius:}
\begin{align}\label{P div rad}
R_{\divv,P}(S):=\inf_{\sigma\in\S(\hil)}\sum_{\rho\in S}P(\rho)\divv(\rho\|\sigma).
\end{align}
If the above infimum is attained at some $\sigma\in\S(\hil)$ then $\sigma$ is called a \ki{$P$-weighted $\divv$-center} for $S$.

\begin{rem}
For applications in channel coding, $S$ will be the image of a classical-quantum channel, and hence a subset of the state space. 
In this case minimizing over density operators $\sigma$ in \eqref{div radius def} and 
\eqref{P div rad} seems natural, while it is less obviously so when the elements of $S$ are general positive semi-definite operators. We discuss this further in Appendix \ref{sec:divrad further}.
\end{rem}

\begin{rem}
Note that for any finitely supported probability distribution $P$ on $\B(\hil)_+$, we have, by definition, $P(\rho)=0$ for all $\rho\in\B(\hil)_+\setminus \supp P$, and hence
\begin{align*}
R_{\divv,P}(\B(\hil)_+)=R_{\divv,P}(\supp P)=R_{\divv,P}(S).
\end{align*}
for any $S\subseteq\B(\hil)_+$ with $\supp P\subseteq S$.
That is, $R_{\divv,P}(S)$ does not in fact depend on $S$, it is a function only of $P$. 
Hence, if no confusion arises, we may simply denote it as $R_{\divv,P}$.
\end{rem}

\begin{rem}
The concepts of the divergence radius and $P$-weighted divergence radius can be unified (to some extent) by the notion of the \ki{$(P,\beta)$-weighted 
$\divv$-radius}, which we explain in Section \ref{sec:general divrad}.
\end{rem}

We will mainly be interested in the above concepts when $S$ is the image of a gcq 
channel $W:\,\X\to\B(\hil)_+$, in which case we will use the notation
\begin{align}
\chi_{\divv}(W,P)&:=R_{\divv,P\circ W\inv}(\ran W)=\inf_{\sigma\in\S(\hil)}\sum_{x\in\X}P(x)\divv(W(x)\|\sigma),\label{Pdivrad}
\end{align}
where $(P\circ W\inv)(\rho):=\sum_{x\in\X:\,W(x)=\rho}P(x)$. Note that, 
as far as these quantities are concerned, the channel simply gives a parametrization of its image set, 
and the previously considered case can be recovered by parametrizing the set by itself, i.e., by taking the gcq channel $\X:=S$ and $W:=\id_{S}$.
We will call \eqref{Pdivrad} the \ki{$P$-weighted $\divv$-radius} of the channel $W$, and any state achieving the infimum in its definition a \ki{$P$-weighted $\divv$-center} for $W$. 
We define the \ki{$\divv$-capacity} of the channel $W$ as
\begin{align}\label{Delta capacity def}
\chi_{\divv}(W)&:=\sup_{P\in\P_f(\X)}\chi_{\divv}(W,P).
\end{align}
In the relevant cases for information theory, the $\divv$-capacity coincides with the $\divv$-radius of the image of the channel, i.e.,
\begin{align*}
\chi_{\divv}(W)=R_{\divv}(\ran W)=\inf_{\sigma\in\S(\hil)}\sup_{x\in\X}\divv(W(x)\|\sigma);
\end{align*}
see Section \ref{sec:general divrad}.

We will mainly be interested in the above quantities when $\divv$ is a quantum R\'enyi divergence. For some further properties of these quantities for general divergences, see Appendix \ref{sec:general divrad}.

\subsection{Quantum R\'enyi divergences}
\label{sec:Renyi divrad}

In this section we specialize to 
various notions of quantum R\'enyi divergences.
For every pair of positive definite operators $\rho,\sigma\in\L(\hil)_{++}$
and every $\alpha\in(0,+\infty)\setminus\{1\}$, $z\in(0,+\infty)$ let
\begin{align*}
Q_{\alpha,z}(\rho\|\sigma):=\Tr\bz\rho^{\frac{\alpha}{2z}}\sigma^{\frac{1-\alpha}{z}}\rho^{\frac{\alpha}{2z}}\jz^z.
\end{align*}
These quantities were first introduced in \cite{JOPP} and further studied in \cite{AD}.
The cases
\begin{align}
Q_{\alpha}(\rho\|\sigma)&:=Q_{\alpha,1}(\rho\|\sigma)=
\Tr \rho^{\alpha}\sigma^{1-\alpha},\label{quasi}\\
Q_{\alpha}\nw(\rho\|\sigma)&:=Q_{\alpha,\alpha}(\rho\|\sigma)=
\Tr \bz \rho^{\half}\sigma^{\frac{1-\alpha}{\alpha}}\rho^{\half}\jz^{\alpha},\label{sand}
\end{align}
and
\begin{align}
Q_{\alpha}\bog(\rho\|\sigma)&:=Q_{\alpha,+\infty}(\rho\|\sigma):=\lim_{z\to+\infty}Q_{\alpha,z}(\rho\|\sigma)=
\Tr e^{\alpha\log\rho+(1-\alpha)\log\sigma}\label{exp}
\end{align}
are of special significance. (The last identity in \eqref{exp} is due to the Lie-Trotter formula.)
Here and henceforth $\xx$ stands for one of the three possible values
$\xx=\oldd,\,\xx=\neww$ or $\xx=\bogg$, where $\oldd$ denotes the empty string, i.e.,
$Q_{\alpha}\x$ with $\xx=\oldd$ is simply $Q_{\alpha}$.

These quantities are extended to general, not necessarily invertible positive semi-definite operators $\rho,\sigma\in\L(\hil)_+$ as
\begin{align}
Q_{\alpha,z}(\rho\|\sigma)
&:=
\lim_{\ep\searrow 0}Q_{\alpha,z}(\rho+\ep I\|\sigma+\ep I)\label{ext0}\\
&=
\lim_{\ep\searrow 0}Q_{\alpha,z}(\rho\|\sigma+\ep I)
=
\lim_{\ep\searrow 0}Q_{\alpha,z}(\rho\|(1-\ep)\sigma+\ep I/d)
=
s(\alpha)\sup_{\ep>0}\oll Q_{\alpha,z}(\rho\|\sigma+\ep I),\label{ext1}
\end{align}
for every $z\in(0,+\infty)$,
where $d:=\dim\hil$, 
\begin{align*}
s(\alpha):=\sgn(\alpha-1)=
\begin{cases}
-1,&\alpha<1,\\
1,&\alpha>1
\end{cases},
\ds\ds\ds\ds\ds\ds\ds\ds\ds
\oll Q_{\alpha,z}:=s(\alpha)Q_{\alpha,z},
\end{align*} 
and the identities are easy to verify.
For $z=+\infty$, the extension is defined by \eqref{ext0}; see \cite{HP_GT,MO-cqconv} for details.

Various further divergences can be defined from the above quantities.
The \ki{quantum $\alpha$-$z$ R\'enyi divergences} \cite{AD} are defined as
\begin{align}\label{Renyi div def}
D_{\alpha,z}(\rho\|\sigma):=\frac{1}{\alpha-1}\log \frac{Q_{\alpha,z}(\rho\|\sigma)}{\Tr\rho}
\end{align}
for any $\alpha\in(0,+\infty)\setminus\{1\}$ and $z\in(0,+\infty]$.
It is easy to see that 
\begin{align*}
\alpha>1,\ds\ds\rho^0\nleq\sigma^0\ds\imp\ds Q_{\alpha,z}=D_{\alpha,z}=+\infty
\end{align*}
for any $z$.
Moreover, if $\alpha\mapsto z(\alpha)$ is continuously differentiable in a neighbourhood of $1$, 
on which $z(\alpha)\ne 0$, or $z(\alpha)=+\infty$ for all $\alpha$, then, according to 
\cite[Theorem 1]{LinTomamichel15} and \cite[Lemma 3.5]{MO-cqconv},
\begin{align*}
D_1(\rho\|\sigma):=
\lim_{\alpha\to1}D_{\alpha,z(\alpha)}(\rho\|\sigma)=\frac{1}{\Tr\rho}D(\rho\|\sigma)
=:D_{1,z}(\rho\|\sigma),\ds\ds\ds z\in(0,+\infty],
\end{align*}
where $D(\rho\|\sigma)$ is Umegaki's relative entropy \cite{Umegaki}, defined as
\begin{align*}
D(\rho\|\sigma):=\Tr\rho(\log\rho-\log\sigma)
\end{align*}
for positive definite operators, and extended
as above for non-zero positive semidefinite operators.
Of the R\'enyi divergences corresponding to the special $Q_{\alpha}$ quantities discussed above,
$D_{\alpha}$ is usually called the \ki{Petz-type} R\'enyi divergence,
$D_{\alpha}\nw$ the \ki{sandwiched} R\'enyi divergence \cite{Renyi_new,WWY},
and $D_{\alpha}\bog$ the \ki{log-Euclidean} R\'enyi divergence. For more on the above definitions and a more detailed reference to their literature, see, e.g., \cite{MO-cqconv}.
We will also use the \ki{max-relative entropy} \cite{Datta,Renyi_new,RennerPhD}:
\begin{align*}
D_{\infty}\nw(\rho\|\sigma):=\lim_{\alpha\to+\infty}D_{\alpha}\nw(\rho\|\sigma)
=
\log\inf\{\lambda>0:\,\rho\le \lambda\sigma\}.
\end{align*}

To discuss some important properties of the above quantities, let us introduce the following regions of the $\alpha$-$z$ plane:
\begin{align*}
& K_0:\,0<\alpha<1,\,z<\min\{\alpha,1-\alpha\}; & & K_1:\,0<\alpha<1,\,\alpha\le z\le 1-\alpha;\\
&K_2:\,0<\alpha<1,\,\max\{\alpha,1-\alpha\}\le z\le 1; & & K_3:\,0<\alpha<1,\,1-\alpha\le z\le \alpha;\\
& K_4:\,0<\alpha<1,\,1\le z; & & K_5:\,1<\alpha,\,\alpha/2\le z\le 1; \\
&K_6:\,1<\alpha,\,\max\{\alpha-1,1\}\le z\le \alpha; & &
K_7:\,1<\alpha\le z;
\end{align*}

The $(\alpha,z)$ values for which $D_{\alpha,z}$ is monotone under CPTP maps have been completely characterized in
\cite{Hi3,An,Beigi,FL,CFL,Zhang2018} (cf.~also \cite[Theorem 1]{AD}). This can be summarized as follows.
\begin{lemma}\label{lemma:az monotonicity}
$D_{\alpha,z}$ is monotone under CPTP maps
$\iff$ $\oll Q_{\alpha,z}$ is monotone under CPTP maps
$\iff$ $\oll Q_{\alpha,z}$ is jointly convex
$\iff$ $(\alpha,z)\in K_2\cup K_4\cup K_5\cup K_6$.
\end{lemma}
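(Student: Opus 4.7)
The plan is to split the claimed triple equivalence into three separate steps, the first two being elementary manipulations and the third being a deep fact from the literature which I will invoke by citation.

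First, I would establish the equivalence between CPTP-monotonicity of $D_{\alpha,z}$ and that of $\oll Q_{\alpha,z}$. Every CPTP map preserves the trace, so $\log\Tr\rho$ is invariant under its action, and by \eqref{Renyi div def} the monotonicity of $D_{\alpha,z}$ is equivalent to the monotonicity of $\frac{1}{\alpha-1}\log Q_{\alpha,z}$. Since $\log$ is monotone increasing and the prefactor $\frac{1}{\alpha-1}$ has the same sign as $s(\alpha)$, this in turn is equivalent to $s(\alpha)Q_{\alpha,z}=\oll Q_{\alpha,z}$ being monotone decreasing under CPTP maps. (The extension to non-invertible operators via \eqref{ext0}--\eqref{ext1} is automatic, since monotonicity is preserved under the described limits.)

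Second, I would handle the equivalence between CPTP-monotonicity and joint convexity of $\oll Q_{\alpha,z}$. This is a direct consequence of the general Uhlmann--Petz argument spelled out in the remark after the definitions in Section~\ref{sec:gendivrad}, provided $\oll Q_{\alpha,z}$ is verified to be block additive and homogeneous (of degree one). Homogeneity follows from
\begin{equation*}
Q_{\alpha,z}(\lambda\rho\|\lambda\sigma)=\Tr\bz (\lambda\rho)^{\frac{\alpha}{2z}}(\lambda\sigma)^{\frac{1-\alpha}{z}}(\lambda\rho)^{\frac{\alpha}{2z}}\jz^{z}=\lambda\, Q_{\alpha,z}(\rho\|\sigma),
\end{equation*}
and block additivity follows because, when the supports of $\rho_1,\sigma_1$ are orthogonal to those of $\rho_2,\sigma_2$, the functional calculus and the trace both decompose along the two blocks. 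Both properties extend to $z=+\infty$ via the Lie--Trotter limit \eqref{exp}, and to general PSD arguments via the limiting procedure \eqref{ext0}--\eqref{ext1}.

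The main obstacle is the third equivalence: the exact characterization of the set of parameters for which any of these equivalent properties holds, namely $(\alpha,z)\in K_2\cup K_4\cup K_5\cup K_6$. Here I would only assemble the statement from the literature: the positive direction (monotonicity in each of $K_2,K_4,K_5,K_6$) is obtained by patching together the results of \cite{Hi3,An,Beigi,FL,AD}, culminating in the complete sufficiency proof given in \cite{CFL}; the negative direction (failure of monotonicity, or equivalently of joint convexity, outside these regions) is due to \cite{Hi3} and \cite{Zhang2018}. Each of these proofs relies on nontrivial operator-theoretic machinery (operator convexity of specific functions, the Lieb concavity/convexity theorems, and interpolation arguments), and it would be unreasonable to reproduce them here; I would therefore simply cite the combined result and note that the first two equivalences above immediately lift it from $\oll Q_{\alpha,z}$ to $D_{\alpha,z}$.
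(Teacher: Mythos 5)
The paper does not actually prove this lemma; it states it as a summary of results from the cited literature, with the equivalences among monotonicity of $D_{\alpha,z}$, monotonicity of $\oll Q_{\alpha,z}$, and joint convexity left implicit. Your proof correctly supplies those elementary steps (trace-preservation of CPTP maps plus the sign of $1/(\alpha-1)$ for the first, and the Uhlmann--Petz argument from the Remark following the list of divergence properties together with the verified homogeneity and block additivity of $\oll Q_{\alpha,z}$ for the second) and defers to the literature for the characterization of the parameter region, which is exactly the paper's stance.
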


\begin{cor}\label{cor:az joint convexity}
$D_{\alpha,z}$ is jointly convex if $(\alpha,z)\in K_2\cup K_4$.
\end{cor}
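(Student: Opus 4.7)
The plan is to reduce the claim to Lemma \ref{lemma:az monotonicity} by two elementary composition arguments. Both $K_2$ and $K_4$ lie in the half-plane $\alpha<1$, so on this region $s(\alpha)=-1$ and $\oll Q_{\alpha,z}=-Q_{\alpha,z}$. Lemma \ref{lemma:az monotonicity} gives that $\oll Q_{\alpha,z}$ is jointly convex on $K_2\cup K_4$, which is equivalent to $Q_{\alpha,z}$ being jointly concave there.

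Next, I would upgrade joint concavity of $Q_{\alpha,z}$ to joint concavity of $\log Q_{\alpha,z}$. This uses the standard fact that the composition of a jointly concave (positive) function with a monotone increasing concave outer function is jointly concave: explicitly, for any probability distribution $(p_i)_{i=1}^r$ and any PSD operators $\rho_i,\sigma_i$,
\begin{align*}
\log Q_{\alpha,z}\bz\sum_i p_i\rho_i\,\Big\|\,\sum_i p_i\sigma_i\jz
\ge \log \sum_i p_i\, Q_{\alpha,z}(\rho_i\|\sigma_i)
\ge \sum_i p_i\log Q_{\alpha,z}(\rho_i\|\sigma_i),
\end{align*}
where the first step uses monotonicity of $\log$ together with joint concavity of $Q_{\alpha,z}$, and the second is concavity of $\log$.

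Finally, for $\rho\in\S(\hil)$, the definition \eqref{Renyi div def} collapses to $D_{\alpha,z}(\rho\|\sigma)=\tfrac{1}{\alpha-1}\log Q_{\alpha,z}(\rho\|\sigma)$. Since $\alpha<1$ makes the prefactor $\tfrac{1}{\alpha-1}$ negative, multiplying the jointly concave function $\log Q_{\alpha,z}$ by this scalar produces a jointly convex function, yielding the claimed joint convexity of $D_{\alpha,z}$. I do not anticipate any real obstacle; the key analytic input — joint convexity of $\oll Q_{\alpha,z}$ on $K_2\cup K_4$ — is supplied by Lemma \ref{lemma:az monotonicity}, and the remainder is sign and composition bookkeeping. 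The one delicate point, which is also the conceptual reason for the restriction in the statement, is that the strategy succeeds \emph{only} for $\alpha<1$: on the $\alpha>1$ regions $K_5\cup K_6$ the same scheme would need $\log$ to be convex, which it is not, so joint convexity of $D_{\alpha,z}$ cannot be deduced from Lemma \ref{lemma:az monotonicity} in that regime by this route.
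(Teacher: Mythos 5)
Your proposal is correct and follows essentially the same route as the paper: you invoke the joint convexity of $\oll Q_{\alpha,z}$ from Lemma \ref{lemma:az monotonicity}, note that on $\alpha<1$ this is joint concavity of $Q_{\alpha,z}$, then pass through the increasing concave $\log$ and the negative prefactor $\tfrac{1}{\alpha-1}$ to obtain joint convexity of $D_{\alpha,z}$. The paper's proof is a one-liner with the same sign bookkeeping; you merely spell out the composition step and add the (correct) remark on why this argument cannot extend to the $\alpha>1$ regions.
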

\begin{proof}
Immediate from Lemma \ref{lemma:az monotonicity}, as the joint convexity of $\oll Q_{\alpha,z}$ implies the joint convexity of $D_{\alpha,z}=\frac{1}{\alpha-1}\log s(\alpha)\oll Q_{\alpha,z}$ whenever $\alpha\in(0,1)$.
\end{proof}

Recall that a function $f:\,C\to\bR\cup\{+\infty\}$ on a convex set $C$ is \ki{quasi-convex} if 
$f((1-t)x+ty)\le\max\{f(x),f(y)\}$ for all $x,y\in C$ and $t\in[0,1]$.

\begin{lemma}\label{lemma:2nd convexity}
On top of the cases discussed in Lemma \ref{lemma:az monotonicity} and Corollary \ref{cor:az joint convexity}, 
$D_{\alpha,z}$ is convex in its second argument if $(\alpha,z)\in K_3\cup K_6\cup K_7$, and 
$\oll Q_{\alpha,z}$ is convex in its second argument if
$(\alpha,z)\in  K_3\cup K_7$. Moreover, $D_{\alpha,z}$ is jointly quasi-convex if 
$(\alpha,z)\in K_5$.
\end{lemma}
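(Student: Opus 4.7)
The plan is to recognize $Q_{\alpha,z}(\rho\|\sigma)=\Tr(K^*\sigma^p K)^q$ with $K:=\rho^{\alpha/(2z)}$, $p:=(1-\alpha)/z$, $q:=z$, and to reduce each claim to known (log-)convexity/concavity theorems for trace functions of this form, due to Epstein, Hiai, Frank--Lieb and Carlen--Frank--Lieb that are already invoked in the references cited just before the statement.

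For $(\alpha,z)\in K_3$, one has $p\in(0,1]$ (as $z\ge 1-\alpha$), $q\in(0,1)$ (as $z\le\alpha<1$), and $pq=1-\alpha\in(0,1)$. Epstein's concavity theorem as sharpened by Hiai \cite{Hi3} then yields that $\sigma\mapsto\Tr(K^*\sigma^pK)^q$ is concave on $\B(\hil)_{++}$, giving concavity of $Q_{\alpha,z}(\rho\|\cdot)$ and hence convexity of $\oll Q_{\alpha,z}(\rho\|\cdot)=-Q_{\alpha,z}(\rho\|\cdot)$ (since $\alpha<1$). Because $Q_{\alpha,z}(\rho\|\cdot)>0$ is concave and $\log$ is concave and monotone increasing, $\log Q_{\alpha,z}(\rho\|\cdot)$ is also concave, and multiplying by the negative factor $1/(\alpha-1)$ produces the asserted convexity of $D_{\alpha,z}(\rho\|\cdot)$. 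Extension from $\B(\hil)_{++}$ to $\B(\hil)_+$ then follows from the approximation formula in \eqref{ext1}.

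For $(\alpha,z)\in K_7$ one has $p\in[-1,0)$ (since $|p|=(\alpha-1)/z\le(\alpha-1)/\alpha<1$) and $q=z\ge\alpha>1$. In this parameter range, Hiai's convexity theorem \cite{Hi3} and its refinement in \cite{Zhang2018} give convexity of $\sigma\mapsto\Tr(K^*\sigma^pK)^q$, hence convexity of $\oll Q_{\alpha,z}(\rho\|\cdot)=Q_{\alpha,z}(\rho\|\cdot)$. For the convexity of $D_{\alpha,z}=\frac{1}{\alpha-1}\log Q_{\alpha,z}$ one needs the stronger fact that $\log Q_{\alpha,z}(\rho\|\cdot)$ is convex; this I would obtain by writing $Q_{\alpha,z}(\rho\|\sigma)^{1/z}=\|\sigma^{p/2}K^*K\sigma^{p/2}\|_z$ and invoking the Schatten-norm duality $\|X\|_z=\sup\{\,\re\Tr XY^*:\|Y\|_{z'}\le 1\,\}$ to express $\log\|\cdot\|_z$ as a pointwise supremum of convex functions of $\sigma$, in the spirit of the Frank--Lieb proof for the sandwiched case \cite{Renyi_new,WWY}. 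The $K_6$ case ($1<\alpha$, $\max\{\alpha-1,1\}\le z\le\alpha$) is handled by the same variational device: joint convexity of $\oll Q_{\alpha,z}$ is given by Lemma \ref{lemma:az monotonicity}, but the desired convexity of $D_{\alpha,z}$ again requires passing through log-convexity via Schatten duality, or by directly citing \cite{Zhang2018}.

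Finally, for $(\alpha,z)\in K_5$ ($1<\alpha$, $\alpha/2\le z\le 1$), since $\alpha>1$ the sublevel sets $\{\sigma:D_{\alpha,z}(\rho\|\sigma)\le c\}$ coincide with $\{\sigma:Q_{\alpha,z}(\rho\|\sigma)\le e^{c(\alpha-1)}\}$, so quasi-convexity of $D_{\alpha,z}(\rho\|\cdot)$ is equivalent to quasi-convexity of $\sigma\mapsto\Tr(K^*\sigma^pK)^q$. Here $p\in[-2,0)$ and $q\in[1/2,1]$, and the claim follows from the quasi-convexity results for this family of trace functions in \cite{Hi3, Zhang2018}. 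The most delicate step is the convexity of $D_{\alpha,z}$ in $K_6\cup K_7$, where convexity of $\oll Q_{\alpha,z}$ does not by itself imply convexity of $\log Q_{\alpha,z}$; this is the point at which the variational Schatten-norm representation (or the specific log-convexity results in the cited literature) is essential.
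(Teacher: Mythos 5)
Your overall plan is recognizably the same as the paper's for the hard part of the lemma ($K_6\cup K_7$: Schatten duality), but there are two genuine gaps and a couple of places where you invoke heavier machinery than needed.

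First, the $K_5$ claim is \emph{joint} quasi-convexity of $D_{\alpha,z}$, but your sublevel-set argument is phrased only for $\sigma\mapsto D_{\alpha,z}(\rho\|\sigma)$ with $\rho$ fixed. The needed statement is immediate and doesn't require any external citation: for $\alpha>1$, $D_{\alpha,z}=\frac{1}{\alpha-1}\log \oll Q_{\alpha,z}$ is a monotone increasing function of $\oll Q_{\alpha,z}$, which is jointly convex on $K_5$ by Lemma~\ref{lemma:az monotonicity}, and increasing transforms preserve quasi-convexity. That is the paper's one-line argument; you do not need Hiai or Zhang here, and your sublevel-set reformulation, while correct, addresses only the second argument.

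Second, for $K_6\cup K_7$ your central idea is right — write $Q_{\alpha,z}^{1/z}$ as a Schatten $z$-norm and dualize — but the crucial step is left as ``in the spirit of Frank--Lieb'': you never say why each member of the pointwise supremum is a convex function of $\sigma$. The paper supplies this: with $f(t)=t^{(1-\alpha)/z}$, which is a non-negative operator monotone decreasing function since $-1\le(1-\alpha)/z<0$ on $K_6\cup K_7$, and $\omega_\tau(\cdot):=\Tr\rho^{\frac{\alpha}{2z}}(\cdot)\rho^{\frac{\alpha}{2z}}\tau^{1-1/z}$ a positive functional, one has $D_{\alpha,z}(\rho\|\sigma)=\sup_\tau\frac{z}{\alpha-1}\log\omega_\tau(f(\sigma))$, and then convexity of $\sigma\mapsto\log\omega(f(\sigma))$ for operator monotone decreasing $f$ is the cited \cite[Proposition~1.1]{AH}. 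Without that ingredient (or an explicit substitute), your sketch does not close. Relatedly, your separate citation of \cite{Hi3,Zhang2018} for one-variable convexity of $Q_{\alpha,z}(\rho\|\cdot)$ on $K_7$ is both unnecessary and unjustified: those works address joint convexity/concavity, which fails on $K_7$, whereas once log-convexity of $D_{\alpha,z}(\rho\|\cdot)$ is established, convexity of $\oll Q_{\alpha,z}(\rho\|\cdot)=\exp\bz(\alpha-1)D_{\alpha,z}(\rho\|\cdot)\jz$ follows immediately by composing a convex function with the convex increasing exponential, as the paper observes.

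Finally, for $K_3$ your appeal to Epstein's concavity theorem is valid but heavier than necessary: the paper simply notes that $\sigma\mapsto\sigma^{(1-\alpha)/z}$ is operator concave (since $0<(1-\alpha)/z\le 1$) and that $A\mapsto\Tr A^z$ is monotone and concave for $0<z\le 1$ (Lemma~\ref{lemma:trace function properties}), giving concavity of $Q_{\alpha,z}(\rho\|\cdot)$ by composition. This route is more elementary and does not need any of the Lieb--Epstein machinery.
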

\begin{proof}
The assertion about the quasi-convexity of $D_{\alpha,z}$ is immediate from the joint convexity of 
$\oll Q_{\alpha,z}$ when $(\alpha,z)\in K_5$.

Note that it is enough to prove convexity in the second argument for positive definite operators, 
due to \eqref{ext1}.

Assume that $(\alpha,z)\in K_2\cup K_3$, i.e., $0<\alpha<1$, $1-\alpha\le z\le 1$. Then 
$0<\frac{1-\alpha}{z}\le 1$, and hence $\sigma\mapsto \sigma^{\frac{1-\alpha}{z}}$ is concave. Since
$\B(\hil)_+\ni A\mapsto\Tr A^z$ is both monotone and concave (see Lemma \ref{lemma:trace function properties}), we get that 
$\sigma\mapsto Q_{\alpha,z}(\rho\|\sigma)$ is concave, from which the 
convexity of both $\oll Q_{\alpha,z}$ and $D_{\alpha,z}$ in their second argument follows for 
$(\alpha,z)\in K_3$ (and also for $K_2$, although that is already covered by joint convexity).

Assume next that $(\alpha,z)\in K_6\cup K_7$, i.e., $1<\alpha$, and $\max\{1,\alpha-1\}\le z$. Then 
$-1\le\frac{1-\alpha}{z}<0$, and hence $f:\,t\mapsto t^{\frac{1-\alpha}{z}}$ is a non-negative operator monotone decreasing function on $(0,+\infty)$. Applying the duality of the Schatten $p$-norms to $p=z$, we have 
\begin{align*}
D_{\alpha,z}(\rho\|\sigma)=\sup_{\tau\in\S(\hil)}\frac{z}{\alpha-1}\log
\Tr\rho^{\frac{\alpha}{2z}}\sigma^{\frac{1-\alpha}{z}}\rho^\frac{\alpha}{2z}\tau^{1-\frac{1}{z}}
=
\sup_{\tau\in\S(\hil)}\frac{z}{\alpha-1}\log\omega_{\tau}\bz f(\sigma)\jz,
\end{align*}
where $\omega_{\tau}(.):=\Tr\rho^{\frac{\alpha}{2z}}(.)\rho^\frac{\alpha}{2z}\tau^{1-\frac{1}{z}}$
is a positive functional.
By \cite[Proposition 1.1]{AH}, $D_{\alpha,z}(\rho\|.)$ is the supremum of convex functions 
on $\B(\hil)_{++}$, and hence is itself convex. 
This immediately implies that $\oll Q_{\alpha,z}$ is convex in its second argument when 
$(\alpha,z)\in K_6\cup K_7$ (of which the case $K_6$ also follows from joint convexity).
\end{proof}

\begin{lemma}\label{lemma:lsc}
For any fixed $\rho\in\B(\hil)_+$, the maps
\begin{align*}
\sigma\mapsto \oll Q_{\alpha,z}(\rho\|\sigma)\ds\ds\text{and}\ds\ds
\sigma\mapsto D_{\alpha,z}(\rho\|\sigma)
\end{align*}
are lower semi-continuous on $\B(\hil)_+$ for any $\alpha\in(0,+\infty)\setminus\{1\}$ and $z\in(0,+\infty)$, and for $z=+\infty$ and $\alpha>1$.
\end{lemma}
\begin{proof}
The cases $\alpha\in(0,+\infty)\setminus\{1\}$ and $z\in(0,+\infty)$ are obvious from the last expression in \eqref{ext1}, and the case $z=+\infty$ was discussed in \cite[Lemma 3.27]{MO-cqconv}.
\end{proof}

It is known that $D_{\alpha}$, $D_{\alpha}^*$ and $D_{\alpha}\bog$ are non-negative on pairs of states
\cite{Renyi_new,P86,MO-cqconv}, but it seems that the non-negativity of general $\alpha$-$z$ R\'enyi divergences 
has not been analyzed in the literature so far. We show in Appendix \ref{sec:positive Renyi} that they are indeed non-negative for any pair of parameters $(\alpha,z)$.

\subsection{The R\'enyi divergence center}
\label{sec:Renyi center}

Let $W:\,\X\to\B(\hil)_+$ be a gcq channel. Specializing to $\divv=D_{\alpha,z}$ in \eqref{Pdivrad} yields the 
\ki{$P$-weighted R\'enyi $(\alpha,z)$ radii} of the channel for a finitely supported input probability distribution $P\in\P_f(\X)$,
\begin{align}\label{weighted alpha divrad def}
\chi_{\alpha,z}(W,P)
:=
\chi_{D_{\alpha,z}}(W,P)
=
\inf_{\sigma\in\S(\hil)}\sum_{x\in\X}P(x)D_{\alpha,z}(W(x)\|\sigma)
=
\min_{\sigma\in\S(\hil)}\sum_{x\in\X}P(x)D_{\alpha,z}(W(x)\|\sigma).
\end{align}
The existence of the minimum is guaranteed by the lower semi-continuity stated in Lemma \ref{lemma:lsc}.
We will call any state $\sigma$ achieving the minimum in \eqref{weighted alpha divrad def} a
\ki{$P$-weighted $D_{\alpha,z}$ center} for $W$.
In direct connection with the notations introduced in \eqref{quasi}--\eqref{exp}, we will use the notations
\begin{align*}
\chi_{\alpha}(W,P):=\chi_{\alpha,1}(W,P),\ds\ds\ds
\chi_{\alpha}\nw(W,P):=\chi_{\alpha,\alpha}(W,P),\ds\ds\ds
\chi_{\alpha}\bog(W,P):=\chi_{\alpha,+\infty}(W,P),
\end{align*}
and for $\alpha=1$,
\begin{align*}
\chi(W,P):=\chi_{1}(W,P):=\chi_{1,z}(W,P),\ds\ds\ds z\in(0,+\infty),
\end{align*}
which is just a generalization of the \ki{Holevo quantity} for an ensemble $\{W(x),P(x)\}_{x\in\supp P}$ ,where the $W(x)$ need not be normalized. Moreover, we will use 
\begin{align*}
\chi_{\infty}\nw(W,P):=\inf_{\sigma\in\S(\hil)}\sum_{x\in\X}P(x)D_{\infty}\nw(W(x)\|\sigma),
\end{align*}
and call it the $P$-weighted max-R\'enyi radius. It is known (see, e.g., \cite[Section IV]{MO-cqconv}) that
\begin{align}\label{radius limits}
\lim_{\alpha\searrow 1}\chi_{\alpha}\nw(W,P)=\chi_1\nw(W,P)=\chi(W,P),\ds\ds\ds
\lim_{\alpha\nearrow +\infty}\chi_{\alpha}\nw(W,P)=\chi_{\infty}\nw(W,P).
\end{align}

It is sometimes convenient that it is enough to consider the infimum above over invertible states, i.e., we have 
\begin{align}\label{positive minimization}
\chi_{\alpha,z}(W,P)=\inf_{\sigma\in\S(\hil)_{++}}\sum_{x\in\X}P(x)D_{\alpha,z}(W(x)\|\sigma),
\end{align}
which is obvious from the second expression in \eqref{ext1}.
Moreover, any minimizer of \eqref{weighted alpha divrad def} has the same support as the joint support of the 
channel states $\{W_x\}_{x\in\supp P}$, with projection 
\begin{align*}
\bigvee_{x\in\supp P}W(x)^0=W(P)^0,
\end{align*}
at least for a certain range of $(\alpha,z)$ values, as we show below.

\begin{lemma}\label{lemma:minimizer support}
Let $\sigma$ be a $P$-weighted $D_{\alpha,z}$ center for $W$.
If $(\alpha,z)$ is such that $D_{\alpha,z}$ is quasi-convex in its second argument
then $\sigma^0\le W(P)^0$. 
\end{lemma}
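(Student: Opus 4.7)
The plan is to argue by contradiction. Write $\Pi := W(P)^0 = \bigvee_{x\in\supp P}W(x)^0$ and suppose $\sigma$ is a $P$-weighted $D_{\alpha,z}$ center with $\sigma^0\nleq\Pi$; I aim to build a state $\tilde\sigma$ supported in $\ran\Pi$ whose $P$-weighted $D_{\alpha,z}$ value is strictly smaller than that of $\sigma$, contradicting minimality.

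The first step is to symmetrize using the self-adjoint unitary $U := \Pi - (I-\Pi)$. Since $W(x) = \Pi W(x)\Pi$ for every $x\in\supp P$, we have $UW(x)U^* = W(x)$, so isometric invariance of $D_{\alpha,z}$ gives $D_{\alpha,z}\bz W(x)\|U\sigma U^*\jz = D_{\alpha,z}\bz W(x)\|\sigma\jz$, and then quasi-convexity in the second argument (at weight $1/2$) applied to $\sigma$ and $U\sigma U^*$ yields
\[
D_{\alpha,z}\bz W(x)\|\sigma'\jz\le D_{\alpha,z}\bz W(x)\|\sigma\jz,\ds\ds\ds
\sigma' := \tfrac{1}{2}(\sigma+U\sigma U^*) = \Pi\sigma\Pi + (I-\Pi)\sigma(I-\Pi),
\]
for every $x\in\supp P$. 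Now $\sigma'$ is block-diagonal with respect to $\hil=\ran\Pi\oplus\ran(I-\Pi)$; since $W(x)^{\alpha/2z}$ is supported on $\ran\Pi$ while $((I-\Pi)\sigma(I-\Pi))^{(1-\alpha)/z}$ is supported on the orthogonal complement, the defining formula for $Q_{\alpha,z}$ collapses to $Q_{\alpha,z}\bz W(x)\|\sigma'\jz = Q_{\alpha,z}\bz W(x)\|\Pi\sigma\Pi\jz$, and the same identity then passes to $D_{\alpha,z}$.

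Setting $p := \Tr\Pi\sigma\Pi$ and $\tilde\sigma := (\Pi\sigma\Pi)/p$, the homogeneity $Q_{\alpha,z}(\rho\|\lambda\tau) = \lambda^{1-\alpha}Q_{\alpha,z}(\rho\|\tau)$ translates into the scaling $D_{\alpha,z}\bz W(x)\|\Pi\sigma\Pi\jz = D_{\alpha,z}\bz W(x)\|\tilde\sigma\jz - \log p$, and averaging with $P$-weights chains everything into
\[
\sum_{x\in\X} P(x)\,D_{\alpha,z}\bz W(x)\|\sigma\jz \ge \sum_{x\in\X} P(x)\,D_{\alpha,z}\bz W(x)\|\tilde\sigma\jz - \log p.
\]
The assumption $\sigma^0\nleq\Pi$ forces $\Tr(I-\Pi)\sigma(I-\Pi) > 0$, hence $p<1$ and $-\log p>0$, producing $\tilde\sigma$ with strictly smaller objective and contradicting the minimality of $\sigma$. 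The main technical subtlety is the boundary case $p=0$: then $\sigma$ would live entirely in $\ran(I-\Pi)$, the same support-orthogonality argument would force $Q_{\alpha,z}\bz W(x)\|\sigma\jz = 0$ and hence $D_{\alpha,z}\bz W(x)\|\sigma\jz = +\infty$ for every $x\in\supp P$, and the objective would be $+\infty$ — again contradicting minimality since, e.g., $\sigma_0 := \Pi/\rk\Pi$ attains a finite value. Hence $p\in(0,1]$, the main argument forces $p=1$, and one concludes $\sigma^0\le\Pi = W(P)^0$.
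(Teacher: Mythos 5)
Your argument is correct and follows the paper's approach essentially step for step: both use the self-adjoint unitary $U=W(P)^0-(I-W(P)^0)$, quasi-convexity applied to $\tfrac12(\sigma+U\sigma U^*)$, the observation that only the $W(P)^0$-block of $\sigma$ contributes to the objective, and the homogeneity identity $D_{\alpha,z}(\rho\|\lambda\tau)=D_{\alpha,z}(\rho\|\tau)-\log\lambda$. You merely spell out the $Q_{\alpha,z}$ block-diagonal collapse and the degenerate case $p=0$ a bit more explicitly than the paper, which dismisses the latter at the start by assuming $W(P)^0\sigma\ne 0$ without loss of generality.
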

\begin{proof}
Define $\F(X):=W(P)^0XW(P)^0+(I-W(P)^0)X(I-W(P)^0)$, $X\in\B(\hil)$, and let 
$\tilde\sigma:=W(P)^0\sigma W(P)^0/\Tr W(P)^0\sigma$. 
We will show that $D_{\alpha,z}(W(x)\|\tilde\sigma)\le D_{\alpha,z}(W(x)\|\sigma)$ for all 
$x\in\supp P$, which will yield the assertion. Note that we can assume without loss of generality that 
$W(P)^0\sigma\ne 0$, since otherwise $D_{\alpha,z}(W(x)\|\sigma)=+\infty$ for all $x\in\supp P$, and hence
$\sigma$ clearly cannot be a minimizer for \eqref{weighted alpha divrad def}.

According to the decomposition $\hil=\ran W(P)^0\oplus\ran(I-W(P)^0)$, define the block-diagonal unitary 
$U:=\begin{bmatrix}I & 0\\ 0& -I\end{bmatrix}$, so that 
$\F(.)=\half\bz(.)+U(.)U^*\jz$. For every $x\in\X$,
\begin{align*}
D_{\alpha,z}(W(x)\|\F(\sigma))
&\le
\max\left\{D_{\alpha,z}(W(x)\|\sigma),D_{\alpha,z}(W(x)\|U\sigma U^*)\right\}\\
&=
\max\left\{D_{\alpha,z}(W(x)\|\sigma),D_{\alpha,z}(UW(x)U^*\|U\sigma U^*)\right\}
=
D_{\alpha,z}(W(x)\|\sigma),
\end{align*}
where the first inequality is due to quasi-convexity, and the first equality is due to the fact that 
$UW(x)U^*=W(x)$. On the other hand,
\begin{align*}
D_{\alpha,z}(W(x)\|\F(\sigma))
&=
D_{\alpha,z}(W(x)\|(\Tr W(P)^0\sigma)\tilde\sigma)\\
&=
D_{\alpha,z}\bz W(x)\|\tilde\sigma\jz-\log\Tr W(P)^0\sigma
\ge
D_{\alpha,z}\bz W(x)\|\tilde\sigma\jz,
\end{align*}
where the inequality is strict unless $\sigma^0\le W(P)^0$.
\end{proof}
\medskip

For fixed $W$ and $P$, we define
\begin{align*}
F(\sigma):=\sum_{x\in\X}P(x)D_{\alpha,z}(W(x)\|\sigma),\ds\ds\ds \sigma\in\B(\hil)_+.
\end{align*}
In the following, we may naturally interpret $W(x)$ as an operator acting on 
$\ran W(x)$ or on $\ran W(P)$.

\begin{lemma}\label{lemma:F derivative}
$F$ is Fr\'echet-differentiable at every $\sigma\in\B(\hil)_{++}$, with 
Fr\'echet-derivative $DF(\sigma)$ given by 
\begin{align}
DF(\sigma):\,Y\mapsto&
\frac{z}{\alpha-1}\sum_{x\in\X}P(x)\frac{1}{Q_{\alpha,z}(W(x)\|\sigma)}\nonumber\\
&\cdot
\Tr\sum_{a,b}h_{\alpha,z}^{[1]}(a,b)P^{\sigma}_a W(x)^{\frac{\alpha}{2z}}\bz W(x)^{\frac{\alpha}{2z}}\sigma^{\frac{1-\alpha}{z}} W(x)^{\frac{\alpha}{2z}}\jz^{z-1}W(x)^{\frac{\alpha}{2z}}P^{\sigma}_b\,Y,
\label{F derivative}
\end{align}
where $h_{\alpha,z}^{[1]}$ is the first divided difference function of $h_{\alpha,z}(t):=t^{\frac{1-\alpha}{z}}$.
\end{lemma}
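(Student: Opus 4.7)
The plan is to apply the chain rule, combining the Fréchet derivative formula for matrix functions (Lemma~\ref{lemma:op function derivative}) with the trace derivative identity \eqref{Tr derivative}. Since $P$ is finitely supported, $F$ is a finite sum $F(\sigma)=\sum_{x\in\supp P}P(x)D_{\alpha,z}(W(x)\|\sigma)$, so Fréchet differentiability reduces to that of each summand, and the derivative is the corresponding linear combination of derivatives; hence it suffices to differentiate a single term $\sigma\mapsto D_{\alpha,z}(W(x)\|\sigma)$ at an invertible $\sigma$.

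For this, first I decompose the map as a composition. Writing $h_{\alpha,z}(t)=t^{(1-\alpha)/z}$ and $A_x(\sigma):=W(x)^{\alpha/(2z)}h_{\alpha,z}(\sigma)W(x)^{\alpha/(2z)}$, I have $Q_{\alpha,z}(W(x)\|\sigma)=\Tr A_x(\sigma)^z$. Because $\sigma\in\B(\hil)_{++}$, the spectrum of $\sigma$ lies in an open subinterval of $(0,+\infty)$ on which $h_{\alpha,z}$ is continuously differentiable, so Lemma~\ref{lemma:op function derivative} gives the Fréchet derivative
\begin{align*}
D(h_{\alpha,z})(\sigma)(Y)=\sum_{a,b}h_{\alpha,z}^{[1]}(a,b)P_a^{\sigma}YP_b^{\sigma}.
\end{align*}
Since $X\mapsto W(x)^{\alpha/(2z)}XW(x)^{\alpha/(2z)}$ is linear and bounded, the chain rule gives $DA_x(\sigma)(Y)=W(x)^{\alpha/(2z)}D(h_{\alpha,z})(\sigma)(Y)W(x)^{\alpha/(2z)}$.

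Next, applying \eqref{Tr derivative} with $f(t)=t^z$ (which is continuously differentiable on an open neighbourhood of $\spec(A_x(\sigma))\subset[0,+\infty)$, with $A_x(\sigma)$ invertible on its range, or on $\ran W(x)$) together with the chain rule yields
\begin{align*}
DQ_{\alpha,z}(W(x)\|\cdot)(\sigma)(Y)
=z\,\Tr\!\bigl[A_x(\sigma)^{z-1}\,DA_x(\sigma)(Y)\bigr].
\end{align*}
Substituting the formula for $DA_x(\sigma)(Y)$, using cyclicity of the trace to move $Y$ to the right, and relabelling $a\leftrightarrow b$ (noting that $h_{\alpha,z}^{[1]}$ is symmetric in its arguments), one arrives at the expression displayed inside the trace in \eqref{F derivative}, without the prefactor $\tfrac{1}{\alpha-1}Q_{\alpha,z}(W(x)\|\sigma)^{-1}$. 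Finally, since $\sigma\in\B(\hil)_{++}$ ensures $Q_{\alpha,z}(W(x)\|\sigma)>0$ and this quantity depends continuously on $\sigma$, the logarithm in $D_{\alpha,z}(W(x)\|\sigma)=\tfrac{1}{\alpha-1}\log(Q_{\alpha,z}(W(x)\|\sigma)/\Tr W(x))$ is differentiable, and one more application of the chain rule produces the prefactor $\tfrac{1}{\alpha-1}Q_{\alpha,z}(W(x)\|\sigma)^{-1}$. Summing over $x\in\supp P$ weighted by $P(x)$ gives \eqref{F derivative}.

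There is no serious obstacle: the only point requiring a little care is to verify that $A_x(\sigma)^{z-1}$ is well-defined and that $t\mapsto t^z$ is smooth enough on a neighbourhood of the relevant spectrum, which is handled by our convention that real powers act only on the support and by restricting $A_x(\sigma)$ to $\ran W(x)$ when necessary. Everything else is an unambiguous application of the two differentiation lemmas recalled in the preliminaries.
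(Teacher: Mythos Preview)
Your proof is correct and follows essentially the same approach as the paper: both decompose each summand as the composition of $\sigma\mapsto\sigma^{(1-\alpha)/z}$, the conjugation by $W(x)^{\alpha/(2z)}$ (restricted to $\ran W(x)$), and $T\mapsto\Tr T^z$, then apply Lemma~\ref{lemma:op function derivative}, \eqref{Tr derivative}, and the chain rule. You are slightly more explicit than the paper about the final differentiation of the logarithm that produces the factor $\tfrac{1}{(\alpha-1)Q_{\alpha,z}(W(x)\|\sigma)}$, but otherwise the arguments coincide.
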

\begin{proof}
We have $F=\sum_{x\in\X}P(x) (g_x\circ\iota_x\circ H_{\alpha,z})$, where 
$H_{\alpha,z}:\,\B(\hil)_+\to\B(\hil)$, $H_{\alpha,z}(\sigma):=\sigma^{\frac{1-\alpha}{z}}$ is 
Fr\'echet differentiable at every $\sigma\in\B(\hil)_{++}$ with 
$DH_{\alpha,z}(\sigma):\,Y\mapsto \sum_{a,b}h_{\alpha,z}^{[1]}(a,b)P^{\sigma}_a YP^{\sigma}_b$, according to Lemma \ref{lemma:op function derivative}.
For a fixed $x$, $\iota_x:\,\B(\hil)\to\B(\ran (W(x))$ is defined as
$A\mapsto W(x)^{\frac{\alpha}{2z}}AW(x)^{\frac{\alpha}{2z}}$, and, as a linear map, it is Fr\'echet differentiable at every $A\in\B(\hil)$, with its derivative being equal to itself. Finally, 
$g_x:\,\B(\ran W(x))\to\bR$ is defined as $g_x(T):=\Tr T^z$, and it is Fr\'echet differentiable at every 
$T\in\B(\ran W(x))_{++}$, with Fr\'echet derivative 
$Dg_x(T):\,Y\mapsto z\Tr T^{z-1}Y$, according to \eqref{Tr derivative}. 
If $\sigma\in\B(\ran W(P))_{++}$ then 
$H_{\alpha,z}(\sigma)\in\B(\ran W(P))_{++}$, and 
$\iota_x(H_{\alpha,z}(\sigma))\in\B(\ran W(x))_{++}$. Hence, we can apply the chain rule for derivatives, and obtain \eqref{F derivative}.
\end{proof}

\begin{lemma}\label{lemma:support2}
Let $\sigma$ be a $P$-weighted $D_{\alpha,z}$ center for $W$.
If
$\alpha\ge 1$ or $\alpha\in(0,1)$ and 
$1-\alpha<z<+\infty$ then 
$W(P)^0\le\sigma^0$.
\end{lemma}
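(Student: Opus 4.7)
My plan is proof by contradiction: assume $\sigma\in\S(\hil)$ is a $P$-weighted $D_{\alpha,z}$-center but $W(P)^0\not\le\sigma^0$, then produce $\tilde\sigma\in\S(\hil)$ with $F(\tilde\sigma)<F(\sigma)$, where $F(\tau):=\sum_{x\in\X}P(x)D_{\alpha,z}(W(x)\|\tau)$. Since $W(P)^0=\bigvee_{x\in\supp P}W(x)^0$, some $x_0\in\supp P$ satisfies $W(x_0)^0\not\le\sigma^0$. The proof splits naturally at $\alpha=1$.

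\textbf{Case $\alpha\ge 1$.} This is essentially a finiteness argument. For $\alpha>1$, the excerpt records that $W(x_0)^0\not\le\sigma^0$ forces $D_{\alpha,z}(W(x_0)\|\sigma)=+\infty$ for every $z\in(0,+\infty]$, and the same conclusion holds for $\alpha=1$ because $D_1$ is Umegaki's relative entropy (divided by $\Tr\rho=1$), which is infinite precisely when support inclusion fails. Hence $F(\sigma)=+\infty$. On the other hand, the maximally mixed state satisfies $Q_{\alpha,z}(W(x)\|I/d)=d^{\alpha-1}\Tr W(x)^\alpha$ for $z\in(0,+\infty]$ (and $D_1(W(x)\|I/d)\le\log d$ for $\alpha=1$), so $F(I/d)<+\infty$, contradicting minimality of $\sigma$.

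\textbf{Case $\alpha\in(0,1)$, $1-\alpha<z<+\infty$.} Put $\beta:=(1-\alpha)/z\in(0,1)$, choose $\eta:=W(P)/\Tr W(P)$ (so $\eta^0=W(P)^0$), and perturb $\sigma_t:=(1-t)\sigma+t\eta$ for small $t\in(0,1)$. Because $\alpha<1$ gives $\frac{1}{\alpha-1}<0$ in $D_{\alpha,z}=\frac{1}{\alpha-1}\log Q_{\alpha,z}$, increases of $Q_{\alpha,z}$ translate into decreases of $D_{\alpha,z}$; so the goal is to lower-bound $Q_{\alpha,z}(W(x)\|\sigma_t)$. If $Q_{\alpha,z}(W(x_0)\|\sigma)=0$ then $D_{\alpha,z}(W(x_0)\|\sigma)=+\infty$ and the finiteness argument from the previous case delivers the contradiction via $F(I/d)<+\infty$, so one may assume $Q_{\alpha,z}(W(x)\|\sigma)>0$ for every $x\in\supp P$. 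For each ``good'' $x$ (with $W(x)^0\le\sigma^0$) the map $t\mapsto Q_{\alpha,z}(W(x)\|\sigma_t)$ is smooth at $t=0^+$ on $\ran\sigma^0$, giving $D_{\alpha,z}(W(x)\|\sigma_t)-D_{\alpha,z}(W(x)\|\sigma)=O(t)$. For each ``bad'' $x$ (with $W(x)^0\not\le\sigma^0$, in particular $x=x_0$), writing $B_x:=W(x)^{\alpha/(2z)}$, the key estimate to be established is
\begin{align*}
Q_{\alpha,z}(W(x)\|\sigma_t)\,\ge\,Q_{\alpha,z}(W(x)\|\sigma)+c_x\,t^\beta+o(t^\beta),\qquad c_x>0,
\end{align*}
equivalently $D_{\alpha,z}(W(x)\|\sigma_t)-D_{\alpha,z}(W(x)\|\sigma)\le -c'_x\,t^\beta+o(t^\beta)$ with $c'_x>0$. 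Summing with weights $P(x)$ and using that $x_0$ is bad, $F(\sigma_t)-F(\sigma)\le-c\,t^\beta+O(t)$ for some $c>0$, which is strictly negative for small $t>0$ since $\beta<1$, contradicting minimality.

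\textbf{Main obstacle.} The delicate step is the $t^\beta$ lower bound for bad $x$. Heuristically, in the block decomposition $\hil=\ran\sigma^0\oplus\ran(I-\sigma^0)$ the $(I-\sigma^0)(I-\sigma^0)$-block of $\sigma_t$ equals exactly $t(I-\sigma^0)\eta(I-\sigma^0)$, and the infinite slope of $s\mapsto s^\beta$ at $s=0$ for $\beta\in(0,1)$ injects a $t^\beta$-sized contribution into the corresponding block of $\sigma_t^\beta$; since $W(x)^0\not\le\sigma^0$ is equivalent to $B_x(I-\sigma^0)\ne 0$, this contribution survives the sandwich by $B_x$, and monotonicity of $T\mapsto\Tr T^z$ for $z>0$ preserves the lower bound under the outer $z$-power. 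Making this rigorous in the non-commutative setting, where $\sigma_t$ has off-diagonal blocks of order $t$ that must be controlled against the dominant $t^\beta$ term, is the technical core; a clean route is the operator integral representation $s^\beta=\frac{\sin\beta\pi}{\pi}\int_0^\infty\frac{s}{s+u}\,u^{\beta-1}\,du$, or a direct Schur-complement analysis of $\sigma_t^\beta$ combined with operator monotonicity of $s\mapsto s^\beta$ and of $T\mapsto T^\beta$ for $\beta\in(0,1)$.
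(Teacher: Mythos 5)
Your high-level strategy matches the paper's for $\alpha\in(0,1)$: perturb $\sigma$ along a line $\sigma_t=(1-t)\sigma+t\,(\text{something with support meeting }\ker\sigma)$ and exploit the infinite slope of $s\mapsto s^{(1-\alpha)/z}$ at $s=0$ to get a $t^\beta$-sized decrease of $F$ when $\beta=(1-\alpha)/z<1$. And your $\alpha\ge 1$ case is the same trivial finiteness argument as the paper's. The real issue is your choice of perturbation direction.

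You take $\eta=W(P)/\Tr W(P)$, which has components on $\ran\sigma^0$, off-diagonal components, and components on $\ker\sigma$ — and you honestly flag the resulting off-diagonal-block control as the "technical core" that you do not carry out. That is a genuine gap, not a routine detail. When $\sigma$ is singular, $\sigma_t^\beta$ is not smooth in $t$ at $0$, the spectral projectors of $\sigma_t$ rotate, and one must rule out that the off-diagonal cross-terms contribute a competing $O(t^\beta)$ term of the wrong sign (also your assertion that good-$x$ terms contribute only $O(t)$ is not automatic for the same reason). The paper avoids this entirely by a cleaner choice of direction: it perturbs toward $\omega:=P^\sigma_0/\Tr P^\sigma_0$, i.e.\ the normalized projection onto $\ker\sigma$. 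Since $\omega$ commutes with $\sigma$, $\sigma_t=(1-t)\sigma+t\omega$ stays block-diagonal in the spectral basis of $\sigma$, with spectral projectors independent of $t$ for small $t$ (namely $P^{\sigma_t}_{ct}=P^\sigma_0$, $P^{\sigma_t}_{(1-t)\lambda}=P^\sigma_\lambda$). This makes the directional derivative $f'(t)=DF(\sigma_t)(\omega-\sigma)$ explicitly computable via the Fr\'echet derivative formula of Lemma \ref{lemma:F derivative}, and one reads off that $f'(t)\to-\infty$ as $t\searrow 0$ (with the nondegeneracy $\Tr W(x_0)^{\alpha/z}P^\sigma_0>0$ coming from $W(x_0)^0\nleq\sigma^0$), contradicting minimality. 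If you want to salvage your version, the quickest route is to simply switch your perturbation from $\eta$ to $P^\sigma_0/\Tr P^\sigma_0$; the rest of your sketch then goes through without the off-diagonal difficulties.
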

\begin{proof}
When $\alpha>1$ and $W(P)^0\nleq\sigma^0$, there exists an $x\in\supp P$ with $W_x^0\nleq\sigma^0$ so that 
$D_{\alpha,z}(W(x)\|\sigma)=+\infty$. Hence, $\sigma$ cannot be a minimizer for \eqref{weighted alpha divrad def}.

Assume for the rest that $\alpha\in(0,1)$, and $\sigma$ is such that $W(P)^0\nleq \sigma^0$; 
this is equivalent to the existence of an $x_0\in\supp P$ such that 
$W_{x_0}P^{\sigma}_0\ne 0$.
Let us define the state $\omega:=cP^{\sigma}_0$, with $c:=1/\Tr P^{\sigma}_0$.
For every $t\in[0,1]$, let 
\begin{align*}
\sigma_t:=(1-t)\sigma+t\omega=
\sum_{\lambda\in\spec(\sigma)\setminus\{0\}}(1-t)\lambda P^{\sigma}_{\lambda}+tcP^{\sigma}_0,
\end{align*}
so that $\sigma_t\in\B(\hil)_{++}$ for every $t\in(0,1]$.
Note that if $t<t_0:=\lambda_{\min}(\sigma)/(c+\lambda_{\min}(\sigma))$, where 
$\lambda_{\min}(\sigma)$ is the smallest non-zero eigenvalue of $\sigma$,  then 
$P^{\sigma_t}_{ct}=P^{\sigma}_0$, and 
$P^{\sigma_t}_{(1-t)\lambda}=P^{\sigma}_{\lambda}$, $\lambda\in\spec(\sigma)\setminus\{0\}$.

By Lemma \ref{lemma:F derivative}, the derivative of $f(t):=F(\sigma_t)$ at any $t\in (0,1)$ is given by 
\begin{align*}
&f'(t)=DF(\sigma_t)(\omega-\sigma)\\
&=
\frac{z}{\alpha-1}\sum_{x\in\X}P(x)\frac{1}{Q_{\alpha,z}(W(x)\|\sigma_t)}
\Bigg[
h_{\alpha,z}'\bz ct\jz c\Tr A_{x,t}P^{\sigma}_0\\
&\hspace{5.5cm}
-
\sum_{\lambda\in\spec(\sigma)\setminus\{0\}}h_{\alpha,z}'\bz(1-t)\lambda\jz\lambda
\Tr A_{x,t}P^{\sigma}_{\lambda}
\Bigg]\\
&=
\sum_{x\in\X}P(x)\frac{1}{Q_{\alpha,z}(W(x)\|\sigma_t)}
\Bigg[
(1-t)^{\frac{1-\alpha}{z}-1}\sum_{\lambda\in\spec(\sigma)\setminus\{0\}}\lambda^{\frac{1-\alpha}{z}}\Tr A_{x,t}P^{\sigma}_{\lambda}\\
&\hspace{4.5cm}
-
t^{\frac{1-\alpha}{z}-1}c^{\frac{1-\alpha}{z}}\Tr A_{x,t}P^{\sigma}_0
\Bigg],
\end{align*}
where 
$A_{x,t}:=W(x)^{\frac{\alpha}{2z}}\bz W(x)^{\frac{\alpha}{2z}}\sigma_t^{\frac{1-\alpha}{z}} W(x)^{\frac{\alpha}{2z}}\jz^{z-1}W(x)^{\frac{\alpha}{2z}}$.

Our aim will be to show that $\lim_{t\searrow 0}f'(t)=-\infty$. This implies that 
$f(t)<f(0)$ for small enough $t>0$, contradicting the assumption that $F$ has a global minimum at $\sigma$.
Note that $\lim_{t\searrow 0}Q_{\alpha,z}(W(x)\|\sigma_t)=Q_{\alpha,z}(W(x)\|\sigma)$, which is strictly positive for every $x\in\supp P$. Indeed, the contrary would mean that 
$D_{\alpha,z}(W(x)\|\sigma)=+\infty$, contradicting again the assumption that $F$ has a global minimum at $\sigma$.
Hence, the proof will be complete if we show that 
$t^{\frac{1-\alpha}{z}-1}\Tr A_{x_0,t}P^{\sigma}_0$ diverges to $+\infty$ while 
$\Tr A_{x,t}P^{\sigma}_{\lambda}$ is bounded as $t\searrow 0$ for any $x\in\supp P$ and $\lambda\in\spec(\sigma)\setminus\{0\}$.

Note that for any $t\in(0,t_0)$ and $z\ge 1$,
\begin{align}
tc I\le\sigma_t\le I &\ds\imp\ds
(tc)^{\frac{1-\alpha}{z}} I\le\sigma_t^{\frac{1-\alpha}{z}}\le I\nonumber\\
 &\ds\imp\ds
(tc)^{\frac{1-\alpha}{z}}W(x)^{\frac{\alpha}{z}} \le 
W(x)^{\frac{\alpha}{2z}}\sigma_t^{\frac{1-\alpha}{z}}W(x)^{\frac{\alpha}{2z}}
\le W(x)^{\frac{\alpha}{z}}\nonumber\\
&\ds\imp\ds
t^{\frac{1-\alpha}{z}}c_1W(x)^0\le W(x)^{\frac{\alpha}{2z}}\sigma_t^{\frac{1-\alpha}{z}}W(x)^{\frac{\alpha}{2z}}
\le c_3W(x)^0\nonumber\\
&\ds\imp\ds
t^{\frac{1-\alpha}{z}(z-1)}c_2 W(x)^0\le
\left[W(x)^{\frac{\alpha}{2z}}\sigma_t^{\frac{1-\alpha}{z}}W(x)^{\frac{\alpha}{2z}}\right]^{z-1}
\le
c_4W(x)^0\label{operator bound1}\\
&\ds\imp\ds
t^{\frac{1-\alpha}{z}(z-1)}c_2 W(x)^{\frac{\alpha}{z}}\le
A_{x,t}
\le
c_4W(x)^{\frac{\alpha}{z}},\label{operator bound2}
\end{align}
where $c_1:=c^{\frac{1-\alpha}{z}}\lambda_{\min}(W(x))^{\frac{\alpha}{z}}>0$, 
$c_2:=c_1^{z-1}>0$, 
$c_3:=\norm{W(x)}^{\frac{\alpha}{z}}>0$, 
$c_4:=c_3^{z-1}>0$,
and the inequalities in \eqref{operator bound1}--\eqref{operator bound2} hold in the opposite direction when $z\in(0,1)$.
This immediately implies that
\begin{align*}
&t^{\frac{1-\alpha}{z}-1}\Tr A_{x_0,t}P^{\sigma}_0
\ge
t^{\frac{1-\alpha}{z}-1+\frac{1-\alpha}{z}(z-1)}c_2\Tr W(x_0)^{\frac{\alpha}{z}}P^{\sigma}_0
\xrightarrow[t\searrow 0]{}+\infty,&z\ge 1,\\
&t^{\frac{1-\alpha}{z}-1}\Tr A_{x_0,t}P^{\sigma}_0
\ge
t^{\frac{1-\alpha}{z}-1}c_4\Tr W(x_0)^{\frac{\alpha}{z}}P^{\sigma}_0
\xrightarrow[t\searrow 0]{}+\infty,&z\in(0,1),
\end{align*}
since 
$\Tr W(x_0)^{\frac{\alpha}{z}}P^{\sigma}_0>0$ by assumption,
$\frac{1-\alpha}{z}-1+\frac{1-\alpha}{z}(z-1)=-\alpha<0$, and 
$\frac{1-\alpha}{z}-1<0$ iff $1-\alpha< z$ when $z\in(0,1)$.

Next, observe that 
\begin{align*}
(1-t)P^{\sigma}_{\lambda}\le\sigma_t &\ds\imp\ds
(1-t)^{\frac{1-\alpha}{z}}P^{\sigma}_{\lambda}\le\sigma_t^{\frac{1-\alpha}{z}}
\end{align*}
where the inequality follows since, by assumption, $0<\frac{1-\alpha}{z}<1$, and 
$x\mapsto x^{\gamma}$ is operator monotone on $(0,+\infty)$ for $\gamma\in(0,1)$.
Hence,
\begin{align*}
0\le\Tr A_{x,t}P^{\sigma}_{\lambda}
\le
(1-t)^{\frac{\alpha-1}{z}}
\Tr A_{x,t}\sigma_t^{\frac{1-\alpha}{z}}
=
(1-t)^{\frac{\alpha-1}{z}}Q_{\alpha,z}(W(x)\|\sigma_t)\xrightarrow[t\searrow 0]{}Q_{\alpha,z}(W(x)\|\sigma),
\end{align*}
which is finite. This finishes the proof.
\end{proof}

\begin{rem}
Note that the region of $(\alpha,z)$ values given in Lemma \ref{lemma:support2} covers $z=1$ for all 
$\alpha\in(0,+\infty]$, i.e., all the Petz-type R\'enyi divergences, 
and 
$\{(\alpha,\alpha):\,\alpha\in(1/2,+\infty]\}$, i.e., 
the sandwiched R\'enyi divergences for every parameter $\alpha$ for which they are monotone under CPTP maps, except for $\alpha=1/2$.
It is an open question whether the condition $z> 1-\alpha$
in Lemma \ref{lemma:support2} can be improved, or maybe completely removed.
\end{rem}

\begin{rem}
Note that the case $\alpha>1$ in Lemma \ref{lemma:support2} is trivial, and this is the case that we actually need for the strong converse exponent of constant composition classical-quantum channel coding in Section \ref{sec:sc}; more precisely, we need the case $z=\alpha>1$.
\end{rem}

Let us define $\Gamma_D$ to be the set of $(\alpha,z)$ values such that 
for any gcq channel $W$ and any input probability distribution $P$, 
any $P$-weighted $D_{\alpha,z}$ center $\sigma$ for $W$ satisfies
$\sigma^0=W(P)^0$. 
Then Corollary \ref{cor:az joint convexity} and Lemmas \ref{lemma:2nd convexity}, \ref{lemma:minimizer support}
and \ref{lemma:support2} yield
\begin{align*}
\Gamma_D\supseteq\left\{(\alpha,z):\,\alpha\in(0,1),\,1-\alpha< z+\infty\right\}
\cup
\left\{(\alpha,z):\,\alpha>1,\, z\ge\max\{\alpha/2,\alpha-1\}\right\}.
\end{align*}
\smallskip

The following characterization of the weighted $D_{\alpha,z}$ centers
will be crucial in proving the additivity of the weighted sandwiched R\'enyi divergence radius of a gcq channel.

\begin{theorem}\label{prop:fixed point characterization}
Assume that $(\alpha,z)\in\Gamma_D$ are such that 
$D_{\alpha,z}$ is convex in its second variable.
Then $\sigma$ 
is a $P$-weighted $D_{\alpha,z}$ center for $W$
if and only if it is a fixed point of 
the map
\begin{align}\label{D fixed point eq}
\map_{W,P,D_{\alpha,z}}(\sigma):=&
\sum_{x\in\X}P(x)\frac{1}{Q_{\alpha,z}(W(x)\|\sigma)}
\bz\sigma^{\frac{1-\alpha}{2z}}W(x)^{\frac{\alpha}{z}}\sigma^{\frac{1-\alpha}{2z}}\jz^{z}
\end{align}
defined on $\S_{W,P}(\hil)_{++}:=\{\sigma\in\S(\hil)_+:\,\sigma^0=W(P)^0\}$.
\end{theorem}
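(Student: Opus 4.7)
The plan is to combine convex-optimization (KKT) with two concrete algebraic identities to reduce the fixed-point equation for $\map_{W,P,D_{\alpha,z}}$ to the vanishing of the gradient of $F(\sigma):=\sum_x P(x)D_{\alpha,z}(W(x)\|\sigma)$ under the trace constraint. Since $(\alpha,z)\in\Gamma_D$ forces any minimizer to have support $W(P)^0$, I work throughout on $\S_{W,P}(\hil)_{++}$, where Lemma \ref{lemma:F derivative} makes $F$ differentiable and the convexity hypothesis on $D_{\alpha,z}$ makes $F$ convex; hence $\sigma$ is a minimizer iff there is $\mu\in\bR$ with $DF(\sigma)(Y)=\mu\Tr Y$ for all $Y\in\B(\ran W(P)^0)_{\sa}$.

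I would first pin down $\mu$ by taking $Y=\sigma$: only the diagonal divided-difference terms in Lemma \ref{lemma:F derivative} survive, and using $h_{\alpha,z}^{[1]}(a,a)\cdot a=\frac{1-\alpha}{z}a^{(1-\alpha)/z}$ together with $\Tr C_x^z=Q_{\alpha,z}(W(x)\|\sigma)$ (for $C_x:=W(x)^{\alpha/(2z)}\sigma^{(1-\alpha)/z}W(x)^{\alpha/(2z)}$) gives $\mu=-1$. Self-adjointness of the superoperator $Dh_{\alpha,z}(\sigma):Y\mapsto\sum_{a,b}h_{\alpha,z}^{[1]}(a,b)P^\sigma_aYP^\sigma_b$ with respect to the Hilbert--Schmidt inner product then lets me rewrite the stationarity condition equivalently as
\[
\sum_x\frac{P(x)}{Q_{\alpha,z}(W(x)\|\sigma)}\Tr X_x\,Dh_{\alpha,z}(\sigma)(Y)=\frac{1-\alpha}{z}\Tr Y,\quad Y\in\B(\ran W(P)^0)_{\sa},\quad (\star)
\]
where $X_x:=W(x)^{\alpha/(2z)}C_x^{z-1}W(x)^{\alpha/(2z)}$ is the inner operator that appears in Lemma \ref{lemma:F derivative}.

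The crux, and the step I expect to be the main technical obstacle, is the operator identity
\[
\sigma^\gamma X_x\sigma^\gamma=\bz\sigma^\gamma W(x)^{\alpha/z}\sigma^\gamma\jz^{z},\qquad\gamma:=\frac{1-\alpha}{2z},
\]
which relates the awkward $X_x$ to the clean integrand of $\map_{W,P,D_{\alpha,z}}$. I would establish it via the polar decomposition $W(x)^{\alpha/(2z)}\sigma^\gamma=UN_x^{1/2}$ with $N_x:=\sigma^\gamma W(x)^{\alpha/z}\sigma^\gamma$: this yields $C_x=UN_xU^*$, hence $C_x^{z-1}=UN_x^{z-1}U^*$ by functional calculus, and substituting into $\sigma^\gamma X_x\sigma^\gamma$ the partial-isometry factors $U^*U$ act as the identity because $N_x^{1/2}$ and $N_x^{z-1}$ are both supported on $\ran N_x$. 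This argument is uniform in $z>0$ and bypasses any fractional-power gymnastics with non-PSD products. I would also record the trace identity $\Tr Dh_{\alpha,z}(\sigma)^{-1}(Z)=\frac{z}{1-\alpha}\Tr\sigma^{1-(1-\alpha)/z}Z$, obtained by direct diagonal computation; this is legitimate because $(1-\alpha)/z\ne 0$ makes $Dh_{\alpha,z}(\sigma)$ invertible on $\B(\ran W(P)^0)_{\sa}$.

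With these two ingredients the equivalence becomes a one-line substitution. Given any $H\in\B(\ran W(P)^0)_{\sa}$, setting $Y:=Dh_{\alpha,z}(\sigma)^{-1}(\sigma^\gamma H\sigma^\gamma)$ in $(\star)$, the operator identity combined with trace cyclicity turns the left-hand side into $\Tr\map_{W,P,D_{\alpha,z}}(\sigma)H$, while the trace identity together with the cyclic telescoping $\Tr\sigma^{1-2\gamma}\sigma^\gamma H\sigma^\gamma=\Tr\sigma H$ turns the right-hand side into $\Tr\sigma H$. Varying $H$ yields $(\star)\iff\map_{W,P,D_{\alpha,z}}(\sigma)=\sigma$, and the converse direction (fixed point implies minimizer) is exactly the same chain read backwards combined with the convexity of $F$.
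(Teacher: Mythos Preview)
Your proposal is correct and follows essentially the same route as the paper: reduce to the first-order stationarity condition via convexity, pin down the Lagrange multiplier, and then massage the resulting equation into the fixed-point form using the key operator identity $\sigma^{\gamma}X_x\sigma^{\gamma}=\bz\sigma^{\gamma}W(x)^{\alpha/z}\sigma^{\gamma}\jz^{z}$. The paper phrases that identity as the general fact $Xf(X^*X)X^*=(\id_{\bR}f)(XX^*)$, which is exactly your polar-decomposition computation, and it reaches the fixed-point equation by writing the gradient operator explicitly and doing a block analysis in the eigenbasis of $\sigma$ (obtaining the intermediate equation $\sigma^{(\alpha-1)/z+1}=\what\map_{W,P,D_{\alpha,z}}(\sigma)$), whereas you stay at the trace level and bypass that intermediate step by substituting $Y=Dh_{\alpha,z}(\sigma)^{-1}(\sigma^{\gamma}H\sigma^{\gamma})$; these are equivalent bookkeeping choices rather than different ideas.
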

\begin{proof}
By the assumption that $(\alpha,z)\in\Gamma_D$, 
we may restrict the Hilbert space to be $\ran W(P)^0$, and assume that $\sigma$ is invertible.
Let $F(A):=\sum_{x\in\X}P(x)D_{\alpha,z}(W(x)\|A)$, $A\in\B(\hil)_{++}$.
Due to the assumption that $D_{\alpha,z}$ is convex in its second variable, $\sigma$ is a minimizer of $F$ 
if and only if $DF(\sigma)(Y)=0$ for all self-adjoint traceless $Y$.
By Lemma \ref{lemma:F derivative}, this condition is equivalent to 
\begin{align*}
\lambda I=
\frac{z}{\alpha-1}\sum_{x\in\X}P(x)\frac{1}{Q_{\alpha,z}(W(x)\|\sigma)}
\sum_{a,b}h_{\alpha,z}^{[1]}(a,b)P^{\sigma}_a W(x)^{\frac{\alpha}{2z}}\bz W(x)^{\frac{\alpha}{2z}}\sigma^{\frac{1-\alpha}{z}} W(x)^{\frac{\alpha}{2z}}\jz^{\alpha-1}W(x)^{\frac{\alpha}{2z}}P^{\sigma}_b
\end{align*}
for some $\lambda\in\bR$. 
Multiplying both sides by $\sigma^{1/2}$ from the left and the right, and taking the trace, we get 
$\lambda=-1$. Hence, the above is equivalent to (by multiplying both sides by $\sigma^{1/2}$ from the left and the right)
\begin{align}
\sigma&=
\frac{z}{1-\alpha}\sum_{x\in\X}P(x)\frac{1}{Q_{\alpha,z}(W(x)\|\sigma)}
\sum_{a,b}a^{1/2}b^{1/2}h_{\alpha,z}^{[1]}(a,b)P^{\sigma}_a W(x)^{\frac{\alpha}{2z}}\bz W(x)^{\frac{\alpha}{2z}}\sigma^{\frac{1-\alpha}{z}} W(x)^{\frac{\alpha}{2z}}\jz^{z-1}W(x)^{\frac{\alpha}{2z}}P^{\sigma}_b\nonumber\\
&=
\sum_{a,b}P^{\sigma}_a\bz\frac{z}{1-\alpha}a^{1/2}b^{1/2}h_{\alpha,z}^{[1]}(a,b)
\what\map_{W,P,\alpha,z}(\sigma)\jz P^{\sigma}_b,\label{fixed point1}
\end{align}
where 
\begin{align*}
\what\map_{W,P,D_{\alpha,z}}(\sigma)&:=\sum_{x\in\X}P(x)\frac{1}{Q_{\alpha,z}\nw(W(x)\|\sigma)}
W(x)^{\frac{\alpha}{2z}}\bz W(x)^{\frac{\alpha}{2z}}\sigma^{\frac{1-\alpha}{z}} W(x)^{\frac{\alpha}{2z}}\jz^{z-1}W(x)^{\frac{\alpha}{2z}}.
\end{align*}
Writing the operators in \eqref{fixed point1} in block form according to the spectral decomposition of $\sigma$, 
we see that \eqref{fixed point1} is equivalent to
\begin{align*}
\forall a,b:\ds \delta_{a,b}a^{\frac{\alpha-1}{z}+1}P^{\sigma}_a
=P^{\sigma}_a\what\map_{W,P,\alpha,z}(\sigma)P^{\sigma}_b
&\ds\iff\ds
\sigma^{\frac{\alpha-1}{z}+1}=\what\map_{W,P,D_{\alpha,z}}(\sigma)\\
&\ds\iff\ds\sigma=\sigma^{\frac{1-\alpha}{2z}}\what\map_{W,P,D_{\alpha,z}}(\sigma)\sigma^{\frac{1-\alpha}{2z}}.
\end{align*}
This can be rewritten as 
\begin{align*}
\sigma&=
\sum_{x\in\X}P(x)\frac{1}{Q_{\alpha,z}(W(x)\|\sigma)}
\sigma^{\frac{1-\alpha}{2z}}W(x)^{\frac{\alpha}{2z}}
\bz W(x)^{\frac{\alpha}{2z}}\sigma^{\frac{1-\alpha}{z}}W(x)^{\frac{\alpha}{2z}}\jz^{z-1}
W(x)^{\frac{\alpha}{2z}}\sigma^{\frac{1-\alpha}{2z}}\\
&=
\sum_{x\in\X}P(x)\frac{1}{Q_{\alpha,z}(W(x)\|\sigma)}
\bz\sigma^{\frac{1-\alpha}{2z}}W(x)^{\frac{\alpha}{z}}\sigma^{\frac{1-\alpha}{2z}}\jz^{z},
\end{align*}
where the last identity follows from $Xf(X^*X)X^*=(\id_{\bR}f)(XX^*)$.
\end{proof}

\begin{rem}
The special case $z=1$ yields the characterization of the $P$-weighted Petz-type R\'enyi divergence center as the fixed point of the map
\begin{align}\label{Petz-Renyi fixed point}
\map_{W,P,D_{\alpha}}(\sigma):=\sum_{x\in\X}P(x)\frac{1}{Q_{\alpha}\old(W(x)\|\sigma)}\sigma^{\frac{1-\alpha}{2}}
W(x)^{\alpha}\sigma^{\frac{1-\alpha}{2}},\ds\ds\ds
\sigma\in\S_{W,P}(\hil)_{++},
\end{align}
for any $\alpha\in(0,+\infty)\setminus\{1\}$.
Note that in the classical case $D_{\alpha,z}$ is independent of $z$, i.e., $D_{\alpha,z}=D_{\alpha}$ for all 
$z>0$, and the above characterization of the minimizer has been derived recently by 
Nakibo\u glu in \cite[Lemma 13]{NakibogluAugustin}, using very different methods. 
Following Nakibo\u glu's approach, Cheng, Gao and Hsieh has derived the 
above characterization for the Petz-type R\'enyi divergence center in 
\cite[Proposition 4]{Cheng-Li-Hsieh2018}.
The advantage of Nakibo\u glu's approach is that it also provides quantitative bounds of the deviation of
$\sum_xP(x)D_{\alpha}(W(x)\|\sigma)$ from $\chi_{\alpha,z}(W,P)$ for an arbitrary state $\sigma$; however, it is 
not clear whether this approach can be extended
to the case $z\ne 1$, in particular, for $z=\alpha$, which is the relevant case
for the strong converse exponent of constant composition classical-quantum channel coding, as we will see in Section \ref{sec:sc}.
\end{rem}

\begin{rem}
A similar approach as in the above proof of Theorem \ref{prop:fixed point characterization}
was used by Hayashi and Tomamichel in \cite[Appendix C]{HT14} to characterize the optimal state for the 
sandwiched R\'enyi mutual information as the fixed point of a non-linear map on the state space.
We comment on this in more detail in Section \ref{sec:generalized mutual informations}.
Hayashi and Tomamichel's 
approach was used later in \cite{ChengHsiehTomamichel2019} to give another derivation of \eqref{Petz-Renyi fixed point} for $\alpha\in(0,1)$.
\end{rem}

\begin{example}\label{ex:noiseless}
We say that a cq channel $W$ is \ki{noiseless} on $\supp P$ if 
$W(x)W(y)=0$ for all $x,y\in\supp P$, $x\ne y$, i.e., 
the output states corresponding to inputs in $\supp P$ are perfectly distinguishable. A straightforward computation shows that if $W$ is noiseless on $\supp P$ then 
$\sigma:=W(P)=\sum_x P(x)W(x)$ satisfies the fixed point equation
\eqref{D fixed point eq} for any pair $(\alpha,z)$. 
Hence, if $(\alpha,z)$ satisfies the conditions of Proposition \ref{prop:fixed point characterization} then
$W(P)$ is a minimizer for \eqref{weighted alpha divrad def}, and we have 
\begin{align*}
\chi_{\alpha,z}(W,P)=\sum_{x\in\X}P(x)D_{\alpha,z}(W(x)\|W(P))=H(P):=-\sum_{x\in\X}P(x)\log P(x).
\end{align*}
Thus, the R\'enyi $(\alpha,z)$ radius of $W$ is equal to the Shannon entropy of the input distribution, independently of the value of $(\alpha,z)$.
\end{example}

\begin{cor}
If $(\alpha,z)$ satisfies the conditions of Proposition \ref{prop:fixed point characterization}, and $D_{\alpha,z}$ is monotone under CPTP maps then 
\begin{align}\label{entropy bound}
\chi_{\alpha,z}(W,P)\le H(P)
\end{align}
for any cq channel $W$ and input distribution $P$.
\end{cor}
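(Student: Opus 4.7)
The plan is to reduce the corollary to the noiseless situation of Example \ref{ex:noiseless}, where equality has already been established, by exploiting the monotonicity of $D_{\alpha,z}$ under CPTP maps. The guiding intuition is that any cq channel can be obtained from a perfectly noiseless classical ``identity'' channel by post-composing with a preparation channel, and data processing with respect to $D_{\alpha,z}$ can only decrease the weighted R\'enyi radius.

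Concretely, I will compare $W$ with the cq channel $W_0:\,\X\to\S(\hil_{\X})$ defined by $W_0(x):=\pr{x}$, where $\hil_{\X}$ is the auxiliary Hilbert space with orthonormal basis $\{\ket{x}:\,x\in\X\}$ fixed in the Introduction. Since $W_0(x)W_0(y)=0$ for $x\ne y$, this $W_0$ is noiseless on $\X$ (and in particular on $\supp P$). The hypotheses of the corollary are exactly the hypotheses of Proposition \ref{prop:fixed point characterization} used in Example \ref{ex:noiseless}, so that example yields
\begin{align*}
\chi_{\alpha,z}(W_0,P)=H(P).
\end{align*}

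The second step is to note that $W$ factors through $W_0$ via the measure-and-prepare CPTP map $\map:\,\B(\hil_{\X})\to\B(\hil)$ defined by $\map(\rho):=\sum_{x\in\X}\bra{x}\rho\ket{x}W(x)$; it is manifestly CPTP and satisfies $\map(W_0(x))=W(x)$ for each $x\in\X$. For an arbitrary $\sigma\in\S(\hil_{\X})$, monotonicity of $D_{\alpha,z}$ under $\map$ gives
\begin{align*}
D_{\alpha,z}(W(x)\|\map(\sigma))=D_{\alpha,z}(\map(W_0(x))\|\map(\sigma))\le D_{\alpha,z}(W_0(x)\|\sigma)
\end{align*}
for every $x$. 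Weighting by $P(x)$, summing over $x$, and then taking the infimum over $\sigma\in\S(\hil_{\X})$ on the right---while observing that each $\map(\sigma)$ is one of the admissible states in the definition of $\chi_{\alpha,z}(W,P)$ on the left---yields
\begin{align*}
\chi_{\alpha,z}(W,P)\le\inf_{\sigma\in\S(\hil_{\X})}\sum_{x\in\X}P(x)D_{\alpha,z}(W_0(x)\|\sigma)=\chi_{\alpha,z}(W_0,P)=H(P),
\end{align*}
which is the desired bound.

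I do not anticipate any real obstacle: the argument is a standard ``data processing from a perfectly distinguishable source'' reduction, and every ingredient---the explicit preparation channel $\map$, the monotonicity assumption, and the noiseless computation---is already in hand. The only point requiring care is that the hypotheses of the corollary (namely $(\alpha,z)\in\Gamma_D$, convexity of $D_{\alpha,z}$ in its second argument, and monotonicity under CPTP maps) are indeed sufficient to invoke Example \ref{ex:noiseless} for $W_0$, which is immediate by construction.
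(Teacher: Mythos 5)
Your proof is correct and follows essentially the same route as the paper: factor $W$ through a noiseless preparation channel $W_0$ (the paper calls it $\tilde W$) via the CPTP map $\map(\rho)=\sum_x\bra{x}\rho\ket{x}W(x)$, compute $\chi_{\alpha,z}(W_0,P)=H(P)$ via Example \ref{ex:noiseless}, and invoke data processing. The only difference is cosmetic: the paper restricts w.l.o.g.~to $\X=\supp P$ up front and then states "the assertion follows," leaving implicit the monotonicity step $D_{\alpha,z}(\map(W_0(x))\|\map(\sigma))\le D_{\alpha,z}(W_0(x)\|\sigma)$ that you correctly spell out.
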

\begin{proof}
We may assume without loss of generality that $\X=\supp P$.
Let $\tilde W(x):=\pr{e_x}$ for some orthonormal basis $(e_x)_{x\in\supp P}$ in a Hilbert space $\kil$, and let 
$\map(.):=\sum_{x\in\supp P}W(x)\bra{e_x}(.)\ket{e_x}$, which is a CPTP map from 
$\B(\kil)$ to $\B(\hil)$. We have $W=\map\circ \tilde W$, and the assertion follows from Example \ref{ex:noiseless}.
\end{proof}

\begin{rem}
Our approach to prove \eqref{entropy bound} follows that of Csisz\'ar \cite{Csiszar}.
A (much) simpler approach to prove the inequality \eqref{entropy bound} was given by Nakibo\u glu 
\cite[Lemma 13]{NakibogluAugustin}
(see also \cite[Proposition 4]{Cheng-Li-Hsieh2018} for an adaptation to various quantum R\'enyi divergences).
Obviously,
\begin{align}\label{upper bound}
\chi_{\alpha,z}(W,P)\le\sum_{x\in\X}P(x)D_{\alpha,z}(W(x)\|W(P)).
\end{align} 
Assume now that $D_{\alpha,z}$ satisfies the monotonicity property 
$\B(\hil)_+\ni\sigma_1\le\sigma_2$ $\imp$ $D_{\alpha,z}(\rho\|\sigma_1)\ge D_{\alpha,z}(\rho\|\sigma_2)$ for any 
$\rho\in\B(\hil)_+$. It is easy to see that this holds for every $(\alpha,z)$ with $z\ge |\alpha-1|$. In this case, we can lower bound 
$W(P)$ by $P(x)W(x)$, and hence 
$D_{\alpha,z}(W(x)\|W(P))\le D_{\alpha,z}(W(x)\|P(x)W(x))=-\log P(x)$, whence the RHS of \eqref{upper bound} can be upper bounded by $H(P)$.
\end{rem}

\subsection{Additivity of the weighted R\'enyi radius}
\label{sec:additivity}

Let $W^{(i)}:\,\X^{(i)}\to\B(\hil^{(i)})_+$, $i=1,2$, be gcq channels, and $P^{(i)}\in\P_f(\X^{(i)})$ be input probability distributions.
For any $\alpha\in(0,+\infty)$ and $z\in(0,+\infty]$,
\begin{align*}
&\chi_{\alpha,z}\bz W^{(1)}\otimes W^{(2)},P^{(1)}\otimes P^{(2)}\jz\nonumber\\
&\ds=
\inf_{\sigma_{12}\in\S(\hil_1\otimes\hil_2)}\sum_{x_1\in\X^{(1)},\,x_2\in\X^{(2)}}
P^{(1)}(x_1)P^{(2)}(x_2)D_{\alpha,z}\bz W^{(1)}(x_1)\otimes W^{(1)}(x_1)\|\sigma_{12}\jz\nonumber\\
&\ds\le
\inf_{\sigma_i\in\S(\hil_i)}\sum_{x_1\in\X^{(1)},\,x_2\in\X^{(2)}}
P^{(1)}(x_1)P^{(2)}(x_2)D_{\alpha,z}\bz W^{(1)}(x_1)\otimes W^{(2)}(x_2)\|\sigma_1\otimes\sigma_2\jz
\\
&\ds=
\inf_{\sigma_i\in\S(\hil_i)}\sum_{x_1\in\X^{(1)},\,x_2\in\X^{(2)}}
P^{(1)}(x_1)P^{(2)}(x_2)\left[D_{\alpha,z}\bz W^{(1)}(x_1)\|\sigma_1\jz+D_{\alpha,z}\bz W^{(2)}(x_2)\|\sigma_2\jz\right]\\
&\ds=
\chi_{\alpha,z}\bz W^{(1)},P^{(1)}\jz
+
\chi_{\alpha,z}\bz W^{(2)},P^{(2)}\jz,\nonumber
\end{align*}
where the first and the last equalities follow by definition, the second equality by the additivity of the $\alpha$-$z$ divergences, and the inequality follows by restricting the infimum to product states.
Hence, $\chi_{\alpha,z}$ is subadditive. In particular, for fixed $W:\,\X\to\S(\hil)$ and $P\in\P_f(\X)$,
the sequence $m\mapsto \chi_{\alpha,z}(W^{\otimes m},P^{\otimes m})$ is subadditive, and hence
\begin{align*}
\lim_{m\to+\infty}\frac{1}{m}\chi_{\alpha,z}(W^{\otimes m},P^{\otimes m})
=
\inf_{m\in\bN}\frac{1}{m}\chi_{\alpha,z}(W^{\otimes m},P^{\otimes m})
\le
\chi_{\alpha,z}(W,P).
\end{align*}
In fact,
\begin{align}\label{weak subadditivity}
\frac{1}{m}\chi_{\alpha,z}(W^{\otimes m},P^{\otimes m})
\le
\chi_{\alpha,z}(W,P)
\end{align}
for all $m\in\bN$. 

As it turns out, we also have the stronger property of additivity, at least for $(\alpha,z)$ pairs for which the optimal $\sigma$ can be characterized by the fixed point equation \eqref{D fixed point eq}.

\begin{theorem}\label{thm:additivity}
\ki{(Additivity of the weighted R\'enyi radius)}
Let $W^{(1)}:\,\X^{(1)}\to\S(\hil^{(1)})$ and 
$W^{(2)}:\,\X^{(2)}\to\S(\hil^{(2)})$ be gcq channels, 
and $P^{(i)}\in\P_f(\X^{(i)})$, $i=1,2$, be input distributions.
Assume, moreover, that $\alpha$ and $z$ satisfy the conditions of 
Theorem \ref{prop:fixed point characterization}.
Then
\begin{align}
\chi_{\alpha,z}\bz W^{(1)}\otimes W^{(2)},P^{(1)}\otimes P^{(2)}\jz=
\chi_{\alpha,z}\bz W^{(1)},P^{(1)}\jz
+
\chi_{\alpha,z}\bz W^{(2)},P^{(2)}\jz.
\end{align}
\end{theorem}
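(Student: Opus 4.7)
The plan is to use the fixed-point characterization from Theorem \ref{prop:fixed point characterization} to produce an explicit optimizer for the tensor product channel that is itself a tensor product of individual optimizers, and then evaluate the objective on it. Subadditivity is essentially free: the product ansatz $\sigma_1\otimes\sigma_2$ in \eqref{subadditivity}, together with the fact that $D_{\alpha,z}$ is additive on tensor products (which follows from the multiplicativity $Q_{\alpha,z}(\rho_1\otimes\rho_2\|\sigma_1\otimes\sigma_2)=Q_{\alpha,z}(\rho_1\|\sigma_1)Q_{\alpha,z}(\rho_2\|\sigma_2)$), immediately gives $\chi_{\alpha,z}(W^{(1)}\otimes W^{(2)},P^{(1)}\otimes P^{(2)})\le\chi_{\alpha,z}(W^{(1)},P^{(1)})+\chi_{\alpha,z}(W^{(2)},P^{(2)})$.

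For the matching lower bound, I would first invoke Lemma \ref{lsc} to get existence of weighted $D_{\alpha,z}$ centers $\sigma_i$ for $(W^{(i)},P^{(i)})$, $i=1,2$. Since $(\alpha,z)$ satisfies the hypotheses of Theorem \ref{prop:fixed point characterization}, each $\sigma_i$ has support $W^{(i)}(P^{(i)})^0$ and satisfies the fixed-point equation $\sigma_i=\map_{W^{(i)},P^{(i)},D_{\alpha,z}}(\sigma_i)$. The core of the argument is then to verify directly that $\sigma_1\otimes\sigma_2$ satisfies the fixed-point equation for the product channel. Using $(\sigma_1\otimes\sigma_2)^{(1-\alpha)/(2z)}=\sigma_1^{(1-\alpha)/(2z)}\otimes\sigma_2^{(1-\alpha)/(2z)}$, the analogous identity for $W^{(1)}(x_1)^{\alpha/z}\otimes W^{(2)}(x_2)^{\alpha/z}$, and the fact that the $z$-th power of a tensor product is the tensor product of $z$-th powers, the bracketed operator in \eqref{D fixed point eq} splits as a tensor product; combined with the multiplicativity of $Q_{\alpha,z}$ and the product form of $P^{(1)}\otimes P^{(2)}$, the double sum factors, yielding
\begin{align*}
\map_{W^{(1)}\otimes W^{(2)},P^{(1)}\otimes P^{(2)},D_{\alpha,z}}(\sigma_1\otimes\sigma_2)
=\map_{W^{(1)},P^{(1)},D_{\alpha,z}}(\sigma_1)\otimes\map_{W^{(2)},P^{(2)},D_{\alpha,z}}(\sigma_2)
=\sigma_1\otimes\sigma_2.
\end{align*}
Since $\sigma_1\otimes\sigma_2$ has support $W^{(1)}(P^{(1)})^0\otimes W^{(2)}(P^{(2)})^0=(W^{(1)}\otimes W^{(2)})(P^{(1)}\otimes P^{(2)})^0$, the converse direction of Theorem \ref{prop:fixed point characterization} applies and declares $\sigma_1\otimes\sigma_2$ to be a $(P^{(1)}\otimes P^{(2)})$-weighted $D_{\alpha,z}$ center for $W^{(1)}\otimes W^{(2)}$.

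Finally, evaluating the objective at this center, together with additivity of $D_{\alpha,z}$ on tensor products, gives
\begin{align*}
\chi_{\alpha,z}\bz W^{(1)}\otimes W^{(2)},P^{(1)}\otimes P^{(2)}\jz
&=\sum_{x_1,x_2}P^{(1)}(x_1)P^{(2)}(x_2)D_{\alpha,z}\bz W^{(1)}(x_1)\otimes W^{(2)}(x_2)\|\sigma_1\otimes\sigma_2\jz\\
&=\chi_{\alpha,z}\bz W^{(1)},P^{(1)}\jz+\chi_{\alpha,z}\bz W^{(2)},P^{(2)}\jz,
\end{align*}
matching the subadditivity bound. The only delicate step is the verification that tensor products of fixed points remain fixed points; everything else is bookkeeping. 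No obstacle arises from the support conditions because the hypothesis $(\alpha,z)\in\Gamma_D$ built into Theorem \ref{prop:fixed point characterization} ensures that the restriction to $\ran W(P)^0$ is consistent with taking tensor products.
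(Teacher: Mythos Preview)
Your proof is correct and follows essentially the same approach as the paper: take optimizers $\sigma_i$ for each factor, use Theorem \ref{prop:fixed point characterization} to characterize them as fixed points, verify that $\map_{W^{(1)}\otimes W^{(2)},P^{(1)}\otimes P^{(2)},D_{\alpha,z}}$ factors on $\sigma_1\otimes\sigma_2$, and conclude via the converse direction that the tensor product is an optimizer for the product channel. You have in fact spelled out more detail than the paper does (the factorization argument, the support check, and the final evaluation), all of which is correct.
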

\begin{proof}
Let $\sigma_i$ be a minimizer of \eqref{weighted alpha divrad def} for $(W^{(i)},P^{(i)})$. By Theorem \ref{prop:fixed point characterization}, this means that 
$\map_{W^{(i)},P^{(i)},D_{\alpha,z}}(\sigma_i)=\sigma_i$.
It is easy to see that 
\begin{align*}
\map_{W^{(1)}\otimes W^{(2)},P^{(1)}\otimes P^{(2)},D_{\alpha,z}}(\sigma_1\otimes\sigma_2)
=
\map_{W^{(1)},P^{(1)},D_{\alpha,z}}(\sigma_1)\otimes
\map_{W^{(2)},P^{(2)},D_{\alpha,z}}(\sigma_2)
=
\sigma_1\otimes\sigma_2.
\end{align*}
Hence, again by Proposition \ref{prop:fixed point characterization}, 
$\sigma_1\otimes\sigma_2$ is a minimizer of \eqref{weighted alpha divrad def} for 
$(W^{(1)}\otimes W^{(2)},P^{(1)}\otimes P^{(2)})$. 
This proves the assertion.
\end{proof}

\begin{cor}\label{cor:additivity0}
For any gcq channel $W:\,\X\to\B(\hil)_+$, any $P\in\P_f(\X)$, and any 
pair $(\alpha,z)$ satisfying the conditions in Theorem \ref{prop:fixed point characterization}, we have
\begin{align*}
\chi_{\alpha,z}(W^{\otimes m},P^{\otimes m})
=
m\chi_{\alpha,z}(W,P),\ds\ds\ds m\in\bN.
\end{align*}
\end{cor}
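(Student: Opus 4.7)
The plan is to derive this as a direct consequence of Theorem \ref{thm:additivity} by induction on $m$. Since the conditions in Theorem \ref{prop:fixed point characterization} are conditions on the pair $(\alpha,z)$ only and do not depend on the channel, they automatically continue to hold when we replace $W$ by a tensor power $W^{\otimes k}$. Thus Theorem \ref{thm:additivity} is applicable at every stage of the induction.

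For the base case $m=1$ the equality is trivial. For the inductive step, assuming the identity holds for $m-1$, I would apply Theorem \ref{thm:additivity} to the pair of channels $W^{(1)}:=W^{\otimes (m-1)}$ on $(\X^{m-1}, P^{\otimes(m-1)})$ and $W^{(2)}:=W$ on $(\X, P)$, yielding
\begin{align*}
\chi_{\alpha,z}(W^{\otimes m}, P^{\otimes m})
&= \chi_{\alpha,z}(W^{\otimes(m-1)}, P^{\otimes(m-1)}) + \chi_{\alpha,z}(W,P) \\
&= (m-1)\chi_{\alpha,z}(W,P) + \chi_{\alpha,z}(W,P)
= m\,\chi_{\alpha,z}(W,P),
\end{align*}
where the second equality uses the inductive hypothesis.

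There is essentially no obstacle here beyond checking that the hypotheses of Theorem \ref{thm:additivity} are preserved under iteration, which is immediate. Alternatively, one could bypass the induction entirely and invoke Theorem \ref{thm:additivity} once on the $m$-fold product, since the proof of that theorem proceeds by noting that the fixed-point map $\map_{W,P,D_{\alpha,z}}$ factorises on tensor products, and hence an $m$-fold tensor product $\sigma^{\otimes m}$ of a single-letter minimizer $\sigma$ remains a fixed point of $\map_{W^{\otimes m},P^{\otimes m},D_{\alpha,z}}$, i.e., a minimizer for the $m$-fold product problem by Theorem \ref{prop:fixed point characterization}. Evaluating the objective at $\sigma^{\otimes m}$ then gives $m\,\chi_{\alpha,z}(W,P)$ directly.
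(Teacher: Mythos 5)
Your proof is correct and matches the approach the paper implicitly intends: the corollary is stated without separate proof precisely because it follows by iterating Theorem \ref{thm:additivity}, exactly as in your induction, and your alternative remark that $\sigma^{\otimes m}$ is directly a fixed point of $\map_{W^{\otimes m},P^{\otimes m},D_{\alpha,z}}$ is the same mechanism that underlies the paper's proof of Theorem \ref{thm:additivity} itself.
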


We will need the following special case for the application to classical-quantum channel coding in the next section:
\begin{cor}\label{cor:additivity}
For any gcq channel $W:\,\X\to\B(\hil)_+$, any $P\in\P_f(\X)$, and any $\alpha\in(1/2,+\infty]$,
\begin{align*}
\chi_{\alpha}\nw(W^{\otimes m},P^{\otimes m})
=
m\chi_{\alpha}\nw(W,P),\ds\ds\ds m\in\bN.
\end{align*}
\end{cor}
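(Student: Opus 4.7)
The proof is an immediate specialization of Corollary~\ref{cor:additivity0} with $z=\alpha$, exploiting the identification $D_{\alpha,\alpha}=D_\alpha\nw$ (so that $\chi_{\alpha,\alpha}=\chi_\alpha\nw$). The substantive task is therefore to verify, for every $\alpha\in(1/2,+\infty)$, that the pair $(\alpha,\alpha)$ satisfies the two hypotheses of Theorem~\ref{prop:fixed point characterization}, namely $(\alpha,\alpha)\in\Gamma_D$ and convexity of $D_{\alpha,\alpha}$ in its second argument; and then to handle the boundary value $\alpha=+\infty$ by a separate limiting argument.

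For the membership $(\alpha,\alpha)\in\Gamma_D$, the plan is to invoke the explicit inclusion recorded just after Lemma~\ref{lemma:support2}. When $\alpha\in(1/2,1)$ one has $1-\alpha<1/2<\alpha=z$, placing $(\alpha,\alpha)$ in the set $\{(\alpha,z):\alpha\in(0,1),\,1-\alpha<z\}$; when $\alpha>1$ one has $\max\{\alpha/2,\alpha-1\}\le\alpha=z$, placing it in the set $\{(\alpha,z):\alpha>1,\,z\ge\max\{\alpha/2,\alpha-1\}\}$.

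For convexity in the second argument I would split into the same two subcases. For $\alpha\in[1/2,1)$ the pair $(\alpha,\alpha)$ lies in $K_2$, since $\max\{\alpha,1-\alpha\}=\alpha\le\alpha\le 1$, so Corollary~\ref{cor:az joint convexity} yields joint convexity of $D_{\alpha,\alpha}$ and in particular convexity in the second variable. For $\alpha>1$ the pair lies in $K_6$, since $\max\{\alpha-1,1\}\le\alpha$, and Lemma~\ref{lemma:2nd convexity} supplies the required convexity. These two subcases cover all $\alpha\in(1/2,+\infty)$, so the hypotheses of Theorem~\ref{prop:fixed point characterization} are met and Corollary~\ref{cor:additivity0} delivers the identity $\chi_\alpha\nw(W^{\otimes m},P^{\otimes m})=m\,\chi_\alpha\nw(W,P)$ at every finite $\alpha>1/2$.

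The remaining case $\alpha=+\infty$ cannot be handled directly through Theorem~\ref{prop:fixed point characterization}, since $D_\infty\nw$ is not smooth and the stationarity computation behind Lemma~\ref{lemma:F derivative} breaks down. My plan is to argue by taking the limit $\alpha\to+\infty$ in the finite-$\alpha$ additivity: since $D_\alpha\nw(\rho\|\sigma)$ is monotone non-decreasing in $\alpha$ with pointwise limit $D_\infty\nw(\rho\|\sigma)$, the weighted radii inherit monotonicity in $\alpha$, and passing to the supremum in $\alpha$ on both sides of the finite-$\alpha$ additivity closes the argument provided one has $\sup_{\alpha<+\infty}\chi_\alpha\nw(V,Q)=\chi_\infty\nw(V,Q)$ for $(V,Q)=(W,P)$ and $(W^{\otimes m},P^{\otimes m})$. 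The main obstacle is exactly this last identification, which requires exchanging an infimum over $\sigma$ with a supremum over $\alpha$. I would address it by applying the minimax theorem (Lemma~\ref{lemma:KF+ minimax}) to the family $(\sigma,\alpha)\mapsto\sum_x P(x)\,D_\alpha\nw(W(x)\|\sigma)$ on a suitable compact convex set of states (the state space restricted to the support of $W(P)$), invoking the lower semi-continuity from Lemma~\ref{lsc} for the convexity-in-$\sigma$ hypothesis and the pointwise monotonicity in $\alpha$ for the concavity-in-$\alpha$ hypothesis of the minimax statement.
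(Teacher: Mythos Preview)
Your treatment of finite $\alpha\in(1/2,1)\cup(1,+\infty)$ is correct and is exactly the paper's intended argument: the corollary is stated without proof as an immediate specialization of Corollary~\ref{cor:additivity0} to $z=\alpha$, and your verification that $(\alpha,\alpha)\in\Gamma_D$ and that $D_{\alpha,\alpha}$ is convex in its second variable (via $K_2$ for $\alpha\in(1/2,1)$ and $K_6$ or $K_7$ for $\alpha>1$) is precisely what is needed. One minor omission: your two subcases do not cover $\alpha=1$, and neither does Theorem~\ref{prop:fixed point characterization} (the derivative formula in Lemma~\ref{lemma:F derivative} carries a factor $\frac{1}{\alpha-1}$). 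This case is handled separately and trivially, since the $P$-weighted $D_1$-center is explicitly $W(P)$ (see the discussion around \eqref{Holevo quantity}), and $(W^{\otimes m})(P^{\otimes m})=W(P)^{\otimes m}$.

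Your argument for $\alpha=+\infty$ has a genuine gap: pointwise monotonicity of $\alpha\mapsto D_\alpha^*(\rho\|\sigma)$ does \emph{not} furnish the concavity in $\alpha$ that Lemma~\ref{lemma:KF+ minimax} demands, so the minimax theorem cannot be invoked as you propose. The interchange $\sup_\alpha\inf_\sigma=\inf_\sigma\sup_\alpha$ can instead be obtained by a direct compactness argument: pick for each finite $\alpha$ a minimizer $\sigma_\alpha\in\S(\hil)$ (which exists by Lemma~\ref{lsc}), extract a subsequential limit $\sigma_*$, and use lower semi-continuity in $\sigma$ together with $D_\beta^*\le D_\alpha^*$ for $\beta\le\alpha$ to get $\sum_x P(x)D_\beta^*(W(x)\|\sigma_*)\le\lim_\alpha\chi_\alpha^*(W,P)$ for every finite $\beta$; then let $\beta\to\infty$. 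The paper itself does not spell out the $\alpha=+\infty$ case (and only uses finite $\alpha>1$ in Proposition~\ref{prop:upper reg}), so you are not missing a reference---you just need to replace the minimax justification.
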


\begin{rem}
As far as we are aware, the idea of proving the additivity of an information quantity by characterizing some optimizer state as the fixed point of a non-linear operator on the state space appeared first in \cite{HT14}.
We comment on this in more detail in Appendix \ref{sec:generalized mutual informations}.
\end{rem}

\section{Strong converse exponent with constant composition}
\label{sec:sc}

\subsection{The main result}
\label{sec:main}

Let $W:\,\X\to\S(\hil)$ be a classical-quantum channel.
A \ki{code} $\C_n$ for $n$ uses of the channel is a pair $\C_n=(\E_n,\D_n)$, where 
$\E_n:\,[M_n]\to\X^n$, $\D_n:\,[M_n]\to\B(\hil^{\otimes n})_+$, where 
$|\C_n|:=M_n\in\bN$ is the size of the code, and 
$\D_n$ is a POVM, i.e., $\sum_{i=1}^{M_n}\D_n(i)=I$.
The average success probability of a code $\C_n$ is 
\begin{align*}
P_s(W^{\otimes n},\C_n):=\frac{1}{|\C_n|}\sum_{m=1}^{|\C_n|}\Tr W^{\otimes n}(\E_n(m))\D_n(m).
\end{align*}

A sequence of codes $\C_n=(\E_n,\D_n)$, $n\in\bN$, is called a sequence of \ki{constant composition codes with asymptotic composition $P\in\P_f(\X)$} if there exists a sequence of types
$P_n\in\P_n(\X)$, $n\in\bN$, such that $\lim_{n\to+\infty}\norm{P_n- P}_1=0$, and $\E_n(k)\in \X^n_{P_n}$ for all $k\in\{1,\ldots,|\C_n|\}$, $n\in\bN$.
(See Section \ref{sec:Preliminaries} for the notation and basic facts concerning types.)
For any rate $R\ge 0$,
the strong converse exponents of $W$ with composition constraint $P$ are defined as
\begin{align}
\sci(W,R,P):=\inf\left\{\liminf_{n\to+\infty}-\frac{1}{n}\log P_s(W^{\otimes n},\C_n):\,\liminf_{n\to+\infty}\frac{1}{n}\log|\C_n|\ge R\right\},\label{sci}\\
\scs(W,R,P):=\inf\left\{\limsup_{n\to+\infty}-\frac{1}{n}\log P_s(W^{\otimes n},\C_n):\,\liminf_{n\to+\infty}\frac{1}{n}\log|\C_n|\ge R\right\},\label{scs}
\end{align}
where the infima are taken over code sequences of constant composition $P$. 
We may define the following variants:
\begin{align}
\sci(W,R,P)^*:=\inf\Big\{\liminf_{n\to+\infty}-\frac{1}{n}\log P_s(W^{\otimes n},\C_n):&\,\liminf_{n\to+\infty}\frac{1}{n}\log|\C_n|\ge R,\nonumber\\
&\,\lim_{n\to+\infty}\max_{1\le k\le|\C_n|}\norm{P_{\E_n(k)}-P}_1=0\Big\},\label{sci2}\\
\scs(W,R,P)^*:=\inf\Big\{\limsup_{n\to+\infty}-\frac{1}{n}\log P_s(W^{\otimes n},\C_n):
&\,\liminf_{n\to+\infty}\frac{1}{n}\log|\C_n|\ge R\nonumber\\
&\,\lim_{n\to+\infty}\max_{1\le k\le|\C_n|}\norm{P_{\E_n(k)}-P}_1=0
\Big\}.\label{scs2}
\end{align}
Obviously, we have 
\begin{align*}
\begin{array}{ccc}
\sci(W,R,P)^* &\le& \sci(W,R,P)\\
\mathrm{\vertle} & & \mathrm{\vertle}\\
\scs(W,R,P)^* &\le& \scs(W,R,P).
\end{array}
\end{align*}

Our main result is the following:
\begin{theorem}\label{thm:main result}
For any classical-quantum channel $W$, and finitely supported probability distribution $P$ on the input of $W$, and any rate $R$,
\begin{align}\label{main result}
\sci(W,R,P)^*=\scs(W,R,P)=\sup_{\alpha>1}\frac{\alpha-1}{\alpha}\left[R-\chi_{\alpha}\nw(W,P)\right].
\end{align}
\end{theorem}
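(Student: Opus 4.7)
The plan is to establish the two inequalities separately. Since $\sci \le \scs$ trivially, it suffices to prove the strong converse bound $\sci(W,R,P) \ge \sup_{\alpha>1}\frac{\alpha-1}{\alpha}[R - \chi_\alpha\nw(W,P)]$ and the matching achievability $\scs(W,R,P) \le \sup_{\alpha>1}\frac{\alpha-1}{\alpha}[R - \chi_\alpha\nw(W,P)]$.

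\textbf{Strong converse direction (easier half).} Fix $\alpha>1$, a constant composition code $\C_n=(\E_n,\D_n)$ with codewords of type $P_n\to P$, and let $\sigma_\alpha$ be a $P_n$-weighted $D_\alpha\nw$-center for $W$. The standard variational lower bound $D_\alpha\nw(\rho\|\sigma)\ge\frac{\alpha}{\alpha-1}\log\Tr\rho A-\log\Tr A\sigma$ for $0\le A\le I$, applied to $\rho=W^{\otimes n}(\E_n(m))$, $\sigma=\sigma_\alpha^{\otimes n}$, and $A=\D_n(m)$, combined with tensor additivity (the LHS equals $n\chi_\alpha\nw(W,P_n)$ independently of $m$ by the type constraint), H\"older's inequality in $m$, and the POVM identity $\sum_m\D_n(m)=I$, yields
\begin{equation*}
P_s(W^{\otimes n},\C_n)\le e^{\frac{\alpha-1}{\alpha}n\chi_\alpha\nw(W,P_n)}|\C_n|^{-(\alpha-1)/\alpha}.
\end{equation*}
Taking logarithms, dividing by $n$, passing to $\liminf$ (using continuity of $P\mapsto\chi_\alpha\nw(W,P)$), and taking the sup over $\alpha>1$ gives the converse bound.

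\textbf{Achievability direction (non-trivial half).} The plan proceeds in three steps, refining the strategy of \cite{MO-cqconv} to the constant-composition setting. \emph{Step 1} (Proposition \ref{prop:sc upper}): combine a constant-composition random coding bound (Appendix \ref{sec:random coding exponent}) with a Dueck--K\"orner-type expurgation to force rate $R$, obtaining an initial achievability bound
\begin{equation*}
\scs(W,R,P)\le\sup_{\alpha>1}\frac{\alpha-1}{\alpha}\left[R-\chi_\alpha\bog(W,P)\right],
\end{equation*}
where $\chi_\alpha\bog$ uses the log-Euclidean R\'enyi divergence. This is suboptimal since $D_\alpha\bog\ge D_\alpha\nw$ on non-commuting inputs. \emph{Step 2} (Proposition \ref{prop:upper reg}): apply the $m$-block asymptotic pinching trick. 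Pinch each $W^{\otimes m}(\vecc{x})$ by the universal symmetric state $\sigma_{u,m}$; since $\sigma_{u,m}$ commutes with every symmetric state and is only polynomially far from it, the CPTP pinching map $\F_{\sigma_{u,m}}$ incurs only sub-exponential overhead, and on its (commuting) image the log-Euclidean and sandwiched quantities agree. Applying Step 1 to the pinched channel $\F_{\sigma_{u,m}}\circ W^{\otimes m}$ gives, for every $m$,
\begin{equation*}
\scs(W,R,P)\le\sup_{\alpha>1}\frac{\alpha-1}{\alpha}\left[R-\tfrac{1}{m}\chi_\alpha\nw(W^{\otimes m},P^{\otimes m})\right].
\end{equation*}
\emph{Step 3:} invoke the additivity Corollary \ref{cor:additivity}, $\chi_\alpha\nw(W^{\otimes m},P^{\otimes m})=m\chi_\alpha\nw(W,P)$, to collapse the regularized quantity to the single-letter one.

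\textbf{Main obstacle.} The critical difficulty is bridging the log-Euclidean exponent that random coding naturally delivers to the target sandwiched exponent. Pinching alone only yields a bound in regularized form, and a priori the regularized sandwiched radius could be strictly smaller than the single-letter one, giving a worse exponent. The resolution rests entirely on the additivity of $\chi_\alpha\nw$ with constant composition, which is not obvious because the minimizing state is typically not a product; its proof, in turn, relies on the fixed-point characterization of the $D_\alpha\nw$-center (Theorem \ref{prop:fixed point characterization}) established earlier in the paper. A secondary delicate point is the constant-composition random coding of Step 1: because all codewords must lie in a single type class $\X^n_{P_n}$, the usual i.i.d.\ coding must be replaced by uniform sampling from the type class, complicating the packing analysis.
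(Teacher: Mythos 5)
Your proposal is correct and follows essentially the same strategy as the paper: the converse via a monotonicity-of-$D_\alpha\nw$-under-measurement argument, and the achievability via the three-step chain (log-Euclidean random-coding exponent with Dueck--K\"orner expurgation, asymptotic pinching to pass to regularized sandwiched radii, and additivity from the fixed-point characterization). The only cosmetic difference is in the converse: you instantiate $\sigma$ as a $P_n$-weighted $D_\alpha\nw$-center and invoke upper semicontinuity of $P\mapsto\chi_\alpha\nw(W,P)$ (the paper's Corollary \ref{cor:minfa usc}), whereas the paper first fixes an arbitrary positive definite $\sigma$, passes to the limit in $n$ using continuity of $P\mapsto\sum_x P(x)D_\alpha\nw(W(x)\|\sigma)$, and only then infimizes over $\sigma$ --- but the paper itself notes this alternative in a remark, so the two are effectively identical.
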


We will prove the equality in \eqref{main result} as two separate inequalities in 
Lemma \ref{lemma:sc lower} and Proposition \ref{prop:upper reg}. 

\begin{rem}
According to \eqref{radius limits}, we have 
\begin{align*}
&\lim_{\alpha\searrow 1}\frac{\alpha-1}{\alpha}\left[R-\chi_{\alpha}\nw(W,P)\right]=0
=\frac{\alpha-1}{\alpha}\left[R-\chi_{\alpha}\nw(W,P)\right]\Big\vert_{\alpha=1}
,\\
&\lim_{\alpha\nearrow +\infty}\frac{\alpha-1}{\alpha}\left[R-\chi_{\alpha}\nw(W,P)\right]=
R-\chi_{\infty}\nw(W,P)
=\frac{\alpha-1}{\alpha}\left[R-\chi_{\alpha}\nw(W,P)\right]\Big\vert_{\alpha=+\infty},
\end{align*}
where we use the natural convention $\frac{\alpha-1}{\alpha}\big\vert_{\alpha=+\infty}:=1$.
With these, we may rewrite \eqref{main result} as 
\begin{align*}
\sci(W,R,P)^*=\scs(W,R,P)=\sup_{\alpha\in[1,+\infty]}\frac{\alpha-1}{\alpha}\left[R-\chi_{\alpha}\nw(W,P)\right].
\end{align*}
While this rewriting is trivial, it will be useful when applying minimax theorems in Section 
\ref{sec:sc cost const}.
\end{rem}

The following lower bound follows by a standard argument, due to Nagaoka \cite{N}.
For readers' convenience, we write out the details in Appendix \ref{sec:lower bound}.
Essentially the same argument, following Nagaoka's method, was used already in 
\cite{ChengHansonDattaHsieh2018} to show the slightly weaker bound with 
$\sci(W,R,P)$ in place of $\sci(W,R,P)^*$.
\begin{lemma}\label{lemma:sc lower}
For any $R>0$, 
\begin{align*}
\sup_{\alpha>1}\frac{\alpha-1}{\alpha}\left[R-\minfa\nw(W,P)\right]\le \sci(W,R,P)^*.
\end{align*}
\end{lemma}

\begin{rem}\label{rem:optimality}
Note that in the strong converse problem, one's aim is to make the decay of the success probability as slow as 
possible. Lemma \ref{lemma:sc lower} shows that one cannot find a better (i.e., smaller) exponent than 
the RHS of \eqref{main result}, and hence it is called the optimality part of the strong converse theorem.
Our concern in the rest will be the achievability part, i.e., that the exponent in \eqref{main result} can in fact be attained.
\end{rem}

Thus, our aim in the rest is to show that the second term is upper bounded by the rightmost term in \eqref{main result}.
We will follow the approach of \cite{MO-cqconv}, which in turn was inspired by \cite{DK}. 
A key technical ingredient in this approach is the so-called \ki{dummy channel technique}, first introduced by Haroutunian \cite{Har1968}.
We start with the following:
\begin{prop}\label{prop:sc upper}
For any $R>0$, 
\begin{align}\label{sc upper}
\scs(W,R,P)\le\sup_{\alpha>1}\frac{\alpha-1}{\alpha}\left[R-\minfa\bog(W,P)\right].
\end{align}
\end{prop}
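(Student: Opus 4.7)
Since the stated inequality $\scs(W,R,P) \le \sup_{\alpha>1}\tfrac{\alpha-1}{\alpha}[R - \minfa\bog(W,P)]$ upper-bounds an infimum over code sequences, it is an achievability-type statement: the plan is to construct, for every $\alpha > 1$ and every $\eta > 0$, a constant-composition code sequence $\C_n$ of asymptotic type $P$ with $|\C_n| \ge e^{nR}$ and $\limsup_n -\tfrac{1}{n}\log P_s(W^{\otimes n},\C_n) \le \tfrac{\alpha-1}{\alpha}[R - \minfa\bog(W,P)] + \eta$. The strategy adapts the Dueck--K\"orner construction \cite{DK} for the strong-converse reliability function to the classical-quantum setting, along the lines of \cite{MO-cqconv}. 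The log-Euclidean R\'enyi divergence $\minfa\bog$ enters because the random coding bound driving the construction is proved via a Golden--Thompson / Lie--Trotter step, which for cq channels yields only the log-Euclidean form and not the sharper sandwiched one; the resulting suboptimality is corrected afterwards in Proposition \ref{prop:upper reg} through an asymptotic pinching argument.

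The essential input, established in Appendix \ref{sec:random coding exponent}, is a constant-composition random coding lemma: for every type $P_n \in \P_n(\X)$, every $\beta \in (0,1)$, and every rate $r > 0$, there is a constant-composition code of type $P_n$, size $\gtrsim e^{nr}$, whose average error probability is at most $\exp\{-n\tfrac{1-\beta}{\beta}[\chi_\beta^\flat(W, P_n) - r]\}$ (up to sub-exponential prefactors). Exponentially fast decay is obtained whenever $r < \chi_\beta^\flat(W, P_n)$; the log-Euclidean form is produced by a coordinatewise Golden--Thompson step in the random coding calculation, after which the type-class counting estimate \eqref{type card} enables the constant-composition restriction.

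Given such a lemma, I would construct a code $\C_n$ at the target rate $R$ by combining a good auxiliary code of rate $R' < R$ of the same type with a DK-style padding (adding ``dummy'' codewords of type $P_n$ and adjusting the decoder). The resulting success probability admits an estimate of the schematic form
\begin{align*}
P_s(W^{\otimes n}, \C_n) \;\gtrsim\; e^{-n(R - R')}\bz 1 - \varepsilon_n\jz,
\end{align*}
with $\varepsilon_n$ the error probability of the auxiliary code. Optimizing the free parameters $R'$ and $\beta$ subject to $R' < \chi_\beta^\flat(W, P_n)$, and reparametrizing the optimum through a Legendre-type duality that maps $\beta \in (0,1)$ into $\alpha > 1$, produces the claimed bound $\sup_{\alpha > 1}\tfrac{\alpha - 1}{\alpha}[R - \minfa\bog(W,P)]$. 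Passing from the $n$-types $P_n$ to the asymptotic composition $P$ in the final bound uses the continuity of $P \mapsto \minfa\bog(W,P)$, which holds since $|\supp P|$ is finite and the log-Euclidean divergence is jointly continuous on pairs with shared support.

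The principal technical obstacle is the random coding lemma itself: obtaining a constant-composition cq random coding bound with the correct log-Euclidean exponent requires combining Hayashi's type-class methods from \cite{universalcq,Hayashicq} with a Golden--Thompson estimate applied at every channel use, and then carefully extracting the Gallager-type prefactor. A secondary subtlety is the Legendre-type reparametrization converting the direct random-coding exponent (parameterized by $\beta \in (0,1)$) into the strong-converse exponent (parameterized by $\alpha > 1$), and verifying that the supremum is attained in the precise form $\tfrac{\alpha-1}{\alpha}[R - \minfa\bog(W,P)]$.
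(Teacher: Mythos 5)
Your proposal misses the central idea of the Dueck--K\"orner argument, and as a result it would not reach the stated bound. The crucial ingredient you omit is the change-of-measure via an auxiliary ``dummy'' channel $V$. The paper first proves
\begin{align*}
\scs(W,R,P)\le F_1(W,R,P):=\inf_{V:\,\minf(V,P)>R}D(V\|W|P)
\end{align*}
by picking $V$ with $\minf(V,P)>R$ and $D(V\|W|P)<r$, running a constant-composition code that succeeds for $V^{\otimes n}$ (which is where the random coding lemma enters, and only in its weak form: $P_s(V^{\otimes n},\C_n)\to 1$ since $R<\chi_1(V,P)$), and then transferring the success probability from $V^{\otimes n}$ to $W^{\otimes n}$ at cost $e^{-nr}$ by an information-spectrum estimate (Corollary \ref{cor:infospec limit}, which requires $r>D(V\|W|P)$). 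The padding step you describe is then used only to pass from $F_1$ to $F_2$, and finally one invokes the purely variational identity from \cite{MO-cqconv} that $\min\{F_1,F_2\}=\sup_{\alpha>1}\frac{\alpha-1}{\alpha}[R-\minfa\bog(W,P)]$. Without the optimization over $V$, a padding-alone argument gives at best $\scs(W,R,P)\le R-\chi_1(W,P)$, which is strictly weaker than the claimed bound in general.

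Two further misattributions compound this. First, the constant-composition random coding lemma in Appendix \ref{sec:random coding exponent} produces a Petz-type exponent $\sum_x P_n(x)D_{\alpha}(W(x)\|W(P_n))$ (i.e.~$D_{\alpha,1}$), not a log-Euclidean one; there is no Golden--Thompson or Lie--Trotter step anywhere in that appendix -- the key step is a type-class operator bound $W^{\otimes n}(Q_n)\le\mathrm{poly}(n)\,W(P_n)^{\otimes n}$ followed by operator monotonicity. Second, the log-Euclidean divergence $\minfa\bog$ does not arise from the random coding bound at all; it emerges from the Legendre duality between the variational optimization over $V$ measured by the relative entropy $D(V\|W|P)$ and the R\'enyi family, which is the content of the cited Theorem 5.12 in \cite{MO-cqconv}. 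The ``$\beta\in(0,1)\to\alpha>1$ Legendre reparametrization'' you posit does not close this gap, because the quantity being dualized is not the random coding exponent.
\medskip

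In short: the mechanism you sketch (direct random coding for $W$ plus padding plus a Fenchel transform) cannot yield the log-Euclidean weighted R\'enyi radius; you need the dummy-channel change-of-measure, together with the known variational identity relating $F_1,F_2$ to $\minfa\bog$.
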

\begin{proof}
We will show that 
\begin{align}\label{sc upper2}
\scs(W,R,P)\le\min\{F_1(W,R,P),F_2(W,R,P)\},
\end{align}
where
\begin{align*}
F_1(W,R,P)&:=\inf_{V:\,\minf(V,P)>R}D(V\|W|P),\\
F_2(W,R,P)&:=\inf_{V:\,\minf(V,P)\le R}\left[D(V\|W|P)+R-\minf(V,P)\right].
\end{align*}
Here, the infima are over channels $V:\,\X\to\S(\hil)$ satisfying the indicated properties, and
\begin{align*}
D(V\|W|P):=\sum_{x\in\X}P(x)D(V(x)\|W(x)).
\end{align*}
It was shown in \cite[Theorem 5.12]{MO-cqconv} that the RHS of \eqref{sc upper2} is the same as the RHS of \eqref{sc upper}.

We first show that $\scs(W,R,P)\le F_1(W,R,P)$. To this end, let $r>F_1(W,R,P)$; then, by definition, there exists a channel $V:\,\X\to\S(\hil)$ such that 
\begin{align*}
D(V\|W|P)<r\ds\ds\ds\text{and}\ds\ds\ds
\minf(V,P)>R.
\end{align*}
Due to $\minf(V,P)>R$, Corollary \ref{cor:random coding exponent} yields the existence of a sequence 
of constant composition codes $\C_n$ with composition $P_n$, $n\in\bN$, such that 
$\supp P_n\subseteq\supp P$ for all $n$, $\lim_{n\to+\infty}\norm{P_n-P}_1=0$, 
the rate is lower bounded as $\frac{1}{n}\log|\C_n|\ge R$, $n\in\bN$,
and 
$\lim_{n\to+\infty}P_s(V^{\otimes n},\C_n)=1$.
Note that for any message $k$, 
\begin{align*}
\Tr\bz V^{\otimes n}(\E_n(k))-e^{nr}W^{\otimes n}(\E_n(k))\jz_+
\ge
\Tr\bz V^{\otimes n}(\E_n(k))-e^{nr}W^{\otimes n}(\E_n(k))\jz\D_n(k),
\end{align*}
and hence
\begin{align*}
\Tr W^{\otimes n}(\E_n(k))\D_n(k)
&\ge
e^{-nr}\left[\Tr V^{\otimes n}(\E_n(k))\D_n(k)
-
\Tr\bz V^{\otimes n}(\E_n(k))-e^{nr}W^{\otimes n}(\E_n(k))\jz_+
\right],
\end{align*}
This in turn yields, by averaging over $k$, that
\begin{align*}
P_s(W^{\otimes n},\C_n)&\ge
e^{-nr}\left[P_s(V^{\otimes n},\C_n)
-
\frac{1}{|\C_n|}\sum_{k=1}^{|\C_n|}\Tr\bz V^{\otimes n}(\E_n(k))-e^{nr}W^{\otimes n}(\E_n(k))\jz_+
\right]\\
&=
e^{-nr}\left[P_s(V^{\otimes n},\C_n)
-
\Tr\bz V^{\otimes n}(\vecc{x}^{(n)})-e^{nr}W^{\otimes n}(\vecc{x}^{(n)})\jz_+
\right],
\end{align*}
where $\vecc{x}^{(n)}$ is any sequence in $\X^n$ with type $P_n$. Since 
$D(V\|W|P)<r$, Corollary \ref{cor:infospec limit} yields that 
$\lim_{n\to+\infty}\Tr\bz V^{\otimes n}(\vecc{x}^{(n)})-e^{nr}W^{\otimes n}(\vecc{x}^{(n)})\jz_+=0$, and so finally
\begin{align*}
\liminf_{n\to+\infty}\frac{1}{n}\log P_s(W^{\otimes n},\C_n)&\ge
-r,\ds\ds\text{whence}\ds\ds
\scs(W,R,P)\le r.
\end{align*}
Since this holds for any $r>F_1(W,R,P)$, we get $\scs(W,R,P)\le F_1(W,R,P)$.

From this, one can prove that also $\scs(W,R,P)\le F_2(W,R,P)$, the same way as it was done in 
\cite[Lemma 5.11]{MO-cqconv} (which in turn followed the proof in \cite[Lemma 2]{DK}); one only has to make sure that the extension of the code can be done in a way that it remains constant composition with composition $P$,
but that is easy to verify.
\end{proof}
\medskip

From the above result, we can obtain the desired upper bound.

\begin{prop}\label{prop:upper reg}
For any $R>0$, 
\begin{align}\label{sc upper reg}
\scs(W,R,P)\le
\sup_{\alpha>1}\frac{\alpha-1}{\alpha}\left[R-\minfa\nw(W,P)\right].
\end{align}
\end{prop}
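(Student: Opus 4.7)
The plan is to combine Proposition \ref{prop:sc upper} with a blocking argument to obtain an upper bound involving $\tfrac{1}{n}\minfa\bog(W^{\otimes n},P^{\otimes n})$, then use the asymptotic pinching technique of \cite{MO-cqconv} to convert this into a bound involving $\tfrac{1}{n}\minfa\nw(W^{\otimes n},P^{\otimes n})$, and finally invoke the additivity property (Corollary \ref{cor:additivity}) to replace the latter by $\minfa\nw(W,P)$.

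As a first step, regrouping $kn$ channel uses into $k$ uses of the product channel $W^{\otimes n}$ turns any constant composition code for $W$ at rate $R$ with composition $P$ into a constant composition code for $W^{\otimes n}$ at rate $nR$ with composition converging to $P^{\otimes n}$, so that
\begin{align*}
\scs(W,R,P)\le \tfrac{1}{n}\scs(W^{\otimes n},nR,P^{\otimes n}),\ds\ds n\in\bN.
\end{align*}
Combining this with Proposition \ref{prop:sc upper} applied to $(W^{\otimes n},nR,P^{\otimes n})$ then gives, for every $n\in\bN$,
\begin{align*}
\scs(W,R,P)\le \sup_{\alpha>1}\tfrac{\alpha-1}{\alpha}\Bigl[R-\tfrac{1}{n}\minfa\bog(W^{\otimes n},P^{\otimes n})\Bigr].
\end{align*}

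The heart of the argument is the asymptotic pinching identity
\begin{align*}
\lim_{n\to\infty}\tfrac{1}{n}\minfa\bog(W^{\otimes n},P^{\otimes n})=\minfa\nw(W,P),\ds\ds\alpha>1,
\end{align*}
which I would establish following \cite{MO-cqconv}: by convexity and permutation-invariance of the minimization, an optimal $\sigma^{(n)}$ for $\minfa\nw(W^{\otimes n},P^{\otimes n})$ may be chosen symmetric and hence commutes with the universal symmetric state $\sigma_{u,n}$ of Section \ref{sec:Preliminaries}. The pinching inequality $\omega\le v_{n,d}\sigma_{u,n}$, with $v_{n,d}$ polynomial in $n$, then allows one to trade the gap between $D_\alpha\bog$ and $D_\alpha\nw$ for an additive $O(\log n)$ correction; on the pinched outputs (which commute with $\sigma_{u,n}$ and, jointly with a suitably refined decomposition, with $\sigma^{(n)}$) these divergences coincide. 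Dividing by $n$ lets the $O((\log n)/n)$ correction vanish, and combining with the additivity $\tfrac{1}{n}\minfa\nw(W^{\otimes n},P^{\otimes n})=\minfa\nw(W,P)$ from Corollary \ref{cor:additivity} produces the claimed identity.

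Finally, one exchanges $\sup_{\alpha>1}$ and $\lim_{n\to\infty}$ to conclude \eqref{sc upper reg}. When $R\le\minf_1(W,P)$ both sides vanish and there is nothing to prove; for $R>\minf_1(W,P)$ the target supremum in \eqref{sc upper reg} is attained on a compact subinterval $(1,M]\subset(1,\infty)$, on which the asymptotic pinching convergence can be made uniform in $\alpha$, whence
\begin{align*}
\sup_{\alpha\in(1,M]}\tfrac{\alpha-1}{\alpha}\Bigl[R-\tfrac{1}{n}\minfa\bog(W^{\otimes n},P^{\otimes n})\Bigr]\xrightarrow[n\to\infty]{}\sup_{\alpha\in(1,M]}\tfrac{\alpha-1}{\alpha}[R-\minfa\nw(W,P)].
\end{align*}
The main obstacle is precisely the uniformity-in-$\alpha$ of the asymptotic pinching estimate on this compact range; once that is in place, the blocking, additivity, and limit-exchange steps themselves are essentially formal.
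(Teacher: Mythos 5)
The proposal misplaces the pinching step, and as a result rests on an identity that is false for $\alpha>1$. You apply Proposition \ref{prop:sc upper} to the unpinched product channel $W^{\otimes n}$ and then invoke ``$\lim_{n\to\infty}\tfrac{1}{n}\minfa\bog(W^{\otimes n},P^{\otimes n})=\minfa\nw(W,P)$''. But by the Araki--Lieb--Thirring monotonicity \eqref{mon in z}, $Q_{\alpha,z}$ is nonincreasing in $z$, so for $\alpha>1$ one has $D_\alpha\bog\le D_\alpha\nw$ and hence $\minfa\bog(W,P)\le\minfa\nw(W,P)$, with strict inequality in general for non-commuting channel outputs. Moreover $D_\alpha\bog$ is additive under tensor products (clear from the Lie--Trotter form \eqref{exp}), so restricting the infimum in the definition of $\minfa\bog(W^{\otimes n},P^{\otimes n})$ to product states $\sigma^{\otimes n}$ gives $\minfa\bog(W^{\otimes n},P^{\otimes n})\le n\,\minfa\bog(W,P)$. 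Thus $\limsup_n\tfrac{1}{n}\minfa\bog(W^{\otimes n},P^{\otimes n})\le\minfa\bog(W,P)$, which is generically strictly below $\minfa\nw(W,P)$; the bound you obtain from Proposition \ref{prop:sc upper} applied to $W^{\otimes n}$ is therefore weaker than the target \eqref{sc upper reg} and cannot close. You do gesture at pinching, but pinching the outputs must happen \emph{before} Proposition \ref{prop:sc upper} is invoked — one must apply it to the pinched channel $W_m:=\pin_m W^{\otimes m}$ (with rate $mR$, composition $P^{\otimes m}$), not to $W^{\otimes m}$. Once the bound is stated in terms of $\minfa\bog(W^{\otimes n},P^{\otimes n})$ there is no way to ``pinch inside'' that quantity, because $D_\alpha\bog$ is not monotone under CPTP maps.

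The paper's route applies Proposition \ref{prop:sc upper} to $W_m$, obtaining $\minfa\bog(W_m,P^{\otimes m})$, for which the estimate $\minfa\bog(W_m,P^{\otimes m})\ge\minfa\nw(W^{\otimes m},P^{\otimes m})-3\log v_{m,d}$ (from [Lemma 4.10, MO-cqconv]) goes in the useful direction; this is exactly what fails for the unpinched $W^{\otimes m}$. One then still has to convert the constant composition codes for $W_m$ into codes for $W^{\otimes n}$ (pinch the POVM, pad between $km$ and $(k+1)m$, and check that the composition converges to $P$ — a non-trivial piece of bookkeeping your ``regrouping'' step elides). Finally, the uniformity-in-$\alpha$ obstacle you flag as the main difficulty does not arise in the paper's argument: replacing $\tfrac{1}{m}\minfa\nw(W^{\otimes m},P^{\otimes m})$ by $\inf_{m'}\tfrac{1}{m'}\minfa\nw(W^{\otimes m'},P^{\otimes m'})$ inside the bracket makes the leading term $m$-independent, so letting $m\to\infty$ only needs to kill the $\alpha$-independent error $3(\log v_{m,d})/m$, after which Corollary \ref{cor:additivity} identifies the infimum with $\minfa\nw(W,P)$ — no uniform convergence is needed.
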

\begin{proof}
We employ the asymptotic pinching technique from \cite{MO-cqconv}. Let
$W_m:\,\X^m\to\S(\hil^{\otimes m})$ be defined as
\begin{align*}
W_m(\vecc{x}):=\pin_m W^{\otimes m}(\vecc{x}),
\end{align*}
where $\pin_m$ is the pinching by the universal symmetric state $\sigma_{u,m}$, introduced in Section \ref{sec:Preliminaries}. Employing Proposition 
\ref{prop:sc upper} with $W\mapsto W_m$, $R\mapsto Rm$ and $P\mapsto P^{\otimes m}$, we get that for any $R>0$, there exists a sequence of codes $\C^{(m)}_k=(\E^{(m)}_k,\D^{(m)}_k)$ with constant composition $P^{(m)}_k\in\P_k(\X^n)$, $k\in\bN$, such that 
$\frac{1}{k}\log|\C^{(m)}_k|\ge mR$ for all $k$, 
$\lim_{k\to+\infty}\norm{P^{(m)}_k-P^{\otimes m}}_1=0$, 
and 
\begin{align*}
\limsup_{k\to+\infty}-\frac{1}{k}\log P_s(W_m^{\otimes k},\C^{(m)}_k)
\le
\sup_{\alpha>1}\frac{\alpha-1}{\alpha}\left[mR-\minfa\bog(W_m,P^{\otimes m})\right].
\end{align*}

For every $k\in\bN$, define $\C_{km}:=(\E^{(m)}_k,\pin_m\D^{(m)}_k)$, which can be considered a code
for $W^{\otimes km}$, with the natural identifications 
$(\X^m)^k\equiv\X^{km}$ and $(\hil^{\otimes m})^{\otimes k}\equiv\hil^{\otimes km}$. For a general $n\in\bN$, choose $k\in\bN$ such that $km\le n< (k+1)m$, and for every 
$i=1,\ldots,|\C^{(m)}_k|$, define $\E_n(i)$ to be $\E_{km}(i)$ concatenated with $n-km$ copies of some fixed 
$x_0\in\supp P$, independent of $i$ and $n$, and let $\D_n(i):=\D_{km}(i)\otimes I_{\hil}^{\otimes (n-km)}$.
Then it is easy to see that 
\begin{align*}
\liminf_{n\to+\infty}\frac{1}{n}\log|\C_n|\ge R,
\end{align*}
and 
\begin{align}\label{sc upper4}
\limsup_{n\to+\infty}-\frac{1}{n}\log P_s(W^{\otimes n},\C_n)
\le
\sup_{\alpha>1}\frac{\alpha-1}{\alpha}\left[R-\frac{1}{m}\minfa\bog(W_m,P^{\otimes m})\right].
\end{align}

We need to show that the above sequence of codes is of constant composition $P$.
Let ${\bf x}=(\vecc{x}_1,\ldots,\vecc{x}_k)\in(\X^m)^k\equiv\X^{km}$ be a codeword for $\C^{(m)}_k$, and let $P^{(m)}_{\bf{x}}$
and $P_{\bf{x}}$ denote the corresponding types when $\bf{x}$ is considered as an element of
$(\X^m)^k$ and of $\X^{km}$, respectively. For any $a\in\X$,
\begin{align*}
P_{{\bf x}}(a)&=\frac{1}{km}\sum_{\vecc{x}\in\X^m}\#\{i:\,{\bf x}_i=\vecc{x}\}\cdot\#\{j:\,x_j=a\}
=
\sum_{\vecc{x}\in\X^m}P^{(m)}_{{\bf x}}(\vecc{x})P_{\vecc{x}}(a)
=
\sum_{\vecc{x}\in\X^m}P^{(m)}_k(\vecc{x})P_{\vecc{x}}(a)
\end{align*}
only depends on ${\bf x}$ through its type $P^{(m)}_{{\bf x}}=P^{(m)}_k$, which is independent of ${\bf x}$. Thus, the type of $\E_{km}(i)$ is independent of $i$, i.e., $\C_{km}$ is a constant composition code for every 
$k\in\bN$. For a general $n\in\bN$ with $km\le n<(k+1)m$, we have
\begin{align*}
P_{\E_n(i)}(a)=\frac{km}{n}P_{\E_{km}(i)}(a)+\delta_{a,x_0}\frac{n-km}{n},\ds\ds\ds
i\in\{1,\ldots,|\C_n|=|\C_{km}|\},
\end{align*}
and hence $\C_n$ is also of constant composition.

Next, we show that $\lim_{n\to+\infty}\norm{P_n-P}_1=0$, where $P_n$ is the type of $\C_n$. For $km\le n<(k+1)m$, we have 
\begin{align*}
\sum_{a\in\X}|P_n(a)-P(a)|
&\le
\sum_{a\in\X}|P_n(a)-P_{km}(a)|+\sum_{a\in\X}|P_{km}(a)-P(a)|,
\end{align*}
and
\begin{align*}
\sum_{a\in\X}|P_n(a)-P_{km}(a)|
&=
\sum_{a\in\X}\bz 1-\frac{km}{n}\jz P_{km}(a)+\bz 1-\frac{km}{n}\jz=
2\bz 1-\frac{km}{n}\jz\to 0
\end{align*}
as $k\to+\infty$. For the second term, we get
\begin{align*}
\sum_{a\in\X}|P_{km}(a)-P(a)|
&=
\sum_{a\in\X}
\abs{
\sum_{\vecc{x}\in\X^m}P^{(m)}_k(\vecc{x})P_{\vecc{x}}(a)
-
\sum_{\vecc{x}\in\X^m}P^{\otimes m}(\vecc{x})P_{\vecc{x}}(a)
}\\
&\le
\sum_{\vecc{x}\in\X^m}\abs{P^{(m)}_k(\vecc{x})-P^{\otimes m}(\vecc{x})}
\sum_{a\in\X}P_{\vecc{x}}(a)
=\norm{P^{(m)}_k-P^{\otimes m}}_1,
\end{align*}
where in the first identity we used \eqref{lemma:type lemma},
and the last expression goes to $0$ as $k\to+\infty$ by assumption.

Since we have established that the codes used in \eqref{sc upper4} are of constant composition $P$, we get that 
for any $m\in\bN$, 
\begin{align}
\scs(W,R,P)\le\sup_{\alpha>1}\frac{\alpha-1}{\alpha}\left[R-\frac{1}{m}\minfa\bog(W_m,P^{\otimes m})\right].
\end{align}
According to \cite[Lemma 4.10]{MO-cqconv}, 
$\minfa\bog(W_m,P^{\otimes m})\ge\minfa\nw(W^{\otimes m},P^{\otimes m})-3\log v_{m,d}$, and hence
\begin{align}
\scs(W,R,P)
&\le\sup_{\alpha>1}\frac{\alpha-1}{\alpha}\left[R-\frac{1}{m}\minfa\nw(W^{\otimes m},P^{\otimes m})\right]+3\frac{\log v_{m,d}}{m}\\
&\le\sup_{\alpha>1}\frac{\alpha-1}{\alpha}\left[R-\inf_{m\in\bN}\frac{1}{m}\minfa\nw(W^{\otimes m},P^{\otimes m})\right]+3\frac{\log v_{m,d}}{m}
\end{align}
for every $m\in\bN$, from which 
\begin{align}\label{sc exponent multiletter}
\scs(W,R,P)
&\le\sup_{\alpha>1}\frac{\alpha-1}{\alpha}\left[R-\inf_{m\in\bN}\frac{1}{m}\minfa\nw(W^{\otimes m},P^{\otimes m})\right].
\end{align}
Finally, Corollary \ref{cor:additivity} yields the desired bound \eqref{sc upper reg}.
\end{proof}

\begin{rem}\label{rem:new ingredients}
The proofs of Propositions \ref{prop:sc upper} and \ref{prop:upper reg} follow closely the proof of
the composition constraint-free version in \cite{MO-cqconv}, the main ingredients of which are a suitable adaptation of the 
result by Dueck and K\"orner \cite{DK} to the classical-quantum setting, the expression of the Dueck-K\"orner exponent in terms of the log-Euclidean R\'enyi capacities \cite[Theorem 5.12]{MO-cqconv}, and the transition from the 
log-Euclidean R\'enyi capacities to the sandwiched R\'enyi capacities using the asymptotic pinching technique
\cite[Theorem 5.14]{MO-cqconv}.
Making the proof work for constant composition codes requires some new technical ingredients, like 
the constant composition version of the random coding exponent (see Appendix \ref{sec:random coding exponent} for 
a discussion)
or the evaluation of the non-i.i.d.~information spectrum quantity in Appendix \ref{sec:infospectrum}.
Probably the most important difference between the two proofs, though, is that different additivity results are 
used to arrive at single-letter expressions at the end of the proofs. 
Indeed, at the end of the proof of \cite[Theorem 5.14]{MO-cqconv} we utilized that the weighted sandwiched
$\alpha$-radius and the sandwiched $\alpha$-mutual information yield the same $\alpha$-capacity after 
optimization over the input distributions (see Corollary \ref{cor:alpha-z capacities}), and the known additivity 
of the sandwiched $\alpha$-mutual information, 
$I_{\alpha,\alpha}(W^{(1)}\otimes W^{(2)},P^{(1)}\otimes P^{(2)})
=
I_{\alpha,\alpha}(W^{(1)},P^{(1)})
+
I_{\alpha,\alpha}(W^{(2)},P^{(2)})$, $\alpha>1$,
\cite[Theorem 11]{Beigi}.
For a fixed input distribution, however, this transition from the weighted divergence radius to the mutual 
information is not possible, and therefore we needed a new additivity result for the former quantities, which we established in Section \ref{sec:additivity}.
\end{rem}

\begin{rem}\label{rem:min success}
We have defined the strong converse exponent, and stated the main result, Theorem \ref{thm:main result}, using the average success probability. By the standard argument of throwing away the worse half of any code, it can be seen immediately that Theorem \ref{thm:main result} holds unchanged if the strong converse exponent is defined using the 
worst case (minimal over all messages) success probability.
\end{rem}

\begin{rem}\label{rem:direct exponent}
Similarly to the strong converse exponents, one can define the direct exponents as
\begin{align}
\di(W,R,P):=\sup\left\{\liminf_{n\to+\infty}-\frac{1}{n}\log (1-P_s(W^{\otimes n},\C_n)):\,\liminf_{n\to+\infty}\frac{1}{n}\log|\C_n|\ge R\right\},\label{direct exp def1}\\
\dsup(W,R,P):=\sup\left\{\limsup_{n\to+\infty}-\frac{1}{n}\log (1-P_s(W^{\otimes n},\C_n)):\,\liminf_{n\to+\infty}\frac{1}{n}\log|\C_n|\ge R\right\},\label{direct exp def2}
\end{align}
where the suprema are taken over code sequences of constant composition $P$.
The following, so-called sphere packing bound has been shown by Dalai and Winter in \cite{DW2015}:
\begin{align}\label{sp bound}
\dsup(W,R,P)\le\sup_{0<\alpha<1}\frac{\alpha-1}{\alpha}\left[R-\chi_{\alpha}(W,P)\right].
\end{align}
Note that the right-hand sides of \eqref{main result} and \eqref{sp bound} are very similar to each other, except that the range of optimization is $\alpha>1$ in the former and $\alpha\in (0,1)$ in the latter, 
and the weighted R\'enyi radii corresponding to the sandwiched R\'enyi divergences appear in the former, and 
to the Petz-type R\'enyi divergences in the latter. 
Also, while 
\eqref{main result} holds for any $R>0$ (and is non-trivial for $R>\chi_1(W,P)$), 
it is known that \eqref{sp bound} holds as an equality only for high enough rates (and is non-trivial for 
$R<\chi_1(W,P)$) for classical channels, and it is a long-standing open problem if the same equality is true for classical-quantum channels.
More refined bounds on the direct exponent with improved sub-exponential corrections were obtained recently 
by Cheng, Hsieh and Tomamichel in \cite{ChengHsiehTomamichel2019}.
\end{rem}

\subsection{Strong converse exponent for classical-quantum channel coding with cost constraint}
\label{sec:sc cost const}

Assume now that using an input $x$ has some cost $\cost(x)\in\bR$. The average cost 
of an input sequence $\vecc{x}\in\X^n$ per channel use is then 
\begin{align*}
\cost(\vecc{x}):=\frac{1}{n}\sum_{k=1}^n \cost(x_k),
\end{align*}
and the (worst-case) cost of a code $\C_n=(\E_n,\D_n)$ for $n$ uses of the channel is 
\begin{align*}
\cost(\C_n):=\max_{1\le m\le|\C_n|}\cost(\E_n(m)).
\end{align*}

In the problem of \ki{classical-quantum channel coding with cost constraint}, one is seeking to determine
the trade-off between the coding rate and the error asymptotics when only codes 
with a fixed upper bound on their cost are allowed. 
The direct and strong converse exponents may be defined analogously to \eqref{direct exp def1}--\eqref{direct exp def2} and \eqref{sci}--\eqref{scs2}. In particular, the strong converse exponents are 
\begin{align}
\sci_{\cost<\costt}(W,R):=\inf\left\{\liminf_{n\to+\infty}-\frac{1}{n}\log P_s(W^{\otimes n},\C_n):\,
\liminf_{n\to+\infty}\frac{1}{n}\log|\C_n|\ge R,\,\limsup_n \cost(\C_n)< \costt\right\},\\
\scs_{\cost<\costt}(W,R):=\inf\left\{\limsup_{n\to+\infty}-\frac{1}{n}\log P_s(W^{\otimes n},\C_n):\,\liminf_{n\to+\infty}\frac{1}{n}\log|\C_n|\ge R,\,\limsup_n \cost(\C_n)< \costt\right\}.
\end{align}
Lower bounds on the 
direct and the strong converse exponents were given in Lemma 4.3 and Lemma 4.4 in \cite{Hayashibook2},
in terms of the Petz-type R\'enyi mutual informations $I_{\alpha,1}(P)$ (see Section \ref{sec:generalized mutual informations} for definitions).

The exact strong converse exponent can be obtained from Theorem \ref{thm:main result} using the simple observation that for any $\vecc{x}\in\X^n$,
\begin{align*}
\cost(\vecc{x})=\sum_{x\in\X}P_{\vecc{x}}(x)\cost(x)=\Exp_{P_{\vecc{x}}}(\cost),
\end{align*}
where $P_{\vecc{x}}$ is the type of $\vecc{x}$ (see Section \ref{sec:Preliminaries}). 
Let us introduce 
\begin{align*}
P_{f,\cost<\costt}(\X):=\{P\in\P_f(\X):\,\Exp_p(\cost)<\costt\}.
\end{align*}
Similarly to the constant composition case, the following lower bound follows by a straightforward application of Nagaoka's method:
\begin{lemma}\label{lemma:sc optimality cost}
In the above setting,
\begin{align}\label{sc optimality cost}
\sci_{\cost<\costt}(W,R)\ge 
\sup_{\alpha\in[1,+\infty]}\frac{\alpha-1}{\alpha}\left[R-\sup_{P\in\P_{f,\cost<\costt}(\X)}
\chi_{\alpha}\nw(W,P)\right].
\end{align}
\end{lemma}
We give the proof in Appendix \ref{sec:lower bound}.
\medskip

Note that, with the change of variables $u:=\frac{\alpha-1}{\alpha}$, the RHS in \eqref{sc optimality cost} can be rewritten as 
\begin{align}
\sup_{\alpha\in[1,+\infty]}\frac{\alpha-1}{\alpha}\left[R-\sup_{P\in\P_{f,\cost<\costt}(\X)}
\chi_{\alpha}\nw(W,P)\right]
=
\sup_{0\le u\le 1}\inf_{P\in\P_{f,\cost<\costt}(\X)}\{uR-f(P,u)\},
\end{align}
where 
\begin{align}\label{aux funct2}
f(P,u):=u\chi_{\frac{1}{1-u}}\nw(W,P),\ds\ds\ds u\in[0,1],\ds P\in\P_f(\X);
\end{align}
the case $u=1$ is to be interpreted as $\lim_{u\searrow 1}u\chi_{\frac{1}{1-u}}\nw(W,P)=\chi_{\infty}\nw(W,P)$.
It is clear that $f$ is a concave function of $P$ for any fixed $u$.
A highly non-trivial counterpart, proved very recently in 
\cite{Cheng-Li-Hsieh2018}, is the following:
\begin{lemma}\label{lemma:aux conv}
For any fixed $P\in\P_f(\X)$, $f(P,.)$ is convex on $[0,1]$.
\end{lemma}

\begin{cor}\label{cor:cost const minimax}
For any cost function $\cost:\,\X\to\bR$, and any $\costt\in\bR$, 
\begin{align}\label{cost const minimax}
\sup_{\alpha\in[1,+\infty]}\frac{\alpha-1}{\alpha}\left[R-\sup_{P\in\P_{f,\cost<\costt}(\X)}
\chi_{\alpha}\nw(W,P)\right]
=
\inf_{P\in\P_{f,\cost<\costt}(\X)}\sup_{\alpha\in[1,+\infty]}\frac{\alpha-1}{\alpha}\left[R-
\chi_{\alpha}\nw(W,P)\right].
\end{align}
\end{cor}
\begin{proof}
It is clear that the function $h(u,P):= uR-f(P,u)$ is convex in $P$ for any fixed $u$. 
For any fixed $P$, $h(.,P)$ is clearly continuous, and, 
by Lemma \ref{lemma:aux conv}, it is concave on $[0,1]$. Thus, by Lemma \ref{lemma:KF+ minimax},
\begin{align}\label{minimax3}
\inf_{P\in\P_{f,\cost<\costt}(\X)}\sup_{0\le u\le 1}\{uR-f(P,u)\}=
\sup_{0\le u\le 1}\inf_{P\in\P_{f,\cost<\costt}(\X)}\{uR-f(P,u)\}.
\end{align}
Changing the variable $u$ to $\alpha:=1/(1-u)$ yields \eqref{cost const minimax}.
\end{proof}

The exact strong converse exponent with cost constraint is given by the following:

\begin{theorem}\label{thm:sc cost const}
In the above setting,
\begin{align}\label{cost const sc}
\sci_{\cost<\costt}(W,R)=\scs_{\cost<\costt}(W,R)=
\sup_{\alpha\in[1,+\infty]}\frac{\alpha-1}{\alpha}\left[R-\sup_{P\in\P_{f,\cost<\costt}(\X)}\chi_{\alpha}\nw(W,P)\right].
\end{align}
\end{theorem}
\begin{proof}
By Lemma \ref{lemma:sc optimality cost}, it is sufficient to prove that $\scs_{\cost<\costt}(W,R)$ is upper bounded by the RHS of 
\eqref{cost const sc}. Let $P\in\P_f(\X)$ by such that $\Exp_P(\cost)<\costt$. 
By Proposition \ref{prop:upper reg}, there exists a sequence of constant composition codes $(\C_n)_{n\in\bN}$ 
with asymptotic composition $P$ such that 
\begin{align*}
\limsup_n-\frac{1}{n}\log P_s(W^{\otimes n},\C_n)\le
\sup_{\alpha\in[1,+\infty]}\frac{\alpha-1}{\alpha}\left[R-\chi_{\alpha}\nw(W,P)\right].
\end{align*}
Moreover, this code sequence can be chosen so that $P_n:=P_{\E_n(k)}$, $k=1,\ldots,|\C_n|$, satisfies 
$\supp P_n\subseteq \supp P$, $n\in\bN$, (see Corollary \ref{cor:random coding exponent}). 
Since $\lim_n\norm{P_n-P}_1=0$, we have $\cost(\C_n)=\Exp_{P_n}(\cost)<\costt$ for all large enough $n$. 
Thus, 
\begin{align}
\scs_{\cost<\costt}(W,R)
\le
\sup_{\alpha\in[1,+\infty]}\frac{\alpha-1}{\alpha}\left[R-\chi_{\alpha}\nw(W,P)\right].
\end{align}
Since this holds for every $P\in\P_f(\X)$ with $\Exp_P(\cost)<\costt$, we finally have
\begin{align*}
\scs_{\cost<\costt}(W,R)
&\le
\inf_{\P_{f,\cost<\costt}}\sup_{\alpha\in[1,+\infty]}\frac{\alpha-1}{\alpha}\left[R-\chi_{\alpha}\nw(W,P)\right]
=
\sup_{\alpha\in[1,+\infty]}\frac{\alpha-1}{\alpha}\left[R-\sup_{P\in\P_{f,\cost<\costt}(\X)}\chi_{\alpha}\nw(W,P)\right],
\end{align*}
where the equality is due to Corollary \ref{cost const minimax}.
\end{proof}
\medskip

As it has been shown in \cite{MO-cqconv}, the strong converse exponent of a cq channel $W$ with unconstrained 
coding is given by 
\begin{align}\label{sc constraint free}
\mathrm{sc}(W,R):=\sci(W,R)=\scs(W,R)=\sup_{\alpha>1}\frac{\alpha-1}{\alpha}\left[R-\chi_{\alpha}\nw(W)\right]
\end{align}
for any $R>0$, where $\chi_{\alpha}\nw(W)=\sup_{P\in\P_f(\X)}\chi_{\alpha}\nw(W,P)$, and 
$\sci(W,R)$ and $\scs(W,R)$ are defined analogously to \eqref{sci}--\eqref{scs} by dropping the constant composition constraint. 
 It is natural to ask whether this optimal value can be achieved, or at least arbitrarily well approximated, by constant composition codes, i.e., whether we have
\begin{align}\label{constant approximation}
\inf_{P\in\P_f(\X)}\sc(W,R,P)=\sc(W,R).
\end{align}
In view of Theorem \ref{thm:main result}, this is equivalent to whether
\begin{align}\label{minimax1}
\inf_{P\in\P_f(\X)}\sup_{\alpha\in[1,+\infty]}\frac{\alpha-1}{\alpha}\left[R-\chi_{\alpha}\nw(W,P)\right]
=
\sup_{\alpha\in[1,+\infty]}\inf_{P\in\P_f(\X)}\frac{\alpha-1}{\alpha}\left[R-\chi_{\alpha}\nw(W,P)\right].
\end{align}
The answer to this question is affirmative: Indeed, by choosing a constant cost function 
$\cost\equiv\cost_0$, and any $\costt>\cost_0$, \eqref{minimax1} follows as a special case of
Corollary \ref{cor:cost const minimax}. 
In fact, \eqref{sc constraint free} follows as a special case of
Theorem \ref{thm:sc cost const} with the above trivial choice of $c$ and $\gamma$.

\begin{rem}
The identity \eqref{minimax1} was also stated in \cite{Cheng-Li-Hsieh2018}, although only as a formal 
identity, as the operational interpretation of 
$\sup_{\alpha\in[1,+\infty]}\frac{\alpha-1}{\alpha}\left[R-\chi_{\alpha}\nw(W,P)\right]$, (i.e., 
Theorem \ref{thm:main result}), and hence the equivalence of 
\eqref{minimax1} and \eqref{constant approximation},
 had not yet been known then.
\end{rem}

\subsection{Generalized cutoff rates for constant compositions and for cost constraint}

The convexity of $f(P,.)$, defined in \eqref{aux funct2}, also plays an important role in establishing the weighted sandwiched 
R\'enyi divergence radii as generalized cutoff rates in the sense of Csisz\'ar \cite{Csiszar}.
Following \cite{Csiszar}, for a fixed $\kappa>0$, 
we define the 
generalized $\kappa$-cutoff rate $C_{\kappa}(W,P)$ for a cq channel $W$ and input distribution $P$ as
the smallest number $R_0$ satisfying 
\begin{align}\label{cutoff def}
\sci(W,R,P)\ge \kappa(R-R_0),\ds\ds\ds R>0.
\end{align}
The following extends the analogous result for classical channels in \cite{Csiszar} to classical-quantum channels,
and gives a direct operational interpretation of the weighted sandwiched R\'enyi divergence radius of a cq channel as a generalized cutoff rate.

\begin{prop}\label{prop:cutoff}
For any $\kappa\in(0,1)$, 
\begin{align*}
C_{\kappa}(W,P)=\chi\nw_{\frac{1}{1-\kappa}}(W,P),
\end{align*}
or equivalently, for any $\alpha>1$, 
\begin{align*}
\chi\nw_{\alpha}(W,P)=C_{\frac{\alpha-1}{\alpha}}(W,P).
\end{align*}
\end{prop}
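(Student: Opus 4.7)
The plan is to interpret the cutoff rate condition through Fenchel-Legendre duality. Using the substitution $u := (\alpha-1)/\alpha \in (0,1)$, so that $\alpha = 1/(1-u)$, Theorem~\ref{thm:main result} reads
\begin{align*}
\sci(W,R,P)=\sup_{0<u<1}\bigl[uR-f(P,u)\bigr],
\end{align*}
where $f(P,u):=u\,\chi\nw_{1/(1-u)}(W,P)$. Thus $R\mapsto \sci(W,R,P)$ is nothing but the Legendre transform of $f(P,\cdot)$. The cutoff condition $\sci(W,R,P)\ge\kappa(R-R_0)$ for every $R>0$ is then equivalent to
\begin{align*}
\kappa R_0\;\ge\;\sup_{R>0}\,\bigl[\kappa R-\sci(W,R,P)\bigr],
\end{align*}
so $C_\kappa(W,P)=\frac{1}{\kappa}\sup_{R>0}[\kappa R-\sci(W,R,P)]$.

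The easy half, $C_\kappa(W,P)\le \chi\nw_{1/(1-\kappa)}(W,P)$, is obtained by choosing $u=\kappa$ in the supremum defining $\sci$:
\begin{align*}
\sci(W,R,P)\;\ge\;\kappa R-f(P,\kappa)\;=\;\kappa\bigl(R-\chi\nw_{1/(1-\kappa)}(W,P)\bigr).
\end{align*}

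For the reverse inequality, I would invoke the Fenchel biconjugate theorem: since $f(P,\cdot)$ is convex on $[0,1]$ (with the lower-semi-continuous extensions $f(P,0)=0$ and $f(P,1)=\chi\nw_{+\infty}(W,P)$ as in the paragraph preceding the proposition), we have
\begin{align*}
f(P,\kappa)=\sup_{R\in\bR}\bigl[\kappa R-f^{*}(P,R)\bigr],\qquad f^{*}(P,R)=\sup_{0<u<1}\bigl[uR-f(P,u)\bigr]=\sci(W,R,P).
\end{align*}
Because $f(P,u)\ge 0$ and $f^{*}(P,R)=0$ for $R\le 0$, the supremum over $R\in\bR$ is the same as the supremum over $R>0$. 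Dividing by $\kappa$ yields $C_\kappa(W,P)\ge f(P,\kappa)/\kappa=\chi\nw_{1/(1-\kappa)}(W,P)$, as desired.

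The main obstacle is the convexity of $u\mapsto f(P,u)$, which is the nontrivial ingredient required to invoke Fenchel duality; I would simply quote it from \cite{Cheng-Li-Hsieh2018}, as it is highlighted in the paragraph preceding the proposition statement. Everything else -- the change of variables, the identification of $\sci$ with a Legendre transform, and the biconjugate argument -- is routine once this convexity is in hand.
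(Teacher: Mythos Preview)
Your proof is correct and follows essentially the same approach as the paper: both identify $\sci(W,R,P)$ as the Legendre transform of the convex function $f(P,\cdot)$ (convexity taken from \cite{Cheng-Li-Hsieh2018}) and recover $f(P,\kappa)$ by duality. The only cosmetic difference is that the paper argues via the subdifferential---for any $R$ with $\derleft{f(P,\cdot)}(\kappa)\le R\le\derright{f(P,\cdot)}(\kappa)$ the supremum is attained at $u=\kappa$, so the line $\kappa(R-\chi\nw_{1/(1-\kappa)}(W,P))$ actually touches $\sci(W,R,P)$---whereas you invoke the Fenchel biconjugate theorem directly; these are two equivalent formulations of the same convex-duality fact.
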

\begin{proof}
By Theorem \ref{thm:main result}, we have 
\begin{align*}
\sci(R,W,P)=\sup_{0<u<1}\{uR-f(P,u)\}\ge \kappa R-f(P,\kappa)=\kappa\bz R-\frac{1}{\kappa} f(P,\kappa)\jz,\ds\ds\ds
\kappa\in(0,1),
\end{align*}
where the inequality is trivial. Since $f(P,.)$ is convex according to \cite{Cheng-Li-Hsieh2018}, its left and right derivatives at $\kappa$, $\derleft{f(P,.)}(\kappa)$ and $\derright{f(P,.)}(\kappa)$, exist. Obviously, 
for any $\derleft{f(P,.)}(\kappa)\le R\le \derright{f(P,.)}(\kappa)$, 
\begin{align*}
\sup_{0<u<1}\{uR-f(P,u)\}= \kappa R-f(P,\kappa)=\kappa\bz R-\frac{1}{\kappa} f(P,\kappa)\jz,
\end{align*}
showing that $\frac{1}{\kappa} f(P,\kappa)=\chi\nw_{\frac{1}{1-\kappa}}(W,P)$ is the minimal $R_0$ for which \eqref{cutoff def} holds for all $R>0$.
\end{proof}
\medskip

We can analogously define the generalized $\kappa$-cutoff rate $C_{\kappa,\cost<\costt}(W)$
for a cq channel $W$, cost function $\cost$ and threshold $\costt$, as
the smallest number $R_0$ satisfying 
\begin{align}\label{cutoff def cost const}
\sci_{\cost<\costt}(W,R)\ge \kappa(R-R_0),\ds\ds\ds R>0.
\end{align}
A completely similar argument as in the proof of Proposition \ref{prop:cutoff}, using that 
$\sup_{P\in\P_{f,\cost<\costt}}(P,.)$, as the supremum of convex functions, is convex, yields the following:

\begin{prop}\label{prop:cutoff cost const}
For any $\kappa\in(0,1)$, 
\begin{align*}
C_{\kappa,\cost<\costt}(W)=\sup_{P\in\P_{f,\cost<\costt}}\chi\nw_{\frac{1}{1-\kappa}}(W,P),
\end{align*}
or equivalently, for any $\alpha>1$, 
\begin{align*}
\sup_{P\in\P_{f,\cost<\costt}}\chi\nw_{\alpha}(W,P)=C_{\frac{\alpha-1}{\alpha},\cost<\costt}(W,P).
\end{align*}
\end{prop}

Finally, the generalized $\kappa$-cutoff rate $C_{\kappa}(W)$ without constraints is
the smallest number $R_0$ satisfying 
\begin{align*}
\sci(W,R)\ge \kappa(R-R_0),\ds\ds\ds R>0.
\end{align*}
Using again a constant cost function $\cost\equiv\cost_0$ and $\costt>\cost_0$, 
Proposition \ref{prop:cutoff cost const} yields the 
generalized $\kappa$-cutoff rate representation of the R\'enyi capacities $\chi_{\alpha}\nw(W)=\sup_{P\in\P_f(\X)}\chi_{\alpha}\nw(W,P)$
for any $\alpha\in(1,+\infty)$ in the context of constraint-free cq channel coding
as follows:
\begin{cor}
For any $\kappa\in(0,1)$,
\begin{align*}
C_{\kappa}(W)=\chi\nw_{\frac{1}{1-\kappa}}(W),
\end{align*}
or equivalently, for any $\alpha\in(1,+\infty)$,
\begin{align*}
\chi\nw_{\alpha}(W)=C_{\frac{\alpha-1}{\alpha}}(W).
\end{align*}
\end{cor}

\appendix

\section{Further properties of divergence radii}
\label{sec:divrad further}

\subsection{General divergences}
\label{sec:general divrad}

Here we consider, among others, the connection between the divergence radius and the weighted divergence radius for a general divergence $\divv$. The following 
is a common generalization and simplification of several results of the same kind for various R\'enyi divergences \cite{KW,MH,MO-cqconv,WWY}. 

\begin{prop}\label{prop:radius equality}
Assume that $\divv$ 
is convex and lower semi-continuous 
in its second argument. Then 
\begin{align}\label{radius Holevo equality}
\sup_{P\in\P_f(S)}R_{\divv,P}(S)= R_{\divv}(S)
\end{align}
for any $S\subseteq\B(\hil)_+$.
\end{prop}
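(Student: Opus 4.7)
The plan is to prove the two inequalities of \eqref{radius Holevo equality} separately; the $\le$ direction is immediate, and the $\ge$ direction follows from a minimax exchange via Lemma \ref{lemma:KF+ minimax}.

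First, for any $P\in\P_f(S)$ and any $\sigma\in\S(\hil)$, the weighted average is bounded above by the supremum of the integrand:
\begin{align*}
\sum_{\rho\in S}P(\rho)\divv(\rho\|\sigma)\le \sup_{\rho\in S}\divv(\rho\|\sigma).
\end{align*}
Taking the infimum over $\sigma\in\S(\hil)$ on both sides gives $R_{\divv,P}(S)\le R_{\divv}(S)$, and since this bound is uniform in $P$, the supremum over $P\in\P_f(S)$ also satisfies $\sup_{P\in\P_f(S)}R_{\divv,P}(S)\le R_{\divv}(S)$.

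For the reverse inequality, I would first rewrite the pointwise supremum as a supremum over finitely supported distributions. Since point masses $\delta_{\rho}$ belong to $\P_f(S)$ and every finite weighted average is at most the supremum, we have, for every $\sigma\in\S(\hil)$,
\begin{align*}
\sup_{\rho\in S}\divv(\rho\|\sigma)=\sup_{P\in\P_f(S)}\sum_{\rho\in S}P(\rho)\divv(\rho\|\sigma)=:\sup_{P\in\P_f(S)}f(\sigma,P),
\end{align*}
so that $R_{\divv}(S)=\inf_{\sigma\in\S(\hil)}\sup_{P\in\P_f(S)}f(\sigma,P)$. I then apply Lemma \ref{lemma:KF+ minimax} with $X:=\S(\hil)$, which is compact and convex as a subset of the finite-dimensional space $\B(\hil)_{\sa}$, and $Y:=\P_f(S)$, which is a convex subset of the vector space of finitely supported signed measures on $S$. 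For fixed $\sigma$, $P\mapsto f(\sigma,P)$ is linear, hence concave on $Y$; for fixed $P$, $\sigma\mapsto f(\sigma,P)$ is a finite convex combination of the maps $\sigma\mapsto\divv(\rho\|\sigma)$, which are convex and lower semi-continuous on $X$ by hypothesis, and these properties are preserved by finite convex combinations. Swapping the order of the inf and sup then yields
\begin{align*}
R_{\divv}(S)=\inf_{\sigma}\sup_{P}f(\sigma,P)=\sup_{P}\inf_{\sigma}f(\sigma,P)=\sup_{P\in\P_f(S)}R_{\divv,P}(S),
\end{align*}
completing the proof.

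The only subtle point is the applicability of Lemma \ref{lemma:KF+ minimax}, which is formulated for $\bR\cup\{+\infty\}$-valued functions; one should tacitly assume $\divv$ does not take the value $-\infty$ (which holds for all divergences of interest here, in particular the R\'enyi divergences considered in Section \ref{sec:Renyi divrad}, since they are non-negative on pairs of states and homogeneous). The value $+\infty$ is allowed and causes no difficulty, since Lemma \ref{lemma:KF+ minimax} is stated precisely for this setting.
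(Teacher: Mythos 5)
Your proof is correct and follows essentially the same route as the paper's: rewrite $\sup_{\rho\in S}\divv(\rho\|\sigma)$ as $\sup_{P\in\P_f(S)}\sum_\rho P(\rho)\divv(\rho\|\sigma)$ and then swap $\inf_\sigma$ and $\sup_P$ via Lemma \ref{lemma:KF+ minimax}, using compactness and convexity of $\S(\hil)$, affinity in $P$, and convexity plus lower semi-continuity in $\sigma$. Your closing caveat about excluding the value $-\infty$ (so that the finite convex combination is well-defined and Lemma \ref{lemma:KF+ minimax} applies) is a small point the paper leaves implicit, and it is worth having noticed.
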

\begin{proof}
We have
\begin{align*}
R_{\divv}(S)
=
\inf_{\sigma\in\S(\hil)}\sup_{\rho\in S}\divv(\rho\|\sigma)
&=
\inf_{\sigma\in\S(\hil)}\sup_{P\in\P_f(S)}\sum_{\rho\in S}P(\rho)\divv(\rho\|\sigma)\\
&=
\sup_{P\in\P_f(S)}\inf_{\sigma\in\S(\hil)}\sum_{\rho\in S}P(\rho)\divv(\rho\|\sigma)
=
\sup_{P\in\P_f(S)}R_{\divv,P}(S).
\end{align*}
The first equality above is by definition, and the second one is trivial. The third one follows from 
Lemma \ref{lemma:KF+ minimax} by noting that $\sum_{\rho\in S}P(\rho)\divv(\rho\|\sigma)$ is convex and lower semi-continuous in $\sigma$ on the compact set $\S(\hil)$, and it is trivially concave (in fact, affine) on the convex set $\P_f(S)$. The last equality is again by definition.
\end{proof}

Note that if $\divv$ is additive on tensor products then the corresponding divergence radius is subadditive by definition: for any $S^{(i)}\subseteq\B(\hil^{(i)})_+$, $i=1,2$,
\begin{align}
R_{\divv}(S^{(1)}\dotimes S^{(2)})
&=
\inf_{\sigma_{12}\in\S(\hil^{(1)}\otimes \hil^{(2)})}
\sup_{\rho_1\in\S^{(1)},\rho_2\in\S^{(2)}}\divv(\rho_1\otimes\rho_2\|\sigma_{12})\nonumber\\
&\le
\inf_{\sigma_{1}\in\S(\hil^{(1)}),\,\sigma_2\in\otimes \S(\hil^{(2)})}
\sup_{\rho_1\in\S^{(1)},\rho_2\in\S^{(2)}}\divv(\rho_1\otimes\rho_2\|\sigma_1\otimes\sigma_2)\nonumber\\
&=
R_{\divv}(S^{(1)})+R_{\divv}(S^{(2)}),\label{divrad subadd}
\end{align}
where $S^{(1)}\dotimes S^{(2)}:=\{\rho_1\otimes\rho_2:\,\rho_i\in S^{(i)},\,i=1,2\}$.
It is easy to see that if $\divv$ satisfies the conditions of Proposition \ref{prop:radius equality}, and the 
weighted $\divv$-radius is additive, then so is the $\divv$-radius:
\begin{prop}\label{prop:divrad additive}
Assume that $\divv$ is convex and lower semi-continuous 
in its second argument and it is additive on tensor products, and the weighted $\divv$-radius is additive in the sense that 
$R_{\divv,P^{(1)}\otimes P^{(2)}}=R_{\divv,P^{(1)}}+R_{\divv,P^{(2)}}$ for any 
$P^{(i)}\in\P_f(\B(\hil^{(i)})_+$, $i=1,2$. Then the $\divv$-radius is additive, i.e., 
\eqref{divrad subadd} holds as an equality.
\end{prop}
\begin{proof}
We have already seen subadditivity of the $\divv$-radius in \eqref{divrad subadd}. The converse inequality follows by 
\begin{align*}
R_{\divv}(S^{(1)}\dotimes S^{(2)})
&=
\sup_{P_{12}\in\P_f(S^{(1)}\otimes S^{(2)})}R_{\divv,P_{12}}(S^{(1)}\dotimes S^{(2)})\\
&\ge
\sup_{P_1\in\P_f(S^{(1)}),\,P_2\in\P_f(S^{(2)})}R_{\divv,P_1\otimes P_2}(S^{(1)}\otimes S^{(2)})\\
&=
\sup_{P_1\in\P_f(S^{(1)}),\,P_2\in\P_f(S^{(2)})}\left\{R_{\divv,P_1}(S^{(1)})+R_{\divv,P_2}(S^{(2)})\right\}\\
&=
R_{\divv}(S^{(1)})+R_{\divv}(S^{(2)}),
\end{align*}
where the first and the last identities are due to Proposition \ref{prop:radius equality},
the inequality is trivial, and the second equality follows by the additivity assumption on the weighted 
$\divv$-radius.
\end{proof}

\begin{rem}\label{rem:capacity additivity}
In the formalism of gcq channels, Proposition \ref{prop:divrad additive} says that 
if $\divv$ is convex and lower semi-continuous 
in its second argument then
\begin{align*}
\chi_{\divv}(W)=R_{\divv}(\ran W)
\end{align*}
for any gcq channel $W$, and if $\divv$ is also additive on tensor products, then 
the additivity 
\begin{align*}
\chi_{\divv}(W^{(1)}\otimes W^{(2)},P^{(1)}\otimes P^{(2)})=\chi_{\divv}(W^{(1)},P^{(1)})+\chi_{\divv}(W^{(2)},P^{(2)})
\end{align*}
 for some gcq channels $W^{(i)}:\,\X^{(i)}\to\B(\hil^{(i)})_+$ and any $P^{(i)}\in\P_f(\X^{(i)})$, $i=1,2$, implies the additivity 
\begin{align*}
\chi_{\divv}(W^{(1)}\otimes W^{(2)})=\chi_{\divv}(W^{(1)})+\chi_{\divv}(W^{(2)}).
\end{align*}
We discuss this further for the case of $\alpha$-$z$ divergences at the end of Section \ref{sec:generalized mutual informations}.
\end{rem}

%

When $\divv$ is non-negative on $\supp P$ for some $P\in\P_f(\B(\hil))$ in the sense that 
$\divv(\rho\|\sigma)\ge 0$ for all $\rho\in\supp P$ and $\sigma\in\S(\hil)$, it is possible to define a continuous approximation between 
the $\divv$-radius and the $P$-weighted $\divv$-radius as follows.
Define for every 
$\beta\in[1,+\infty]$ the \ki{$(P,\beta)$-weighted divergence radius} of a set $S\subseteq\B(\hil)_+$ as 
\begin{align}\label{P beta div rad}
R_{\divv,P,\beta}(S):=\inf_{\sigma\in\S(\hil)}
\norm{\divv(.\|\sigma)}_{P,\beta}:=\inf_{\sigma\in\S(\hil)}\begin{cases}
\bz\sum_{\rho\in S}P(\rho)\divv(\rho\|\sigma)^{\beta}\jz^{1/\beta},&\beta\in[1,+\infty),\\
\sup_{\rho\in\supp P}\divv(\rho\|\sigma),&\beta=+\infty.
\end{cases}
\end{align}
Just like before, a $\sigma\in\S(\hil)$ is called a \ki{$(P,\beta)$-weighted $\divv$-centre} if it attains the infimum in \eqref{P beta div rad}.
Again, $R_{\divv,P,\beta}(S)$ is in fact independent of $S$, and hence we will often drop it from the notation.

Note that $R_{\divv,P,1}=R_{\divv,P}$, and
when $S$ is finite and $\supp P=S$ then $R_{\divv,P,+\infty}(S)=R_{\divv}(S)$.
In general, though, we need a further optimization to recover 
$R_{\divv}$ from $R_{\divv,P,+\infty}$.
According to well-known properties of the $\beta$-norms,
\begin{align}\label{beta monotonicity}
R_{\divv,P,\beta_1}\le R_{\divv,P,\beta_2}\ds\text{when}\ds\beta_1\le\beta_2,\ds\text{and}\ds
R_{\divv,P,\beta}\nearrow R_{\divv,P,+\infty}\ds\text{as}\ds\beta\nearrow+\infty
\end{align}
for any 
$P\in\P_f(\B(\hil)_+)$. Moreover, it is clear from the definitions that
\begin{align}\label{radius inequality}
\sup_{P\in\P_f(S)}R_{\divv,P,\beta}\le R_{\divv}(S)
\end{align}
for any $S\subseteq\B(\hil)_+$ and $\beta\in[1,+\infty]$.
Under the conditions of Proposition \ref{prop:radius equality}, the above holds as an equality:

\begin{prop}\label{prop:radius equality2}
Assume that $\divv$  is non-negative on some $S\subseteq\B(\hil)_+$, and
convex and lower semi-continuous 
in its second argument. Then 
\begin{align}\label{radius Holevo equality beta}
\sup_{P\in\P_f(S)}R_{\divv,P,\beta}(S)= R_{\divv}(S)
\end{align}
for any $\beta\in[1,+\infty]$.
\end{prop}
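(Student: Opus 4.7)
The plan is to reduce the statement to Proposition \ref{prop:radius equality} (the $\beta=1$ case) via the monotonicity of $L^{\beta}$-norms with respect to a probability measure. The inequality $\sup_{P}R_{\divv,P,\beta}(S)\le R_{\divv}(S)$ is already recorded as \eqref{radius inequality}, so only the reverse inequality requires work, and this is where the non-negativity assumption on $\divv$ enters crucially.

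For the reverse inequality, I would fix any $P\in\P_f(S)$, any $\sigma\in\S(\hil)$, and any $\beta\in[1,+\infty]$. Because $\divv(\rho\|\sigma)\ge 0$ for every $\rho\in S\supseteq\supp P$, the function $\rho\mapsto \divv(\rho\|\sigma)$ is non-negative on $\supp P$, so by the standard monotonicity of $L^{\beta}$-norms under a probability measure (i.e.\ Jensen's inequality applied to $x\mapsto x^{\beta}$, or the first half of \eqref{beta monotonicity}),
\begin{align*}
\norm{\divv(.\|\sigma)}_{P,\beta}\ge \norm{\divv(.\|\sigma)}_{P,1}.
\end{align*}
Taking the infimum over $\sigma\in\S(\hil)$ on both sides yields $R_{\divv,P,\beta}(S)\ge R_{\divv,P,1}(S)=R_{\divv,P}(S)$. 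Taking the supremum over $P\in\P_f(S)$ and invoking Proposition \ref{prop:radius equality} gives
\begin{align*}
\sup_{P\in\P_f(S)}R_{\divv,P,\beta}(S)\ge \sup_{P\in\P_f(S)}R_{\divv,P}(S)=R_{\divv}(S),
\end{align*}
which combined with \eqref{radius inequality} finishes the proof.

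There is essentially no obstacle here: the argument is a one-line reduction once one notices that the only role of $\beta\ge 1$ is to enlarge the weighted radius above its $\beta=1$ value, and that non-negativity of $\divv$ is exactly what makes the $L^{\beta}$-norm monotonicity applicable. The $\beta=+\infty$ case is handled uniformly by the same monotonicity, so no separate treatment (and no new minimax argument beyond the one already used in Proposition \ref{prop:radius equality}) is needed.
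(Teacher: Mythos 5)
Your proof is correct and follows essentially the same route as the paper: reduce to the $\beta=1$ case (Proposition \ref{prop:radius equality}) using the upper bound \eqref{radius inequality} together with the monotonicity in $\beta$ from \eqref{beta monotonicity}, which is where the non-negativity assumption is used.
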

\begin{proof}
Due to \eqref{radius inequality} and the monotonicity \eqref{beta monotonicity}, it is enough to prove the assertion for $\beta=1$, which has already been established in Proposition \ref{prop:radius equality}.
\end{proof}

%
%
%
\medskip

In the rest of the section we explore some properties of the divergence radius $R_{\divv}$.
We will denote the set of $\divv$-centers of $S$ by $C_{\divv}(S)$.

\begin{lemma}\label{lemma:divrad basic properties}
The $\divv$-radius satisfies the following simple properties.
\begin{enumerate}
\item\label{div rad prop1}
The $\divv$-radius is a monotone function of $S$, i.e., if $S\subseteq S'$ then 
$R_{\divv}(S)\le R_{\divv}(S')$. 
\item\label{monotone divcentre}
If $S\subseteq S'$ and $R_{\divv}(S)= R_{\divv}(S')$ then 
$C_{\divv}(S')\subseteq C_{\divv}(S)$.
\item\label{div rad prop2}
If $\divv$ is quasi-convex in its first argument then $R_{\divv}(S)=R_{\divv}(\conv(S))$ and 
$C_{\divv}(S)=C_{\divv}(\conv(S))$.
\item\label{div rad prop3}
If $\divv$ is lower semi-continuous in its first argument then 
$R_{\divv}(S)=R_{\divv}(\oll S)$ and $C_{\divv}(S)=C_{\divv}(\oll S)$.
\end{enumerate}
\end{lemma}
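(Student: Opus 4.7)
My plan is to prove the four parts in order, since (ii) uses the monotonicity from (i), and the equalities in (iii) and (iv) both follow from showing that for every candidate center $\sigma$, the inner supremum $\sup_{\rho} \divv(\rho\|\sigma)$ over the larger set equals the one over the smaller set. All four parts are elementary consequences of the definition $R_{\divv}(S) = \inf_{\sigma \in \S(\hil)} \sup_{\rho \in S} \divv(\rho\|\sigma)$, so I do not expect a genuine obstacle; the only thing to be a bit careful about is extracting information about the center sets from equality of radii.

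For (i), the inequality $\sup_{\rho \in S}\divv(\rho\|\sigma) \le \sup_{\rho \in S'}\divv(\rho\|\sigma)$ holds for every $\sigma$, and taking the infimum over $\sigma$ preserves it. For (ii), suppose $\sigma \in C_{\divv}(S')$. Then
\begin{align*}
R_{\divv}(S) \le \sup_{\rho \in S}\divv(\rho\|\sigma) \le \sup_{\rho \in S'}\divv(\rho\|\sigma) = R_{\divv}(S') = R_{\divv}(S),
\end{align*}
so all inequalities are equalities, showing $\sigma \in C_{\divv}(S)$.

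For (iii), the inclusion $S \subseteq \conv(S)$ gives $R_{\divv}(S) \le R_{\divv}(\conv(S))$ from (i). Conversely, I will show that for every $\sigma \in \S(\hil)$,
\begin{align*}
\sup_{\rho \in \conv(S)} \divv(\rho\|\sigma) = \sup_{\rho \in S} \divv(\rho\|\sigma).
\end{align*}
The $\ge$ direction is immediate. For $\le$, any $\rho \in \conv(S)$ is a finite convex combination $\rho = \sum_{i=1}^r p_i \rho_i$ with $\rho_i \in S$, and quasi-convexity of $\divv$ in its first argument gives $\divv(\rho\|\sigma) \le \max_i \divv(\rho_i\|\sigma) \le \sup_{\rho' \in S}\divv(\rho'\|\sigma)$. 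Taking the infimum over $\sigma$ then yields $R_{\divv}(\conv(S)) = R_{\divv}(S)$, and since the two inner suprema coincide pointwise in $\sigma$, the sets of minimizers coincide, giving $C_{\divv}(S) = C_{\divv}(\conv(S))$.

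For (iv), the same strategy works: I will establish the pointwise equality $\sup_{\rho \in \oll S}\divv(\rho\|\sigma) = \sup_{\rho \in S}\divv(\rho\|\sigma)$ for every $\sigma$, from which both the radius and the center statements follow. The $\ge$ direction is again immediate. For $\le$, given $\rho \in \oll S$, choose a sequence $\rho_n \in S$ with $\rho_n \to \rho$; then lower semi-continuity of $\divv(\cdot\|\sigma)$ gives
\begin{align*}
\divv(\rho\|\sigma) \le \liminf_{n\to\infty} \divv(\rho_n\|\sigma) \le \sup_{\rho' \in S}\divv(\rho'\|\sigma),
\end{align*}
and taking the sup over $\rho \in \oll S$ finishes the argument.
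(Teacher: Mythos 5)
Your proof is correct, and parts (i) and (ii) match the paper's argument verbatim. For parts (iii) and (iv) your core inequality is the same as the paper's — quasi-convexity (resp.\ lower semi-continuity) of $\divv(\cdot\|\sigma)$ gives $\divv(\rho'\|\sigma)\le\sup_{\rho\in S}\divv(\rho\|\sigma)$ for $\rho'\in\conv(S)$ (resp.\ $\rho'\in\oll S$) — but the packaging differs slightly. The paper first invokes (i) and (ii) to obtain the inclusions $R_{\divv}(\conv(S))\ge R_{\divv}(S)$ and $C_{\divv}(\conv(S))\subseteq C_{\divv}(S)$, and then separately proves the reverse radius inequality and the reverse center inclusion. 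You instead observe that the same inequality yields the pointwise identity $\sup_{\rho\in\conv(S)}\divv(\rho\|\sigma)=\sup_{\rho\in S}\divv(\rho\|\sigma)$ for every $\sigma$, from which both the equality of infima and the equality of minimizer sets follow in one stroke, with no appeal to (ii). This is a mild but genuine simplification: it makes (iii) and (iv) self-contained and makes the equality $C_{\divv}(S)=C_{\divv}(\conv(S))$ transparent, whereas the paper's route gives the two inclusions by somewhat different mechanisms. The one small step you leave implicit is the inductive extension of the pairwise quasi-convexity hypothesis $f((1-t)x+ty)\le\max\{f(x),f(y)\}$ to arbitrary finite convex combinations, which is standard and also used without comment in the paper.
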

\begin{proof}
The monotonicity in \ref{div rad prop1} is obvious. 

Assume that the conditions of \ref{monotone divcentre} hold, and that 
$\sigma$ is a $\divv$-centre for $S'$ (if $C_{\divv}(S)=\emptyset$ then the assertion holds trivially). Then
\begin{align*}
R_{\divv}(S)\le\sup_{\rho\in S}\divv(\rho\|\sigma)\le
\sup_{\rho\in S'}\divv(\rho\|\sigma)=R_{\divv}(S')=R_{\divv}(S),
\end{align*}
from which $\sup_{\rho\in S}\divv(\rho\|\sigma)=R_{\divv}(S)$, i.e., $\sigma$ is a $\divv$-centre of $S$.

As a consequence of \ref{div rad prop1} and \ref{monotone divcentre}, in \ref{div rad prop2} 
we only have to prove that 
$R_{\divv}(S)\ge R_{\divv}(\conv(S))$ and $C_{\divv}(S)\subseteq C_{\divv}(\conv(S))$, and analogously in 
\ref{div rad prop3}, with $\oll S$ in place of $\conv(S)$.

Assume that $\divv$ is quasi-convex in its first argument. For any 
$\rho'\in\conv(S)$, there exists a finitely supported probability distribution $P_{\rho'}\in\P_f(S)$ such that 
$\rho'=\sum_{\rho\in S}P_{\rho'}(\rho)\rho$, and hence
$\divv(\rho'\|\sigma)\le\max_{\rho\in \supp P_{\rho'}}\divv(\rho\|\sigma)\le\sup_{\rho\in S}\divv(\rho\|\sigma)$
for any $\sigma\in\S(\hil)$. Taking the supremum in $\rho'\in\conv(S)$ and then the infimum in $\sigma\in\S(\hil)$ yields 
$R_{\divv}(\conv(S))\le R_{\divv}(S)$.
If $\sigma\in C_{\divv}(S)$ then 
\begin{align*}
R_{\divv}(\conv(S))\le\sup_{\rho'\in \conv(S)}\divv(\rho'\|\sigma)\le
\sup_{\rho'\in \conv(S)}\sup_{\rho\in \supp P_{\rho'}}\divv(\rho\|\sigma)
=
\sup_{\rho\in S}\divv(\rho\|\sigma)
&=R_{\divv}(S)\\
&=R_{\divv}(\conv(S)),
\end{align*}
from which $R_{\divv}(\conv(S))=\sup_{\rho'\in \conv(S)}\divv(\rho'\|\sigma)$, i.e., $\sigma\in C_{\divv}(\conv (S))$.

Assume now that $\divv$ is l.s.c.~in its first argument, and let $\rho'\in\oll S$. Then there exists a sequence
$(\rho_n)_{n\in\bN}\subseteq S$ converging to $\rho'$, and hence, by lower semi-continuity,
\begin{align}\label{lsc proof1}
\divv(\rho'\|\sigma)\le\liminf_{n\to+\infty}\divv(\rho_n\|\sigma)\le\sup_{\rho\in S}\divv(\rho\|\sigma),\ds\ds\ds\sigma\in\S(\hil).
\end{align}
Taking the supremum in $\rho'\in\oll S$ and then the infimum in $\sigma\in\S(\hil)$ yields 
$R_{\divv}(\oll S)\le R_{\divv}(S)$.
If $\sigma\in C_{\divv}(S)$ then 
\begin{align*}
R_{\divv}(\oll S)\le\sup_{\rho'\in \oll S}\divv(\rho'\|\sigma)\le
\sup_{\rho\in S}\divv(\rho\|\sigma)
=R_{\divv}(S)
=R_{\divv}(\oll S),
\end{align*}
where the second inequality is due to \eqref{lsc proof1}. 
This yields that $R_{\divv}(\oll S)=\sup_{\rho'\in \oll S}\divv(\rho'\|\sigma)$, i.e., 
$\sigma\in C_{\divv}(\oll S)$.
\end{proof}

\begin{cor}
If $\divv$ is quasi-convex and lower semi-continuous in its first argument then 
\begin{align*}
R_{\divv}(S)=R_{\divv}(\oll\conv(S))\ds\ds\ds\text{and}\ds\ds\ds
C_{\divv}(S)=C_{\divv}(\oll\conv(S)).
\end{align*}
for any $S$.
\end{cor}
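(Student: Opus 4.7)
The plan is to obtain this corollary as an immediate consequence of parts \ref{div rad prop2} and \ref{div rad prop3} of Lemma \ref{lemma:divrad basic properties}, by chaining them. The idea is that taking the convex hull and then the closure are both operations that, under the stated regularity assumptions on $\divv$, leave both the radius and the set of centres unchanged. The only thing to verify is that the two invariance statements can indeed be composed.

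First, I would apply part \ref{div rad prop2} to $S$: since $\divv$ is quasi-convex in its first argument,
\begin{align*}
R_{\divv}(S)=R_{\divv}(\conv(S))\ds\ds\ds\text{and}\ds\ds\ds C_{\divv}(S)=C_{\divv}(\conv(S)).
\end{align*}
Then I would apply part \ref{div rad prop3} to the convex set $\conv(S)$: since $\divv$ is lower semi-continuous in its first argument,
\begin{align*}
R_{\divv}(\conv(S))=R_{\divv}(\oll{\conv(S)})\ds\ds\ds\text{and}\ds\ds\ds C_{\divv}(\conv(S))=C_{\divv}(\oll{\conv(S)}).
\end{align*}
Chaining the two pairs of identities yields the desired equalities. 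Note that the hypotheses on $\divv$ (quasi-convexity and lower semi-continuity in the first argument) are global properties, so they automatically carry over to the argument set $\conv(S)$ when we apply part \ref{div rad prop3} in the second step; no additional verification is needed.

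There is no real obstacle here; the content of the corollary is entirely packaged in the two parts of the preceding lemma, and the proof is a one-line composition. If I wanted to state it slightly more carefully, I might remark that one could equally well reverse the order, i.e., first pass to the closure $\oll S$ (invoking part \ref{div rad prop3}) and then to $\conv(\oll S)$ (invoking part \ref{div rad prop2}); since $\oll{\conv(S)}=\conv(\oll S)$ in a finite-dimensional normed space (indeed, the convex hull of a set in a finite-dimensional space has the same closure whether the closure is taken before or after), both orderings give the same final set, so the statement is unambiguous.
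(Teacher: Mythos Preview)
Your main argument is correct and is exactly the intended proof: the paper states this corollary without proof precisely because it is the immediate composition of parts \ref{div rad prop2} and \ref{div rad prop3} of Lemma \ref{lemma:divrad basic properties}, applied in that order.

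One small correction to your closing parenthetical remark: the identity $\oll{\conv(S)}=\conv(\oll S)$ is false in general, even in finite dimensions (the convex hull of a closed set need not be closed; consider for instance $S=\{(x,1/x):x>0\}\cup\{(0,0)\}\subseteq\bR^2$). What is true is $\oll{\conv(S)}=\oll{\conv(\oll S)}$, so the reversed ordering would give $R_{\divv}(S)=R_{\divv}(\conv(\oll S))$, and one would then need a further application of part \ref{div rad prop3} to reach $\oll{\conv(\oll S)}=\oll{\conv(S)}$. This does not affect your actual proof, which proceeds in the correct order and needs no such identity.
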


As a consequence, when studying the divergence radius for a divergence with the above properties, we can often restrict our investigation to closed convex sets without loss of generality.

\begin{prop}
Assume that $\divv$ satisfies \eqref{radius Holevo equality} for any 
$S\subseteq\B(\hil)_+$. Then $R_{\divv}$ is continuous on monotone increasing nets of subsets of $\B(\hil)_+$, i.e., if $\S\subseteq P(\B(\hil)_+)$ (all the subsets of $\B(\hil)_+$) such that 
for all $S,S'\in\S$ there exists an $S''\in\S$ with $S\cup S'\subseteq S''$ then 
\begin{align}\label{mon cont}
R_{\divv}(\cup\S)=\sup_{S\in\S}R_{\divv}(S).
\end{align}
In particular, for any $S\subseteq\B(\hil)_+$, 
\begin{align}\label{mon cont2}
R_{\divv}(S)=\sup\{R_{\divv}(S'):\,S'\subseteq S,\,S'\text{ finite}\}.
\end{align}
\end{prop}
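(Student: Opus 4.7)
The plan is to combine the assumed Holevo-type equality \eqref{radius Holevo equality} with the observation (recorded in the remark following \eqref{P div rad}) that the weighted divergence radius $R_{\divv,P}(S)$ depends on $S$ only through $\supp P$; concretely, $R_{\divv,P}(S)=R_{\divv,P}(\supp P)$ whenever $S\supseteq\supp P$.

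I will first dispose of the easy inequality $\sup_{S\in\S}R_{\divv}(S)\le R_{\divv}(\cup\S)$: since each $S\in\S$ is a subset of $\cup\S$, this is immediate from the monotonicity in Lemma \ref{lemma:divrad basic properties}\ref{div rad prop1}.

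The main content lies in the reverse inequality. Using the assumption I would rewrite
\begin{align*}
R_{\divv}(\cup\S) \;=\; \sup_{P\in\P_f(\cup\S)} R_{\divv,P}(\cup\S),
\end{align*}
and then show that for every $P\in\P_f(\cup\S)$ there exists $S_*\in\S$ with $R_{\divv,P}(\cup\S)\le R_{\divv}(S_*)$. To produce such an $S_*$, I would choose, for each $\rho$ in the finite set $\supp P$, some $S_\rho\in\S$ with $\rho\in S_\rho$; a short induction on $|\supp P|$ using the directedness of $\S$ then yields $S_*\in\S$ with $\bigcup_{\rho\in\supp P}S_\rho\subseteq S_*$, hence $\supp P\subseteq S_*$. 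The key remark now gives $R_{\divv,P}(\cup\S)=R_{\divv,P}(\supp P)=R_{\divv,P}(S_*)$, while the trivial bound ``average $\le$ supremum'' together with the definitions \eqref{div radius def}, \eqref{P div rad} gives $R_{\divv,P}(S_*)\le R_{\divv}(S_*)\le\sup_{S\in\S}R_{\divv}(S)$. Taking the supremum over $P$ then completes the proof of \eqref{mon cont}.

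The special case \eqref{mon cont2} would follow by applying \eqref{mon cont} to the collection $\S:=\{S'\subseteq S:\, S'\text{ finite}\}$, which is directed under union (the union of two finite sets being finite) and whose total union equals $S$. No step in this outline is genuinely obstructing; the only mildly delicate moment is the finite induction used to absorb finitely many elements of $\cup\S$ into a single member of $\S$, which is a routine consequence of directedness.
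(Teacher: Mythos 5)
Your proposal is correct and follows essentially the same route as the paper: lower bound from monotonicity, then use the assumed equality to rewrite $R_{\divv}(\cup\S)$ as a supremum of weighted radii, absorb the finite $\supp P$ into a single $S\in\S$ via directedness, and bound $R_{\divv,P}(S_*)\le R_{\divv}(S_*)$. The only cosmetic difference is that you make the directedness induction explicit and use a direct ``average $\le$ supremum'' estimate for the final bound, where the paper leaves these steps implicit.
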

\begin{proof}
It is clear from the monotonicity stated in Lemma \ref{lemma:divrad basic properties} that 
\begin{align*}
R_{\divv}(\cup\S)\ge\sup_{S\in\S}R_{\divv}(S),
\end{align*}
and hence we only have to prove the converse inequality. To this end, let 
\begin{align*}
c<R_{\divv}(\cup\S)=\sup_{P\in\P_f(\cup\S)}R_{\divv,P}(\cup\S),
\end{align*}
where the equality holds by assumption. Then there exists a $P\in\P_f(\cup\S)$ for which 
$c<R_{\divv,P}(\cup\S)$, and there exists an $S\in\S$ such that $\supp P\subseteq S$, and hence
$R_{\divv,P}(\cup\S)=R_{\divv,P}(S)$. From this we get \eqref{mon cont}, and 
\eqref{mon cont2} follows immediately.
\end{proof}

\begin{rem}\label{rem:OPW}
Theorem 3.5 in \cite{OPW} states \ref{radius Holevo equality} for the relative entropy. 
However, in their proof they use \eqref{mon cont2} without any explanation. In the proof above we assumed that 
\ref{radius Holevo equality} holds, so the question arises whether the proof in \cite{OPW} can be made complete in some other way.
\end{rem}
\smallskip

\subsection{Generalized mutual information}
\label{sec:generalized mutual informations}

For any gcq channel $W:\,\X\to\B(\hil)_+$, we define the lifted channel
\begin{align*}
\ext{W}:\,\X\to\S(\hil_\X\otimes\hil),\ds\ds\ds
\ext{W}(x):=\pr{x}\otimes W(x).
\end{align*}
Here, $\hil_{\X}$ is an auxiliary Hilbert space, and $\{\ket{x}:\,x\in\X\}$ is an orthonormal basis in it.
As a canonical choice, one can use $\hil_{\X}=l^2(\X)$, the $L^2$-space on $\X$ with respect to the counting measure,
and choose $\ket{x}:=\egy_{\{x\}}$ to be the characteristic function (indicator function) of the singleton $\{x\}$.
Note that this is well-defined irrespectively of the cardinality of $\X$.
The classical-quantum state 
\begin{align*}
\ext{W}(P):=\sum_{x\in\X}P(x)\pr{x}\otimes W(x)
\end{align*} 
plays the role of the joint distribution of the input and the output of the channel for a fixed finitely supported input probability distribution $P\in\P_f(\X)$.

For a general divergence $\divv$
and a gcq channel $W:\,\X\to\B(\hil)_+$, we may define the \ki{$\divv$-mutual information}
between the classical input and the quantum output of the channel for a fixed input distribution $P\in\P_f(\X)$ as 
\begin{align}\label{mutual info def}
I_{\divv}(W,P):=\inf_{\sigma\in\S(\hil)}\divv(\ext{W}(P)\|P\otimes \sigma).
\end{align}
The mutual information and the weighted divergence radius are different quantities in general (as we see below). 
However, they are equal if the divergence satisfies some simple properties:

\begin{lemma}\label{lemma:info radius equal}
Assume that $\divv$ is block additive and homogeneous. Then 
\begin{align*}
I_{\divv}(W,P)=\chi_{\divv}(W,P)
\end{align*}
for any gcq channel $W$ and input distribution $P$.
\end{lemma}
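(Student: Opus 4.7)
The plan is to exploit the block-diagonal structure of both $\ext{W}(P)$ and $P\otimes\sigma$ with respect to the orthogonal spectral projections $E_x:=\pr{x}\otimes I_{\hil}$, and then reduce the divergence to a weighted sum of one-block terms by successive applications of block additivity, homogeneity, and isometric invariance. Since the support is over a finite set $\supp P$, everything can be handled by finite induction and no limit arguments are required.

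Concretely, I would fix $\sigma\in\S(\hil)$ and write
\begin{align*}
\ext{W}(P)=\sum_{x\in\supp P}P(x)\pr{x}\otimes W(x),\ds\ds\ds P\otimes\sigma=\sum_{x\in\supp P}P(x)\pr{x}\otimes\sigma.
\end{align*}
For distinct $x,y\in\supp P$, the supports of the pairs $\bz P(x)\pr{x}\otimes W(x),\,P(x)\pr{x}\otimes\sigma\jz$ and $\bz P(y)\pr{y}\otimes W(y),\,P(y)\pr{y}\otimes\sigma\jz$ lie in the mutually orthogonal subspaces $\bC\ket{x}\otimes\hil$ and $\bC\ket{y}\otimes\hil$, so the hypothesis of block additivity is satisfied. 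Applying block additivity inductively across the finite set $\supp P$ gives
\begin{align*}
\divv\bz\ext{W}(P)\,\|\,P\otimes\sigma\jz=\sum_{x\in\supp P}\divv\bz P(x)\pr{x}\otimes W(x)\,\|\,P(x)\pr{x}\otimes\sigma\jz.
\end{align*}

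Next, homogeneity with $\lambda=P(x)>0$ pulls the scalar out of each summand, reducing the above to $\sum_{x\in\supp P}P(x)\divv\bz\pr{x}\otimes W(x)\,\|\,\pr{x}\otimes\sigma\jz$. The map $V_x:\,\hil\to\hil_{\X}\otimes\hil$, $V_x\xi:=\ket{x}\otimes\xi$, is an isometry satisfying $V_xW(x)V_x^*=\pr{x}\otimes W(x)$ and $V_x\sigma V_x^*=\pr{x}\otimes\sigma$, so invariance of $\divv$ under isometries yields $\divv\bz\pr{x}\otimes W(x)\,\|\,\pr{x}\otimes\sigma\jz=\divv\bz W(x)\,\|\,\sigma\jz$. (Equivalently, one can quote directly the invariance under extension by pure states already noted in the definition of divergences in Section \ref{sec:gendivrad}.) Assembling these steps gives the pointwise-in-$\sigma$ identity
\begin{align*}
\divv\bz\ext{W}(P)\,\|\,P\otimes\sigma\jz=\sum_{x\in\X}P(x)\divv\bz W(x)\,\|\,\sigma\jz,
\end{align*}
and taking the infimum over $\sigma\in\S(\hil)$ on both sides yields $I_{\divv}(W,P)=\chi_{\divv}(W,P)$.

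There is no real obstacle here; the only thing to be slightly careful about is the support condition in the definition of block additivity, which is why I restrict the sum to $\supp P$ so that the scalars $P(x)$ appearing in the homogeneity step are strictly positive, and why the orthogonality of the classical basis vectors $\ket{x}$ is needed to guarantee the orthogonal-support hypothesis block by block.
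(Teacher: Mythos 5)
Your proof is correct and is essentially identical to the paper's: both decompose $\divv(\ext{W}(P)\|P\otimes\sigma)$ by block additivity over the classical basis, pull out the scalar $P(x)$ by homogeneity, drop the pure tensor factor $\pr{x}$ by isometric invariance, and finally infimize over $\sigma$. Your extra care about restricting to $\supp P$ (so the scalars are strictly positive) and spelling out the orthogonal-support hypothesis is just a tidier bookkeeping of the same argument.
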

\begin{proof}
We have
\begin{align*}
I_{\divv}(W,P)
&=
\inf_{\sigma\in\S(\hil)}\divv(\ext{W}(P)\|P\otimes \sigma)\\
&=
\inf_{\sigma\in\S(\hil)}\divv\bz\sum_{x\in\X}P(x)\pr{x}\otimes W(x)\Big\|\sum_{x\in\X}P(x)\pr{x}\otimes \sigma\jz\\
&=
\inf_{\sigma\in\S(\hil)}\sum_{x\in\X}\divv\bz P(x)\pr{x}\otimes W(x)\|P(x)\pr{x}\otimes \sigma\jz\\
&=
\inf_{\sigma\in\S(\hil)}\sum_{x\in\X}P(x)\divv\bz\pr{x}\otimes W(x)\|\pr{x}\otimes \sigma\jz\\
&=
\inf_{\sigma\in\S(\hil)}\sum_{x\in\X}P(x)\divv\bz W(x)\| \sigma\jz\\
&=\chi_{\divv}(W,P),
\end{align*}
where the first two equalities are by definition, the third follows by block additivity,
the fourth by homogeneity, the fifth by isometric invariance (see the beginning of Section 
\ref{sec:gendivrad}), and the last one is again by definition.
\end{proof}

In particular, all $\oll Q_{\alpha,z}$ are block additive and homogenous, and hence we have
\begin{cor}\label{cor:info radius equal}
For any $(\alpha,z)$, any gcq channel $W$ and input distribution $P\in\P_f(\X)$, we have 
\begin{align*}
I_{\oll Q_{\alpha,z}}(W,P)=\chi_{\oll Q_{\alpha,z}}(W,P).
\end{align*}
\end{cor}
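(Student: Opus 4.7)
The plan is to invoke Lemma \ref{lemma:info radius equal} directly; it will then suffice to verify that each $\oll Q_{\alpha,z}$ is block additive and homogeneous. Since $\oll Q_{\alpha,z} = s(\alpha)Q_{\alpha,z}$ differs from $Q_{\alpha,z}$ only by a fixed sign, both properties transfer from one to the other, so I will work with $Q_{\alpha,z}$ and handle the sign at the end.

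For homogeneity with finite $z$ and positive definite $\rho,\sigma$, I would just compute: $(\lambda\rho)^{\alpha/(2z)} = \lambda^{\alpha/(2z)}\rho^{\alpha/(2z)}$ and $(\lambda\sigma)^{(1-\alpha)/z} = \lambda^{(1-\alpha)/z}\sigma^{(1-\alpha)/z}$ by the functional calculus, so the operator inside the outer $z$-th power acquires a scalar factor $\lambda^{\alpha/z+(1-\alpha)/z}=\lambda^{1/z}$, which becomes $\lambda$ after taking the $z$-th power of the trace. For $z=+\infty$, the Lie–Trotter representation in \eqref{exp} gives $\alpha\log(\lambda\rho)+(1-\alpha)\log(\lambda\sigma)=(\log\lambda)I+\alpha\log\rho+(1-\alpha)\log\sigma$, and the same factor $\lambda$ pulls out of the trace. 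The extension to non-invertible $\rho,\sigma$ is immediate from the limiting identities \eqref{ext0}--\eqref{ext1}.

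For block additivity I assume $\rho_1^0\vee\sigma_1^0\perp\rho_2^0\vee\sigma_2^0$, so that the four supports sit in two mutually orthogonal subspaces. The functional calculus respects this block decomposition, giving $(\rho_1+\rho_2)^{\alpha/(2z)}=\rho_1^{\alpha/(2z)}+\rho_2^{\alpha/(2z)}$ and similarly for $\sigma$. Expanding the sandwich $(\rho_1+\rho_2)^{\alpha/(2z)}(\sigma_1+\sigma_2)^{(1-\alpha)/z}(\rho_1+\rho_2)^{\alpha/(2z)}$, every cross term contains a product of operators with orthogonal supports and therefore vanishes, leaving a block-diagonal operator whose $z$-th power and trace split as a sum; this yields $Q_{\alpha,z}(\rho_1+\rho_2\|\sigma_1+\sigma_2)=Q_{\alpha,z}(\rho_1\|\sigma_1)+Q_{\alpha,z}(\rho_2\|\sigma_2)$. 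For $z=+\infty$ the argument is cleaner still: $\alpha\log(\rho_1+\rho_2)+(1-\alpha)\log(\sigma_1+\sigma_2)$ is block-diagonal with the $i$-th block equal to $\alpha\log\rho_i+(1-\alpha)\log\sigma_i$, and exponentiation preserves the block structure, so the trace decomposes additively.

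There is no real obstacle here; the statement is an immediate consequence of Lemma \ref{lemma:info radius equal} once these two routine properties are checked. The only small care required is the passage from positive definite to general positive semi-definite operators, which is taken care of by the continuity/limit identities \eqref{ext0}--\eqref{ext1} already established in the preliminaries.
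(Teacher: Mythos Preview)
Your proposal is correct and follows exactly the paper's approach: the paper simply asserts that all $\oll Q_{\alpha,z}$ are block additive and homogeneous and then invokes Lemma~\ref{lemma:info radius equal}. Your verification of these two properties is more detailed than what the paper gives, but the route is identical.
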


Note that the relative entropy $D=D_1$ is also block additive and homogeneous, and hence the corresponding mutual 
information and weighted $D$-radius coincide for any input distribution $P$:
\begin{align}\label{Holevo quantity}
\chi_{1}(W,P)=I_1(W,P).
\end{align}
Moreover, the relative entropy is even more special, as for any cq channel $W$, the $P$-weighted $D$-center 
coincides with the minimizer in \eqref{mutual info def}, and can be explicitly given as $W(P)$. Indeed,
for any state $\sigma\in\S(\hil)$, we have the simple identities (attributed to Donald)
\begin{align*}
D\bz\ext{W}(P)\|P\otimes \sigma\jz
=
\sum_{x\in\X}P(x)D(W(x)\|\sigma)
&=
D(W(P)\|\sigma)+\sum_{x\in\X}P(x)D(W(x)\|W(P))\\
&=
D(W(P)\|\sigma)+D\bz\ext{W}(P)\|P\otimes W(P)\jz,
\end{align*}
and the assertion follows from the strict positivity of the relative entropy on pairs of states.
(Note that for this it is necessary that all the $W(x)$ are normalized on $\supp P$.)
$\chi_{1}(W,P)=I_1(W,P)$
is called the \ki{Holevo quantity} of the ensemble 
$\{W(x),P(x)\}_{x\in\supp P}$ in the quantum information theory literature.
\medskip

It is easy to see that $D_{\alpha,z}$ is not block additive if $\alpha\ne 1$, and in general 
the $D_{\alpha,z}$ mutual information $I_{\alpha,z}(W,P):=I_{D_{\alpha,z}}(W,P)$ 
and the weighted $D_{\alpha,z}$ divergence radius 
$\chi_{\alpha,z}(W,P)$ are different quantities. 
However, they can be related by a simple inequality, as 
\begin{align}
I_{\alpha,z}(W,P)
&=
\inf_{\sigma\in\S(\hil)}\frac{1}{\alpha-1}
\log Q_{\alpha,z}\bz\sum_{x\in\X}P(x)\pr{x}\otimes W(x)\Big\|\sum_{x\in\X}P(x)\pr{x}\otimes \sigma\jz
\label{D info equal Q radius}\\
&=
\inf_{\sigma\in\S(\hil)}\frac{1}{\alpha-1}\log\sum_{x\in\X}P(x)Q_{\alpha,z}(W(x)\|\sigma)
\label{D info equal Q radius2}\\
&\le
\inf_{\sigma\in\S(\hil)}\sum_{x\in\X}P(x)\frac{1}{\alpha-1}\log Q_{\alpha,z}(W(x)\|\sigma)\ds\ds\ds
(\text{if }\alpha\in(0,1))\nonumber\\
&=
\inf_{\sigma\in\S(\hil)}\sum_{x\in\X}P(x)D_{\alpha,z}(W(x)\|\sigma)\nonumber\\
&=
\chi_{\alpha,z}(W,P),\nonumber
\end{align}
where the second equality follows as in the proof of Lemma \ref{lemma:info radius equal}, and the inequality is due to the concavity of the logarithm. Obviously, the inequality holds in the opposite direction when $\alpha>1$.

The above inequality is in general strict. As an example, consider a noiseless channel as in Example \ref{ex:noiseless}
with a non-uniform input distribution $P$. Then we have 
\begin{align*}
I_{\alpha,z}(W,P)
&\le D_{\alpha,z}(\ext{W}(P)\|P\otimes W(P))
=
\frac{1}{\alpha-1}\log\sum_{x\in\X} P(x) Q_{\alpha,z}(W(x)\|W(P))\\
&=
\frac{1}{\alpha-1}\log\sum_{x\in\X} P(x) P(x)^{1-\alpha}
<
\sum_{x\in\X} P(x)\frac{1}{\alpha-1}\log P(x)^{1-\alpha}\\
&=H(P)=\chi_{\alpha,z}(W,P),
\end{align*}
where the first inequality is by definition, the strict inequality follows from the strict concavity of the logarithm, and the last equality is due to Example \ref{ex:noiseless}.

While the mutual information $I_{\alpha,z}$ for $D_{\alpha,z}$ differs from the weighted channel radius 
$\chi_{\alpha,z}$ for $D_{\alpha,z}$, 
it is a simple function of the weighted channel radius for $Q_{\alpha,z}$ (and thus, by Corollary 
\ref{cor:info radius equal}, also of the mutual information for $Q_{\alpha,z}$); indeed, 
moving the infimum over $\sigma$ behind the logarithm in 
\eqref{D info equal Q radius} and \eqref{D info equal Q radius2}, respectively, yields
\begin{align}\label{az mutual info}
I_{\alpha,z}(W,P)
=
\frac{1}{\alpha-1}\log s(\alpha)I_{\oll Q_{\alpha,z}}(W,P)
=
\frac{1}{\alpha-1}\log s(\alpha)\chi_{\oll Q_{\alpha,z}}(W,P).
\end{align}
Moreover, the difference between $I_{\alpha,z}$ and $D_{\alpha,z}$ vanishes after optimizing over the input distribution:
\begin{cor}\label{cor:alpha-z capacities}
If $(\alpha,z)$ are such that $D_{\alpha,z}$ is convex in its second argument then 
\begin{align}\label{az radius1}
\sup_{P\in\P_f(\X)}\chi_{\alpha,z}(W,P)=R_{D_{\alpha,z}}(\ran W),
\end{align}
and if $\alpha=1$ or $\alpha\ne 1$ and $(\alpha,z)$ are such that $\oll Q_{\alpha,z}$ is convex in its second argument 
 then 
\begin{align}\label{az radius2}
\sup_{P\in\P_f(\X)}I_{\alpha,z}(W,P)
&=
R_{D_{\alpha,z}}(\ran W)
\end{align}
for any gcq channel $W:\,\X\to\S(\hil)$.
\end{cor}
\begin{proof}
The identity in \eqref{az radius1} is immediate from Proposition \ref{prop:radius equality}.
When $\alpha=1$, we have $I_{1,z}(W,P)=\chi_{1,z}(W,P)$, and hence for the rest we assume that $\alpha\ne 1$.
The identity in \eqref{az radius2} follows as
\begin{align*}
\sup_{P\in\P_f(\X)}I_{\alpha,z}(W,P)
&=
\frac{1}{\alpha-1}\log s(\alpha)\sup_{P\in\P_f(\X)}\chi_{\oll Q_{\alpha,z}}(W,P)\\
&=
\frac{1}{\alpha-1}\log s(\alpha)R_{\oll Q_{\alpha,z}}(\ran W)\\
&=
\frac{1}{\alpha-1}\log s(\alpha)\inf_{\sigma\in\S(\hil)}\sup_{x\in\X}\oll Q_{\alpha,z}(W(x)\|\sigma)\\
&=
\inf_{\sigma\in\S(\hil)}\sup_{x\in\X}\frac{1}{\alpha-1}\log s(\alpha)\oll Q_{\alpha,z}(W(x)\|\sigma)\\
&=
\inf_{\sigma\in\S(\hil)}\sup_{x\in\X}D_{\alpha,z}(W(x)\|\sigma)\\
&=
R_{D_{\alpha,z}}(\ran W)
\end{align*}
where the first equality is due to \eqref{az mutual info}, 
the second one is due to Proposition \ref{prop:radius equality}, and the rest are obvious.
\end{proof}

\begin{rem}
The above proof method is due to Csisz\'ar \cite{Csiszar}, and extends various prior results
for different quantum R\'enyi divergences and ranges of parameters $(\alpha,z)$
in \cite{KW,MH,MO-cqconv,WWY}. The special case
\begin{align*}
\sup_{P\in\P_f(\X)}\chi_{1}(W,P)=R_{D}(\ran W)
=\sup_{P\in\P_f(\X)}I_{1}(W,P)
\end{align*}
was proved by different methods in \cite{OPW,SW2,TomamichelTan2015}.
\end{rem}

\begin{rem}
Mutual information quantifies the amount of correlation in a bipartite state by measuring its distance from the 
set of uncorrelated states. Following this general idea, one may give two alternative definitions of mutual information between the input and the output of a gcq channel $W$ for a fixed input distribution $P$ using a general divergence $\divv$:
the $\divv$-``distance'' of $\ext{W}(P)$ from the product of its marginals $P\otimes W(P)$:
\begin{align*}
I_{\divv}^{(2)}(W,P):=\divv\bz\ext{W}(P)\|P\otimes W(P)\jz,
\end{align*}
or the $\divv$-``distance'' of $\ext{W}(P)$ from the set of uncorrelated states:
\begin{align*}
I_{\divv}^{(1)}(W,P):=\inf_{\omega\in\S(\hil_{\X}),\,\sigma\in\S(\hil)}\divv\bz\ext{W}(P)\|\omega\otimes \sigma\jz.
\end{align*}
Both of these options seem more natural than the curiously asymmetric optimization in \eqref{mutual info def}, 
and one might wonder why it is nevertheless the capacity 
$\chi_{\alpha}\nw(W):=\sup_{P\in\P_f(\X)} I_{\alpha}\nw(W,P)$ corresponding to the version 
\eqref{mutual info def} (for the sandwiched R\'enyi divergence)
that seems to obtain an operational significance in channel coding, according to \cite{Csiszar,MO-cqconv}. 

There seems to be at least two different resolutions of this question: First, as the more detailed analysis
of constant composition channel coding shows (according to \cite{Csiszar} and our Corollary \ref{prop:cutoff}), it 
is in fact not the mutual information, but the weighted divergence radius that obtains a natural operational interpretation in channel coding; the mutual information only enters the picture because its optimized version over all input distributions ``happens to'' coincide with the optimized version of the divergence radius. 
In this context it would be interesting to know if any of the trivial inequalities 
\begin{align*}
\sup_{p\in\P_f(\X)}I_{\divv}^{(1)}(W,P)
\le
\sup_{p\in\P_f(\X)}I_{\divv}(W,P)
\le
\sup_{p\in\P_f(\X)}I_{\divv}^{(2)}(W,P)
\end{align*}
holds as an equality for general $W$ and $P$ (at least for $\divv=D_{\alpha}\nw$ with $\alpha>1$).

Second, according to \eqref{az mutual info}, the mutual information for R\'enyi divergences is simply a function of another weighted divergence radius, corresponding to the $\oll Q$ quantities rather than the R\'enyi divergences, 
and the optimization over $\sigma$ is simply the optimization over the candidates for the weighted divergences centers, which is very natural, and in this context the problem of asymmetry does not even makes sense. 
\end{rem}
\medskip

The same arguments as in Sections \ref{sec:Renyi center} and \ref{sec:additivity} yield the following statements, and hence we omit their proofs.

\begin{lemma}\label{lemma:minimizer support Q}
Let $\sigma$ be a $P$-weighted $\oll Q_{\alpha,z}$ center for $W$.
\smallskip

\noindent (1) If $(\alpha,z)$ is such that $\oll Q_{\alpha,z}$ is quasi-convex in its second argument
then $\sigma^0\le W(P)^0$. 
\smallskip

\noindent (2) If
$\alpha> 1$ or $\alpha\in(0,1)$ and 
$1-\alpha<z<+\infty$ then 
$W(P)^0\le\sigma^0$.
\end{lemma}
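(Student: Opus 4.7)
My plan is to mimic the proofs of Lemmas \ref{lemma:minimizer support} and \ref{lemma:support2}, which were written for $D_{\alpha,z}$, and adapt them to $\oll Q_{\alpha,z}$. The only substantive issue is that $\oll Q_{\alpha,z}$ is not the logarithm of $Q_{\alpha,z}$ but is directly a scalar multiple of it, so the scaling properties and derivative formulas pick up slightly different (but still tractable) factors.

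For part (1), I would copy the construction from the proof of Lemma \ref{lemma:minimizer support} verbatim: set $\F(X):=W(P)^0 X W(P)^0+(I-W(P)^0)X(I-W(P)^0)=\tfrac12(X+UXU^*)$ for the block-diagonal unitary $U$ that is $+I$ on $\ran W(P)^0$ and $-I$ on its complement, put $c:=\Tr W(P)^0\sigma$, and $\tilde\sigma:=W(P)^0\sigma W(P)^0/c$. The quasi-convexity of $\oll Q_{\alpha,z}$ in its second argument together with $UW(x)U^*=W(x)$ for $x\in\supp P$ gives $\oll Q_{\alpha,z}(W(x)\|\F(\sigma))\le\oll Q_{\alpha,z}(W(x)\|\sigma)$. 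The only new ingredient is scaling: a direct computation from the definition yields $\oll Q_{\alpha,z}(\rho\|\lambda\sigma)=\lambda^{1-\alpha}\oll Q_{\alpha,z}(\rho\|\sigma)$, and one checks that for $c\in(0,1]$ the factor $c^{1-\alpha}$ is $\ge 1$ when $\alpha>1$ (since $1-\alpha<0$ and $Q_{\alpha,z}\ge 0$), and $\le 1$ when $\alpha\in(0,1)$ (but then $\oll Q_{\alpha,z}=-Q_{\alpha,z}$ flips the sign). In both cases $\oll Q_{\alpha,z}(W(x)\|\F(\sigma))=c^{1-\alpha}\oll Q_{\alpha,z}(W(x)\|\tilde\sigma)\ge\oll Q_{\alpha,z}(W(x)\|\tilde\sigma)$, strict unless $c=1$. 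Chaining these, $\oll Q_{\alpha,z}(W(x)\|\tilde\sigma)\le\oll Q_{\alpha,z}(W(x)\|\sigma)$ and averaging with $P(x)$ gives a strict improvement unless $c=1$, i.e.\ unless $\sigma^0\le W(P)^0$.

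For part (2) with $\alpha>1$, the argument is immediate: if $W(P)^0\nleq\sigma^0$ then some $W(x)^0\nleq\sigma^0$ with $x\in\supp P$, whence $Q_{\alpha,z}(W(x)\|\sigma)=+\infty$ and therefore $\oll Q_{\alpha,z}(W(x)\|\sigma)=+\infty$, ruling out $\sigma$ as a minimizer. For $\alpha\in(0,1)$ and $1-\alpha<z<+\infty$, I would copy the perturbation argument of Lemma \ref{lemma:support2}: set $\omega:=P^\sigma_0/\Tr P^\sigma_0=:c P^\sigma_0$, $\sigma_t:=(1-t)\sigma+t\omega$, and $\tilde F(\sigma):=\sum_{x}P(x)\oll Q_{\alpha,z}(W(x)\|\sigma)=-\sum_x P(x)Q_{\alpha,z}(W(x)\|\sigma)$. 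The Fr\'echet derivative of $\sigma\mapsto Q_{\alpha,z}(W(x)\|\sigma)$ is the expression appearing in Lemma \ref{lemma:F derivative} without the $(\alpha-1)^{-1}Q_{\alpha,z}(W(x)\|\sigma)^{-1}$ prefactor, and the same operator bounds \eqref{operator bound1}--\eqref{operator bound2} show that the term $t^{(1-\alpha)/z-1}\Tr A_{x_0,t}P^\sigma_0$ diverges to $+\infty$ while the remaining terms $\Tr A_{x,t}P^\sigma_\lambda$ stay bounded. Consequently, the derivative of $t\mapsto\tilde F(\sigma_t)$ tends to $-\infty$ as $t\searrow 0$, contradicting the assumption that $\sigma$ minimizes $\tilde F$.

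The only step requiring any care is verifying that replacing $D_{\alpha,z}=\tfrac{1}{\alpha-1}\log Q_{\alpha,z}$ by $\oll Q_{\alpha,z}=s(\alpha)Q_{\alpha,z}$ does not affect the direction of the divergence in the perturbation argument of Lemma \ref{lemma:support2}; but since $(\alpha-1)^{-1}$ and $s(\alpha)$ carry the same sign and $Q_{\alpha,z}(W(x)\|\sigma_t)\to Q_{\alpha,z}(W(x)\|\sigma)\in(0,+\infty)$ by the same reasoning as before, the $-\infty$ blow-up is preserved. Thus no genuinely new obstacle arises, which is consistent with the remark in the paper that \textit{the same arguments} yield the statement.
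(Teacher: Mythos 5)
Your proof is correct and, as the paper itself signals by saying the statements "follow by the same arguments," it is the intended proof. You have correctly isolated the two places where the transition from $D_{\alpha,z}$ to $\oll Q_{\alpha,z}$ changes the bookkeeping, and verified that neither changes the conclusion.

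For part (1), the adaptation you give is exactly right: since $\F(\sigma)$ is block-diagonal with respect to $W(P)^0\oplus(I-W(P)^0)$ and $W(x)^0\le W(P)^0$ for $x\in\supp P$, one has $Q_{\alpha,z}(W(x)\|\F(\sigma))=Q_{\alpha,z}(W(x)\|c\tilde\sigma)=c^{1-\alpha}Q_{\alpha,z}(W(x)\|\tilde\sigma)$ with $c=\Tr W(P)^0\sigma\in(0,1]$, and the multiplicative factor $c^{1-\alpha}$ acts in the favorable direction precisely because $\oll Q_{\alpha,z}$ has definite sign ($\ge 0$ for $\alpha>1$, $\le 0$ for $\alpha<1$). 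One small thing worth noting for completeness, which applies equally to the paper's own proof of Lemma \ref{lemma:minimizer support}: the strictness of the final inequality when $c<1$ requires that the objective value $\sum_x P(x)\oll Q_{\alpha,z}(W(x)\|\tilde\sigma)$ is nonzero, which is automatic for a minimizer (it is strictly negative for $\alpha<1$, strictly positive and finite for $\alpha>1$, whenever, e.g., $W(P)/\Tr W(P)$ is a feasible comparison point).

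For part (2), the key observation is that $D_{\alpha,z}=\frac{1}{\alpha-1}\log Q_{\alpha,z}$ and $\oll Q_{\alpha,z}=s(\alpha)Q_{\alpha,z}$ have derivatives differing by the factor $(\alpha-1)Q_{\alpha,z}(W(x)\|\sigma_t)$, which is strictly negative and bounded away from $0$ and $-\infty$ as $t\searrow 0$ when $\alpha\in(0,1)$; since $s(\alpha)=-1$ then too, the sign of the prefactor is unchanged and the blow-up $\tilde f'(t)\to-\infty$ survives. Your argument correctly preserves the two cases $z\ge 1$ and $z\in(0,1)$ from the estimates \eqref{operator bound1}--\eqref{operator bound2}, whose validity is unaffected by replacing $D_{\alpha,z}$ by $\oll Q_{\alpha,z}$. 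In short, this is the paper's argument, correctly adapted.
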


Let us define $\Gamma_{\oll Q}$ to be the set of $(\alpha,z)$ values such that 
for any gcq channel $W$ and any input probability distribution $P$, 
any $P$-weighted $\oll Q_{\alpha,z}$ center $\sigma$ for $W$ satisfies
$\sigma^0=W(P)^0$.
Then Lemmas \ref{lemma:az monotonicity}, \ref{lemma:2nd convexity}, and \ref{lemma:minimizer support Q} yield
\begin{align*}
\Gamma_{\oll Q}\supseteq\left\{(\alpha,z):\,\alpha\in(0,1),\,1-\alpha< z+\infty\right\}
\cup
\left\{(\alpha,z):\,\alpha>1,\, z\ge\max\{\alpha/2,\alpha\}\right\}.
\end{align*}

\begin{prop}\label{prop:fixed point characterization Q}
Assume that $(\alpha,z)\in\Gamma_{\oll Q}$ are such that 
$\oll Q_{\alpha,z}$ is convex in its second variable.
Then $\sigma$ is a $P$-weighted $\oll Q_{\alpha,z}$ center for $W$
if and only if it is a fixed point of 
the map
\begin{align}\label{Q fixedpoint eq}
\map_{W,P,\oll Q_{\alpha,z}}(\sigma):=&
\frac{1}{\tau}\sum_{x\in\X}P(x)\bz \sigma^{\frac{1-\alpha}{2z}} W(x)^{\frac{\alpha}{z}}\sigma^{\frac{1-\alpha}{2z}}\jz^{z},
\end{align}
$\sigma\in\S_{W,P}(\hil)_{++}$, where $\tau$ is the normalization factor
\begin{align*}
\tau := \sum_{x\in\X}P(x)Q_{\alpha,z}(W(x)\|\sigma).
\end{align*}
\end{prop}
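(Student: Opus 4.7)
The plan is to follow exactly the same strategy as in the proof of Theorem \ref{prop:fixed point characterization}, replacing $D_{\alpha,z}$ by $\oll Q_{\alpha,z}$ throughout, and to account for the fact that we now differentiate $Q_{\alpha,z}$ directly rather than its logarithm. Concretely, I would set $G(\sigma) := \sum_{x\in\X}P(x)\oll Q_{\alpha,z}(W(x)\|\sigma)$. By the hypothesis $(\alpha,z)\in\Gamma_{\oll Q}$, any minimizer $\sigma$ satisfies $\sigma^0 = W(P)^0$, so after restricting to $\ran W(P)^0$ we may assume $\sigma\in\B(\hil)_{++}$. The convexity of $\oll Q_{\alpha,z}$ in its second argument then makes stationarity of $G$ equivalent to being a minimizer: $\sigma$ is optimal iff $DG(\sigma)(Y)=0$ for all self-adjoint traceless $Y$.

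Next, I would compute the Fr\'echet derivative exactly as in Lemma \ref{lemma:F derivative}, but applied to $G$ rather than $F$. The computation uses the same chain rule decomposition $Q_{\alpha,z}(W(x)\|\cdot) = g_x \circ \iota_x \circ H_{\alpha,z}$, so the only difference from the earlier proof is the absence of the factor $\frac{1}{(\alpha-1)Q_{\alpha,z}(W(x)\|\sigma)}$ coming from differentiating the logarithm. Introducing a Lagrange multiplier $\lambda$ for the trace constraint yields
\begin{align*}
\lambda I = s(\alpha)\,z \sum_{x\in\X}P(x)\sum_{a,b}h_{\alpha,z}^{[1]}(a,b)\,P^\sigma_a W(x)^{\frac{\alpha}{2z}}\bz W(x)^{\frac{\alpha}{2z}}\sigma^{\frac{1-\alpha}{z}}W(x)^{\frac{\alpha}{2z}}\jz^{z-1}W(x)^{\frac{\alpha}{2z}}P^\sigma_b.
\end{align*}

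Multiplying both sides by $\sigma^{1/2}$ from left and right and taking the trace, the off-diagonal terms drop, leaving $\lambda = s(\alpha)(1-\alpha)\sum_x P(x)Q_{\alpha,z}(W(x)\|\sigma) = -|\alpha-1|\tau$. Substituting this value of $\lambda$ back into the stationarity equation, dividing by $\lambda$ and rearranging, I obtain exactly the same algebraic form as equation \eqref{fixed point1} in the proof of Theorem \ref{prop:fixed point characterization}, except that the state-dependent weights $\frac{1}{Q_{\alpha,z}(W(x)\|\sigma)}$ are replaced by the common normalization $\frac{1}{\tau}$. Applying the identity $Xf(X^*X)X^*=(\id_{\bR}f)(XX^*)$ with $X=\sigma^{\frac{1-\alpha}{2z}}W(x)^{\frac{\alpha}{2z}}$, as in the end of the earlier proof, collapses the expression to \eqref{Q fixedpoint eq}. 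A consistency check by taking the trace of the resulting fixed-point equation and using $\Tr\bz\sigma^{\frac{1-\alpha}{2z}}W(x)^{\frac{\alpha}{z}}\sigma^{\frac{1-\alpha}{2z}}\jz^{z}=Q_{\alpha,z}(W(x)\|\sigma)$ (since $(XX^*)^z$ and $(X^*X)^z$ have equal trace) recovers $\Tr\sigma = 1$, confirming that $\tau$ is indeed the right normalization.

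There is no genuinely new obstacle: since the derivation is an almost mechanical transcription of the $D_{\alpha,z}$ proof, the only points to verify with care are (i) that the evaluation of $\lambda$ via $\sigma^{1/2}(\cdot)\sigma^{1/2}$ and the trace gives a strictly negative value so that one may divide by it, which is automatic from $\tau>0$ and $\alpha\ne 1$, and (ii) that the algebraic manipulation of the divided-difference sum into the form $\bz\sigma^{\frac{1-\alpha}{2z}}W(x)^{\frac{\alpha}{z}}\sigma^{\frac{1-\alpha}{2z}}\jz^{z}$ is identical to the one used before. Both are straightforward, so invoking the earlier proof verbatim (as the excerpt already suggests) suffices.
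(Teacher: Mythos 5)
Your proof is correct and takes exactly the route the paper intends: the paper states immediately before Lemma \ref{lemma:minimizer support Q} that ``the same arguments as in Sections \ref{sec:Renyi center} and \ref{sec:additivity} yield the following statements, and hence we omit their proofs,'' and your transcription of the proof of Theorem \ref{prop:fixed point characterization} with $D_{\alpha,z}$ replaced by $\oll Q_{\alpha,z}$ is precisely that argument. The two points you flag for care are the right ones, and your evaluations of the Lagrange multiplier ($\lambda=s(\alpha)(1-\alpha)\tau=-|\alpha-1|\tau<0$) and of the trace consistency check are both correct.
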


\begin{rem}\label{mutual info remark1}
Note that the fixed point equations in \eqref{Q fixedpoint eq} and
\eqref{D fixed point eq} look very similar, except that the normalization of sigma is obtained 
``globally'' in the former, and for each $x$ individually in the latter.
\end{rem}

\begin{rem}\label{rem:z=1 power mean}
As we have mentioned above, the relative entropy ($\alpha=1$) is special in that for cq channels the minimizer for the mutual information can be explicitly determined (as $W(P)$), and hence also the mutal information can be given by an explicit formula. The only other known family of quantum R\'enyi divergences with these properties are the 
Petz-type R\'enyi divergences (corresponding to $z=1$). Indeed, it is easy to verify that 
in this case we have the classical-quantum Sibson identity \cite[Lemma 2.2]{KW}
$D_{\alpha,1}(\ext{W}(P)\|P\otimes\sigma)=D_{\alpha,1}(\sigma_{\alpha,1}\|\sigma)+\frac{1}{\alpha-1}
\log(\Tr\omega(\alpha))^{\alpha}$, where 
\begin{align*}
\omega(\alpha):=\bz\sum_x P(x)W(x)^{\alpha}\jz^{1/\alpha},\ds\ds\text{and}\ds\ds
\sigma_{\alpha,1}:=\omega(\alpha)/\Tr\omega(\alpha).
\end{align*}
As a consequence, $\sigma_{\alpha,1}$ is the unique 
minimizer for $I_{\alpha,1}(W,P)$, and, by \eqref{az mutual info}, it is also the unique
$P$-weighted $\oll Q_{\alpha,1}$ center. Moreover,
\begin{align*}
\chi_{\oll Q_{\alpha,1}}(W,P)&=\sum_{x\in\X}P(x)\oll Q_{\alpha,1}(W(x)\|\sigma_{\alpha,1})=
s(\alpha)(\Tr\omega(\alpha))^{\alpha}
=
s(\alpha)\bz\Tr\bz\sum_x P(x)W(x)^{\alpha}\jz^{1/\alpha}\jz^{\alpha}.
\end{align*}
\end{rem}

\begin{prop}\label{prop:radius multiplicativity}
\ki{(Additivity of the $D_{\alpha,z}$ mutual information)}
Let $W^{(1)}:\,\X^{(1)}\to\S(\hil^{(1)})$ and 
$W^{(2)}:\,\X^{(2)}\to\S(\hil^{(2)})$ be gcq channels, 
and $P^{(i)}\in\P_f(\X^{(i)})$, $i=1,2$, be input distributions.
Assume, moreover, that $\alpha$ and $z$ satisfy the conditions of 
Proposition \ref{prop:fixed point characterization Q}.
Then
\begin{align}
\chi_{\oll Q_{\alpha,z}}\bz W^{(1)}\otimes W^{(2)},P^{(1)}\otimes P^{(2)}\jz=
\chi_{\oll Q{\alpha,z}}\bz W^{(1)},P^{(1)}\jz
\cdot
\chi_{\oll Q_{\alpha,z}}\bz W^{(2)},P^{(2)}\jz,\label{Qrad mult}\\
I_{\alpha,z}\bz W^{(1)}\otimes W^{(2)},P^{(1)}\otimes P^{(2)}\jz=
I_{\alpha,z}\bz W^{(1)},P^{(1)}\jz
+
I_{\alpha,z}\bz W^{(2)},P^{(2)}\jz.\label{mutual info add}
\end{align}
\end{prop}

\begin{rem}
In \cite{HT14}, a generalized notion of mutual information was studied, where the bipartite state need not be 
classical-quantum, and the first marginal of the second argument is fixed but not necessarily equal to the first marginal of the first argument. For sandwiched R\'enyi divergences a characterization of the optimal state in terms of a fixed point equation, as well as the additivity of the generalized mutual information was obtained. Our approach is essentially the same as that of \cite{HT14}, and 
Propositions \ref{prop:fixed point characterization Q} and 
\ref{prop:radius multiplicativity} are special cases of the results of \cite{HT14} when 
 $z=\alpha$.
\end{rem}

\begin{rem}
Note that for the special case $z=1$, the relations \eqref{Qrad mult} and \eqref{mutual info add} follow immediately from the explicit expression for the minimizer in Remark \ref{rem:z=1 power mean}.
\end{rem}

Finally, Corollary \ref{cor:alpha-z capacities}, Propositions \ref{prop:divrad additive} and \ref{prop:radius multiplicativity}, and Theorem \ref{thm:additivity} yield the following:

\begin{prop}\label{prop:Renyi cap additivity}
Assume that $(\alpha,z)$ are such that $(\alpha,z)\in\Gamma_D$ and 
$D_{\alpha,z}$ is convex in its second variable, or 
$(\alpha,z)\in\Gamma_{\oll Q}$ and
$\oll Q_{\alpha,z}$ is convex in its second variable.
Then the $(\alpha,z)$-capacity is additive in the sense that for any qcq channels
$W^{(i)}:\,\X^{(i)}\to\B(\hil^{(i)})_+$, $i=1,2$,
\begin{align*}
\chi_{\alpha,z}(W^{(1)}\otimes W^{(2)})=
\chi_{\alpha,z}(W^{(1)})+\chi_{\alpha,z}(W^{(2)}).
\end{align*}
\end{prop}
\begin{proof}
The assertion under the first set of conditions follows immediately from 
Theorem \ref{thm:additivity}, Corollary \ref{cor:alpha-z capacities}, and Proposition \ref{prop:divrad additive}, as we have already essentially stated in Remark \ref{rem:capacity additivity}. 
Corollary \ref{cor:alpha-z capacities}, and Propositions \ref{prop:divrad additive} and \ref{prop:radius multiplicativity}
yield the assertion under the second set of conditions by a completely similar argument.
\end{proof}

\begin{rem}\label{rem:Renyi cap additivity}
Additivity of the $\chi_{\alpha,1}$ capacities for classical-quantum channels and $\alpha>1$ was first proved in \cite[Lemma 2]{ON99} by different methods from the above, following the method of Arimoto \cite{A73}.
We credit the book \cite{Hayashibook2}  (especially Exercise 4.30)
for the idea of the above approach, using the identity in \eqref{az radius2} and the additivity 
of the mutual information. We remark that in \cite{Hayashibook2}, only the case 
$z=1$ was considered, where additivity of the mutual information can be obtained without Proposition \ref{prop:radius multiplicativity}, as pointed out in Remark \ref{rem:z=1 power mean}.
\end{rem}

\subsection{PSD divergence center and radius}

Note that while we defined the divergence radius and center for an arbitrary non-empty subset $S$ of $\B(\hil)_+$, in 
the definition of the center we restricted to density operators, i.e., PSD operators with trace $1$. 
This is a natural choice when the set $S$ consists of density operators itself, and, even more importantly, it leads to operationally relevant information measures as our main result, Theorem \ref{thm:main result} shows. 

Moreover, restricting the set of possible divergence centers to density operators may be operationally motivated even when the elements of the set $S$ are not normalized to have trace $1$. Indeed, the optimal success probability of discriminating quantum states $\rho_1,\ldots,\rho_r$ with prior probabilities $p_1,\ldots,p_r$ can be expressed as
\begin{align*}
P_s^*=\exp\bz R_{D_{\infty}\nw}(\{p_1\rho_1,\ldots,p_r\rho_r\})\jz,
\end{align*}
where $D_{\infty}\nw:=\lim_{\alpha\to+\infty}D_{\alpha}\nw=\lim_{\alpha\to+\infty}D_{\alpha,\alpha}$ is the max-relative entropy \cite{RennerPhD,Datta,Renyi_new}. This follows by simply rewriting the optimal success probability $P_s^*:=\max\{\sum_{i=1}^rp_i\Tr\rho_i M_i:\,(M_i)_{i=1}^r\text{ POVM}\}$ using the duality of linear programming, as was done in \cite{YKL} (see also \cite{KRS}).

In this section we consider the alternative approach where the divergence center is allowed to be a general PSD 
operator. This is largely motivated by recent investigations in matrix analysis regarding various concepts of 
multi-variate geometric matrix means, in particular, by the approach of \cite{BGJ2019}. We comment on this in more detail at the end of the section.

For a general divergence $\divv$, we define the \ki{$P$-weighted PSD $\divv$-radius} as
\begin{align*}
\wtilde R_{\divv,P}(S):=\inf_{\sigma\in\B(\hil)_+}\sum_{\rho\in S}P(\rho)\divv(\rho\|\sigma),
\end{align*} 
and we call any $\sigma\in\B(\hil)_+$ that attains the above infimum a \ki{$P$-weighted PSD $\divv$-center}. For a gcq channel $W:\,\X\to\B(\hil)_+$ and $P\in\P_f(\X)$ we define the
$P$-weighted PSD $\divv$-radius of $W$ as before:
\begin{align*}
\wtilde\chi_{\divv}(W,P):=\tilde R_{\divv,P\circ W\inv}(\ran W).
\end{align*}
Any PSD minimizer in the definition of $\tilde R_{\divv,P\circ W\inv}(\ran W)$ will be called a 
$P$-weighted PSD $\divv$-center for the channel $W$.

It is easy to see that these quantities are meaningless for the R\'enyi divergences considered before, as 
we always have
\begin{align*}
\wtilde R_{D_{\alpha,z},P}(S)=-\infty,\ds\ds\ds
\wtilde R_{\oll Q_{\alpha,z},P}(S)=\begin{cases}-\infty,&\alpha\in(0,1),\\ 0,&\alpha>1,\end{cases}
\end{align*}
due to the scaling laws
\begin{align*}
D_{\alpha,z}(\rho\|\lambda\sigma)=D_{\alpha,z}(\rho\|\sigma)-\log\lambda,\ds\ds\ds
\oll Q_{\alpha,z}(\rho\|\lambda\sigma)=\lambda^{1-\alpha}\oll Q_{\alpha,z}(\rho\|\sigma),\ds\ds\ds
\lambda\in(0,+\infty).
\end{align*}
Hence, in order to make sense of the PSD divergence radius, it seems necessary to modify the notion of 
R\'enyi divergence for PSD operators. 

We consider two such options, motivated by Proposition \ref{lemma:non-negative} in Section \ref{sec:positive Renyi}. One is a simple rescaling of $D_{\alpha,z}$, defined as
\begin{align*}
\what D_{\alpha,z}(\rho\|\sigma)&:=
\frac{1}{\alpha-1}\log\frac{Q_{\alpha,z}(\rho\|\sigma)}{(\Tr\rho)^{\alpha}(\Tr\sigma)^{1-\alpha}}
=
\frac{1}{\alpha-1}\log Q_{\alpha,z}(\rho\|\sigma)-\frac{\alpha}{\alpha-1}\log\Tr\rho+\log\Tr\sigma\\
&=
D_{\alpha,z}-\log\Tr\rho+\log\Tr\sigma
=
D_{\alpha,z}\bz\frac{\rho}{\Tr\rho}\Big\|\frac{\sigma}{\Tr\sigma}\jz.
\end{align*}
The limit $\alpha\to 1$ yields
\begin{align*}
\what D_1(\rho\|\sigma):=\lim_{\alpha\to 1}\what D_{\alpha,z(\alpha)}(\rho\|\sigma)
=
D_1(\rho\|\sigma)+\log\Tr\rho-\log\Tr\sigma=D\bz\frac{\rho}{\Tr\rho}\Big\|\frac{\sigma}{\Tr\sigma}\jz
\end{align*}
for any function $\alpha\mapsto z(\alpha)$ that is continuously differentiable in a neighbourhood of $1$, 
on which $z(\alpha)\ne 0$
\cite{LinTomamichel15}.
Obviously, $\what D_{\alpha,z}(\rho\|\sigma)=D_{\alpha,z}(\rho\|\sigma)$ for any pair of states
$\rho,\sigma$.
Note that $\what D_{\alpha,z}$ is a \ki{projective divergence}, i.e., 
$\what D_{\alpha,z}(\lambda\rho\|\sigma)=\what D_{\alpha,z}(\rho\|\lambda\sigma)=\what D_{\alpha,z}(\rho\|\sigma)$
for any $\rho,\sigma\in\B(\hil)_+$ and $\lambda\in(0,+\infty)$.
As an immediate consequence, we have
\begin{align*}
\wtilde R_{\what D_{\alpha,z},P}(S)=R_{D_{\alpha,z},P}(S)-\sum_{\rho}P(\rho)\log\Tr\rho,
\end{align*}
i.e., we essentially recover the previously considered concept of the 
$D_{\alpha,z}$-radius, apart from an uninteresting term. Obviously, if $\sigma$ is a PSD $\what D_{\alpha,z}$-center then so is $\lambda\sigma$ for all $\lambda\in(0,+\infty)$. Moreover, the normalized PSD $\what D_{\alpha,z}$-centers can be characterized by the fixed point equation in Theorem \ref{prop:fixed point characterization}
for the $(\alpha,z)$ pairs for which Theorem \ref{prop:fixed point characterization} holds.

The other option we consider is a modification of the Tsallis quantum R\'enyi divergences, 
or Tsallis relative entropies,
defined as
\begin{align*}
T_{\alpha,z}(\rho\|\sigma):=\frac{1}{1-\alpha}\bz\alpha\Tr\rho+(1-\alpha)\Tr\sigma-Q_{\alpha,z}(\rho\|\sigma)\jz.
\end{align*}
We will call these quantities Tsallis $(\alpha,z)$-divergences.
The limit $\alpha\to 1$ yields
\begin{align*}
T_1(\rho\|\sigma):=\lim_{\alpha\to 1}T_{\alpha,z(\alpha)}(\rho\|\sigma)
=
D(\rho\|\sigma)-\Tr\rho+\Tr\sigma,
\end{align*}
for any function $\alpha\mapsto z(\alpha)$ that is continuously differentiable in a neighbourhood of $1$, 
on which $z(\alpha)\ne 0$
\cite{LinTomamichel15}.

\begin{remark}
The usual way to define the quantum Tsallis relative entropy is 
\begin{align*}
T'_{\alpha}(\rho\|\sigma):=\frac{1}{1-\alpha}\bz\Tr\rho-Q_{\alpha,1}(\rho\|\sigma)\jz.
\end{align*}
Note that $T'_{\alpha}$ coincides with $T_{\alpha,1}$ on pairs of density operators but,
unlike $T_{\alpha,1}$, $T'_{\alpha}$ is not positive on pairs of PSD operators for any $(\alpha,z)$. Moreover, the 
PSD divergence radius problem is trivial for these quantities, as
\begin{align*}
\wtilde R_{T'_{\alpha,z}}(S)=\begin{cases}-\infty,&\alpha\in(0,1),\\ \frac{1}{1-\alpha}\sum_{\rho}P(\rho)\rho,&\alpha>1.\end{cases}
\end{align*}
\end{remark}

We consider the PSD divergence center for $T_{\alpha,z}$ in the gcq channel formalism, for easier comparison with 
Theorem \ref{prop:fixed point characterization} and Proposition \ref{prop:fixed point characterization Q}.

\begin{prop}\label{prop:Tsallis center}
Let $W:\,\X\to\B(\hil)_+$ be a qcq channel.
\begin{enumerate}
\item\label{Tsallis rad1}
For any $\alpha\in(0,+\infty)\setminus\{1\}$ and $z\in(0,+\infty)$, 
\begin{align}\label{Tsallis radius}
\wtilde \chi_{T_{\alpha,z}}(W,P)=\frac{\alpha}{1-\alpha}\left[\Tr W(P)-\bz s(\alpha)\chi_{\oll Q_{\alpha,z}}(W,P)\jz^{1/\alpha}\right].
\end{align}
\item\label{Tsallis rad2}
If $\sigma$ is a $P$-weighted PSD $T_{\alpha,z}$-center for $W$ 
then $\Tr\sigma=\sum_x P(x)Q_{\alpha,z}(W(x)\|\sigma)$, and
$\oll\sigma=\sigma/\Tr\sigma$ is a $P$-weighted $\oll Q_{\alpha,z}$-center for $W$.
\item\label{Tsallis rad3}
If $\oll\sigma\in\S(\hil)$ is a $P$-weighted $\oll Q_{\alpha,z}$-center for $W$ 
then 
\begin{align*}
\sigma:=\bz\sum_x P(x)Q_{\alpha,z}(W(x)\|\oll\sigma)\jz^{1/\alpha}\oll\sigma
\end{align*}
is a $P$-weighted PSD $T_{\alpha,z}$-center for $W$.

\item\label{Tsallis rad4}
If $(\alpha,z)$ satisfy the conditions of Proposition \ref{prop:fixed point characterization Q} then 
$\sigma$ is a $P$-weighted PSD $T_{\alpha,z}$-center for $W$
if and only if it is a solution of the fixed point equation
\begin{align}\label{Tsallis center}
\sigma=\sum_{x\in\X}P(x)\bz\sigma^{\frac{1-\alpha}{2z}}W(x)^{\frac{\alpha}{z}}\sigma^{\frac{1-\alpha}{2z}}\jz^z.
\end{align}
\end{enumerate}
\end{prop}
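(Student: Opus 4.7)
The plan is to reduce the PSD optimization defining $\wtilde\chi_{T_{\alpha,z}}(W,P)$ to the state-space optimization defining $\chi_{\oll Q_{\alpha,z}}(W,P)$ by exploiting the homogeneity $Q_{\alpha,z}(\rho\|\lambda\sigma)=\lambda^{1-\alpha}Q_{\alpha,z}(\rho\|\sigma)$. I would parametrize every PSD candidate as $\sigma=t\,\oll\sigma$ with $t>0$ and $\oll\sigma\in\S(\hil)$, so that, writing $A(\oll\sigma):=\sum_x P(x)Q_{\alpha,z}(W(x)\|\oll\sigma)$, the cost becomes
\begin{align*}
g(t,\oll\sigma):=\sum_x P(x)T_{\alpha,z}(W(x)\|t\,\oll\sigma)=\frac{\alpha}{1-\alpha}\Tr W(P)+t-\frac{t^{1-\alpha}}{1-\alpha}A(\oll\sigma).
\end{align*}
Fixing $\oll\sigma$, the derivative $\partial_t g(t,\oll\sigma)=1-t^{-\alpha}A(\oll\sigma)$ vanishes uniquely at $t_*(\oll\sigma):=A(\oll\sigma)^{1/\alpha}$, and a second-derivative check ($\alpha t^{-\alpha-1}A(\oll\sigma)>0$) confirms this is a minimum. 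Substituting back gives $\min_{t>0}g(t,\oll\sigma)=\frac{\alpha}{1-\alpha}\bz\Tr W(P)-A(\oll\sigma)^{1/\alpha}\jz$.

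For part (i), the remaining infimum over $\oll\sigma\in\S(\hil)$ turns into a supremum of $A(\oll\sigma)^{1/\alpha}$ when $\alpha\in(0,1)$ (where $\alpha/(1-\alpha)>0$) and an infimum when $\alpha>1$. Using monotonicity of $t\mapsto t^{1/\alpha}$ on $[0,\infty)$, together with the elementary identity that $\sup_{\oll\sigma}A(\oll\sigma)$ (resp.~$\inf_{\oll\sigma}A(\oll\sigma)$) equals $s(\alpha)\chi_{\oll Q_{\alpha,z}}(W,P)$ for $\alpha<1$ (resp.~$\alpha>1$) by the sign of $s(\alpha)$, the announced formula drops out. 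Parts (ii) and (iii) are direct consequences of the same calculation: any PSD minimizer $\sigma$ must factor as $t_*(\oll\sigma)\,\oll\sigma$ for some $\oll Q_{\alpha,z}$-center $\oll\sigma$, and the identity $\Tr\sigma=t_*(\oll\sigma)=A(\oll\sigma)^{1/\alpha}$ combined with the scaling identity $\sum_x P(x)Q_{\alpha,z}(W(x)\|\sigma)=t_*(\oll\sigma)^{1-\alpha}A(\oll\sigma)=A(\oll\sigma)^{1/\alpha}=\Tr\sigma$ delivers both the trace equality and the explicit recipe in (iii).

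For part (iv), I would combine the bijection from (ii)--(iii) with the fixed point characterization of $\oll Q_{\alpha,z}$-centers in Proposition \ref{prop:fixed point characterization Q}, namely $\oll\sigma=A(\oll\sigma)^{-1}\sum_x P(x)\bz\oll\sigma^{(1-\alpha)/(2z)}W(x)^{\alpha/z}\oll\sigma^{(1-\alpha)/(2z)}\jz^{z}$. Rescaling via $\sigma=t_*(\oll\sigma)\,\oll\sigma$ pulls a factor $t_*^{1-\alpha}$ out of the $z$-th power, and since $t_*^{1-\alpha}A(\oll\sigma)=A(\oll\sigma)^{1/\alpha}=t_*$, the prefactor $A(\oll\sigma)^{-1}$ and the outer $t_*$ conspire to produce exactly \eqref{Tsallis center}. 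Conversely, taking the trace of \eqref{Tsallis center} already yields the identity $\Tr\sigma=\sum_x P(x)Q_{\alpha,z}(W(x)\|\sigma)$, and running the scaling argument backwards shows that $\sigma/\Tr\sigma$ satisfies the $\oll Q_{\alpha,z}$ fixed point, hence is an $\oll Q_{\alpha,z}$-center by Proposition \ref{prop:fixed point characterization Q}; by part (iii), $\sigma$ itself is then a PSD $T_{\alpha,z}$-center.

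The main obstacle is essentially bookkeeping: one must verify that $A(\oll\sigma)>0$ at the relevant $\oll\sigma$ so that $t_*(\oll\sigma)$ is a genuine interior minimizer of $g(\cdot,\oll\sigma)$, and that the support-regularity built into $\Gamma_{\oll Q}$ renders the correspondence $\sigma\leftrightarrow(t_*(\oll\sigma),\oll\sigma)$ unambiguous. Both are automatic under the hypotheses of Proposition \ref{prop:fixed point characterization Q}, so no substantive additional work is needed for part (iv), but these regularity points should be flagged explicitly to justify the purely formal manipulations above.
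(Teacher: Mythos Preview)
Your proposal is correct and follows essentially the same approach as the paper: parametrize $\sigma=\lambda\,\oll\sigma$ with $\oll\sigma\in\S(\hil)$, optimize the resulting one-variable function in $\lambda$ by calculus to obtain $\lambda=A(\oll\sigma)^{1/\alpha}$, substitute back to reduce to the $\oll Q_{\alpha,z}$-radius problem, and read off (ii)--(iii) from this computation while invoking Proposition~\ref{prop:fixed point characterization Q} for (iv). Your treatment is in fact slightly more detailed than the paper's (you spell out the sign analysis for (i) and both directions of (iv) explicitly, and flag the support-regularity caveat), but the argument is the same.
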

\begin{proof}
We have 
\begin{align*}
\wtilde \chi_{T_{\alpha,z}}(W,P)&=
\inf_{\sigma\in\B(\hil)_+}\sum_x P(x)T_{\alpha,z}(W(x)\|\sigma)\\
&=
\frac{\alpha}{1-\alpha}\Tr W(P)+\inf_{\sigma\in\B(\hil)_+}\left[\Tr\sigma-\frac{1}{1-\alpha}\sum_x P(x)Q_{\alpha,z}(W(x)\|\sigma)\right]\\
&=
\frac{\alpha}{1-\alpha}\Tr W(P)+\inf_{\oll\sigma\in\S(\hil)}\inf_{\lambda>0}
\left[\lambda-\frac{1}{1-\alpha}\lambda^{1-\alpha}\sum_x P(x)Q_{\alpha,z}(W(x)\|\oll\sigma)\right].
\end{align*}
Differentiating w.r.t.~$\lambda$ yields that the optimal $\lambda$ is 
\begin{align}\label{lambda}
\lambda
&=
\bz\sum_x P(x)Q_{\alpha,z}(W(x)\|\oll\sigma)\jz^{1/\alpha}
=
\sum_x P(x)Q_{\alpha,z}(W(x)\|\sigma),
\end{align}
with $\sigma:=\lambda\oll\sigma$.
Writing it back to the previous equation, we get 
\begin{align*}
\wtilde \chi_{T_{\alpha,z}}(W,P)&=
\frac{\alpha}{1-\alpha}\Tr W(P)+
\inf_{\oll\sigma\in\S(\hil)}\frac{\alpha}{\alpha-1}\bz\sum_x P(x)Q_{\alpha,z}(W(x)\|\oll\sigma)\jz^{1/\alpha},
\end{align*}
which is exactly \eqref{Tsallis radius}. This proves \ref{Tsallis rad1}, and \ref{Tsallis rad2} and \ref{Tsallis rad3} are clear from the above argument.
Finally, \ref{Tsallis rad4} is immediate from the above and Proposition \ref{prop:fixed point characterization Q}.
\end{proof}

\begin{rem}\label{rem:general Tsallis center}
Note that \ref{Tsallis rad1}--\ref{Tsallis rad3} of Proposition \ref{prop:Tsallis center} hold true for any pair of divergences 
$T_{\alpha}^q$ and $Q_{\alpha}^q$ related as 
\begin{align*}
T^q_{\alpha}(\rho\|\sigma)=\frac{1}{1-\alpha}\bz\alpha\Tr\rho+(1-\alpha)\Tr\sigma-Q^q_{\alpha}(\rho\|\sigma)\jz,
\end{align*}
provided that $Q_{\alpha}^q$ is a non-negative divergence that satisfies the scaling property 
$Q_{\alpha}^q(\rho\|\lambda\sigma)=\lambda^{\alpha-1}Q_{\alpha}^q(\rho\|\sigma)$, and that there exists a $\sigma\in\B(\hil)_+$ such that $Q_{\alpha}^q(W(x)\|\sigma)<+\infty$ for all $x\in\supp P$. 
\end{rem}

The above Proposition extends Theorem 8 in \cite{BJL2018}, where the fixed point characterization \eqref{Tsallis center} was obtained in the case $z=\alpha=1/2$, by 
a somewhat different proof than above.
The existence of a solution of the fixed point equation \eqref{Tsallis center}
was studied in \cite[Theorem 6.1]{AC2011} for the case $z=\alpha=1/2$, and their proof 
extends without alteration for more general $(\alpha,z)$ pairs as below.

\begin{prop}
Let $\alpha\in(0,1)$.
If there exist positive numbers $\lambda,\eta$, such that 
$W(x)\in[\lambda I,\eta I]:=\{A\in\B(\hil)_+:\,\lambda I\le A\le\eta I\}$ 
for all $x\in\supp P$ then the fixed point equation \eqref{Tsallis center}
has a solution, which is also in $[\lambda I,\eta I]$.
\end{prop}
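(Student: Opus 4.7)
The plan is to apply the Brouwer fixed point theorem to the map
\begin{align*}
T_{W,P,\alpha,z}(\sigma):=\sum_{x\in\X}P(x)\bz\sigma^{\frac{1-\alpha}{2z}}W(x)^{\frac{\alpha}{z}}\sigma^{\frac{1-\alpha}{2z}}\jz^{z}
\end{align*}
on the order interval $K:=[\lambda I,\eta I]$, which is a non-empty, compact and convex subset of the real vector space $\B(\hil)_{\sa}$. The key is to show (i) that $T_{W,P,\alpha,z}$ maps $K$ into itself, and (ii) that it is continuous on $K$; Brouwer's theorem then yields a fixed point $\sigma^\ast\in K$, which by Proposition \ref{prop:Tsallis center}\,\ref{Tsallis rad4} solves \eqref{Tsallis center}. (Note that continuity of $T_{W,P,\alpha,z}$ on $K$ is clear, since each $W(x)\geq\lambda I>0$ and every $\sigma\in K$ is invertible with spectrum in $[\lambda,\eta]$, so all the fractional powers involved are continuous functions of their arguments via the usual continuous functional calculus.)

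For the self-mapping property, the essential point is that all of the exponents appearing are positive when $\alpha\in(0,1)$. First, since the spectrum of any $\sigma\in K$ lies in $[\lambda,\eta]$, the functional calculus applied to the scalar-to-scalar monotone function $t\mapsto t^{(1-\alpha)/(2z)}$ gives
\begin{align*}
\lambda^{\frac{1-\alpha}{2z}}I\le \sigma^{\frac{1-\alpha}{2z}}\le \eta^{\frac{1-\alpha}{2z}}I,
\end{align*}
and similarly $\lambda^{\alpha/z}I\le W(x)^{\alpha/z}\le\eta^{\alpha/z}I$ for every $x\in\supp P$. Sandwiching the middle bound by $\sigma^{(1-\alpha)/(2z)}$ on both sides and using the previous inequality yields
\begin{align*}
\lambda^{\frac{\alpha}{z}}\sigma^{\frac{1-\alpha}{z}}
\le
\sigma^{\frac{1-\alpha}{2z}}W(x)^{\frac{\alpha}{z}}\sigma^{\frac{1-\alpha}{2z}}
\le
\eta^{\frac{\alpha}{z}}\sigma^{\frac{1-\alpha}{z}},
\end{align*}
and applying the spectral bounds on $\sigma^{(1-\alpha)/z}$ again we obtain
\begin{align*}
\lambda^{1/z}I\le\sigma^{\frac{1-\alpha}{2z}}W(x)^{\frac{\alpha}{z}}\sigma^{\frac{1-\alpha}{2z}}\le\eta^{1/z}I.
\end{align*}
Finally, because the inner operator has spectrum in $[\lambda^{1/z},\eta^{1/z}]$, raising it to the positive power $z$ via functional calculus gives
\begin{align*}
\lambda I\le\bz\sigma^{\frac{1-\alpha}{2z}}W(x)^{\frac{\alpha}{z}}\sigma^{\frac{1-\alpha}{2z}}\jz^{z}\le\eta I.
\end{align*}
Taking the convex combination $\sum_x P(x)(\cdot)$ preserves these bounds, hence $T_{W,P,\alpha,z}(\sigma)\in K$, as required.

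There is no real obstacle in this argument; the slightly delicate point is being careful that one does not appeal to operator monotonicity of $t\mapsto t^\gamma$ for $\gamma>1$ (which fails in general), but only to functional calculus comparisons with scalar multiples of the identity, which are valid for any $\gamma>0$. Once the two-sided scalar bounds on $\sigma$ and on $W(x)$ are combined via this scalar-comparison principle step by step, everything goes through uniformly in $x$, and the uniform bound survives the convex average, giving the invariance of $K$ and the existence of a fixed point in $K$ via Brouwer.
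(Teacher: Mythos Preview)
Your proof is correct and follows exactly the same approach as the paper: show that the map on the right-hand side of \eqref{Tsallis center} sends the compact convex order interval $[\lambda I,\eta I]$ into itself, then apply Brouwer's fixed point theorem. You supply considerably more detail than the paper (which simply asserts that the self-mapping property is ``easy to see''), and your care in using only scalar-identity comparisons rather than operator monotonicity of $t\mapsto t^\gamma$ for $\gamma>1$ is well placed; the only superfluous step is the appeal to Proposition~\ref{prop:Tsallis center}\,\ref{Tsallis rad4}, since a fixed point of your map $T_{W,P,\alpha,z}$ is by definition a solution of \eqref{Tsallis center}.
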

\begin{proof}
It is easy to see that the map on the RHS of \eqref{Tsallis center} maps the compact convex set $[\lambda I,\eta I]$ into itself, and hence, by Brouwer's fixed point theorem, it has a fixed point.
\end{proof}

\begin{rem}\label{rem:z=1 power mean2}
The case of the Petz-type Tsallis divergences ($z=1$) is again special in that the weighted PSD
$T_{\alpha,1}$ center and radius can be given explicitly. Indeed, by Remark \ref{rem:z=1 power mean}
and Proposition \ref{prop:Tsallis center}, the unique $T_{\alpha,1}$ center is given by 
\begin{align*}
\tilde\sigma_{\alpha,1}=\bz\sum_x P(x)Q_{\alpha,1}(W(x)\|\sigma_{\alpha,1})\jz^{1/\alpha}
\frac{\omega(\alpha)}{\Tr\omega(\alpha)}
=
\omega(\alpha)=\bz\sum_x P(x)W(x)^{\alpha}\jz^{1/\alpha},
\end{align*}
and
\begin{align*}
\wtilde \chi_{T_{\alpha,z}}(W,P)&=
\frac{\alpha}{\alpha-1}\left[\Tr W(P)-\Tr\bz\sum_x P(x)W(x)^{\alpha}\jz^{1/\alpha}\right].
\end{align*}
\end{rem}
\bigskip

Let us now explain how the above considerations are related to recent research in matrix analysis. First, we 
recall that if $\rho_1,\ldots\rho_r$ are points in a metric space $(M,d)$, and $(p_i)_{i=1}^r$ is a probability 
distribution then the 
\ki{$p$-weighted Fr\'echet variance} is $\inf_{\sigma\in M}\sum_i p_id^2(\rho_i,\sigma)$,
and if $\sigma$ attains this infimum then it is called a
\ki{$p$-weighted Fr\'echet mean} of $\rho_1,\ldots\rho_r$.
These are analogous to our notions of weighted divergence radius and center. Note, however, that we only defined 
divergences on PSD operators, which is more restrictive than the setting of the Fr\'echet means, while 
it is also more general in the sense that a divergence does not need to be the square of a metric. 
(Although some properties of the relative entropy are reminiscent to those of a squared Euclidean distance.)
In particular, if 
$f$ is an injective continuous function on an interval $M\subseteq\bR$, then $d(x,y):=|f(x)-f(y)|$ is a metric on 
$M$, and a straightforward computation shows that 
for any $\rho_1,\ldots,\rho_r\in M$, and any probability distribution $(p_i)_{i=1}^r$,
there is a unique Fr\'echet mean, which is exactly the generalized $f$-mean
$f\inv\bz\sum_{i=1}^rp_if(\rho_i)\jz$, and the Fr\'echet variance is 
$\sum_ip_if(\rho_i)^2-\bz\sum_ip_if(\rho_i)\jz^2$. 

Of particular importance to us are the 
cases $M:=(0,+\infty)$, $f(t):=t^{\alpha}$, which yields the \ki{$\alpha$-power mean}
$\bz\sum_i p_i\rho_i^{\alpha}\jz^{1/\alpha}$, and $f(t):=\log t$ on the same set, which yields the 
\ki{geometric mean} $\prod_i\rho_i^{p_i}$. These special cases are closely related to each other, as the geometric mean can be recovered from the $\alpha$-power mean in the limit $\alpha\searrow 0$,
\begin{align*}
\lim_{\alpha\searrow 0}\bz\sum_i p_i\rho_i^{\alpha}\jz^{1/\alpha}=\prod_i\rho_i^{p_i}, 
\end{align*}
while the $\alpha$-power mean is the unique solution of the fixed point equation 
\begin{align}\label{alpha power fixed point}
\sigma=\sum_i p_i\G_{\alpha}(\rho_i\|\sigma),
\end{align}
where $\G_{\alpha}(\rho_i\|\sigma):=\rho_i^{\alpha}\sigma_i^{1-\alpha}$ is the $\alpha$-geometric mean of $\rho_i$ and $\sigma$.

There are various ways to extend the $\alpha$-geometric mean to operators, and these are closely related to different definitions of R\'enyi divergences. Indeed, for every $(\alpha,z)$, the quantity 
\begin{align*}
\G_{\alpha,z}(\rho\|\sigma)
:=
\bz\sigma^{\frac{1-\alpha}{2z}}\rho^{\frac{\alpha}{z}}\sigma^{\frac{1-\alpha}{2z}}\jz^z
\end{align*}
is well-defined if $\alpha\in(0,1)$, or $\alpha>1$ and $\rho^0\le\sigma^0$, and it
reduces to the $\alpha$-geometric mean for scalars for every $z>0$.
This is related to the $(\alpha,z)$-R\'enyi divergences via $Q_{\alpha,z}(\rho\|\sigma)=\Tr\G_{\alpha,z}(\rho\|\sigma)$. The fixed point equation \eqref{Tsallis center} is an exact analogue of \eqref{alpha power fixed point}, and hence we may call the solution of \eqref{Tsallis center} the $P$-weighted $(\alpha,z)$-power mean 
of $(W(x))_{x\in\X}$, and denote it as $\P_{\alpha,z}(W,P)$, provided that it exists and is unique. (From here we switch to the gcq channel formalism for better comparison with the preceding 
part of the paper, but this is equivalent to considering subsets of PSD operators and probability distributions on them.) That is, the $P$-weighted $(\alpha,z)$-power mean is nothing but the 
$P$-weighted $T_{\alpha,z}$-center for $W$ 
(or, in other terminology, the $P\circ W\inv$-weighted $T_{\alpha,z}$-center of $\ran W$). In general, there is no explicit formula for it;
the case $z=1$, discussed in Remark \ref{rem:z=1 power mean2}, is an exception, and the resulting formula 
is probably the most straightforward extension of the $\alpha$-power mean from numbers to operators.
Following the ideas of \cite{LimPalfia2012}, a family of multivariate geometric means for operators may be defined
as $\G(W,P):=\lim_{\alpha\searrow 0}\G_{\alpha,z(\alpha)}(W,P)$, where $z(\alpha)$ is some well-behaved funcion of 
$\alpha$. It is an interesting question if this limit always exists, how it depends on the choice of $z(\alpha)$, 
and how it relates to other notions of multivariate geometric means.

Probably the most studied notion of $\alpha$-geometric mean for a pair of operators is the \ki{Kubo-Ando geometric mean} \cite{KA}
\begin{align*}
\G_{\alpha}^{\max}(\rho\|\sigma):=\sigma^{1/2}\bz\sigma^{-1/2}\rho\sigma^{-1/2}\jz^{\alpha}\sigma^{1/2},
\end{align*}
introduced by Kubo and Ando for $\alpha\in[0,1]$. This is also a special instance of a \ki{maximal $f$-divergence} \cite{Ma,HiaiMosonyi2017} (with a minus sign for $\alpha\in(0,1)$), and can be extended to $\alpha>1$. It gives rise to the \ki{maximal R\'enyi divergence} \cite{TomamichelBook} $D_{\alpha}^{\max}$ and the maximal Tsallis divergence $T_{\alpha}^{\max}$ via
\begin{align*}
D_{\alpha}^{\max}(\rho\|\sigma)&:=\frac{1}{\alpha-1}\log\frac{1}{\Tr\rho} Q_{\alpha}^{\max}(\rho\|\sigma),\\
T_{\alpha}^{\max}(\rho\|\sigma)&:=\frac{1}{1-\alpha}\bz\alpha\Tr\rho+(1-\alpha)\Tr\sigma-Q_{\alpha}^{\max}(\rho\|\sigma)\jz,
\end{align*}
where $Q_{\alpha}^{\max}(\rho\|\sigma):=\Tr\G_{\alpha}^{\max}(\rho\|\sigma)$ for positive definite $\rho$ and $\sigma$, and it is extended to general PSD operators via the smoothing procedure in \eqref{ext0}. (In fact, $T_{\alpha}^{\max}$ is also a maximal $f$-divergence, corresponding to the convex function $f(t)=\frac{1}{1-\alpha}(\alpha t+(1-\alpha)-t^{\alpha})$, which is operator convex if and only if $\alpha\in[0,2]$.)
The positive version of $D_{\alpha}^{\max}$ can be defined again as 
$\what D_{\alpha}^{\max}(\rho\|\sigma):=D_{\alpha}^{\max}\bz\frac{\rho}{\Tr\rho}\big\|\frac{\sigma}{\Tr\sigma}\jz$. Lim and P\'alfia \cite{LimPalfia2012} showed that when all $W(x)$ are positive definite, the fixed point equation 
\begin{align*}
\sigma=\sum_x P(x)\G_{\alpha}^{\max}(W(x)\|\sigma)
\end{align*}
has a unique solution, which we will call the \ki{max $\alpha$-power mean} and denote it as
$\P_{\alpha}^{\max}(W,P)$. Moreover, 
\begin{align*}
\lim_{\alpha\searrow 0}\P_{\alpha}^{\max}(W,P)=\G_K(W,P),
\end{align*}
where the latter is the \ki{Karcher mean}, which, in our terminology, is nothing else but the 
$P\circ W\inv$-weighted PSD $(D_{\infty}\nw)^2$-center of $\ran W$, i.e., 
\begin{align*}
\G_K(W,P)=\argmin_{\sigma\in\B(\hil)_+}\sum_x P(x)(D_{\infty}\nw(W(x)\|\sigma))^2.
\end{align*}
By Remark \ref{rem:general Tsallis center}, 
\ref{Tsallis rad1}--\ref{Tsallis rad3} of Proposition \ref{prop:Tsallis center} hold true
for the pair $T_{\alpha}^{\max}$ and $Q_{\alpha}^{\max}$.
However, it has been shown in \cite{PV2019} that \ref{Tsallis rad4} of Proposition \ref{prop:Tsallis center}
is no longer true in this case. More precisely, an example is shown in \cite{PV2019}
for $W$ and $P$ for which the max $1/2$ power mean does not coincide with the 
$P$-weighted 
$T_{1/2}^{\max}$-center for $W$.

\subsection{Positive R\'enyi divergences}
\label{sec:positive Renyi}

In this section we establish the non-negativity of $\what D_{\alpha,z}$ and
$T_{\alpha,z}$ for all $\alpha\in(0,+\infty)\setminus\{1\}$ and $z\in(0,+\infty]$.

\begin{prop}\label{lemma:non-negative}
For every $\rho,\sigma\in\B(\hil)_+$, and every $z\in(0,+\infty]$, we have 
\begin{align}
Q_{\alpha,z}(\rho\|\sigma)&\le(\Tr\rho)^{\alpha}(\Tr\sigma)^{1-\alpha}\le\alpha\Tr\rho+(1-\alpha)\Tr\sigma,
& &\alpha\in(0,1),\label{nonneg1}\\
Q_{\alpha,z}(\rho\|\sigma)&\ge(\Tr\rho)^{\alpha}(\Tr\sigma)^{1-\alpha}\ge\alpha\Tr\rho+(1-\alpha)\Tr\sigma,
& &\alpha>1,\label{nonneg2}
\end{align}
or equivivalently, 
$\what D_{\alpha,z}(\rho\|\sigma)\ge 0$ and 
$T_{\alpha,z}(\rho\|\sigma)\ge 0$ for all $\alpha\in(0,+\infty)\setminus\{1\}$ and $z\in(0,+\infty]$.
\end{prop}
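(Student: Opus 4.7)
The plan is to split the result into the two chained inequalities in each of \eqref{nonneg1} and \eqref{nonneg2}. The inner scalar inequality between $(\Tr\rho)^\alpha(\Tr\sigma)^{1-\alpha}$ and $\alpha\Tr\rho+(1-\alpha)\Tr\sigma$ is the weighted AM--GM inequality applied to the non-negative numbers $\Tr\rho,\Tr\sigma$: for $\alpha\in(0,1)$ this is classical, while for $\alpha>1$ it follows by observing that $h(r):=r^\alpha-\alpha r+\alpha-1$ is convex on $(0,+\infty)$ with minimum $h(1)=0$, so $h\ge 0$ yields $t^\alpha s^{1-\alpha}\ge\alpha t+(1-\alpha)s$ after setting $r=t/s$ and multiplying by $s>0$. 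The operator inequality between $Q_{\alpha,z}(\rho\|\sigma)$ and $(\Tr\rho)^\alpha(\Tr\sigma)^{1-\alpha}$ is the main content; by the homogeneity $Q_{\alpha,z}(\lambda\rho\|\mu\sigma)=\lambda^\alpha\mu^{1-\alpha}Q_{\alpha,z}(\rho\|\sigma)$ and the smoothing \eqref{ext0}, it suffices to show $Q_{\alpha,z}(\rho\|\sigma)\le 1$ for $\alpha\in(0,1)$ and $Q_{\alpha,z}(\rho\|\sigma)\ge 1$ for $\alpha>1$, with $\rho,\sigma$ positive definite density operators.

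For $\alpha\in(0,1)$ I would write $Q_{\alpha,z}(\rho\|\sigma)=\|\rho^{\alpha/(2z)}\sigma^{(1-\alpha)/(2z)}\|_{2z}^{2z}$, identifying the quantity as the $2z$-th power of a Schatten $p$-quasi-norm with $p=2z>0$. The generalized H\"older inequality $\|XY\|_p\le\|X\|_q\|Y\|_r$ (valid for all $p,q,r>0$ with $q,r\ge p$ and $1/p=1/q+1/r$), applied with $p=2z$, $q=2z/\alpha$, $r=2z/(1-\alpha)$ (the conditions $q,r\ge p$ are exactly $\alpha\in(0,1)$), then gives $Q_{\alpha,z}\le(\Tr\rho)^\alpha(\Tr\sigma)^{1-\alpha}=1$, uniformly in $z\in(0,+\infty)$. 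This H\"older variant is proved by combining Horn's log-majorization $\prod_{i\le k}s_i(XY)\le\prod_{i\le k}s_i(X)s_i(Y)$ with the classical H\"older applied to the positive sequences $(s_i(X)^p)_i,(s_i(Y)^p)_i$ at conjugate exponents $q/p,r/p\ge 1$; this argument bypasses the breakdown of the norm triangle inequality for $p<1$.

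For $\alpha>1$ the log-majorization route fails: the singular values of $\sigma^{(1-\alpha)/(2z)}$ are ordered inversely to those of $\sigma$, and Horn's log-majorization only produces an upper bound on the relevant norm rather than the desired lower bound. Instead I would invoke Araki's log-majorization theorem, stating that for PSD $A,B$ the map $p\mapsto\Tr(A^{p/2}B^pA^{p/2})^{1/p}$ is non-decreasing on $(0,+\infty)$. Applied with $A=\rho^\alpha$, $B=\sigma^{1-\alpha}$ (both PSD since $\sigma>0$) and $p=1/z$, this shows that $z\mapsto Q_{\alpha,z}(\rho\|\sigma)$ is non-increasing on $(0,+\infty]$, with $\lim_{z\to+\infty}Q_{\alpha,z}=Q_{\alpha,+\infty}=\Tr e^{\alpha\log\rho+(1-\alpha)\log\sigma}$ by Lie--Trotter. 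Hence $Q_{\alpha,z}\ge Q_{\alpha,+\infty}$, and it remains to show $Q_{\alpha,+\infty}\ge(\Tr\rho)^\alpha(\Tr\sigma)^{1-\alpha}$. This follows from the convexity on $\bR$ of the log-partition function $f(\alpha):=\log\Tr e^{\alpha\log\rho+(1-\alpha)\log\sigma}$ (the standard Bogoliubov inequality), combined with $f(0)=\log\Tr\sigma$ and $f(1)=\log\Tr\rho$: for $\alpha>1$ the graph of $f$ lies above the extrapolated chord through $(0,f(0))$ and $(1,f(1))$, yielding $f(\alpha)\ge\alpha f(1)+(1-\alpha)f(0)$. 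The same convexity, now on $[0,1]$ where the chord lies above the graph, handles the $z=+\infty$ case for $\alpha\in(0,1)$. The principal obstacle is the asymmetry of the two cases: no single uniform argument seems to cover all $(\alpha,z)$, since H\"older/log-majorization works for $\alpha\in(0,1)$ and all $z>0$ but fails for $\alpha>1$, whereas Araki together with convexity gives the correct lower bound for $\alpha>1$ but goes the wrong way for $\alpha\in(0,1)$ with $z<1$.
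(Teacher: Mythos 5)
Your proof is correct, but the route diverges from the paper's in the nontrivial direction. The paper also first disposes of the scalar inequality between $(\Tr\rho)^{\alpha}(\Tr\sigma)^{1-\alpha}$ and $\alpha\Tr\rho+(1-\alpha)\Tr\sigma$ by the tangent-line/convexity argument, exactly as you do, and then reduces the operator inequality to positive definite $\rho,\sigma$ via \eqref{ext0}. From there it treats both $\alpha\in(0,1)$ and $\alpha>1$ in a single stroke using the two-sided Gelfand--Naimark log-majorization (and its weak-majorization consequence) applied to the product $\sigma^{(1-\alpha)/(2z)}\rho^{\alpha/(2z)}$: the Horn-type upper bound gives the $\alpha\in(0,1)$ case, and the Gelfand--Naimark lower bound gives the $\alpha>1$ case. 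So your remark that ``the log-majorization route fails'' for $\alpha>1$ is not quite right — only the one-sided Horn inequality fails there, but the full Gelfand--Naimark bound does not. Your $\alpha\in(0,1)$ treatment (generalized Schatten quasi-norm H\"older, proved via Horn plus classical H\"older on the singular-value sequences) is essentially the same majorization argument repackaged. Your $\alpha>1$ treatment is genuinely different: you use Araki--Lieb--Thirring to establish monotone decrease of $z\mapsto Q_{\alpha,z}$ and reduce to the log-Euclidean endpoint $z=+\infty$, which you then control via convexity of $\alpha\mapsto\log\Tr e^{\alpha\log\rho+(1-\alpha)\log\sigma}$ (Peierls--Bogoliubov). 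This buys you two bonus facts as a by-product — the monotonicity of $Q_{\alpha,z}$ in $z$ (used elsewhere in the paper, e.g.\ in the proof of Proposition \ref{prop:Renyi pos}), and the $z=+\infty$ case directly — at the cost of needing a separate argument from the $\alpha<1$ case, whereas the paper's majorization argument is uniform in $\alpha$ and $z$.
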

\begin{proof}
Note that the bounds are independent of $z$; in particular, it is enough to prove them for all finite $z$, and the case $z=+\infty$ follows by taking the limit $z\to+\infty$.
Note also that the second inequalities in \eqref{nonneg1}--\eqref{nonneg2} follow 
by the trivial identity $x^{\alpha}y^{1-\alpha}=y(x/y)^{\alpha}$, and lower bounding the convex function 
$t\mapsto s(\alpha)t^{\alpha}$ on $[0,+\infty)$ by its tangent line at $1$.

To prove the first inequalities, we may assume w.l.o.g.~that $\rho$ and $\sigma$ are positive definite, due to \eqref{ext0}. Assume first that $\rho$ and $\sigma$ are both diagonal in the same orthonormal basis with diagonal elements $r_1,\ldots,r_d$ and $s_1,\ldots,s_d$, respectively, where $d:=\dim\hil$.
Concavity of $t\mapsto t^{\alpha}$ for $\alpha\in(0,1)$ yields
\begin{align}\label{Renyi pos1}
\frac{(\Tr\rho)^{\alpha}}{(\Tr\sigma)^{\alpha}}
=
\bz\sum_{i=1}^d \frac{s_i}{\Tr\sigma}\cdot\frac{r_i}{s_i}\jz^{\alpha}
\ge
\sum_{i=1}^d \frac{s_i}{\Tr\sigma}\cdot\bz\frac{r_i}{s_i}\jz^{\alpha}
=
(\Tr\sigma)\inv\sum_{i=1}^d r_i^{\alpha}s_i^{1-\alpha},
\end{align}
which is exactly the first inequality in \eqref{nonneg1}.
When $\alpha>1$, the inequality in \eqref{Renyi pos1} is in the opposite direction, which yields the first inequality in \eqref{nonneg2}.

For the general case, we follow the approach at the beginning of Section 3 in \cite{AH2019}.
Let $s_1(X)\ge\ldots\ge s_d(X)$ denote the decreasingly ordered singular values of an operator $X$. 
By the Gelfand-Naimark majorization theorem (see, e.g.~\cite[Theorem 4.3.4]{Hiai_book}), we have 
\begin{align*}
\prod_{j=1}^k s_{i_j}\bz\sigma^{\frac{1-\alpha}{2z}}\jz s_{d+1-i_j}\bz\rho^{\frac{\alpha}{2z}}\jz
\le
\prod_{i=1}^k s_i\bz\sigma^{\frac{1-\alpha}{2z}}\rho^{\frac{\alpha}{2z}}\jz
\le
\prod_{i=1}^k s_i\bz\sigma^{\frac{1-\alpha}{2z}}\jz s_i\bz\rho^{\frac{\alpha}{2z}}\jz
\end{align*}
for every $1\le k\le d$ and $1\le i_1<\ldots<i_j\le d$.
 Taking it to the power $2z$ yields
\begin{align*}
\prod_{j=1}^k s_{i_j}\bz\sigma\jz^{1-\alpha} s_{d+1-i_j}\bz\rho\jz^{\alpha}
\le
\prod_{i=1}^k s_i\bz\bz\rho^{\frac{\alpha}{2z}}\sigma^{\frac{1-\alpha}{z}}\rho^{\frac{\alpha}{2z}}\jz^z\jz
\le
\prod_{i=1}^k s_i\bz\sigma\jz^{1-\alpha} s_i\bz\rho\jz^{\alpha}.
\end{align*}
Using that weak log-majorization implies weak majorization (see, e.g.~\cite[Proposition 4.1.6]{Hiai_book}), we obtain
\begin{align*}
\sum_{j=1}^k s_{i_j}\bz\sigma\jz^{1-\alpha} s_{d+1-i_j}\bz\rho\jz^{\alpha}
\le
\sum_{i=1}^k s_i\bz\bz\rho^{\frac{\alpha}{2z}}\sigma^{\frac{1-\alpha}{z}}\rho^{\frac{\alpha}{2z}}\jz^z\jz
\le
\sum_{i=1}^k s_i\bz\sigma\jz^{1-\alpha} s_i\bz\rho\jz^{\alpha}
\end{align*}
for every $1\le k\le d$.
Taking now $k=d$, we see that the middle sum is equal to $Q_{\alpha,z}(\rho\|\sigma)$,
the rightmost sum can be upper bounded by 
$\bz\sum_{i=1}^ds_i(\rho)\jz^{\alpha}\bz\sum_{i=1}^ds_i(\sigma)\jz^{1-\alpha}=\bz\Tr\rho\jz^{\alpha}\bz\Tr\sigma\jz^{1-\alpha}$ when $\alpha\in(0,1)$, 
by the same argument as in \eqref{Renyi pos1},
and similarly, the leftmost sum can be lower bounded by 
$\bz\Tr\rho\jz^{\alpha}\bz\Tr\sigma\jz^{1-\alpha}$
when $\alpha>1$.
\end{proof}

\begin{remark}
The first inequalities in \eqref{nonneg1}--\eqref{nonneg2} for $(\alpha,z)\in K_2\cup K_4\cup K_5\cup K_7$
are immediate from the monotonicity under taking the trace of both $\rho$ and $\sigma$, according to Lemma \ref{lemma:az monotonicity}.
\end{remark}

We can also establish strict positivity of $\what D_{\alpha,z}$ and $T_{\alpha,z}$, except for the region
\begin{align*}
K_0:\ds 0<\alpha<1,\,z<\min\{\alpha,1-\alpha\}.
\end{align*}
In the proof we also give an alternative proof for 
the first inequalities in \eqref{nonneg1}--\eqref{nonneg2} when 
$(\alpha,z)\notin K_0$.

\begin{prop}\label{prop:Renyi pos}
The second inequalities in \eqref{nonneg1}--\eqref{nonneg2} hold as equalities if and only if 
$\Tr\rho=\Tr\sigma$. 
If $(\alpha,z)\in\bz(0,+\infty)\times(0,+\infty]\jz\setminus K_0$ then 
the second inequalities in \eqref{nonneg1}--\eqref{nonneg2} hold as equalities if and only if 
$\rho/\Tr\rho=\sigma/\Tr\sigma$.
\end{prop}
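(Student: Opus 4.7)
\emph{Part 1 (the ``second'' inequalities).} I would invoke the strict concavity of $t \mapsto t^\alpha$ on $(0,+\infty)$ for $\alpha \in (0,1)$, which gives $t^\alpha \le 1 + \alpha(t-1)$ with equality iff $t = 1$; setting $t = x/y$ with $y > 0$ and multiplying by $y$ yields the weighted AM--GM bound $x^\alpha y^{1-\alpha} \le \alpha x + (1-\alpha) y$ with strict inequality unless $x = y$, and the case $\alpha > 1$ is analogous via strict convexity. Since the preliminaries define $\B(\hil)_+$ as the set of non-zero PSD operators, we have $\Tr\rho, \Tr\sigma > 0$, so equality in the second inequalities in \eqref{nonneg1}--\eqref{nonneg2} holds iff $\Tr\rho = \Tr\sigma$.

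\emph{Part 2 (the first inequalities for $(\alpha,z) \notin K_0$).} Using $Q_{\alpha,z}(\lambda\rho\|\mu\sigma) = \lambda^\alpha\mu^{1-\alpha} Q_{\alpha,z}(\rho\|\sigma)$, the first inequality in \eqref{nonneg1}--\eqref{nonneg2} is exactly $\what D_{\alpha,z}(\rho\|\sigma) = D_{\alpha,z}(\oll\rho\|\oll\sigma) \ge 0$ with $\oll\rho := \rho/\Tr\rho$ and $\oll\sigma := \sigma/\Tr\sigma$, so the task reduces to proving: for states $\oll\rho,\oll\sigma$ and $(\alpha,z) \notin K_0$, $D_{\alpha,z}(\oll\rho\|\oll\sigma) = 0 \iff \oll\rho = \oll\sigma$. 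Sufficiency is immediate. For necessity, I would first settle the commuting case $[\oll\rho,\oll\sigma] = 0$, valid for \emph{every} $(\alpha,z)$: there $Q_{\alpha,z}$ is $z$-independent and equal to $\sum_i r_i^\alpha s_i^{1-\alpha}$ in a common eigenbasis, and strict Jensen applied to $t \mapsto t^\alpha$ forces the eigenvalue sequences to coincide (with a side check for mismatched supports: for $\alpha > 1$ and $\oll\rho^0 \nleq \oll\sigma^0$ the quantity is $+\infty$, for $\alpha \in (0,1)$ the vanishing of the contribution of indices with $s_i=0$ makes strict Jensen on the remaining indices still strict).

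For the general non-commuting case I would return to the Gelfand--Naimark majorization chain used in the proof of Proposition \ref{lemma:non-negative}, namely
\begin{align*}
Q_{\alpha,z}(\oll\rho\|\oll\sigma) \le \sum_{i=1}^d s_i(\oll\rho)^\alpha s_i(\oll\sigma)^{1-\alpha} \le \bz\sum_i s_i(\oll\rho)\jz^{\alpha}\bz\sum_i s_i(\oll\sigma)\jz^{1-\alpha} = 1
\end{align*}
for $\alpha \in (0,1)$ (reversed for $\alpha > 1$). Equality in the right inequality (H\"older) forces the ordered spectra of $\oll\rho$ and $\oll\sigma$ to be proportional and, both summing to $1$, to coincide. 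Equality in the left inequality at every $k \le d$ combined with coincidence of spectra should yield a common eigenbasis for $\oll\rho$ and $\oll\sigma$, reducing to the commuting case. The principal obstacle is the clean extraction of simultaneous diagonalisability from equality in the Gelfand--Naimark step, uniformly over $(\alpha,z) \notin K_0$; the exclusion of $K_0$ enters because there both exponents $\alpha/z$ and $(1-\alpha)/z$ exceed $1$, rendering the Araki--Lieb--Thirring-type step degenerate and its equality conditions uninformative. A cleaner alternative on the monotonicity region $K_2 \cup K_4 \cup K_5 \cup K_6$ is a pinching-plus-DPI argument: $D_{\alpha,z}(\oll\rho\|\oll\sigma) \ge D_{\alpha,z}(\F_{\oll\sigma}(\oll\rho)\|\oll\sigma) \ge 0$, whence $\F_{\oll\sigma}(\oll\rho) = \oll\sigma$ by the commuting case, and the remaining equality in DPI then forces $[\oll\rho,\oll\sigma]=0$.
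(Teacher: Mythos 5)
Your Part 1 is correct and identical in spirit to the paper's one-line argument via strict convexity of $t\mapsto s(\alpha)t^{\alpha}$.

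For Part 2, you correctly reduce to showing $D_{\alpha,z}(\oll\rho\|\oll\sigma)=0\iff\oll\rho=\oll\sigma$ for states, and your treatment of the commuting case is fine. But the two routes you propose for the non-commuting case both have unresolved gaps, one of which you yourself flag. The Gelfand--Naimark chain gives the inequality, but extracting simultaneous diagonalizability from equality at every $k$ is exactly the hard step, and you do not supply it; moreover the equality condition in the H\"older-type step alone only forces the \emph{ordered spectra} to agree, not a common eigenbasis. Your pinching-plus-DPI alternative has two separate problems. First, it only applies on the DPI region $K_2\cup K_4\cup K_5\cup K_6$ of Lemma~\ref{lemma:az monotonicity}, which is strictly smaller than $\bz(0,+\infty)\times(0,+\infty]\jz\setminus K_0$: it misses $K_1$, $K_3$, $K_7$, and the whole sector $\{1<\alpha,\,z<\max\{\alpha-1,\alpha/2\}\}$, all of which the proposition covers. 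Second, even on the DPI region, the final step "equality in DPI then forces $[\oll\rho,\oll\sigma]=0$" is asserted without justification; equality conditions in DPI for $\alpha$-$z$ divergences are a nontrivial matter and are certainly not available uniformly over that region, so this is not a step one can take for granted.

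The paper's proof takes a different and more economical route that avoids both issues. It uses only the Araki--Lieb--Thirring monotonicity of $z\mapsto Q_{\alpha,z}$ (eq.~\eqref{mon in z}): since $D_{\alpha,z}$ is decreasing in $z$ for $\alpha>1$ and increasing in $z$ for $\alpha<1$, one can bound $\what D_{\alpha,z}$ from below by $\what D_{\alpha,z_0}$ for an extremal admissible $z_0$, namely $z_0=+\infty$ (log-Euclidean) when $\alpha>1$, $z_0=\alpha$ (sandwiched) when $\alpha\in(0,1/2]$ and $z\ge\alpha$, and $z_0=1-\alpha$ when $\alpha\in[1/2,1)$ and $z\ge 1-\alpha$ (using the symmetry $D_{\alpha,1-\alpha}=\frac{\alpha}{1-\alpha}D_{1-\alpha,1-\alpha}$). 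In each of these three cases the strict positivity with equality iff $\oll\rho=\oll\sigma$ is already known in the literature (\cite[Prop.~3.22]{MO-cqconv}, \cite[Thm.~5]{Beigi}), and one does not even need equality conditions in the ALT step: $\what D_{\alpha,z}=0$ forces $\what D_{\alpha,z_0}=0$ and that is enough. The exclusion of $K_0$ is exactly the condition $z\ge\min\{\alpha,1-\alpha\}$ needed for the reduction when $\alpha\in(0,1)$. If you want to keep the commuting-case-plus-DPI route, you would need to supply equality conditions in DPI and a separate argument for the non-DPI parameter range; the monotonicity-in-$z$ route sidesteps both.
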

\begin{proof}
The assertion about the second inequalities is trivial from the strict convexity of 
$t\mapsto s(\alpha)t^{\alpha}$ when $\alpha\in(0,+\infty)\setminus\{1\}$. Hence, for the rest we analyze the first inequalities.

It has been pointed out, e.g., in \cite[Proposition 1]{LinTomamichel15}, that the 
Araki-Lieb-Thirring inequality \cite{Araki,LT} implies the monotonicity
\begin{align}\label{mon in z}
Q_{\alpha,z_1}(\rho\|\sigma)=\Tr\bz \rho^{\frac{\alpha}{2z_1}}\sigma^{\frac{1-\alpha}{z_1}}\rho^{\frac{\alpha}{2z_1}}\jz^{\frac{z_1}{z_2}z_2}
\le
\Tr\bz \rho^{\frac{\alpha}{2z_2}}\sigma^{\frac{1-\alpha}{z_2}}\rho^{\frac{\alpha}{2z_2}}\jz^{z_2}
=
Q_{\alpha,z_2}(\rho\|\sigma),\ds\ds\ds z_2\le z_1.
\end{align}
Hence, for $\alpha>1$ we have
\begin{align*}
D_{\alpha,z}\bz\frac{\rho}{\Tr\rho}\Big\|\frac{\sigma}{\Tr\sigma}\jz
=
\what D_{\alpha,z}(\rho\|\sigma)
\ge
\what D_{\alpha,+\infty}(\rho\|\sigma)
=
D_{\alpha,+\infty}\bz\frac{\rho}{\Tr\rho}\Big\|\frac{\sigma}{\Tr\sigma}\jz
\ge 0,
\end{align*}
with equality if and only if $\rho/\Tr\rho=\sigma/\Tr\sigma$, according to 
\cite[Proposition 3.22]{MO-cqconv}. This is exactly the second inequality in 
\eqref{nonneg2} with the equality condition.
If $\alpha\in(0,1/2]$ and $z\ge \alpha$ then 
\begin{align*}
D_{\alpha,z}\bz\frac{\rho}{\Tr\rho}\Big\|\frac{\sigma}{\Tr\sigma}\jz
=
\what D_{\alpha,z}(\rho\|\sigma)
\ge
\what D_{\alpha,\alpha}(\rho\|\sigma)
=
D_{\alpha,\alpha}\bz\frac{\rho}{\Tr\rho}\Big\|\frac{\sigma}{\Tr\sigma}\jz
\ge 0,
\end{align*}
with equality if and only if $\rho/\Tr\rho=\sigma/\Tr\sigma$, according to 
\cite[Theorem 5]{Beigi} (see also \cite[Proposition 3.22]{MO-cqconv}).
If $\alpha\in[1/2,1)$ and $z\ge 1-\alpha$ then 
\begin{align*}
D_{\alpha,z}\bz\frac{\rho}{\Tr\rho}\Big\|\frac{\sigma}{\Tr\sigma}\jz
=
\what D_{\alpha,z}(\rho\|\sigma)
\ge
\what D_{\alpha,1-\alpha}(\rho\|\sigma)
&=
D_{\alpha,1-\alpha}\bz\frac{\rho}{\Tr\rho}\Big\|\frac{\sigma}{\Tr\sigma}\jz\\
&=
\frac{\alpha}{1-\alpha}D_{1-\alpha,1-\alpha}\bz\frac{\rho}{\Tr\rho}\Big\|\frac{\sigma}{\Tr\sigma}\jz
\ge 0,
\end{align*}
with equality if and only if $\rho/\Tr\rho=\sigma/\Tr\sigma$,
by the same argument as above.
\end{proof}

\begin{cor}
Let $\rho,\sigma\in\B(\hil)_+$ and $\alpha\in(0,+\infty)$, $z\in(0,+\infty]$. Then 
\begin{align}
D_{\alpha,z}(\rho\|\sigma)&\ge \log\Tr\rho-\log\Tr\sigma,\label{Renyi nonneg1}\\
\what D_{\alpha,z}(\rho\|\sigma)&\ge 0,\label{Renyi nonneg2}\\
T_{\alpha,z}(\rho\|\sigma)&\ge 0.\label{Tsallis nonneg}
\end{align}
If, moreover, $(\alpha,z)\notin K_0$ then equality holds in 
\eqref{Renyi nonneg1} or \eqref{Renyi nonneg2} if and only if 
$\rho=\lambda\sigma$ for some $\lambda\in(0,+\infty)$, and 
equality holds in \eqref{Tsallis nonneg} if and only if $\rho=\sigma$.
\end{cor}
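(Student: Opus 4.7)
The plan is to reduce everything to Proposition~\ref{lemma:non-negative} and Proposition~\ref{prop:Renyi pos} via the algebraic identities relating $D_{\alpha,z}$, $\what D_{\alpha,z}$, $T_{\alpha,z}$ and $Q_{\alpha,z}$. First I would handle the case $\alpha\in(0,+\infty)\setminus\{1\}$ and $z\in(0,+\infty]$, and then recover $\alpha=1$ by taking the limit $\alpha\to 1$ along any continuously differentiable curve $\alpha\mapsto z(\alpha)$ with $z(\alpha)\ne 0$, using the convergence statements recorded after \eqref{Renyi div def} and for $T_1$.

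For the $\alpha\ne 1$ case, I would rewrite
\begin{align*}
\what D_{\alpha,z}(\rho\|\sigma)=\frac{1}{\alpha-1}\log\frac{Q_{\alpha,z}(\rho\|\sigma)}{(\Tr\rho)^{\alpha}(\Tr\sigma)^{1-\alpha}},
\end{align*}
so that \eqref{Renyi nonneg2} is exactly the first inequality in \eqref{nonneg1} (when $\alpha\in(0,1)$, where $\alpha-1<0$ and the ratio is $\le 1$) and in \eqref{nonneg2} (when $\alpha>1$, where $\alpha-1>0$ and the ratio is $\ge 1$). The bound \eqref{Renyi nonneg1} then follows immediately from the identity $D_{\alpha,z}=\what D_{\alpha,z}+\log\Tr\rho-\log\Tr\sigma$. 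For \eqref{Tsallis nonneg}, I would substitute the full chain of inequalities from Proposition~\ref{lemma:non-negative} into
\begin{align*}
T_{\alpha,z}(\rho\|\sigma)=\frac{1}{1-\alpha}\bz\alpha\Tr\rho+(1-\alpha)\Tr\sigma-Q_{\alpha,z}(\rho\|\sigma)\jz,
\end{align*}
noting that the sign of $1-\alpha$ is compatible with the direction of the inequality in each range of $\alpha$.

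For the equality characterisations when $(\alpha,z)\notin K_0$, equality in \eqref{Renyi nonneg2} holds iff the first inequality in \eqref{nonneg1}--\eqref{nonneg2} is saturated, which by Proposition~\ref{prop:Renyi pos} is equivalent to $\rho/\Tr\rho=\sigma/\Tr\sigma$, i.e.~$\rho=\lambda\sigma$ for some $\lambda>0$; and \eqref{Renyi nonneg1} is saturated iff \eqref{Renyi nonneg2} is, by the identity above. For \eqref{Tsallis nonneg}, equality forces simultaneous equality in both inequalities of the chain in \eqref{nonneg1} or \eqref{nonneg2}; by Proposition~\ref{prop:Renyi pos} the second inequality is saturated iff $\Tr\rho=\Tr\sigma$ and the first iff $\rho/\Tr\rho=\sigma/\Tr\sigma$, which together give $\rho=\sigma$.

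I do not expect any real obstacle here, as all the work is done in the cited propositions. The one minor thing I would be careful about is the $\alpha=1$ case: since the equality statement is not claimed at $\alpha=1$ (the corollary restricts equality to $(\alpha,z)\notin K_0$, and the definition of $D_{1,z}$ and $\what D_{1,z}$ proceeds via a limit anyway), it suffices to note that the inequalities \eqref{Renyi nonneg1}--\eqref{Tsallis nonneg} pass to the limit $\alpha\to 1$, or alternatively to invoke directly the standard non-negativity of Umegaki's relative entropy $D(\rho/\Tr\rho\|\sigma/\Tr\sigma)$ and the scalar bound $\Tr\rho\,(\log\Tr\rho-\log\Tr\sigma)\ge\Tr\rho-\Tr\sigma$ for the Tsallis case at $\alpha=1$.
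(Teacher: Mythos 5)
Your proof is correct and follows the same route as the paper, which simply invokes Proposition~\ref{prop:Renyi pos} (implicitly together with Proposition~\ref{lemma:non-negative}); you have merely spelled out the algebraic reductions, the $\alpha=1$ limiting argument, and the observation that the paper's Proposition~\ref{prop:Renyi pos} (which due to a typo says "second inequalities" twice) in fact characterises equality in the first inequalities of \eqref{nonneg1}--\eqref{nonneg2}, as your reading correctly requires.
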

\begin{proof}
Immediate from Proposition \ref{prop:Renyi pos}.
\end{proof}

\subsection{Further properties of the R\'enyi divergence radii}
\label{sec:minfa}

\begin{lemma}\label{lemma:minfa cont}
For any $\sigma\in\S(\hil)_{++}$ and any $\alpha\in(0,+\infty)\setminus\{1\}$, 
$z\in(0,+\infty)$, the map 
$x\mapsto D_{\alpha,z}(W(x)\|\sigma)$ is bounded on $\X$ for any cq channel $W$, and hence the map
$P\mapsto\sum_{x\in\X}P(x)D_{\alpha,z}(W(x)\|\sigma)$ is continuous on 
$\P_f(\X)$ in the variational norm.
\end{lemma}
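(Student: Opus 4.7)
The plan is to reduce the continuity claim to a uniform bound: if $|D_{\alpha,z}(W(x)\|\sigma)|\le M$ uniformly in $x\in\X$, then for any $P,P'\in\P_f(\X)$,
\begin{align*}
\Big|\sum_x[P(x)-P'(x)]D_{\alpha,z}(W(x)\|\sigma)\Big|\le M\sum_x|P(x)-P'(x)|=M\|P-P'\|_1,
\end{align*}
giving $M$-Lipschitz (hence continuous) dependence on $P$ in the variational norm. So the only real task is to establish the uniform bound.

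To this end I would first exploit the strict positivity of $\sigma$: since the spectrum of $\sigma$ is a finite subset of $(0,+\infty)$, there exist constants $0<a\le b$, depending only on $\sigma,\alpha,z$, such that $aI\le\sigma^{(1-\alpha)/z}\le bI$. Multiplying by $\rho^{\alpha/(2z)}$ on both sides (with $\rho:=W(x)\in\S(\hil)$) gives
\begin{align*}
a\,\rho^{\alpha/z}\le \rho^{\alpha/(2z)}\sigma^{(1-\alpha)/z}\rho^{\alpha/(2z)}\le b\,\rho^{\alpha/z}.
\end{align*}
Since $X\mapsto\Tr X^z$ is monotone on $\B(\hil)_+$ for every $z>0$ (as $X\le Y$ in the PSD order implies $s_i(X)\le s_i(Y)$ by Weyl's monotonicity theorem, and $t\mapsto t^z$ is increasing on $[0,+\infty)$), we obtain
\begin{align*}
a^z\Tr\rho^{\alpha}\le Q_{\alpha,z}(\rho\|\sigma)\le b^z\Tr\rho^{\alpha}.
\end{align*}

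It remains to bound $\Tr\rho^{\alpha}$ uniformly away from $0$ and $\infty$ over all states $\rho\in\S(\hil)$. Writing $\rho=\sum_i r_i\pr{i}$ with $r_i\in[0,1]$ and $\sum_i r_i=1$, the elementary inequalities $r_i^{\alpha}\ge r_i$ (for $\alpha\in(0,1)$) and $r_i^{\alpha}\le r_i$ (for $\alpha>1$), together with a power-mean/Jensen bound in the opposite direction, yield
\begin{align*}
1\le \Tr\rho^{\alpha}\le d^{1-\alpha}\ \ (\alpha\in(0,1)),\qquad d^{1-\alpha}\le \Tr\rho^{\alpha}\le 1\ \ (\alpha>1),
\end{align*}
with $d:=\dim\hil$. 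Since $\Tr W(x)=1$, we have $D_{\alpha,z}(W(x)\|\sigma)=\frac{1}{\alpha-1}\log Q_{\alpha,z}(W(x)\|\sigma)$, and combining the last two displays produces a uniform bound $M$ depending only on $\alpha$, $z$, $d$, and the extremal eigenvalues of $\sigma$. I do not expect any serious obstacle; the only points to check carefully are the trace-monotonicity of $X\mapsto\Tr X^z$ for general $z>0$ and the elementary bounds on $\Tr\rho^{\alpha}$. Note that the hypothesis $\sigma\in\S(\hil)_{++}$ is essential: otherwise, for $\alpha>1$ and $W(x)^0\nleq\sigma^0$, one would have $D_{\alpha,z}(W(x)\|\sigma)=+\infty$.
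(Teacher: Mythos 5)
Your proof is correct, and the central device — sandwiching $\sigma^{(1-\alpha)/z}$ between scalar multiples of the identity using $\sigma\in\S(\hil)_{++}$, then invoking trace monotonicity of $X\mapsto\Tr X^z$ — is exactly what the paper uses. The one substantive difference is in how boundedness from below is handled. The paper argues only the one-sided estimate $D_{\alpha,z}(W(x)\|\sigma)\le -\log\lambda_{\min}(\sigma)$ (for $\alpha>1$, and analogously for $\alpha\in(0,1)$ with inequalities reversed at intermediate steps), and tacitly relies on the non-negativity $D_{\alpha,z}(\rho\|\sigma)\ge 0$ for states, proved separately in the appendix on positive R\'enyi divergences, to finish. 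You instead close the argument self-containedly via the two-sided scalar bound $aI\le\sigma^{(1-\alpha)/z}\le bI$ together with the elementary dimension-dependent estimates $d^{1-\alpha}\le\Tr\rho^{\alpha}\le 1$ (for $\alpha>1$; reversed for $\alpha\in(0,1)$), which requires no forward reference but produces a less tidy constant. Your explicit Lipschitz estimate for the ``hence'' clause is also the correct unpacking of what the paper declares immediate. Both routes are sound; the paper's is marginally shorter, yours is marginally more transparent.
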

\begin{proof}
We prove the case $\alpha>1$; the case $\alpha\in(0,1)$ follows the same way.
If $\sigma\in\S(\hil)_{++}$ then $\sigma\ge\lambda_{\min}(\sigma)I$, and hence
$\sigma^{\frac{1-\alpha}{z}}\le\lambda_{\min}(\sigma)^{\frac{1-\alpha}{z}}I$. Using the monotonicity 
of $A\mapsto\Tr A^{z}$ on $\B(\hil)_+$ for $z>0$, we get 
\begin{align*}
D_{\alpha,z}(W(x)\|\sigma)&\le
\frac{1}{\alpha-1}\log\Tr\bz W(x)^{\frac{\alpha}{2z}}\lambda_{\min}(\sigma)^{\frac{1-\alpha}{z}}IW(x)^{\frac{\alpha}{2z}}\jz^{z}\\
&=
-\log \lambda_{\min}(\sigma)+\frac{1}{\alpha-1}\log\Tr W(x)^{\alpha}
\le
-\log \lambda_{\min}(\sigma),
\end{align*}
proving the boundedness, and the assertion on continuity is immediate from this.
\end{proof}

\begin{cor}\label{cor:minfa usc}
The map $P\mapsto\chi_{\alpha}\nw(W,P)$ is concave and upper semi-continuous on $\P_f(\X)$ in the variational norm.
In particular,
if $P\in\P_f(\X)$ and $P_n\in\P_f(\X)$, $n\in\bN$, are such that $\lim_{n\to+\infty}\norm{P_n-P}_1=0$ then 
\begin{align}\label{minfa usc}
\limsup_{n\to+\infty}\chi_{\alpha}\nw(W,P_n)\le\chi_{\alpha}\nw(W,P).
\end{align}
\end{cor}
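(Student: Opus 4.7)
The plan is to exhibit $P \mapsto \minfa\nw(W,P)$ as a pointwise infimum of a well-behaved family of functionals on $\P_f(\X)$ and then read off both asserted properties from elementary facts about infima. Concretely, using the variational representation from \eqref{positive minimization} (specialized to $z=\alpha$), I would write
\begin{equation*}
\minfa\nw(W,P) \;=\; \inf_{\sigma \in \S(\hil)_{++}} f_\sigma(P),
\qquad
f_\sigma(P) \;:=\; \sum_{x \in \X} P(x)\, D_{\alpha}\nw(W(x)\|\sigma),
\end{equation*}
where the reduction of the infimum to invertible $\sigma$ is legitimate precisely by \eqref{positive minimization}.

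Next I would observe that each member $f_\sigma$ of this family, for $\sigma \in \S(\hil)_{++}$, has two good features: (i) it is \emph{affine} in $P$ on $\P_f(\X)$ by construction, and (ii) it is \emph{continuous} in $P$ with respect to the variational norm. Feature (ii) is exactly the content of Lemma \ref{lemma:minfa cont}, which relies on the boundedness of $x\mapsto D_{\alpha}\nw(W(x)\|\sigma)$ on $\X$ when $\sigma$ is invertible. From (i) and the fact that the pointwise infimum of a family of affine (hence concave) functions is concave, I would conclude that $P \mapsto \minfa\nw(W,P)$ is concave on $\P_f(\X)$. From (ii) and the standard fact that the pointwise infimum of any family of upper semi-continuous functions is itself upper semi-continuous, I would conclude that $P \mapsto \minfa\nw(W,P)$ is upper semi-continuous on $\P_f(\X)$ in the $\norm{\cdot}_1$-topology.

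The displayed estimate \eqref{minfa usc} is then just the sequential reformulation of upper semi-continuity at $P$ along any sequence $P_n$ with $\norm{P_n - P}_1 \to 0$, so it falls out of the previous step with no extra work.

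I do not expect any genuine obstacle here, since both ingredients are already in place earlier in the paper and the soft functional-analytic facts used (infima of affine maps are concave, infima of upper semi-continuous maps are upper semi-continuous) are elementary. The only point worth guarding against is that each $f_\sigma$ must be \emph{finite-valued} on all of $\P_f(\X)$ for the continuity claim in (ii) to be meaningful; this is exactly what the restriction $\sigma \in \S(\hil)_{++}$ buys, since it forces $W(x)^0 \le \sigma^0 = I$ and so rules out the $+\infty$ values of $D_{\alpha}\nw(W(x)\|\sigma)$ that would otherwise arise for $\alpha>1$, and it is also precisely the hypothesis under which Lemma \ref{lemma:minfa cont} applies.
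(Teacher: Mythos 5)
Your proof is correct and follows essentially the same route as the paper: write $\chi_{\alpha}\nw(W,\cdot)$ via \eqref{positive minimization} as an infimum over $\sigma\in\S(\hil)_{++}$ of the maps $f_\sigma$, use Lemma \ref{lemma:minfa cont} for their continuity, and invoke the elementary facts that an infimum of affine (resp.\ continuous) functions is concave (resp.\ upper semi-continuous). The only addition is your explicit remark on why invertibility of $\sigma$ is needed for finiteness, which is a helpful but unsurprising clarification of what Lemma \ref{lemma:minfa cont} already provides.
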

\begin{proof}
A combination of \eqref{positive minimization} and Lemma \ref{lemma:minfa cont} shows that
$\chi_{\alpha}\nw$, as the infimum of continuous affine functions, is upper semi-continuous and concave. Upper 
semi-continuity imples \eqref{minfa usc}.
\end{proof}

\section{Optimality of the strong converse exponents}
\label{sec:lower bound}

\textit{Proof of Lemma \ref{lemma:sc lower}:}
Let $\C_n=(\E_n,\D_n)$ be a code for $n$ uses of the channel. 
Define the classical-quantum states
\begin{align*}
R_n:=\frac{1}{|\C_n|}\sum_{k=1}^{|\C_n|}\pr{k}\otimes W^{\otimes n}(\E_n(k)),\ds\ds
S_n:=\frac{1}{|\C_n|}\sum_{k=1}^{|\C_n|}\pr{k}\otimes \sigma^{\otimes n},
\end{align*}
and the POVM element 
\begin{align*}
T_n:=\sum_{k=1}^{|\C_n|}\pr{k}\otimes\D_n(k),
\end{align*}
where $(\pr{k})_{k=1}^{|\C_n|}$ is a set of orthogonal rank $1$ projections in some Hilbert space, and
$\sigma\in\S(\hil)$ is an arbitrary state.
Then we have 
\begin{align*}
\Tr R_nT_n&=\frac{1}{|\C_n|}\sum_{k=1}^{|\C_n|}\Tr  W^{\otimes n}(\E_n(k))\D_n(k)= P_s(W^{\otimes n},\C_n),\\
\Tr S_nT_n&=\frac{1}{|\C_n|}\sum_{k=1}^{|\C_n|}\Tr  \sigma^{\otimes n}\D_n(k)=\frac{1}{|\C_n|}.
\end{align*}
For any $\sigma\in\S(\hil^{\otimes n})$ such that $W^{\otimes n}(\E_n(k))^0\le\sigma^0$ for all $k$, 
and for all $\alpha\in(1,+\infty)$, we get
\begin{align}
P_s(W^{\otimes n},\C_n)^{\alpha}\bz\frac{1}{|\C_n|}\jz^{1-\alpha}
&=
\bz\Tr R_nT_n\jz^{\alpha}\bz\Tr S_n T_n\jz^{1-\alpha}
\le
Q_{\alpha}\nw\bz R_n\|S_n\jz\nonumber\\
&=
\frac{1}{|\C_n|}\sum_{k=1}^{|\C_n|}Q_{\alpha}\nw\bz W^{\otimes n}(\E_n(k))\|\sigma^{\otimes n}\jz\nonumber\\
&=
\frac{1}{|\C_n|}\sum_{k=1}^{|\C_n|}\prod_{x\in\X}Q_{\alpha}\nw(W(x)\|\sigma)^{nP_{\E_n(k)}(x)}\nonumber\\
&=
\frac{1}{|\C_n|}\sum_{k=1}^{|\C_n|}\exp\bz n(\alpha-1)\sum_{x\in\X}P_{\E_n(k)}(x)D_{\alpha}\nw(W(x)\|\sigma)\jz\label{Nagaoka bound2}
\end{align}
where the inequality is due to the monotonicity of the sandwiched R\'enyi divergence for $\alpha>1$.
Note that 
the inequality between the first and the last expressions above holds trivially when 
$W^{\otimes n}(\E_n(k))^0\nleq\sigma^0$ for some $k$.

A simple rearrangement yields that for any $\alpha\in(1,+\infty)$ and any $\sigma\in\S(\hil)$,
\begin{align}\label{Nagaoka bound3}
\frac{1}{n}\log  P_s(W^{\otimes n},\C_n)
&\le
-\frac{\alpha-1}{\alpha}\left[
\frac{1}{n}\log|\C_n|-\max_{1\le k\le|\C_n|}\sum_{x\in\X}P_{\E_n(k)}(x)D_{\alpha}\nw(W(x)\|\sigma)
\right],
\end{align}
where we used that the logarithm function is monotone increasing, and hence quasi-convex, to upper bound the
logarithm of the convex combination in \eqref{Nagaoka bound2} by the logarithm of the maximum of the 
individual terms. A completely similar argument as above, using the monotonicity of the max-relative entropy 
$D_{\infty}\nw$, yields that \eqref{Nagaoka bound3} also holds for $\alpha=+\infty$, with the natural definition
$\frac{\alpha-1}{\alpha}\big\vert_{\alpha=+\infty}:=1$. Finally, the inequality in \eqref{Nagaoka bound3} holds trivially for $\alpha=1$, since then the RHS is zero.

When $\sigma\in\S(\hil)_{++}$, we may use the continuity bound in the proof of Lemma 
\ref{lemma:minfa cont} to obtain
\begin{align}
&\frac{1}{n}\log  P_s(W^{\otimes n},\C_n)\nonumber\\
&\ds\le
-\frac{\alpha-1}{\alpha}\left[
\frac{1}{n}\log|\C_n|-\sum_{x\in\X}P(x)D_{\alpha}\nw(W(x)\|\sigma)
+\max_{1\le k\le|\C_n|}\norm{P_{\E_n(k)}-P}_1\log\lambda_{\min}(\sigma)
\right].\label{Nagaoka bound4}
\end{align}

Now, if $(\C_n)_{n\in\bN}$ is a sequence of codes as in the definition \eqref{sci2} of $\sci(W,R,P)^*$ then
taking the limsup over $n$ in \eqref{Nagaoka bound4}, then the infimum over $\sigma\in\S(\hil)_{++}$, and finally the infimum over $\alpha>1$, yields
\begin{align}
\limsup_{n\to+\infty}\frac{1}{n}\log  P_s(W^{\otimes n},\C_n)
&\le
-\sup_{\alpha>1}\frac{\alpha-1}{\alpha}\left[R-\minfa\nw(W,P)\right],
\end{align}
which is what we wanted to prove.

\begin{rem}
When $\C_n$ is a constant composition code with composition $P_n:=P_{\E_n(k)}$, $k=1,\ldots,|\C_n|$,
the inequality in \eqref{Nagaoka bound3} yields, after taking the infimum over $\sigma\in\S(\hil)$ and 
$\alpha>1$,
\begin{align}\label{sc lower2}
\frac{1}{n}\log  P_s(W^{\otimes n},\C_n)&\le
-\sup_{\alpha>1}\frac{\alpha-1}{\alpha}\left[
\frac{1}{n}\log|\C_n|-\minfa\nw(W,P_n)
\right].
\end{align}
\end{rem}

\begin{rem}\label{rem:Sheverdyaev}
A similar bound to the one in \eqref{Nagaoka bound2} was used by Sheverdyaev \cite{Sheverdyaev1982} to bound the success 
probability of feedback-assisted coding for classical channels.
The inequality in \cite{Sheverdyaev1982} was derived using H\"older's inequality, instead of the monotonicity 
argument above, which is
adapted from Nagaoka's proof \cite{N}. 
We remark that for classical R\'enyi divergences, 
monotonicity under stochastic maps follows immediately from the convexity properties of the power functions, or 
equivalently, from a special case of H\"older's inequality, and vice versa, 
this case of H\"older's inequality can be viewed as a special case of the monotonicity of the R\'enyi divergences, 
hence the two approaches to obtain the upper bound on the success probability are the same on the technical level. 
On the other hand, proving monotonicity of the quantum R\'enyi 
divergences requires more involved techniques, and it is the monotonicity-based approach of Nagaoka that has 
proved to be fruitful in quantum information theory.
\end{rem}
\medskip

\noindent\textit{Proof of Lemma \ref{lemma:sc optimality cost}:}
Let $\C_n$ be a code for $n$ uses of the channel with cost $\cost(\C_n)< \costt$. For any 
$\sigma\in\S(\hil)$, and any message $k=1,\ldots,|\C_n|$,
\begin{align*}
\sum_{x\in\X}P_{\E_n(k)}(x)D_{\alpha}\nw(W(x)\|\sigma)
\le
\sup_{P\in\P_{f,\cost<\costt}(\X)}\sum_{x\in\X}P(x)D_{\alpha}\nw(W(x)\|\sigma),
\end{align*}
and hence the upper bound in \eqref{Nagaoka bound3} can be continued as
\begin{align}\label{cost const sc opt1}
\frac{1}{n}\log P_s(W^{\otimes n},\C_n)
\le
-\frac{\alpha-1}{\alpha}
\left[
\frac{1}{n}\log|\C_n|-\sup_{P\in\P_{f,\cost<\costt}(\X)}\sum_{x\in\X}P(x)D_{\alpha}\nw(W(x)\|\sigma)\right].
\end{align}

Let $(\C_n)_{n\in\bN}$ be a sequence of codes with rate at least $R$, and
$\limsup_n \cost(\C_n)<\costt$. Then 
$\cost(\C_n)<\costt$ for all large enough $n$, and taking the limsup over $n$ in \eqref{cost const sc opt1}, and then the infima over $\sigma\in\S(\hil)$ and $\alpha\in[1,+\infty]$, yields
\begin{align}\label{cost const sc opt2}
\limsup_{n\to+\infty}\frac{1}{n}\log P_s(W^{\otimes n},\C_n)
\le
-\sup_{\alpha\in[1,+\infty]}\frac{\alpha-1}{\alpha}
\left[R-\inf_{\sigma\in\S(\hil)}\sup_{P\in\P_{f,\cost<\costt}(\X)}\sum_{x\in\X}P(x)D_{\alpha}\nw(W(x)\|\sigma)\right].
\end{align}
Let $h_{\alpha}(P,\sigma):=\sum_{x\in\X}P(x)D_{\alpha}\nw(W(x)\|\sigma)$. Then $h$ is clearly concave in its first variable, and by Lemmas \ref{lemma:2nd convexity} and \ref{lemma:lsc}, it is convex and lower semi-continuous in its second variable. Thus, by Lemma \ref{lemma:KF+ minimax}, 
\begin{align*}
\inf_{\sigma\in\S(\hil)}\sup_{P\in\P_{f,\cost<\costt}(\X)}\sum_{x\in\X}P(x)D_{\alpha}\nw(W(x)\|\sigma)
&=
\sup_{P\in\P_{f,\cost<\costt}(\X)}\inf_{\sigma\in\S(\hil)}\sum_{x\in\X}P(x)D_{\alpha}\nw(W(x)\|\sigma)\\
&=
\sup_{P\in\P_{f,\cost<\costt}(\X)}\chi_{\alpha}\nw(W,P),
\end{align*}
and the bound in \eqref{cost const sc opt2} can be rewritten as
\begin{align*}
\liminf_{n\to+\infty}-\frac{1}{n}\log P_s(W^{\otimes n},\C_n)
\ge
\sup_{\alpha\in[1,+\infty]}\frac{\alpha-1}{\alpha}
\left[R-\sup_{P\in\P_{f,\cost<\costt}(\X)}\chi_{\alpha}\nw(W,P)\right].
\end{align*}
Since this holds for every sequence of codes as above, the assertion follows.

\section{Random coding exponent with constant composition}
\label{sec:random coding exponent}

Below we give a slightly different proof of the constant composition random coding bound first proved in 
\cite{ChengHansonDattaHsieh2018}.
We start with the following random coding bound, given in \cite{HN,Hayashicq,Ogawa_randomcoding}.
\begin{lemma}\label{lemma:random coding}
Let $W:\,\X\to\S(\hil)$ be a classical-quantum channel, $n\in\bN$, $R>0$, 
$M_n:=\ceil{e^{nR}}$, 
and
$Q_n\in\P_f(\X^n)$.
For every
$\dvecc{x}=(\vecc{x}_1,\ldots,\vecc{x}_{M_n})\in(\X^n)^{M_n}$, there exists a code
$\C_{n,\dvecc{x}}=(\E_{n,\dvecc{x}},\D_{n,\dvecc{x}})$ for $W^{\otimes n}$ such that 
$\E_{n,\dvecc{x}}(k)=\vecc{x}_k$, $k\in[M_n]$, and
\begin{align}
\Exp_{Q_n^{\otimes M_n}}P_e(W^{\otimes n},\C_{n,\dvecc{x}})
&\le
\sum_{\vecc{x}\in\X^n}Q_n(\vecc{x})\Tr W^{\otimes n}(\vecc{x})
\{ W^{\otimes n}(\vecc{x})-e^{nR}W^{\otimes n}(Q_n)\le 0\}\nonumber\\
&\ds\ds+
e^{nR}\sum_{\vecc{x}\in\X^n}Q_n(\vecc{x})\Tr W^{\otimes n}(Q_n)
\{ W^{\otimes n}(\vecc{x})-e^{nR}W^{\otimes n}(Q_n)>0\}\nonumber\\
&\le
e^{nR(1-\alpha)}\sum_{\vecc{x}\in\X^n}Q_n(\vecc{x})
\Tr W^{\otimes n}(\vecc{x})^{\alpha}W^{\otimes n}(Q_n)^{1-\alpha}.\label{rcbound}
\end{align}
In particular, there exists an $\vecc{x}\in\supp Q_n$ such that $P_e(W^{\otimes n},\C_{n,\dvecc{x}})$
is upper bounded by the RHS of \eqref{rcbound}.
\end{lemma}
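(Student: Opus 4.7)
The plan is to combine two independent ingredients: a random coding bound realized by an explicit decoder, plus a one-shot Audenaert-type trace inequality that converts the resulting error terms into the Petz $\alpha$-quasi-divergence expression. First I would introduce, for each $\vecc{x} \in \X^n$, the spectral projection
\[
\Pi_{\vecc{x}} := \{W^{\otimes n}(\vecc{x}) - e^{nR} W^{\otimes n}(Q_n) > 0\},
\]
i.e., the projection onto the eigenspace where the ``likelihood ratio'' of the channel output against the averaged output exceeds the threshold $e^{nR}$. For a fixed codeword sequence $\dvecc{x} = (\vecc{x}_1, \ldots, \vecc{x}_{M_n})$, the decoder $\D_{n,\dvecc{x}}$ should be built from these projections; a natural choice is the pretty good (square-root) measurement with $S := \sum_{j=1}^{M_n} \Pi_{\vecc{x}_j}$ and $\D_{n,\dvecc{x}}(k) := S^{-1/2} \Pi_{\vecc{x}_k} S^{-1/2}$ on the support of $S$, completed to a POVM by assigning the missing mass to any fixed index.

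Next, a Hayashi--Nagaoka-type operator inequality yields a pointwise bound of the form
\[
I - \D_{n,\dvecc{x}}(k) \le a\,(I - \Pi_{\vecc{x}_k}) + b \sum_{j \neq k} \Pi_{\vecc{x}_j}
\]
for some absolute constants $a, b$. Tracing against $W^{\otimes n}(\vecc{x}_k)$, averaging over $k \in [M_n]$, and then taking the expectation with respect to $Q_n^{\otimes M_n}$, the independence of the codewords turns the second term into $(M_n - 1) \cdot \Exp_{\vecc{y} \sim Q_n} \Tr W^{\otimes n}(Q_n) \Pi_{\vecc{y}}$ after using $\Exp_{\vecc{x} \sim Q_n} W^{\otimes n}(\vecc{x}) = W^{\otimes n}(Q_n)$; the coarse bound $M_n - 1 \le e^{nR}$ then produces the first inequality of \eqref{rcbound}. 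A more careful decoder design along the lines of \cite{HN,Hayashicq,Ogawa_randomcoding} is what permits taking $a = b = 1$ and matches the stated form exactly.

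The second inequality then follows from Audenaert's inequality applied codeword by codeword: for any $A, B \in \B(\hil)_+$ and $\alpha \in [0,1]$,
\[
\Tr A \{A - B \le 0\} + \Tr B \{A - B > 0\} \le \Tr A^{\alpha} B^{1-\alpha}.
\]
Setting $A := W^{\otimes n}(\vecc{x})$ and $B := e^{nR} W^{\otimes n}(Q_n)$ and pulling the scalar $e^{nR(1-\alpha)}$ out of $B^{1-\alpha}$ gives the pointwise bound; averaging against $Q_n(\vecc{x})$ concludes. The final ``in particular'' clause is a standard derandomization: since the upper bound holds in expectation, at least one realization $\dvecc{x} \in (\supp Q_n)^{M_n}$ must attain an error probability no larger than it.

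The main technical obstacle is ensuring the decoder is sharp enough to eliminate the extraneous multiplicative constants that appear from a naive application of Hayashi--Nagaoka, so that the bound holds in the clean form displayed; the second, analytic inequality is essentially a black-box application of Audenaert's trace inequality. The constants matter for the subsequent application in Proposition \ref{prop:sc upper}, since pre-exponential factors could otherwise interact unfavorably with the limits taken there, which is why the argument is spelled out in this appendix rather than merely cited.
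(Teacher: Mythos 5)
The paper does not actually prove Lemma~\ref{lemma:random coding}: it imports the random coding bound wholesale from \cite{HN,Hayashicq,Ogawa_randomcoding}, so there is no in-paper proof to compare against. Your Audenaert step is fine: the identity
$\Tr A\{A-B\le 0\}+\Tr B\{A-B>0\}=\tfrac12(\Tr A+\Tr B-\|A-B\|_1)$
and Audenaert's inequality give the second inequality of \eqref{rcbound} pointwise in $\vecc{x}$ after pulling out $e^{nR(1-\alpha)}$, and averaging over $Q_n$ concludes; the derandomization for the ``in particular'' clause is also correct.

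The genuine gap is in your justification of the first inequality of \eqref{rcbound}. You propose the square-root-measurement decoder built from $\Pi_{\vecc{x}}=\{W^{\otimes n}(\vecc{x})-e^{nR}W^{\otimes n}(Q_n)>0\}$ and then an operator inequality of the form $I-\D_{n,\dvecc{x}}(k)\le a(I-\Pi_{\vecc{x}_k})+b\sum_{j\ne k}\Pi_{\vecc{x}_j}$ with $a=b=1$. But the Hayashi--Nagaoka operator inequality, for $0\le S\le I$ and $T\ge 0$, gives
$I-(S+T)^{-1/2}S(S+T)^{-1/2}\le(1+c)(I-S)+(2+c+c^{-1})T$ for any $c>0$, and the two coefficients cannot simultaneously be driven to $1$ (the standard choice $c=1$ gives $2$ and $4$). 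So the SRM-plus-operator-inequality route as you set it up is structurally incapable of producing the constant-free bound in \eqref{rcbound}; the phrase ``a more careful decoder design along the lines of \cite{HN,Hayashicq,Ogawa_randomcoding} is what permits taking $a=b=1$'' is precisely the content of the lemma you are being asked to prove, so the argument is circular at the decisive step. To close the gap you would either have to actually reproduce the argument of \cite{Hayashicq} or \cite{Ogawa_randomcoding} (which does not go through an operator inequality with coefficients $1,1$ on the SRM), or weaken the statement to allow absolute constants in front of the two error terms. The latter would in fact suffice for the downstream use in Proposition~\ref{prop:random coding exponent} and Corollary~\ref{cor:random coding exponent}, since constant prefactors vanish in the $\tfrac1n\log(\cdot)$ rate and are already dominated by the $|\supp P_n|\tfrac{\log(n+1)}{n}$ type-counting term there; so it is worth noting that the clean constants, while true, are not needed for the present application, but that does not make your proof of the stated inequality complete.
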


From this, we can obtain the following:

\begin{prop}\label{prop:random coding exponent}
Let $W:\,\X\to\S(\hil)$ be a classical-quantum channel, and let $R>0$. For every $n\in\bN$, 
and every type $P_n\in\P_n(\X)$, there exists a code $\C_n$ of constant composition $P_n$
with rate $\frac{1}{n}\log|\C_n|\ge R$ such that 
\begin{align}
\frac{1}{n}\log P_e(W^{\otimes n},\C_{n})
&\le
-\sup_{0\le\alpha\le 1}(\alpha-1)
\left[R-\sum_{x\in\X}P_n(x)D_{\alpha}(W(x)\|W(P_n))+|\supp P_n|\frac{\log (n+1)}{n}\right].\label{cc bound}
\end{align}
\end{prop}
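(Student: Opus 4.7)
The plan is to apply Lemma~\ref{lemma:random coding} with $Q_n$ chosen as the uniform distribution on $\X^n_{P_n}$. Since $\supp Q_n=\X^n_{P_n}$, the resulting code $\C_n$ automatically has all codewords of type $P_n$, hence constant composition $P_n$, while $M_n=\ceil{e^{nR}}\ge e^{nR}$ gives the required rate. The key auxiliary estimate is the operator inequality
\begin{align*}
W^{\otimes n}(Q_n)\le (n+1)^{|\supp P_n|}\,W(P_n)^{\otimes n},
\end{align*}
which I would prove by expanding $W(P_n)^{\otimes n}=\sum_{\vecc{y}\in\X^n}P_n^{\otimes n}(\vecc{y})W^{\otimes n}(\vecc{y})$, restricting to type-$P_n$ terms (for which \eqref{type prob} gives $P_n^{\otimes n}(\vecc{y})=e^{-nH(P_n)}$), and invoking the lower bound on $|\X^n_{P_n}|$ from \eqref{type card}.

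Since $1-\alpha\in[0,1]$, L\"owner's theorem yields operator monotonicity of $A\mapsto A^{1-\alpha}$, hence
\begin{align*}
W^{\otimes n}(Q_n)^{1-\alpha}\le (n+1)^{(1-\alpha)|\supp P_n|}\,\bigl(W(P_n)^{\otimes n}\bigr)^{1-\alpha}.
\end{align*}
Combining with the tensor-factorization, valid for any $\vecc{x}\in\X^n_{P_n}$,
\begin{align*}
\Tr W^{\otimes n}(\vecc{x})^{\alpha}\bigl(W(P_n)^{\otimes n}\bigr)^{1-\alpha}
=\prod_{x\in\X} Q_\alpha(W(x)\|W(P_n))^{nP_n(x)}
=\exp\!\Bigl(n(\alpha-1)\sum_{x\in\X}P_n(x)D_\alpha(W(x)\|W(P_n))\Bigr),
\end{align*}
and inserting into the second bound of \eqref{rcbound} yields
\begin{align*}
P_e(W^{\otimes n},\C_n)\le \exp\!\Bigl(-n(\alpha-1)\bigl[R-\textstyle\sum_x P_n(x)D_\alpha(W(x)\|W(P_n))+|\supp P_n|\tfrac{\log(n+1)}{n}\bigr]\Bigr)
\end{align*}
for every $\alpha\in[0,1]$. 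Taking $\tfrac{1}{n}\log$ and optimizing over $\alpha$ produces \eqref{cc bound}.

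The main technical ingredient is the operator inequality relating $W^{\otimes n}(Q_n)$ to $W(P_n)^{\otimes n}$; everything else is routine algebraic manipulation, with the one subtlety being to keep track of the sign of $\alpha-1$ when rearranging the exponent into the form stated in the proposition.
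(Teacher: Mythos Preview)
Your proof is correct and follows essentially the same route as the paper: choose $Q_n$ uniform on the type class $\X^n_{P_n}$, apply Lemma~\ref{lemma:random coding}, use the type-counting bounds \eqref{type prob}--\eqref{type card} to establish $W^{\otimes n}(Q_n)\le (n+1)^{|\supp P_n|}W(P_n)^{\otimes n}$, push through the $(1-\alpha)$-th power by operator monotonicity, and tensor-factorize. The only cosmetic difference is that the paper first reduces the average over $\vecc{x}$ to a single term via permutation invariance of $W^{\otimes n}(Q_n)$ and then applies the operator bound, whereas you apply the bound first and then observe that the resulting trace depends only on the type; both orderings are equally valid.
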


\begin{proof}
Let $\X_n:=\X^n_{P_n}\subseteq\X^n$ be the set of sequences with type $P_n$. Choosing 
$Q_n:=\frac{1}{|\X_n|}\egy_{\X_n}$
in Lemma \ref{lemma:random coding}, we get the existence of codes $\C_n$ with constant composition $P_n$ such that 
\begin{align*}
P_e(W^{\otimes n},\C_{n})\le
e^{nR(1-\alpha)}
\Tr W^{\otimes n}(\vecc{x})^{\alpha}W^{\otimes n}(Q_n)^{1-\alpha}
\end{align*}
for any $\vecc{x}\in \X_n$, 
where we used that $W^{\otimes n}(Q_n)=\frac{1}{|\X_n|}\sum_{\vecc{y}\in \X_n}W^{\otimes n}(\vecc{y})$ is permutation-invariant.
Now we use the well-known facts that $|\X^n_{P_n}|\ge (n+1)^{-|\supp P_n|}e^{nH(P_n)}$, and that for any 
$\vecc{y}\in \X^n_{P_n}$, $P_n^{\otimes n}(\vecc{y})=e^{-nH(P_n)}$,
(see \eqref{type prob} and \eqref{type card}), 
to obtain that 
\begin{align*}
W^{\otimes n}(Q_n)
=
\frac{1}{|\X_n|}\sum_{\vecc{y}\in \X_n}W^{\otimes n}(\vecc{y})
&\le
(n+1)^{|\supp P_n|}e^{-nH(P_n)}\sum_{\vecc{y}\in \X_n}W^{\otimes n}(\vecc{y})\\
&=
(n+1)^{|\supp P_n|}\sum_{\vecc{y}\in \X_n}P_n^{\otimes n}(\vecc{y})W^{\otimes n}(\vecc{y})\\
&\le
(n+1)^{|\supp P_n|}\sum_{\vecc{y}\in \X^n}P_n^{\otimes n}(\vecc{y})W^{\otimes n}(\vecc{y})\\
&=
(n+1)^{|\supp P_n|}W(P_n)^{\otimes n}.
\end{align*}
Using that $t\mapsto t^{1-\alpha}$ is operator monotone on $\bR_+$ for $\alpha\in[0,1]$, we get that 
\begin{align*}
P_e(W^{\otimes n},\C_{n})\le
(n+1)^{(1-\alpha)|\supp P_n|}e^{nR(1-\alpha)}\Tr W^{\otimes n}(\vecc{x})^{\alpha}\bz W(P_n)^{\otimes n}\jz^{1-\alpha}.
\end{align*}
From this \eqref{cc bound} follows by simple algebra.
\end{proof}

\begin{cor}\label{cor:random coding exponent}
Let $W:\,\X\to\S(\hil)$ be a classical-quantum channel, let $R>0$, and $P\in\P_f(\X)$. 
For any sequence $P_n\in\P_n(\X)$, $n\in\bN$, such that 
$\cup_{n\in\bN}\supp P_n$ is finite, and $\lim_{n\to+\infty}\norm{P_n-P}_1=0$, there exists 
a sequence of codes $(\C_n)_{n\in\bN}$, where 
all $\C_n$ are of constant composition $P_n$,
and every $\C_n$ has rate 
$\frac{1}{n}\log|\C_n|\ge R$, such that 
\begin{align}
\limsup_{n\to+\infty}\frac{1}{n}\log P_e(W^{\otimes n},\C_{n})
&\le
-\sup_{0\le\alpha\le 1}(\alpha-1)
\left[R-\sum_{x\in\X}P(x)D_{\alpha}(W(x)\|W(P))\right].\label{cc exponent}
\end{align}
If, moreover, $R<\sum_{x\in\X}P(x)D(W(x)\|W(P))$ then $P_e(W^{\otimes n},\C_{n})$
goes to zero exponentially fast.
\end{cor}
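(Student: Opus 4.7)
The plan is to pick, for each $n$, an integer-valued type $P_n$ approximating $P$, apply Proposition~\ref{prop:random coding exponent} to extract a constant-composition code, and pass to the limit using continuity of the R\'enyi divergence in its second argument.

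First, I would set $S := \supp P$ and $k := |S|$, and for all sufficiently large $n$ construct $P_n \in \P_n(\X)$ with $\supp P_n = S$ and $\|P_n - P\|_1 = O(k/n)$, for instance by rounding the vector $(nP(x))_{x \in S}$ to positive integers summing to $n$. This gives $\supp P_n \subseteq \supp P$, $\|P_n - P\|_1 \to 0$, and, crucially, $W(P_n)^0 = W(P)^0 \ge W(x)^0$ for every $x \in S$, so that all R\'enyi divergences $D_\alpha(W(x)\|W(P_n))$ are finite for $\alpha \in [0,1)$ and depend continuously on $P_n$.

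Next, I would apply Proposition~\ref{prop:random coding exponent} to each such $P_n$ to obtain a code $\C_n$ of constant composition $P_n$ with rate $\frac{1}{n}\log|\C_n| \ge R$ satisfying \eqref{cc bound}. The correction term $|\supp P_n|\tfrac{\log(n+1)}{n} = k\tfrac{\log(n+1)}{n}$ tends to $0$. For fixed $\alpha \in [0,1)$, since $W(P_n) \to W(P)$ in norm with matching support, the functional calculus applied to the continuous map $t \mapsto t^{1-\alpha}$ gives $D_\alpha(W(x)\|W(P_n)) \to D_\alpha(W(x)\|W(P))$ for each $x \in S$, and combined with the convergence $P_n(x) \to P(x)$ on the finite set $S$ this yields $\sum_x P_n(x) D_\alpha(W(x)\|W(P_n)) \to \sum_x P(x) D_\alpha(W(x)\|W(P))$. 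Taking $\limsup_{n\to\infty}$ in the $\alpha$-slice of \eqref{cc bound} produces
\[
\limsup_{n\to\infty}\tfrac{1}{n}\log P_e(W^{\otimes n},\C_n) \le (1-\alpha)\Bigl[R - \textstyle\sum_x P(x) D_\alpha(W(x)\|W(P))\Bigr],
\]
and taking the infimum over $\alpha \in [0,1]$ of the right-hand side (equivalently, the negative of the corresponding supremum) delivers \eqref{cc exponent}.

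For the final assertion, if $R < \sum_x P(x) D(W(x)\|W(P))$, I would invoke continuity of $\alpha \mapsto \sum_x P(x) D_\alpha(W(x)\|W(P))$ at $\alpha = 1$ to select $\alpha_0 \in (0,1)$ with $\sum_x P(x) D_{\alpha_0}(W(x)\|W(P)) > R$; the $\alpha_0$-entry of the supremum in \eqref{cc exponent} then contributes $(1-\alpha_0)\bigl[\sum_x P(x) D_{\alpha_0}(W(x)\|W(P)) - R\bigr] > 0$, so the RHS of \eqref{cc exponent} is strictly negative and $P_e$ decays exponentially. There is no deep obstacle; the only point that requires care is maintaining $\supp P_n = \supp P$ throughout, because otherwise $D_\alpha(W(x)\|W(P_n))$ could jump to $+\infty$ near $\alpha = 1$ via a support mismatch and spoil the pointwise-in-$\alpha$ continuity argument.
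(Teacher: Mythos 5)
Your proof is correct and takes essentially the same approach as the paper: the paper's proof is a two-sentence sketch (choose an approximating sequence $P_n\in\P_n(\X)$ with $\supp P_n\subseteq\supp P$ and $\|P_n-P\|_1\to 0$, then apply Proposition~\ref{prop:random coding exponent}), and your argument supplies exactly the limiting and continuity details that sketch implicitly assumes — the per-$\alpha$ passage to the limit, the observation that keeping $\supp P_n=\supp P$ prevents the divergences from jumping, and the selection of $\alpha_0<1$ for the exponential-decay claim.
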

\begin{proof}
Proposition \ref{prop:random coding exponent} 
yields the existence of a sequence of codes $(\C_n)_{n\in\bN}$ with all the desired properties, except maybe for the upper bound \eqref{cc exponent}, such that 
\begin{align}
\frac{1}{n}\log P_e(W^{\otimes n},\C_{n})
&\le
-\left[R-\sum_{x\in\X}P_n(x)D_{\alpha}(W(x)\|W(P_n))+|\supp P_n|\frac{\log (n+1)}{n}\right]
\end{align}
for every $n\in\bN$ and $\alpha\in[0,1]$. Taking first the limsup in $n$, and then the infimum in $\alpha\in[0,1]$, the assertion follows.
\end{proof}
\medskip

A variant of Corollary \ref{cor:random coding exponent} was given by Csisz\'ar and K\"orner in \cite[Theorem 10.2]{CsiszarKorner2} for classical channels, where the RHS is
\begin{align*}
-\sup_{\new{1/2}<\alpha\le 1}\frac{\alpha-1}{\alpha}\left[R-\chi_{\alpha}(W,P)\right].
\end{align*}
Note that $\frac{\alpha-1}{\alpha}$ gives a strictly better prefactor, while 
$\chi_{\alpha}(W,P)\le\sum_{x\in\X}P(x)D_{\alpha}(W(x)\|W(P))$. 
However, the Csisz\'ar-K\"orner bound is optimal for high enough rates, and hence it would be desirable to obtain an exact analogue of it for classical-quantum channels.
\medskip

Constant composition exponents were obtained also for classical-quantum channels before; for instance,
the following was stated in \cite{universalcq}:
Let $\X$ be a finite set, $\hil$ be a finite-dimensional Hilbert space, 
let $P$ be a probability mass function on $\X$, and $R>0$.
Then there exists a sequence of codes $(\C_n)_{n\in\bN}$ such that 
\begin{align*}
\lim_{n\to+\infty}\frac{1}{n}\log|\C_n|=R, 
\end{align*}
and for any classical-quantum channel $W:\,\X\to\S(\hil)$,
\begin{align}\label{Hayashi universal exponent}
\liminf_{n\to+\infty}-\frac{1}{n}\log  P_e(W^{\otimes n},\C_n)
\ge
\sup_{0\le\alpha\le 1}\frac{\alpha-1}{2-\alpha}\left[R-I_{\alpha}(W,P)\right].
\end{align}
While this is not sufficiently detailed in \cite{universalcq}, the codes above can indeed be chosen to be of 
constant composition.

Note that the bound in \eqref{Hayashi universal exponent} is not as strong as the classical universal random coding exponent given by Csisz\'ar and K\"orner,
as $0<\alpha<2-\alpha$ for all $\alpha<1$, and $\chi_{\alpha}(W,p)\ge I_{\alpha}(W,p)$,
with the inequality being strict in general.

\section{Evaluation of the information spectrum quantity}
\label{sec:infospectrum}

For every $n\in\bN$, let $\hil_n$ be a finite-dimensional Hilbert space, and let 
$\rho_n,\sigma_n\in\S(\hil_n)$. Let 
\begin{align*}
\psi(\alpha):=\begin{cases}
\limsup_{n\to+\infty}\frac{1}{n}\log\Tr\rho_n^{\alpha}\sigma_n^{1-\alpha},&\alpha\in(0,1],\\
\limsup_{n\to+\infty}\frac{1}{n}\log\Tr\bz\rho_n^{1/2}\sigma_n^{\frac{1-\alpha}{\alpha}}\rho_n^{1/2}\jz^{\alpha},&\alpha\in(1,+\infty).
\end{cases}
\end{align*}

The following statement is a central observation in the information spectrum method, and its proof follows by 
standard arguments. We include a complete proof for the readers' convenience.

\begin{lemma}\label{lemma:infospectrum}
In the above setting, 
\begin{align}\label{infospectrum direct}
\lim_{n\to+\infty}\Tr(\rho_n-e^{nr}\sigma_n)_+=1\ds
\begin{cases}
\text{for all }r\in\bR,&\text{ if }\psi(1)<0,\\
\text{for }r<\derleft{\psi}(1),&\text{ if }\psi(1)=0.
\end{cases}
\end{align}
If $\rho_n^0\le\sigma_n^0$ for all large enough $n$, then $\psi(1)=0$, and
\begin{align}\label{infospectrum converse}
\lim_{n\to+\infty}\Tr(\rho_n-e^{nr}\sigma_n)_+=
0 \ds\ds\text{for all }\ds r>\derright{\psi}(1).
\end{align}
\end{lemma}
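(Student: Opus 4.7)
The plan is to reduce each half of the lemma to a one-shot estimate bounding $\Tr(\rho_n-e^{nr}\sigma_n)_+$ by a $Q_\alpha$-type quantity --- Petz-style for $\alpha\in(0,1)$ and sandwiched for $\alpha>1$ --- and then to optimize over $\alpha$ using convexity of $\psi$ together with its one-sided derivatives at $1$.

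First I would verify that $\psi$ is convex on $(0,1]$ and on $[1,\infty)$. For fixed $\rho,\sigma\in\B(\hil)_+$ the map $\alpha\mapsto\log\Tr\rho^\alpha\sigma^{1-\alpha}$ is convex on $(0,1]$ by complex interpolation, and $\alpha\mapsto\log\Tr(\rho^{1/2}\sigma^{(1-\alpha)/\alpha}\rho^{1/2})^\alpha$ is convex on $[1,\infty)$; since $\limsup_n$ preserves convexity (via $\limsup(a_n+b_n)\le\limsup a_n+\limsup b_n$), $\psi$ inherits convexity on each interval, and the one-sided derivatives $\derleft{\psi}(1),\derright{\psi}(1)\in[-\infty,+\infty]$ exist with the standard characterizations
\begin{align*}
\derleft{\psi}(1)=\sup_{\alpha\in(0,1)}\frac{\psi(\alpha)-\psi(1)}{\alpha-1},\qquad \derright{\psi}(1)=\inf_{\alpha>1}\frac{\psi(\alpha)-\psi(1)}{\alpha-1}.
\end{align*}

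For the direct statement I would invoke Audenaert's inequality $\|A-B\|_1\ge\Tr A+\Tr B-2\Tr A^s B^{1-s}$ (valid for $A,B\ge 0$, $s\in[0,1]$) together with $\Tr(A-B)_+=\half(\|A-B\|_1+\Tr A-\Tr B)$, yielding
\begin{align*}
\Tr(\rho_n-e^{nr}\sigma_n)_+\ge 1-e^{nr(1-\alpha)}\Tr\rho_n^\alpha\sigma_n^{1-\alpha},\qquad\alpha\in(0,1).
\end{align*}
If $\psi(1)<0$, convexity on $(0,1]$ forces $\lim_{\alpha\to 1^-}\psi(\alpha)\le\psi(1)<0$, so for every $r\in\bR$ the exponent $(1-\alpha)r+\psi(\alpha)$ is strictly negative once $\alpha<1$ is close enough to $1$. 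If $\psi(1)=0$ and $r<\derleft{\psi}(1)$, the supremum characterization above supplies $\alpha\in(0,1)$ with $\frac{\psi(\alpha)}{\alpha-1}>r$, equivalently $(1-\alpha)r+\psi(\alpha)<0$. An $\epsilon$-argument using the $\limsup$-definition of $\psi(\alpha)$ then sends the right-hand side to $1$, and combined with the trivial upper bound $\Tr(\rho_n-e^{nr}\sigma_n)_+\le\Tr\rho_n=1$ gives the desired limit.

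For the converse, the support hypothesis yields $\Tr\rho_n\sigma_n^0=1$ and hence $\psi(1)=0$. For the complementary one-shot bound I would set $T:=\{\rho_n>e^{nr}\sigma_n\}$, $p:=\Tr\rho_n T$, $q:=\Tr\sigma_n T$, and apply monotonicity of $D_\alpha\nw$ under the binary POVM $\{T,I-T\}$ (for $\alpha>1$) to obtain $p^\alpha q^{1-\alpha}\le Q_\alpha\nw(\rho_n\|\sigma_n)$; a direct one-variable optimization of $p-e^{nr}q$ subject to this constraint then gives
\begin{align*}
\Tr(\rho_n-e^{nr}\sigma_n)_+\le e^{nr(1-\alpha)}Q_\alpha\nw(\rho_n\|\sigma_n),\qquad\alpha>1.
\end{align*}
When $r>\derright{\psi}(1)$, convexity on $[1,\infty)$ produces $\alpha>1$ with $\frac{\psi(\alpha)}{\alpha-1}<r$, so $(1-\alpha)r+\psi(\alpha)<0$ and the bound decays exponentially. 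The main obstacle is precisely this converse one-shot inequality for $\alpha>1$: unlike Audenaert's inequality it has no pure trace-inequality form, and depends on either the H\"older/optimization argument above or an equivalent data-processing reduction to the classical R\'enyi divergence. A secondary subtlety is that $\psi$ is a $\limsup$ rather than a pointwise limit, so convexity (and the upper semi-continuity at $1$ from each side that it entails) must be inferred at the level of the finite-$n$ functions, which is automatic once their convexity is in hand.
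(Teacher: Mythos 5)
Your proposal is correct, and both halves rest on the same two key inequalities as the paper's proof: Audenaert's inequality for the direct part and data processing of the sandwiched R\'enyi divergence under the binary test $\{S_{n,r},I-S_{n,r}\}$ for the converse. The direct part is essentially identical (your form $\Tr(\rho_n-e^{nr}\sigma_n)_+\ge 1-e^{nr(1-\alpha)}\Tr\rho_n^\alpha\sigma_n^{1-\alpha}$ is the same as the paper's $e_n(r)\le e^{nr(1-\alpha)}\Tr\rho_n^\alpha\sigma_n^{1-\alpha}$ after unwinding the identity $\Tr(\rho_n-e^{nr}\sigma_n)_+=1-e_n(r)$). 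The converse differs in the plumbing: you turn the DPI constraint $p^\alpha q^{1-\alpha}\le Q_\alpha\nw(\rho_n\|\sigma_n)$, together with the constraint $p\ge e^{nr}q$ built into the definition of $T$, into a self-contained one-shot bound $\Tr(\rho_n-e^{nr}\sigma_n)_+\le e^{nr(1-\alpha)}Q_\alpha\nw(\rho_n\|\sigma_n)$ (the optimization over $s:=e^{nr}q/p\in(0,1]$ goes through since $\sup_{s\in[0,1]}s^{\alpha-1}(1-s)\le 1$, and the case $q=0$ is forced to $p=0$ by the support assumption). The paper instead uses the DPI to bound $\limsup\frac1n\log\Tr\rho_n S_{n,r}$ by $\inf_{\alpha>1}\frac{\alpha-1}{\alpha}\{-r+\psi(\alpha)/(\alpha-1)\}$, feeding in the elementary bound $\Tr\sigma_n S_{n,r}\le e^{-nr}$ from \eqref{infospectrum proof3}, and then deduces that both $\Tr\rho_n S_{n,r}$ and $e^{nr}\Tr\sigma_n S_{n,r}$ vanish. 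The two arguments trigger on the same condition (some $\alpha>1$ with $\psi(\alpha)/(\alpha-1)<r$), and yours even gives the sharper decay exponent $(\alpha-1)$ rather than $(\alpha-1)/\alpha$, though the lemma only needs the limit. Finally, you spell out the convexity of $\psi$ (preserved under $\limsup$ since $\limsup(a_n+b_n)\le\limsup a_n+\limsup b_n$) and the existence of the one-sided derivatives at $1$; the paper takes these for granted, so this is a useful clarification rather than a divergence in method.
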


\begin{proof}
Let $S_{n,r}:=\{\rho_n-e^{nr}\sigma_n>0\}$ be the Neyman-Pearson test with parameter $r$. Then 
\begin{align*}
0\le \Tr(\rho_n-e^{nr}\sigma_n)_+=\Tr\rho_n S_{n,r}-e^{nr}\Tr\sigma_n S_{n,r}
=
1-\alpha_n(S_{n,r})-e^{nr}\beta_n(S_{n,r}),
\end{align*}
with $\alpha_n(S_{n,r}):=\Tr\rho_n(I-S_{n,r})$ and $\beta_n(S_{n,r}):=\Tr\sigma_nS_{n,r}$ being the type I and the type II error probabilities 
corresponding to the test $S_{n,r}$, respectively. In particular,
\begin{align}\label{infospectrum proof3}
0\le e^{nr}\Tr\sigma_n S_{n,r}\le\Tr\rho_n S_{n,r}.
\end{align}

By Audenaert's inequality \cite[Theorem 1]{Hoeffding1},
\begin{align*}
e_n(r):=\alpha_n(S_{n,r})+e^{nr}\beta_n(S_{n,r})=\Tr\rho_n (I-S_{n,r})+e^{nr}\Tr\sigma_n S_{n,r}
\le
e^{nr(1-\alpha)}\Tr\rho_n^{\alpha}\sigma_n^{1-\alpha}
\end{align*}
for all $\alpha\in[0,1]$, and hence
\begin{align}\label{infospectrum proof}
\limsup_{n\to+\infty}\frac{1}{n}\log e_n(r)
&\le
-\sup_{0\le \alpha\le 1}(\alpha-1)\{r-\psi(\alpha)/(\alpha-1)\}.
\end{align}
Since $\lim_{\alpha\nearrow 1}\psi(\alpha)/(\alpha-1)=\derleft{\psi}(1)$ if $\psi(1)=0$, and $+\infty$ otherwise,
we obtain that $e_n(r)\to 0$ as $n\to+\infty$ 
for any $r\in\bR$ when $\psi(1)<0$ and for $r<\derleft{\psi}(1)$ when $\psi(1)=0$. 
Using that $\max\{\alpha_n(S_{n,r}),e^{nr}\beta_n(S_{n,r})\}\le e_n(r)$, we see that 
in these cases, also $e^{nr}\beta_n(S_{n,r})\to 0$ and $1-\alpha_n(S_{n,r})\to 1$, and therefore
$\Tr(\rho_n-e^{nr}\sigma_n)_+\to 1$.
This proves \eqref{infospectrum direct}.

Next, we prove \eqref{infospectrum converse}.
By the monotonicity of the sandwiched R\'enyi divergences, we have, for every $0\le T_n\le I$,
and every $\alpha>1$,
\begin{align*}
\Tr\bz\rho_n^{1/2}\sigma_n^{\frac{1-\alpha}{\alpha}}\rho_n^{1/2}\jz^{\alpha}
&\ge
\bz\Tr\rho_n T_n\jz^{\alpha}
\bz\Tr\sigma_n T_n\jz^{1-\alpha},
\end{align*}
and therefore
\begin{align}\label{infospectrum proof2}
\limsup_{n\to+\infty}\frac{1}{n}\log \Tr\rho_n T_n
&\le
\inf_{\alpha>1}\frac{\alpha-1}{\alpha}
\left\{
\limsup_{n\to+\infty}\frac{1}{n}\log \Tr\sigma_n T_n+\frac{\psi(\alpha)}{\alpha-1}
\right\}.
\end{align}
Now, let $T_n:=S_{n,r}$ with some $r$. Then, by 
\eqref{infospectrum proof}, we have
\begin{align*}
\limsup_{n\to+\infty}\frac{1}{n}\log \Tr\sigma_n T_n&\le
-\sup_{0\le\alpha\le 1}\{\alpha r-\psi(\alpha)\}\le-r.
\end{align*}
Hence, if $r>\derright{\psi}(1)$, then the RHS of \eqref{infospectrum proof2} is negative, and thus
$\Tr\rho_n S_{n,r}\to 0$ as $n\to+\infty$. Using \eqref{infospectrum proof3}, we get that 
also $e^{nr}\Tr\sigma_nS_{n,r}\to 0$, and hence
$\Tr(\rho_n-e^{nr}\sigma_n)_+\to 0$, as required.
\end{proof}

\begin{cor}\label{cor:infospec limit}
For every $n\in\bN$, let $P_n$ be an $n$-type and $\vecc{x}^{(n)}\in\X^n$ be of type $P_n$. Assume that 
$(P_n)_{n\in\bN}$ converges to some $P\in\P_f(\X)$ and 
$\cup_{n\in\bN}\supp P_n$ is finite. If 
$V(x)^0\le W(x)^0$ for all $x\in\supp P_n$ and all $n$ large enough, then 
\begin{align*}
\lim_{n\to+\infty}\Tr\bz V^{\otimes n}(\vecc{x}^{(n)})-e^{nr}W^{\otimes n}(\vecc{x}^{(n)})\jz_+
=
\begin{cases}
1,& r<\sum_{x\in\X}P(x)D(V(x)\|W(x)),\\
0,& r>\sum_{x\in\X}P(x)D(V(x)\|W(x)).
\end{cases}
\end{align*}
\end{cor}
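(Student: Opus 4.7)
The plan is to reduce the statement to Lemma \ref{lemma:infospectrum} applied to $\rho_n:=V^{\otimes n}(\vecc{x}^{(n)})$ and $\sigma_n:=W^{\otimes n}(\vecc{x}^{(n)})$, and then to compute the associated function $\psi$ explicitly together with its derivative at $\alpha=1$.

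First, using the tensor product structure and the fact that $\vecc{x}^{(n)}$ has type $P_n$, for $\alpha\in(0,1]$ I would write
\begin{align*}
\Tr\rho_n^{\alpha}\sigma_n^{1-\alpha}
=\prod_{i=1}^n\Tr V(x_i^{(n)})^{\alpha}W(x_i^{(n)})^{1-\alpha}
=\prod_{x\in\X}Q_{\alpha}\bz V(x)\|W(x)\jz^{nP_n(x)},
\end{align*}
so that $\frac{1}{n}\log\Tr\rho_n^{\alpha}\sigma_n^{1-\alpha}=\sum_{x\in\X}P_n(x)\log Q_{\alpha}(V(x)\|W(x))$. Since $\cup_n\supp P_n$ is finite and $P_n\to P$ in $\ell^1$, the sum is over a common finite index set and the values $\log Q_{\alpha}(V(x)\|W(x))$ are finite for $x\in\cup_n\supp P_n$ (by the support hypothesis $V(x)^0\le W(x)^0$ and $\alpha\le 1$), so the limit exists and equals $\sum_{x\in\X}P(x)\log Q_{\alpha}(V(x)\|W(x))$. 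The same computation with $\sigma_n^{(1-\alpha)/\alpha}$ and the $\alpha$-th power of the sandwiched product gives, for $\alpha>1$, $\psi(\alpha)=\sum_{x\in\X}P(x)\log Q_{\alpha}\nw(V(x)\|W(x))$, with all summands finite thanks to the support condition.

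Next, I would verify $\psi(1)=0$: at $\alpha=1$ both expressions collapse to $\sum_x P(x)\log\Tr V(x)W(x)^0=\sum_x P(x)\log\Tr V(x)=0$, using $V(x)^0\le W(x)^0$ and the fact that $V(x)$ are states. Differentiating termwise (legitimate because the sum is finite and each $x$-summand is smooth in $\alpha$ in a neighbourhood of $1$ when $V(x)^0\le W(x)^0$),
\begin{align*}
\frac{d}{d\alpha}\log Q_{\alpha}(V(x)\|W(x))\bigg|_{\alpha=1}
=\frac{\Tr V(x)\log V(x)-\Tr V(x)\log W(x)}{\Tr V(x)}
=D(V(x)\|W(x)),
\end{align*}
and the same holds for $\alpha\mapsto\log Q_{\alpha}\nw(V(x)\|W(x))$ at $\alpha=1$ by the standard Lie--Trotter-type expansion. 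Hence both one-sided derivatives of $\psi$ at $\alpha=1$ exist and coincide with $\sum_{x\in\X}P(x)D(V(x)\|W(x))=:r^*$.

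Finally, Lemma \ref{lemma:infospectrum} directly yields the two claims: the support hypothesis guarantees $\rho_n^0\le\sigma_n^0$ for $n$ large, so \eqref{infospectrum converse} gives $\Tr(\rho_n-e^{nr}\sigma_n)_+\to 0$ for $r>\derright{\psi}(1)=r^*$, while \eqref{infospectrum direct} (case $\psi(1)=0$) gives $\Tr(\rho_n-e^{nr}\sigma_n)_+\to 1$ for $r<\derleft{\psi}(1)=r^*$. The main technical step is the derivative computation and the exchange of limit and sum in the definition of $\psi$; both are straightforward because $\cup_n\supp P_n$ is finite and the support condition makes each summand smooth near $\alpha=1$, so no uniformity issues arise.
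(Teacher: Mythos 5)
Your proof is correct and takes essentially the same route as the paper: you apply Lemma \ref{lemma:infospectrum} to $\rho_n:=V^{\otimes n}(\vecc{x}^{(n)})$ and $\sigma_n:=W^{\otimes n}(\vecc{x}^{(n)})$, use the type structure to compute $\psi(\alpha)$ as a finite sum, check $\psi(1)=0$ from the support hypothesis, and identify both one-sided derivatives at $\alpha=1$ with $\sum_x P(x)D(V(x)\|W(x))$. The only cosmetic difference is that you compute $\partial_\alpha\log Q_\alpha|_{\alpha=1}$ directly, whereas the paper reads off the same value as $\lim_{\alpha\to1}D_\alpha\x(V(x)\|W(x))$.
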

\begin{proof}
We use Lemma \ref{lemma:infospectrum}
with $\rho_n:=V^{\otimes n}(\vecc{x}^{(n)})$, $\sigma_n:=W^{\otimes n}(\vecc{x}^{(n)})$. Then we have
\begin{align*}
\psi_n(\alpha)&:=\frac{1}{n}\log\Tr \rho_n^{\alpha}\sigma_n^{1-\alpha}=
\sum_{x\in\X}P_n(x)\log\Tr V(x)^{\alpha}W(x)^{1-\alpha},
\end{align*}
for $\alpha\in(0,1]$, and
\begin{align*}
\psi_n(\alpha)&:=\frac{1}{n}\log\Tr \bz\rho_n^{1/2}\sigma_n^{\frac{1-\alpha}{\alpha}}\rho_n^{1/2}\jz^{\alpha}=
\sum_{x\in\X}P_n(x)\log\Tr \bz V(x)^{1/2}W(x)^{\frac{1-\alpha}{\alpha}}V(x)^{1/2}\jz^{\alpha},
\end{align*}
for $\alpha>1$. Hence, $\psi(\alpha)=\lim_{n\to+\infty}\psi_n(\alpha)$ exists as a limit, and 
\begin{align*}
\psi(\alpha)
=
\begin{cases}
\sum_{x\in\X}P(x)\log\Tr V(x)^{\alpha}W(x)^{1-\alpha},&\alpha\in(0,1],\\
\sum_{x\in\X}P(x)\log\Tr \bz V(x)^{1/2}W(x)^{\frac{1-\alpha}{\alpha}}V(x)^{1/2}\jz^{\alpha},&\alpha>1.
\end{cases}
\end{align*}
By assumption, $\psi(1)=0$, and 
\begin{align*}
\derleft{\psi}(1)&=
\sum_{x\in\X}P(x)\lim_{\alpha\nearrow 1}D_{\alpha}\old(V(x)\|W(x))
=\sum_{x\in\X}P(x)D(V(x)\|W(x),\\
\derright{\psi}(1)&=
\sum_{x\in\X}P(x)\lim_{\alpha\searrow 1}D_{\alpha}\nw(V(x)\|W(x))
=\sum_{x\in\X}P(x)D(V(x)\|W(x).
\end{align*}
Hence, the assertion follows immediately from Lemma \ref{lemma:infospectrum}.
\end{proof}

%
%

\section*{Acknowledgments}
We are grateful to Fumio Hiai, P\'eter Vrana, Andreas Winter, Hao-Chung Heng, Min-Hsiu Hsieh, Marco Tomamichel, J\'ozsef Pitrik and D\'aniel Virosztek for discussions. 
This work was partially funded by the JSPS grant no.~JP16K00012 (TO), by the
National Research, Development and 
Innovation Office of Hungary via the research grants K124152, KH129601, the 
Quantum Technology National Excellence Program, Project Nr.~2017-1.2.1-NKP-2017-00001, 
and by a Bolyai J\'anos Fellowship of the Hungarian Academy of Sciences
(MM). We are grateful to anonymous referees for their comments that helped to improve the presentation 
of our results, and particularly indebted to one of the referees for pointing out that our main result 
for constant composition codes can also be used to determine the strong converse exponent with cost
constraint.

\bibliography{bibliography-cqconv-cc}

\begin{thebibliography}{10}

\bibitem{AC2011}
Martial Agueh and Guillaume Carlier.
\newblock Barycenters in the {W}asserstein space.
\newblock {\em {SIAM} Journal on Mathematical Analysis}, 43(2):904--924, 2011.

\bibitem{An}
Shun-ichi Amari and Hiroshi Nagaoka.
\newblock {\em Methods of information geometry}, volume 191 of {\em
  Translations of Mathematical Monographs}.
\newblock American Mathematical Society, Providence, RI, 2000.

\bibitem{AH}
Tsuyoshi Ando and Fumio Hiai.
\newblock Operator log-convex functions and operator means.
\newblock {\em Mathematische Annalen}, 350:611--630, 2011.

\bibitem{Araki}
H.~Araki.
\newblock On an inequality of {Lieb} and {Thirring}.
\newblock {\em Letters in Mathematical Physics}, 19:167--170, 1990.

\bibitem{A73}
Suguru Arimoto.
\newblock On the converse to the coding theorem for discrete memoryless
  channels.
\newblock {\em IEEE Transactions on Information Theory}, 19:357--359, May 1973.

\bibitem{Hoeffding1}
K.M.R. Audenaert, J.~Calsamiglia, R.~Mu\ noz Tapia, E.~Bagan, Ll. Masanes,
  A.~Acin, and F.~Verstraete.
\newblock Discriminating states: The quantum {Chernoff} bound.
\newblock {\em Physical Review Letters}, 98:160501, 2007.
\newblock arXiv:quant-ph/0610027.

\bibitem{AD}
Koenraad M.~R. Audenaert and Nilanjana Datta.
\newblock $\alpha$-$z$-relative {R}enyi entropies.
\newblock {\em J.~Math.~Phys.}, 56:022202, 2015.
\newblock arXiv:1310.7178.

\bibitem{AH2019}
Koenraad M.~R. Audenaert and Fumio Hiai.
\newblock Reciprocal {L}ie-{T}rotter formula.
\newblock {\em Linear and Multilinear Algebra}, 64(6):1220--1235, 2016.

\bibitem{Augustin}
U.~Augustin.
\newblock {\em Noisy channels}.
\newblock Habilitation thesis, Univ. of Erlangen-Numberg, 1978.

\bibitem{Beigi}
Salman Beigi.
\newblock Sandwiched {R\'enyi} divergence satisfies data processing inequality.
\newblock {\em Journal of Mathematical Physics}, 54(12):122202, December 2013.
\newblock arXiv:1306.5920.

\bibitem{Bhatia}
Rajendra Bhatia.
\newblock {\em Matrix Analysis}.
\newblock Number 169 in Graduate Texts in Mathematics. Springer, 1997.

\bibitem{BGJ2019}
Rajendra Bhatia, Stephane Gaubert, and Tanvi Jain.
\newblock Matrix versions of the {H}ellinger distance.
\newblock {\em Letters in Mathematical Physics}, 2019.

\bibitem{BJL2018}
Rajendra Bhatia, Tanvi Jain, and Yongdo Lim.
\newblock On the {B}ures-{W}asserstein distance between positive definite
  matrices.
\newblock {\em Expositiones Mathematicae}, 2018.

\bibitem{BH1998}
M.~V. Burnashev and A.~S. Holevo.
\newblock On the reliability function for a quantum communication channel.
\newblock {\em Problems of Information Transmission}, 34(2):97--107, 1998.

\bibitem{CFL}
E.~A. Carlen, R.~L. Frank, and E.~H. Lieb.
\newblock Some operator and trace function convexity theorems.
\newblock {\em Linear Algebra Appl.}, 490:174--185, 2016.

\bibitem{Cheng-Li-Hsieh2018}
Hao-Chung Cheng, Li~Gao, and Min-Hsiu Hsieh.
\newblock Properties of noncommutative {R}enyi and {A}ugustin information.
\newblock arXiv:1811.04218, 2018.

\bibitem{ChengHansonDattaHsieh2018}
Hao-Chung Cheng, Eric~P. Hanson, Nilanjana Datta, and Min-Hsiu Hsieh.
\newblock Duality between source coding with quantum side information and c-q
  channel coding.
\newblock arXiv:1809.11143, 2018.

\bibitem{ChengHsiehTomamichel2019}
Hao-Chung Cheng, Min-Hsiu Hsieh, and Marco Tomamichel.
\newblock Quantum sphere-packing bounds with polynomial prefactors.
\newblock {\em IEEE Transactions on Information Theory}, 65(5):2872--2898,
  2019.
\newblock arXiv:1704.05703v2.

\bibitem{Csiszar}
Imre Csisz\'ar.
\newblock Generalized cutoff rates and {R\'enyi's} information measures.
\newblock {\em IEEE Transactions on Information Theory}, 41(1):26--34, January
  1995.

\bibitem{CsiszarKorner}
Imre Csisz\'ar and J\'anos K\"orner.
\newblock {\em Information theory: coding theorems for discrete memoryless
  channels}.
\newblock Akad\'emiai Kiad\'o, Budapest, 1981.

\bibitem{CsiszarKorner2}
Imre Csisz\'ar and J\'anos K\"orner.
\newblock {\em Information theory: coding theorems for discrete memoryless
  channels, 2nd edition}.
\newblock Cambridge University Press, 2011.

\bibitem{DW2015}
Marco Dalai and Andreas Winter.
\newblock Constant compositions in the sphere packing bound for
  classical-quantum channels.
\newblock {\em IEEE Transactions on Information Theory}, 63(9):5603 -- 5617,
  2017.
\newblock arXiv:1509.00715.

\bibitem{Datta}
Nilanjana Datta.
\newblock Min- and max-relative entropies and a new entanglement monotone.
\newblock {\em IEEE Transactions on Information Theory}, 55(6):2816--2826,
  2009.

\bibitem{DK}
G.~Dueck and J.~K\"orner.
\newblock Reliability function of a discrete memoryless channel at rates above
  capacity.
\newblock {\em IEEE Transactions on Information Theory}, 25:82--85, 1979.

\bibitem{Fan}
Ky~Fan.
\newblock Minimax theorems.
\newblock {\em Proc Natl Acad Sci USA.}, 39(1):42--47, 1953.

\bibitem{FarkasRevesz2006}
B\'alint Farkas and Szil\'ard R\'ev\'esz.
\newblock Potential theoretic approach to rendezvous numbers.
\newblock {\em Monatshefte f\"ur Mathematik}, 148(4):309--331, 2006.

\bibitem{FL}
Rupert~L. Frank and Elliott~H. Lieb.
\newblock Monotonicity of a relative {R\'enyi} entropy.
\newblock {\em Journal of Mathematical Physics}, 54(12):122201, December 2013.
\newblock arXiv:1306.5358.

\bibitem{Har1968}
E.~A. Haroutunian.
\newblock Bounds to the error probability exponent for semicontinuous
  memoryless channels.
\newblock {\em Prob. Peredach. Inform.}, 4:37--48, 1968.
\newblock \textit{Problems Inform. Transmission}, 4:4 (1968), 29?39.

\bibitem{H:pinching}
Masahito Hayashi.
\newblock Optimal sequence of {POVM's} in the sense of {Stein's} lemma in
  quantum hypothesis testing.
\newblock {\em J.~Phys.~A: Math.~Gen.}, 35:10759--10773, 2002.

\bibitem{Hayashicq}
Masahito Hayashi.
\newblock Error exponent in asymmetric quantum hypothesis testing and its
  application to classical-quantum channel coding.
\newblock {\em Physical Review A}, 76(6):062301, December 2007.
\newblock arXiv:quant-ph/0611013.

\bibitem{universalcq}
Masahito Hayashi.
\newblock Universal coding for classical-quantum channel.
\newblock {\em Commun.~Math.~Phys.}, 289(3):1087--1098, May 2009.

\bibitem{Hayashibook2}
Masahito Hayashi.
\newblock {\em Quantum Information Theory: Mathematical Foundation, 2nd ed.}
\newblock Graduate Texts in Physics. Springer, 2017.

\bibitem{HN}
Masahito Hayashi and Hiroshi Nagaoka.
\newblock General formulas for capacity of classical-quantum channels.
\newblock {\em IEEE Transactions on Information Theory}, 49(7):1753--1768, July
  2003.
\newblock arXiv:quant-ph/0206186.

\bibitem{HT14}
Masahito Hayashi and Marco Tomamichel.
\newblock Correlation detection and an operational interpretation of the
  {R}\'enyi mutual information.
\newblock {\em Journal of Mathematical Physics}, 57:102201, 2016.

\bibitem{Hiai_book}
F.~Hiai.
\newblock Matrix analysis: Matrix monotone functions, matrix means, and
  majorization.
\newblock {\em Interdisciplinary Information Sciences}, 16:139--248, 2010.

\bibitem{Hi3}
F.~Hiai.
\newblock Concavity of certain matrix trace and norm functions.
\newblock {\em Linear Algebra Appl.}, 439:1568--1589, 2013.

\bibitem{HiaiMosonyi2017}
F.~Hiai and M.~Mosonyi.
\newblock Different quantum $f$-divergences and the reversibility of quantum
  operations.
\newblock {\em Rev.~Math.~Phys.}, 29, 2017.

\bibitem{HP_GT}
Fumio Hiai and D\'enes Petz.
\newblock The {Golden-Thompson} trace inequality is complemented.
\newblock {\em Linear Algebra Appl.}, 181:153--185, 1993.

\bibitem{H}
Alexander~S. Holevo.
\newblock The capacity of the quantum channel with general signal states.
\newblock {\em IEEE Transactions on Information Theory}, 44(1):269--273,
  January 1998.

\bibitem{JOPP}
V.~Jaksic, Y.~Ogata, Y.~Pautrat, and C.-A. Pillet.
\newblock Entropic fluctuations in quantum statistical mechanics. an
  introduction.
\newblock In {\em Quantum Theory from Small to Large Scales, August 2010},
  volume~95 of {\em Lecture Notes of the Les Houches Summer School}. Oxford
  University Press, 2012.

\bibitem{Kneser}
Hellmuth Kneser.
\newblock Sur un t\'eor\`eme fondamental de la th\'eorie des jeux.
\newblock {\em C. R. Acad. Sci. Paris}, 234:2418--2420, 1952.

\bibitem{KW}
Robert Koenig and Stephanie Wehner.
\newblock A strong converse for classical channel coding using entangled
  inputs.
\newblock {\em Physical Review Letters}, 103(7):070504, August 2009.
\newblock arXiv:0903.2838.

\bibitem{KRS}
Robert K\"onig, Renato Renner, and Christian Schaffner.
\newblock The operational meaning of min- and max-entropy.
\newblock {\em IEEE Transactions on Information Theory}, 55(9):4337--4347,
  2009.

\bibitem{KA}
F.~Kubo and T.~Ando.
\newblock Means of positive linear operators.
\newblock {\em Math. Ann.}, 246:205--224, 1980.

\bibitem{LT}
E.H. Lieb and W.~Thirring.
\newblock {\em Studies in mathematical physics}.
\newblock University Press, Princeton, 1976.

\bibitem{LimPalfia2012}
Yongdo Lim and Mikl\'os P\'alfia.
\newblock Matrix power means and the {K}archer mean.
\newblock {\em Journal of Functional Analysis}, 262(4):1498--1514, 2012.

\bibitem{LinTomamichel15}
Mingyan~Simon Lin and Marco Tomamichel.
\newblock Investigating properties of a family of quantum renyi divergences.
\newblock {\em Quantum Information Processing}, 14(4):1501--1512, 2015.

\bibitem{Ma}
K.~Matsumoto.
\newblock A new quantum version of $f$-divergence.
\newblock In {\em Nagoya Winter Workshop 2015: Reality and Measurement in
  Algebraic Quantum Theory}, pages 229--273, 2018.
\newblock arXiv:1311.4722.

\bibitem{MH}
Mil\'an Mosonyi and Fumio Hiai.
\newblock On the quantum {R\'enyi} relative entropies and related capacity
  formulas.
\newblock {\em IEEE Transactions on Information Theory}, 57(4):2474--2487,
  April 2011.

\bibitem{MO-cqconv}
Mil\'an Mosonyi and Tomohiro Ogawa.
\newblock Strong converse exponent for classical-quantum channel coding.
\newblock {\em Communications in Mathematical Physics}, 355(1):373--426, 2017.

\bibitem{Renyi_new}
Martin {M\"uller}-Lennert, Fr\'ed\'eric Dupuis, Oleg Szehr, Serge Fehr, and
  Marco Tomamichel.
\newblock On quantum {R\'enyi} entropies: A new generalization and some
  properties.
\newblock {\em Journal of Mathematical Physics}, 54(12):122203, December 2013.
\newblock arXiv:1306.3142.

\bibitem{N}
Hiroshi Nagaoka.
\newblock Strong converse theorems in quantum information theory.
\newblock {\em Proceedings of ERATO Workshop on Quantum Information Science},
  page~33, 2001.
\newblock Also appeared in Asymptotic Theory of Quantum Statistical Inference,
  ed. M. Hayashi, World Scientific, 2005.

\bibitem{NakibogluRenyi}
Bari{\c s} Nakibo{\u g}lu.
\newblock The {R}enyi capacity and center.
\newblock arXiv:1608.02424, 2016.

\bibitem{NakibogluAugustin}
Bari{\c s} Nakibo{\u g}lu.
\newblock The {A}ugustin capacity and center.
\newblock {\em Problems of Information Transmission}, 55:299--342, 2019.
\newblock arXiv:1803.07937.

\bibitem{Ogawa_randomcoding}
Tomohiro Ogawa.
\newblock An information-spectrum approach to hash properties for quantum
  states and relation to channel coding.
\newblock In {\em The 28th Symposium on Information Theory and its Applications
  (SITA2015)}, November 2015.

\bibitem{ON99}
Tomohiro Ogawa and Hiroshi Nagaoka.
\newblock Strong converse to the quantum channel coding theorem.
\newblock {\em IEEE Transactions on Information Theory}, 45(7):2486--2489,
  November 1999.
\newblock arXiv:quant-ph/9808063.

\bibitem{OPW}
Masanori Ohya, D\'enes Petz, and Noboru Watanabe.
\newblock On capacities of quantum channels.
\newblock {\em Probability and Mathematical Statistics}, 17:179--196, 1997.

\bibitem{P86}
D\'enes Petz.
\newblock Quasi-entropies for finite quantum systems.
\newblock {\em Reports in Mathematical Physics}, 23:57--65, 1986.

\bibitem{PV2019}
J\'ozsef Pitrik and D\'aniel Virosztek.
\newblock Quantum {H}ellinger distances revisited.
\newblock arXiv:1903.10455, 2019.

\bibitem{RennerPhD}
Renato Renner.
\newblock {\em Security of Quantum Key Distribution}.
\newblock PhD thesis, Swiss Federal Institute of Technology Zurich, 2005.
\newblock Diss.~ETH No.~16242.

\bibitem{Renyi}
Alfr\'ed R\'enyi.
\newblock On measures of entropy and information.
\newblock In {\em Proc.~4th Berkeley Sympos.~Math.~Statist.~and Prob.},
  volume~I, pages 547--561. Univ. California Press, Berkeley, California, 1961.

\bibitem{SW}
Benjamin Schumacher and Michael Westmoreland.
\newblock Sending classical information via noisy quantum channels.
\newblock {\em Physical Review A}, 56(1):131--138, July 1997.

\bibitem{SW2}
Benjamin Schumacher and Michael~D. Westmoreland.
\newblock Optimal signal ensembles.
\newblock {\em Physical Review A}, 63:022308, 2001.

\bibitem{Shannon}
C.E. Shannon.
\newblock A mathematical theory of communication.
\newblock {\em The Bell System Technical Journal}, 27:379--423, 623/656, 1948.

\bibitem{Sheverdyaev1982}
A.~Yu. Sheverdyaev.
\newblock Lower bound for error probability in a discrete memoryless channel
  with feedback.
\newblock {\em Probl. Peredachi Inf.}, 18(4), 1982.

\bibitem{Sibson}
R.~Sibson.
\newblock Information radius.
\newblock {\em Z.~Wahrscheinlichkeitsth.~Verw.~Gebiete}, 14:149--161, 1969.

\bibitem{TomamichelBook}
M.~Tomamichel.
\newblock {\em Quantum Information Processing with Finite Resources}, volume~5
  of {\em Mathematical Foundations, SpringerBriefs in Math. Phys.}
\newblock Springer, 2016.

\bibitem{TomamichelTan2015}
Marco Tomamichel and Vincent~Y.F. Tan.
\newblock Second-order asymptotics for the classical capacity of image-additive
  quantum channels.
\newblock {\em Communications in Mathematical Physics}, 338:103--137, 2015.

\bibitem{Uhlmann1973}
A.~Uhlmann.
\newblock Endlich dimensionale dichtmatrizen, ii.
\newblock {\em Wiss. Z. Karl-Marx-University Leipzig}, 22:139--177, 1973.

\bibitem{Umegaki}
H.~Umegaki.
\newblock Conditional expectation in an operator algebra.
\newblock {\em Kodai Math.~Sem.~Rep.}, 14:59--85, 1962.

\bibitem{WWY}
Mark~M. Wilde, Andreas Winter, and Dong Yang.
\newblock Strong converse for the classical capacity of entanglement-breaking
  and {Hadamard} channels via a sandwiched {R\'enyi} relative entropy.
\newblock {\em Communications in Mathematical Physics}, 331(2):593--622,
  October 2014.
\newblock arXiv:1306.1586.

\bibitem{W}
Andreas Winter.
\newblock Coding theorem and strong converse for quantum channels.
\newblock {\em IEEE Transactions on Information Theory}, 45(7):2481Đ2485, 1999.

\bibitem{YKL}
H.~P. Yuen, R.~S. Kennedy, and M.~Lax.
\newblock Optimum testing of multiple hypothesis in quantum detection theory.
\newblock {\em IEEE Trans. Inf. Theory}, 21(2):125--134, 1975.

\bibitem{Zhang2018}
Haonan Zhang.
\newblock {C}arlen-{F}rank-{L}ieb conjecture and monotonicity of $\alpha$-$z$
  {R}\'enyi relative entropy.
\newblock arXiv:1811.01205, 2018.

\end{thebibliography}
\end{document}